\newtheorem{theorem}{Theorem}[section]
\newtheorem{lemma}[theorem]{Lemma}
\newtheorem{proposition}[theorem]{Proposition}
\newtheorem{corollary}[theorem]{Corollary}
\newdefinition{definition}{Definition}
\newdefinition{example}{Example}
\newdefinition{remark}{Remark}
\newcommand{\calL}{\Sigma}
\newcommand{\calF}{\Sigma^{\mathbf{f}}}
\newcommand{\ext}{\textsf{ext}}
\newcommand{\langOf}[1]{\ensuremath{[#1]}}
\newcommand{\langOfAutom}[1]{\ensuremath{[\mathcal{#1}]}}
\newcommand{\autom}[1]{\ensuremath{\mathcal{#1}}}
\newcommand{\automSet}[1]{\ensuremath{\mathbb{#1}}}
\newcommand{\automUnion}[1]{\ensuremath{\cup #1}}
\newcommand{\automInters}[1]{\ensuremath{\cap #1}}
\newcommand{\f}{\ensuremath{\mathbf{f}}}
\newcommand{\g}{\ensuremath{\mathbf{g}}}
\newcommand{\locS}{\textsc{loc}\ensuremath{_{[\mathcal{S}]}}}
\newcommand{\mlS}{\textsc{ml}\ensuremath{_{[\mathcal{S}]}}}
\newcommand{\perfS}{\textsc{perf}\ensuremath{_{[\mathcal{S}]}}}
\newcommand{\elocS}{\ensuremath{\exists}\textsc{-loc}\ensuremath{_{[\mathcal{S}]}}}
\newcommand{\emlS}{\ensuremath{\exists}\textsc{-ml}\ensuremath{_{[\mathcal{S}]}}}
\newcommand{\eperfS}{\ensuremath{\exists}\textsc{-perf}\ensuremath{_{[\mathcal{S}]}}}
\newcommand{\locR}{\textsc{loc}\ensuremath{_{[\mathcal{R}]}}}
\newcommand{\mlR}{\textsc{ml}\ensuremath{_{[\mathcal{R}]}}}
\newcommand{\perfR}{\textsc{perf}\ensuremath{_{[\mathcal{R}]}}}
\newcommand{\elocR}{\ensuremath{\exists}\textsc{-loc}\ensuremath{_{[\mathcal{R}]}}}
\newcommand{\emlR}{\ensuremath{\exists}\textsc{-ml}\ensuremath{_{[\mathcal{R}]}}}
\newcommand{\eperfR}{\ensuremath{\exists}\textsc{-perf}\ensuremath{_{[\mathcal{R}]}}}
\newcommand{\elocRB}{\ensuremath{\exists}\textsc{-loc}\ensuremath{_{[\mathcal{R}]}^B}}
\newcommand{\emlRB}{\ensuremath{\exists}\textsc{-ml}\ensuremath{_{[\mathcal{R}]}^B}}
\newcommand{\eperfRB}{\ensuremath{\exists}\textsc{-perf}\ensuremath{_{[\mathcal{R}]}^B}}
\newcommand{\loc}[1]{\textsc{loc}\ensuremath{_{[\scriptsize\texttt{#1}\normalsize]}}}
\newcommand{\ml}[1]{\textsc{ml}\ensuremath{_{[\scriptsize\texttt{#1}\normalsize]}}}
\newcommand{\perf}[1]{\textsc{perf}\ensuremath{_{[\scriptsize\texttt{#1}\normalsize]}}}
\newcommand{\eloc}[1]{\ensuremath{\exists}\textsc{-loc}\ensuremath{_{[\scriptsize\texttt{#1}\normalsize]}}}
\newcommand{\eml}[1]{\ensuremath{\exists}\textsc{-ml}\ensuremath{_{[\scriptsize\texttt{#1}\normalsize]}}}
\newcommand{\eperf}[1]{\ensuremath{\exists}\textsc{-perf}\ensuremath{_{[\scriptsize\texttt{#1}\normalsize]}}}
\newcommand{\elocB}[1]{\ensuremath{\exists}\textsc{-loc}\ensuremath{_{[\scriptsize\texttt{#1}\normalsize]}^B}}
\newcommand{\emlB}[1]{\ensuremath{\exists}\textsc{-ml}\ensuremath{_{[\scriptsize\texttt{#1}\normalsize]}^B}}
\newcommand{\eperfB}[1]{\ensuremath{\exists}\textsc{-perf}\ensuremath{_{[\scriptsize\texttt{#1}\normalsize]}^B}}
\newcommand{\equivalence}[1]{\textsc{equiv}\ensuremath{_{[\scriptsize\texttt{#1}\normalsize]}}}
\newcommand{\equivalenceR}{\textsc{equiv}\ensuremath{_{[\mathcal{R}]}}}
\newcommand{\equivalenceS}{\textsc{equiv}\ensuremath{_{[\mathcal{S}]}}}
\newcommand{\concatUnivR}{\textsc{concat-univ}\ensuremath{_{[\mathcal{R}]}}}
\newcommand{\consistent}[1]{\textsc{cons}\ensuremath{_{[\scriptsize\texttt{#1}\normalsize]}}}
\newcommand{\consistentS}{\textsc{cons}\ensuremath{_{[\mathcal{S}]}}}
\newcommand{\oneUnamb}[1]{\textsc{one-unamb}\ensuremath{_{[\scriptsize\texttt{#1}\normalsize]}}}
\newcommand{\oneUnambR}{\textsc{one-unamb}\ensuremath{_{[\mathcal{R}]}}}
\newcommand{\aFA}{\texttt{aFA}}
\newcommand{\aFAs}{\texttt{aFAs}}
\newcommand{\nFA}{\texttt{nFA}}
\newcommand{\nFAs}{\texttt{nFAs}}
\newcommand{\dFA}{\texttt{dFA}}
\newcommand{\dFAs}{\texttt{dFAs}}
\newcommand{\nRE}{\texttt{nRE}}
\newcommand{\nREs}{\texttt{nREs}}
\newcommand{\dRE}{\texttt{dRE}}
\newcommand{\dREs}{\texttt{dREs}}
\newcommand{\rDTD}{\ensuremath{\mathcal{R}}\textrm{-}\texttt{DTD}}
\newcommand{\rDTDs}{\ensuremath{\mathcal{R}}\textrm{-}\texttt{DTDs}}
\newcommand{\rSDTD}{\ensuremath{\mathcal{R}}\textrm{-}\texttt{SDTD}}
\newcommand{\rSDTDs}{\ensuremath{\mathcal{R}}\textrm{-}\texttt{SDTDs}}
\newcommand{\rEDTD}{\ensuremath{\mathcal{R}}\textrm{-}\texttt{EDTD}}
\newcommand{\rEDTDs}{\ensuremath{\mathcal{R}}\textrm{-}\texttt{EDTDs}}
\newcommand{\xDTD}[1]{\ensuremath{\texttt{#1\textrm{-}DTD}}}
\newcommand{\xDTDs}[1]{\ensuremath{\texttt{#1\textrm{-}DTDs}}}
\newcommand{\xSDTD}[1]{\ensuremath{\texttt{#1\textrm{-}SDTD}}}
\newcommand{\xSDTDs}[1]{\ensuremath{\texttt{#1\textrm{-}SDTDs}}}
\newcommand{\xEDTD}[1]{\ensuremath{\texttt{#1\textrm{-}EDTD}}}
\newcommand{\xEDTDs}[1]{\ensuremath{\texttt{#1\textrm{-}EDTDs}}}
\newcommand{\nUTA}{\texttt{nUTA}}
\newcommand{\dUTA}{\texttt{dUTA}}
\newcommand{\dUTAs}{\texttt{dUTAs}}
\newcommand{\ancstrt}[2]{\ensuremath{\textsf{anc-str}_{#1}(#2)}}
\newcommand{\ancstr}[1]{\ensuremath{\textsf{anc-str}(#1)}}
\newcommand{\labt}[2]{\ensuremath{\textsf{lab}_{#1}(#2)}}
\newcommand{\lab}[1]{\ensuremath{\textsf{lab}(#1)}}
\newcommand{\parentt}[2]{\ensuremath{\textsf{parent}_{#1}(#2)}}
\newcommand{\childstrt}[2]{\ensuremath{\textsf{child-str}_{#1}(#2)}}
\newcommand{\childstr}[1]{\ensuremath{\textsf{child-str}(#1)}}
\newcommand{\childrent}[2]{\ensuremath{\textsf{children}_{#1}(#2)}}
\newcommand{\children}[1]{\ensuremath{\textsf{children}(#1)}}
\newcommand{\treet}[2]{\ensuremath{\textsf{tree}_{#1}(#2)}}
\newcommand{\rootnode}[1]{\ensuremath{\textsf{root}(#1)}}
\newcommand{\class}[1]{\ensuremath{{\scriptsize{\textbf{#1}}}}}
\newcommand{\classc}[1]{\ensuremath{{\scriptsize{\textbf{#1}}}\textrm{-complete}}}
\newcommand{\classh}[1]{\ensuremath{{\scriptsize{\textbf{#1}}}\textrm{-hard}}}
\newcommand{\dual}[1]{\ensuremath{\textsl{dual}(#1)}}
\newcommand{\ttn}{\ensuremath{t_1..t_n}}
\newcommand{\ffn}{\ensuremath{\f_1,\ldots,\f_n}}
\newcommand{\extTn}{\ensuremath{\ext_T(\tau_n)}}
\newcommand{\extTt}{\ensuremath{\ext_T(t_1..t_n)}}
\newcommand{\xextTt}[1]{\ensuremath{\ext_{#1}(t_1..t_n)}}
\newcommand{\typeTn}{\ensuremath{\textsf{type}_T(\tau_n)}}
\newcommand{\Tn}{\ensuremath{T(\tau_n)}}
\newcommand{\Tt}{\ensuremath{T_{[t_1..t_n]}}}
\newcommand{\Tf}{\ensuremath{T_{[\f_1,\ldots,\f_n]}}}
\newcommand{\extwn}{\ensuremath{\ext_w(\tau_n)}}
\newcommand{\wn}{\ensuremath{w(\tau_n)}}
\newcommand{\mysetminus}{\ensuremath{-}}
\newcommand{\mymid}{\ensuremath{-}}
\newcommand{\ONLINE}[1]{#1}
\begin{document}

\begin{frontmatter}

\title{Distributed XML Design}

\author[inr]{S. Abiteboul}
\ead{serge.abiteboul at inria.fr}
\address[inr]{INRIA Saclay -- \^{I}le-de-France \& University Paris Sud, FR}

\author[oxf]{G. Gottlob}
\ead{georg.gottlob@comlab.ox.ac.uk}
\address[oxf]{Oxford University Computing Laboratory \&\\ Oxford-Man Institute of Quantitative Finance, University of Oxford, UK}

\author[cal]{M. Manna}
\ead{manna@mat.unical.it}
\address[cal]{Department of Mathematics, University of Calabria, IT}

\begin{abstract}
A \emph{distributed XML document} is an XML document that spans several machines.
We assume that a distribution design of the document tree is given,
consisting of an \emph{XML kernel-document} $\Tf$ where some leaves are ``docking points'' for external resources providing XML subtrees ($\f_1,\ldots,\f_n$, standing, e.g., for Web services or peers at remote locations). The top-down design problem consists in, given a \emph{type} (a schema document that may vary from a DTD to a tree automaton) for the distributed document, ``propagating'' locally this type into a collection of types, that we call \emph{typing}, while preserving desirable properties.  We also consider the bottom-up design which consists in, given a type for each external resource, exhibiting a global type that is enforced by the local types, again with natural desirable properties. In the article, we lay out the fundamentals of a theory of distributed XML design, analyze problems concerning typing issues in this setting, and study their complexity.
\end{abstract}

\begin{keyword}
Semistructured Data \sep XML Schemas \sep Distributed Data \sep Database Design \sep Distributed XML
\end{keyword}

\end{frontmatter}


\section{Introduction}

\paragraph{Context and Motivation}

With the Web, information tends to be more and more distributed.  In
particular, the distribution of XML data is essential in many areas
such as e-commerce (shared product catalog), collaborating editing
(e.g., based on WebDAV \citep{HernandezPegah03}), or network
directories \citep{JagadishLakshmananMilo99}. (See also the W3C XML
Fragment Interchange Working Group \cite{GrossoVeillard01}.)  It
becomes often cumbersome to verify the validity, e.g., the type, of
such a hierarchical structure spanning several machines. In this
paper, we consider typing issues raised by the distribution of XML
documents. We introduce ``nice'' properties that the distribution should
obey to facilitate type verification based on locality conditions.
We propose an automata-based study of the problem. Our theoretical
investigation provides a starting point for the distributed
validation of tree documents (verification) and for selecting a
distribution for Web data (design). In general, it provides new
insights in the typing of XML documents.

A {\em distributed XML document} $\Tt$ is given by an \emph{XML kernel-document} $\Tf$, that is stored locally at some site, some of which leaves (the \emph{docking points}) refer to external resources, here denoted by $\f_1,\ldots,\f_n$, that provide the additional XML data $\ttn$ to be attached, respectively, to $T$. For simplicity, each node playing the role of docking point is called a \emph{function-node} and it is labeled with the resource that it refers.

The {\em extension} $\extTt$ of $T$ is the whole XML document obtainable from the distributed document $\Tt$ by replacing the node referring resource $\f_i$ with the forest of XML trees (in left-to-right order) directly connected to the root of $t_i$, for each $i$ in $[1..n]$.

\begin{figure}[htbp] \centering
\fbox{\includegraphics[scale=0.6]{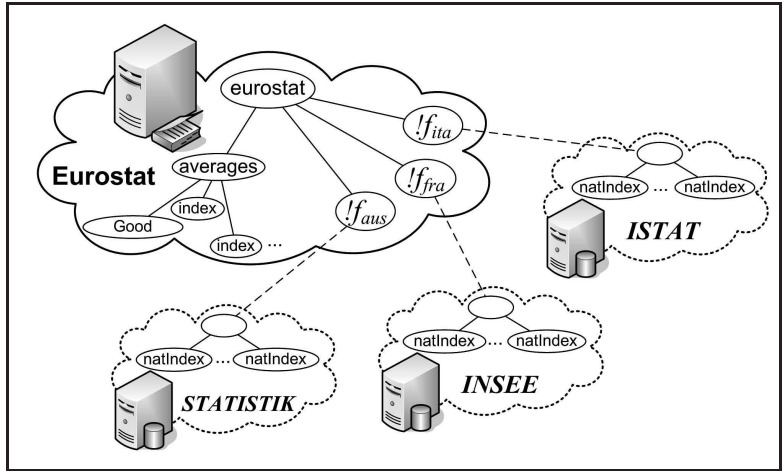}}
\caption{A distributed XML document for the \emph{National Consumer
Price Index} where both its kernel and some of its remote XML (sub)documents have been hightailed.}
\label{figDistribArchit}
\end{figure}

Figure~\ref{figDistribArchit} shows a (drastically simplified) possible distributed XML document for the {\em National Consumer Price Index (NCPI)}~\footnote{See \url{http://epp.eurostat.ec.europa.eu}} maintained by the {\em Eurostat}~\footnote{See \url{http://ec.europa.eu/eurostat}}. This example is detailed further in this section.

Typically, a global designer first chooses a specific language for constraining the documents of interest. The focus in this paper is on ``structural constraints''. Clearly, one could also consider other constraints such as key and referential constraints.
So, say the designer has to specify documents using DTDs.
Then he specifies a kernel document $\Tf$ together with either:
\begin{enumerate}[ \ $\centerdot$]
  \item[] \emph{bottom-up design}: types $\tau_i$ for each $\f_i$;

  \item[] \emph{top-down design}: a global type $\tau$.
\end{enumerate}

In the bottom-up case, we are interested by the global type that
results from each local source enforcing its local type. \emph{Can such
typing be described by specific type languages?}

In the top-down case, we would like the extension of $T$ to satisfy
$\tau$. The issue is ``\emph{Is it possible to enforce it using only local
control?}'' In particular, we would like to break down $\tau$ into
local types $\tau_i$ that could be enforced locally. More precisely,
we would like to provide each $\f_i$ with a typing $\tau_i$ guaranteeing that
(i) if each $\f_i$ verifies its type, then the global type is verified
(soundness), and (ii) the typing $\tau_1..\tau_n$ is not more restrictive
than the global type (completeness). We call such a typing local
typing. We both study (maximal) local typings and an even more
desirable notion, namely ``perfect typings'' (to be defined).

To conclude this introduction, we next detail the Eurosat
example. We then present a formal overview of the paper (which may
be skipped in a first reading.) Finally, we survey related works.

\paragraph{Working Example}
Before mentioning some related works and concluding this section, we
further illustrate these concepts by detailing our {\em Eurostat}
example.

The NCPI is a document containing consumer price data for each EC country. We assume that the national data are maintained in local XML repositories by each country's national statistics bureau (INSEE for France, Statistik for Austria, Istat for Italy, UK Statistics Authority, and so on). Each national data set is under the strict control of its respective statistics bureau. The kernel document $T_0$ is maintained by Eurostat in Luxembourg and has a docking point for each resource $\f_i$ located in a particular country. In addition, $T_0$ contains average data for the entire EU zone. Figure~\ref{figExtT0}
shows a possible extension of $T_0$, where the actual data values are omitted.

\begin{figure}[h] \centering
\fbox{\includegraphics[scale=0.86]{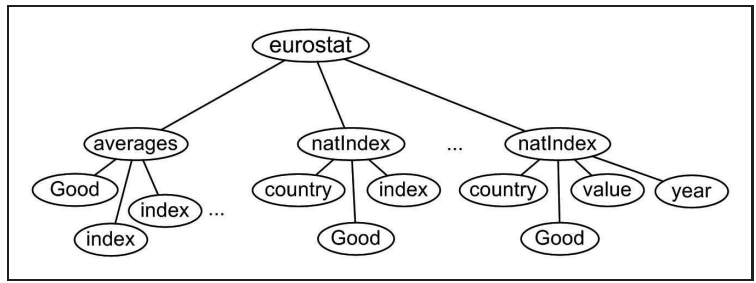}} \caption{The
extension of a possible distributed document having kernel $T_0$, and complying with the whole structure showed in Figure \ref{figDistribArchit}.} \label{figExtT0}
\end{figure}

We first assume that Eurostat specifies the global type
$\tau$ for the distributed NCPI document, where $\tau$ is given
by the DTD document shown in Figure~\ref{DTD_OK_Fig}. (In the following, we adopt a more succinct notation for types where the content model of an element name is either left undefined if it is solely {``\footnotesize \texttt{\#PCDATA}''}, or defined by a rule of the form {``\footnotesize \texttt{index \(\rightarrow\) value, year}''}, otherwise.) Briefly, DTD $\tau$ requires that each possible extension $\xextTt{T_0}$ consists of a subtree containing average data for \emph{Goods} (such as food, energy, education, and so on). Each \emph{Good} item is evaluated in different years by means of an \emph{index}.  Moreover, $\xextTt{T_0}$ may contain a forest of \emph{nationalIndex}, namely indexes associated to goods in precise countries.

\begin{figure}[h]
\hrule \vspace{2mm}
\footnotesize \begin{alltt}
  <!ELEMENT eurostat (averages, nationalIndex\(\sp{*}\))>
    <!ELEMENT averages (Good, index\(\sp{+}\))\(\sp{+}\)>
    <!ELEMENT nationalIndex (country, Good, (index | value, year))>
       <!ELEMENT index (value, year)>
  <!ELEMENT country (#PCDATA)>
  <!ELEMENT Good (#PCDATA)>
  <!ELEMENT value (#PCDATA)>
  <!ELEMENT year (#PCDATA)>
\end{alltt} \vspace{-1.3mm}
\normalsize \hrule \caption{W3C DTD $\tau$}\label{DTD_OK_Fig}
\end{figure}

To comply with different national databases, two different formats are allowed: \emph{(country, Good, index)} or \emph{(country, Good, value, year)}. It is easy to see that the pair $\langle \tau, T_0 \rangle$ allows a local typing (see
Figure~\ref{Perfect_Fig}) that is even perfect (so, can be obtained by the algorithm shown in Section \ref{perfect}), as we will
clarify in the next section.

\begin{figure}[htbp]
\hrule \vspace{2mm} \footnotesize \begin{alltt}
  root\(\sb{i}\) \(\rightarrow\) nationalIndex\(\sp{*}\)
  nationalIndex  \(\rightarrow\) country, Good, (index | value, year)
  index  \(\rightarrow\) value, year
\end{alltt} \vspace{-1.3mm}
\normalsize \hrule \caption{Type $\tau_i$ ($1\leq i \leq n$) in the perfect typing for the top-down design $\langle \tau, T_0 \rangle$}\label{Perfect_Fig}
\end{figure}

Suppose now that a designer defined instead the DTD $\tau'$ shown
in Figure~\ref{DTD_BAD_Fig} as global type. The pair $\langle \tau',T_0\rangle$ would
be a bad design since $\tau'$ imposes to all countries to adopt the
same format for their indexes (\emph{natIndA} or \emph{natIndB}).
But this represents a constraint that cannot be controlled locally.
Indeed, this new design does not admit any local typing.  The nice
locality properties of designs are obvious in such simplistic examples.
However, when dealing with a large number of peers with very different desires and complex documents,
the problem rapidly starts defeating human expertise.
\begin{figure}[h]
\hrule \vspace{2mm} \footnotesize \begin{alltt}
  eurostat \(\rightarrow\) averages, (natIndA\(\sp{*}\) | natIndB\(\sp{*}\))
  averages \(\rightarrow\) (Good, index\(\sp{+}\))\(\sp{+}\)
  natIndA  \(\rightarrow\) country, Good, index
  natIndB  \(\rightarrow\) country, Good, value, year
  index    \(\rightarrow\) value, year
\end{alltt} \vspace{-1.mm}
\normalsize \hrule \caption{Type $\tau'$}\label{DTD_BAD_Fig}
\end{figure}
Consider, for instance, the type $\tau''$ defined in Figure \ref{XSD_Fig_1} and the kernel $T_1 = \emph{eurostat}(\f_1, \ \emph{nationalIndex}(\f_2), \ \f_3)$ containing only three
function calls. Even if this design is as small as $\langle \tau, T_0 \rangle$, it already starts to become hard to manage with no automatic technique. Here, \emph{natIndA} and \emph{natIndB} are different specializations of \emph{nationalIndex} elements  (note that, as detailed in Section \ref{types}, this feature requires schema languages more expressive than DTDs), while all other elements have no specialization.

\begin{figure}[h]
\hrule \vspace{2mm}
\footnotesize \begin{alltt}
  eurostat \(\rightarrow\) averages, (natIndA, natIndB)\(\sp{+}\)
  averages \(\rightarrow\) (Good, index\(\sp{+}\))\(\sp{+}\)
  natIndA  \(\rightarrow\) country, Good, index
  natIndB  \(\rightarrow\) country, Good, value, year
  index    \(\rightarrow\) value, year
\end{alltt} \vspace{-1.3mm}
\normalsize \hrule \caption{Type $\tau''$}\label{XSD_Fig_1}
\end{figure}

In this case, it is not as easy as before to state that
the new design has no perfect typing and exactly the two maximal local typings shown below (only the content models of the roots are specified). This is mainly because the functions in $T_1$ have different depth, but also due to specializations.

\footnotesize \begin{alltt}
  \(\tau\sb{1.1}''\):  root\(\sb{1}\) \(\rightarrow\) averages, (natIndA, natIndB)\(\sp{*}\)
  \(\tau\sb{2.1}''\):  root\(\sb{2}\) \(\rightarrow\) country, Good, index
  \(\tau\sb{3.1}''\):  root\(\sb{3}\) \(\rightarrow\) country, Good, value, year, (natIndA, natIndB)\(\sp{*}\)

  \(\tau\sb{1.2}''\):  root\(\sb{1}\) \(\rightarrow\) averages, (natIndA, natIndB)\(\sp{*}\), natIndA
  \(\tau\sb{2.2}''\):  root\(\sb{2}\) \(\rightarrow\) country, Good, value, year
  \(\tau\sb{3.2}''\):  root\(\sb{3}\) \(\rightarrow\) (natIndA, natIndB)\(\sp{*}\)
\end{alltt} \normalsize

The techniques developed in this paper are meant to support
experts in designing such distributed document schemas.

\paragraph{Overview of Results}

We next precise the formal setting of the paper and its results.
{From} a formal viewpoint, we use Active XML terminology and notation
for describing distributed documents~\citep{Abiteboul08}.

Not surprisingly, our results depend heavily of the nature of the
typing that is considered. For types, we consider abstract versions of
the conventional typing languages~\citep{PapakonstantinouVianu00,
BalminPapakonstantinou04,
MurataLeeMani05,
MartensNeven06}, namely $\rDTDs$ (for \emph{W3C DTDs}),
$\rSDTDs$ (for \emph{W3C XSD}), and $\rEDTDs$ (for regular tree grammars such as \emph{Relax-NG}) where $\mathcal{R}$ (varying among $\nFAs$, $\dFAs$, $\nREs$, and $\dREs$, namely automata and regular expressions both nondeterministic and deterministic) denotes the formalism for specifying content models.

As a main contribution, we initiate a theory of local typing. We
introduce and study three main notions of locality: local typing,
maximal local typing, and perfect typing. For a given XML schema language $\mathcal{S}$, we study the following verification problems:
\begin{enumerate}[ \ $\centerdot$]
  \item Given an $\mathcal{S}$-typing for a top-down $\mathcal{S}$-design, determine whether the former is local, maximal local, or perfect. We call these problems $\locS$, $\mlS$ and $\perfS$, respectively;

  \item Given a top-down $\mathcal{S}$-design, establish whether a local, maximal local, or perfect $\mathcal{S}$-typing does exist (and, of course, find them). We call these problems $\elocS$, $\emlS$, and $\eperfS$, respectively;

  \item Given a bottom-up $\mathcal{S}$-design, establish whether it defines an $\mathcal{S}$-type. The problem is called $\consistentS$.
\end{enumerate}

The analysis carried out in this paper provides tight complexity
bounds for some of these problems. In particular, for bottom-up designs, we prove that $\consistentS$ is:
\begin{enumerate}[ \ $\centerdot$]
  \item decidable in constant time for $\rEDTDs$, for each $\mathcal{\mathcal{R}}$;

  \item $\classc{PSPACE}$ both for $\rDTDs$ and $\rSDTDs$, in general;

  \item $\classh{PSPACE}$ with an $\class{EXPTIME}$ upper bound for $\xDTDs{\dRE}$ and $\xSDTDs{\dRE}$.
\end{enumerate}
For top-down designs, after showing that the problems for trees can be reduced to problems on words, we specialize the analysis to the case of $\mathcal{R} = \nFA$. In particular:
\begin{enumerate}[ \ $\centerdot$]
  \item $\locS$, $\mlS$, $\perfS$, and $\eperfS$ are $\classc{PSPACE}$ when $\mathcal{S}$ stands for $\xDTD{\nFA}$ or $\xSDTD{\nFA}$, and $\locS$ is $\classc{EXPTIME}$ for $\xEDTDs{\nFA}$;

  \item $\elocS$ and $\emlS$ are $\classh{PSPACE}$ with an $\class{EXPSPACE}$ upper bound when $\mathcal{S}$ stands for $\xDTD{\nFA}$ or $\xSDTD{\nFA}$;

  \item the remaining problems are $\classh{EXPTIME}$ with either $\class{coNEXPTIME}$ or 2-$\class{EXPSPACE}$ upper bounds.
\end{enumerate}

\paragraph{Related Work}

Distributed data design has been studied quite in depth, in particular for relational databases \citep{CeriPerniciWiederhold84,OzsuValduriez91}. Some previous works
have considered the design of Web applications
\citep{CeriFraternaliBongio2000}.  They lead to the design of Web
sites. The design there is guided by an underlying process.  It
leads to a more dynamic notion of typing, where part of the content
evolves in time, e.g., creating a cart for a customer. For obvious
reasons, distributed XML has raised a lot of attention recently.
Most works focused on query optimization, e.g.,
\citep{AbiteboulManolescuTaropa06}. The few that consider design
typically assume no ordering or only limited one
\citep{BremerGertz03}. This last work would usefully
complement the techniques presented here. Also, works on relational
database and LDAP~\footnote{Lightweight Directory Access Protocol
(LDAP) is a set of open protocols used to access centrally stored
information over a network.}~design focus on unordered collections.
Even the W3C goes in this direction with a working group on
\emph{XML Fragment Interchange} \citep{GrossoVeillard01}. The goal is
to be able to process (e.g., edit) document fragments independently.
Quoting the W3C Candidate Recommendation: ``It may be desirable to
view or edit one or more [fragments] while having no interest, need,
or ability to view or edit the entire document.'' This is clearly
related to the problem we study here. Finally, the concept of distributed documents, as defined in this paper, is already implemented in Active XML, a declarative framework that harnesses web services for data integration, and is put to work in a peer-to-peer architecture \citep{AbiteboulBonifati03,Abiteboul08}. Moreover, XML documents, XML schemas, and formal languages have been extensively studied and, although all the problems treated in this paper are essentially novel,%
\footnote{Consider that, as highlighted in \cite{MartensNiewerthSchwentick10}, an interesting problem in Formal Language Theory open for more than ten years, named \emph{Language Primality}, is essentially a special case of our problem $\eloc{\dFA}$. The complexity of Primality has been also settled in \cite{MartensNiewerthSchwentick10}.}
the theoretical analysis has got benefit from a number of existing works. Classical results about formal (string and regular) languages come from \citep{StockmeyerMeyer73,Jones75,Seidl90,JiangRavikumar93,RozenbergSalomaaYu97,
HromkovicSeibert97,HagenahMuscholl98,Suciu02} and in particular, those about state complexity of these languages can be found in \cite{Yu01,HolzerKutrib02}, those about one-unambiguous regular languages in \cite{BruggemannWood98,BexGeladeMartensNeven09}, and those about alternating finite state machines in \cite{RozenbergSalomaaYu97,FellahJurgensenYu90}. Finally, regarding XML documents and schemas, our abstract presentation builds on ten years of research in this field.
In particular, it has been strongly influenced by document typings
studied in \cite{PapakonstantinouVianu00,
BruggemannMurata01,MurataLeeMani05,ClarkMurata01,
BalminPapakonstantinou04,ThompsonBeech04,MartensNeven04,
MurataLeeMani05,
MartensNeven06,BrayPaoli06,
MartensNiehren07,ComonDauchetGilleron07,
MartensNevenSchwentick09}  and
results on them presented there.

\paragraph{Structure of the Paper}
This concludes the introduction. The remaining
of the paper is organized as follows. Section \ref{generalSetting} fixes some preliminary notation, formally introduces our notions of type, distributed XML document, and defines the decision problems studied. It also provides an overview of the results. Section  \ref{BottomUpDesign} considers the bottom-up design. Section \ref{TopDownDesign} presents basic results regarding the top-down design. Sections \ref{word} and \ref{perfect} present the main results for the word case. Section \ref{ComplexityForTrees} completes the complexity analysis. Section \ref{conclusion} concludes and mentions possible areas for further research.

\section{General setting}\label{generalSetting}

In this paper, we use a widespread abstraction of XML documents and XML Schemas focusing on document structure \citep{PapakonstantinouVianu00,BalminPapakonstantinou04,MurataLeeMani05,MartensNeven06}, and Active XML terminology and notation for describing distributed documents \citep{AbiteboulBonifati03,Abiteboul08}. In particular, for XML Schemas we will consider families of \emph{tree grammars} (called $\rDTDs$, $\rSDTDs$, and $\rEDTDs$) each of which allows different formalisms for specifying content models ($\mathcal{R}$ may vary among $\nFAs$, $\dFAs$, $\nREs$, and $\dREs$, respectively, nondeterministic automata, deterministic automata, regular expressions, and deterministic regular expressions). This, could be surprising at first sight because the W3C standards impose stricter limitations. However, as we will informally motivate later,  (and has been formally proved in \cite{MartensNiewerthSchwentick10}), some of the problems we define and analyze here, have the same complexity independently of whether we use deterministic or nondeterministic string-automata, or even deterministic regular expressions. Informally, it can be observed that the document distribution often erases the benefits of determinism. For this reason, and because this paper intends to be a first fundamental study of XML distribution, we include in our analysis different possibilities for schema languages, even if for some problem we only analyze the most general case ($\mathcal{R}$ is set to $\nFAs$) in order to delimit its complexity. Moreover, the typing problems we study hint at the possibility that there could be interesting real world applications (all distributed applications that involve the management of distributed data, such as data integration from databases and other data resources exported as Web services, or managing active views on top of data sources) where W3C recommendations are too strict and thus unsuitable in the context of distributed XML documents.

\subsection{Preliminaries}\label{preliminaries}

In this paper, we use also the following notation. We always denote, by $\Sigma$, a (finite) \emph{alphabet}; by $\varepsilon$, the empty string; by $\emptyset$, the empty language; by $\cdot$, the binary relation of concatenation on $\Sigma^*$ and by $\circ$, its extension on $2^{\Sigma^*}$; by $\autom{A}$, an \emph{automaton} for defining a string-language or tree-language over $\Sigma$; by $r$ a \emph{regular expression} over $\Sigma$;
by $\mathcal{R}$, a formalism for defining string languages; by $\mathcal{S}$, a formalism for defining tree languages; by $\tau$, an $\mathcal{R}$-\emph{type} or an $\mathcal{S}$-\emph{type} (a concrete formal structure defining, respectively, a string languages or a tree languages, such as a regular expression or an XML schema document) over $\Sigma$; by $\langOf{\tau}$, the language defined by $\tau$.

\subsubsection{XML Documents}

An \emph{XML document} can be viewed, from a structural point of view, as a \emph{finite ordered, unranked tree} (hereafter just a tree) $t$ with nodes labeled over a given alphabet $\Sigma$. The \emph{topmost node} in $t$ is denoted by $\rootnode{t}$, while for any node $x$ of $t$, we denote by
\begin{enumerate}[ \ $\centerdot$]
  \item $\parentt{t}{x}$ the (unique) parent node of $x$ (if node $x$ is not the root);

  \item $\childrent{t}{x}$ is the sequence of children (possibly empty) of $x$ in left-to-right order;

  \item $\treet{t}{x}$ the subtree of $t$ rooted at $x$;

  \item $\labt{t}{x} \in \Sigma$ the label of $x$;

  \item $\ancstrt{t}{x} \in \Sigma^+$ is the sequence of labels of the path from the root of $t$ to $x$;

  \item $\childstrt{t}{x} \in \Sigma^*$ the labels of the children of $x$ in left-to-right order.
\end{enumerate}
In particular, if $\childstrt{t}{x} = \varepsilon$, then $x$ is called a \emph{leaf node}. The size of $t$, denoted by $\|t\|$, is the number of its nodes. Also in these predicates we may omit the subscript $t$ when it is clear from the context.

\subsubsection{Regular String Languages}

A \emph{nondeterministic finite state machine} ($\nFA$) over $\Sigma$ is a quintuple $\autom{A} = \langle K, \Sigma, \Delta, q_s, F \rangle$ where $K$ are the \emph{states}, $q_s \in K$ is the \emph{initial state}, $F \subseteq K$ are the \emph{final states}, and $\Delta \subseteq K \times (\Sigma \cup \{\varepsilon\}) \times K$ is the \emph{transition relation}. Each triple $(q,\alpha,q') \in \Delta$ is called a \emph{transition} of $\autom{A}$. Sometimes the notation $q' \in \Delta(q,\alpha)$, where $\Delta$ is seen as a function from $K \times (\Sigma \cup \{\varepsilon\})$ to $2^K$, is more convenient. By $\Delta^* \subseteq K \times \Sigma^* \times K$ we denote the \emph{extended transition relation} defined as the reflexive-transitive closure of $\Delta$, in such a way that $(q, w, q') \in \Delta^*$ iff there is a sequence of transitions from $q$ to $q'$ recognizing string $w$. The set of strings $\langOfAutom{A} = \{w \in \Sigma^{*}: \Delta^*(q_s,w) \in F\}$ is the \emph{language} defined by $\autom{A}$. Such machines can be combined in various ways (see \cite{HolzerKutrib02} for a comprehensive analysis). In particular, $\bar{\autom{A}}$ denotes the complement of $\autom{A}$, and defines the language $\Sigma^* - \langOf{\autom{A}}$. Given two $\nFAs$ $\autom{A}_1$ and $\autom{A}_2$, we denote by $\autom{A}_1 \cdot \autom{A}_2$, $\autom{A}_1 \cup \autom{A}_2$, $\autom{A}_1 \cap \autom{A}_2$, and $\autom{A}_1 \mymid \autom{A}_2$ the $\nFA$ defining $\langOf{\autom{A}_1} \circ \langOf{\autom{A}_2}$, $\langOf{\autom{A}_1} \cup \langOf{\autom{A}_2}$, $\langOf{\autom{A}_1} \cap \langOf{\autom{A}_2}$, and $\langOf{\autom{A}_1} - \langOf{\autom{A}_2}$, respectively (operators $\cdot$ and $\circ$ are often omitted). Also, for a set $\automSet{A} = \{\autom{A}_1,\ldots,\autom{A}_m\}$ of $\nFAs$, we often write $\automInters{\automSet{A}}$ (or $\automUnion{\automSet{A}}$) instead of $\autom{A}_1 \cap \ldots \cap \autom{A}_m$ (or $\autom{A}_1 \cup \ldots \cup \autom{A}_m$.)

A \emph{deterministic finite automaton} ($\dFA$) over $\Sigma$ is an $\nFA$ where $\Delta$ is a function from $K \times \Sigma$ to $K$.

\medskip

A (possibly nondeterministic) \emph{regular expression} ($\nRE$ or also \emph{regex}, for short) $r$ over $\Sigma$ is generated by the following abstract syntax:
\[
    r \ \textsf{::=} \ \varepsilon \ | \ \emptyset \ | \ a \ | \ (r \cdot r) \ | \ (r + r) \ | \ r? \ | \ r^+ \ | \ r^*
\]
where $a$ stands generically for the elements of $\Sigma$. When it is clear from the context, we avoid unnecessary brackets or the use of $\cdot$ for concatenation. The language $\langOf{r}$ is defined as usual.

A \emph{deterministic regular expression} ($\dRE$) $r$ is an $\nRE$ with the following restriction. Let us consider the regex $\tilde{r}$ built from $r$ by replacing each symbol $a \in \Sigma$ with $\tilde{a}^i$ where $i$ is the position from left-to-right of $a$ in $r$. By definition, $r$ is a $\dRE$ if there are no strings $w\tilde{a}^i u$ and $w\tilde{a}^j v$ in $\langOf{\tilde{r}}$ such that $i \neq j$. The language $\langOf{r}$ of a $\dRE$ $r$ is called \emph{one-unambiguous} \cite{BruggemannWood98}.

\medskip

A cartesian product of $n$ finite sets is called a ``box'' \citep{Winkler04}. More precisely, fix a positive number $n$. Let $\Sigma$ be an alphabet. A \emph{box} $B$ over $\Sigma$ is any language of the form $\Sigma_1 \ldots \Sigma_n$ where $n$ is its \emph{width}, and $\Sigma_i \subseteq \Sigma$ for each $i$ in $[1..n]$. Clearly, each box is a regular language as it is a finite one.

\subsubsection{Regular Tree Languages}

A \emph{nondeterministic Unranked Tree Automaton} (\nUTA) is a quadruple $\autom{A} = \langle K, \Sigma, \Delta, F \rangle$ where $\Sigma$ is the alphabet, $K$ is a finite set of \emph{states}; $F \subseteq K$ is the set of \emph{final states}; $\Delta$ is a function mapping pairs from $(K \times \Sigma)$ to $\nFAs$ over $K$. A tree $t$ belongs to $\langOfAutom{A}$ if and only if there is a mapping $\mu$ from the nodes of $t$ to $K$ such that (i) $\mu(\rootnode{t}) \in F$, and (ii) for each node $x$ of $t$, either $\varepsilon$ or $\mu(\children{x})$ belongs to $\langOf{\Delta(\mu(x),\lab{x})}$ according to whether $x$ is a leaf-node or not, respectively.

A \emph{bottom-up-deterministic Unranked Tree Automaton} ($\dUTA$) over $\Sigma$ is an $\nUTA$ where $\Delta$ is a function from $(K \times \Sigma)$ to $\dFAs$ over $K$ in such a way that $\langOf{\Delta(q,a)} \cap \langOf{\Delta(q',a)} = \emptyset$ for each $q \neq q'$.

\subsubsection{Known decision problems}
In this section we recall some well known decision problems.

\begin{definition}
$\equivalenceS$ is the following decision problem. Given two $\mathcal{S}$-types, do they define the same language? \qed
\end{definition}
In particular, whenever we consider two $\mathcal{R}$-types instead of $\mathcal{S}$-types, we still denote by $\equivalenceR$ the equivalence problem defined exactly as above.

\begin{definition}
$\oneUnambR$ is the following decision problems. Given a regular language $L$ specified by an $\mathcal{R}$-type, is $L$ one-unambiguous? \qed
\end{definition}

\subsection{Types}\label{types}
As already mentioned, we consider abstractions of the most common XML Schemas by allowing regular languages, specified by possibly different formalisms for defining content models. More formally, let $\mathcal{R}$ be a mechanism for describing regular languages ($\nFAs$, $\dFAs$, $\nREs$, $\dREs$, or even others). We want to define and computationally characterize the problems regarding \emph{Distributed XML design} in a comparative analysis among the three main actual formalisms for specifying XML schema documents: \emph{W3C DTDs}, \emph{W3C XSD} and \emph{Regular Tree Grammars} (like \emph{Relax-NG}). For each of these schema languages, we adopt a class of abstractions that we call $\rDTDs$, $\rSDTDs$, and $\rEDTDs$, respectively, where $\mathcal{R}$ is the particular mechanism  for defining content models. We show that a number of properties do not depend on the choice of $\mathcal{R}$ (or even of $\mathcal{S}$) and for some complexity results we focus our analysis to the case of $\nFAs$. Before that, we summarize in Table \ref{tableOurAbstractions} the relevance of the different tree grammars.

\begin{table}[h]
\caption{Comparison between our abstractions of XML Schemas and existing formalisms.} \label{tableOurAbstractions}\centering \footnotesize
\begin{tabular}{c|c|c}
    \noalign{\smallskip} \noalign{\smallskip}\noalign{\smallskip}

    \vspace{-1mm}

    \ & \ & \ \\

    \vspace{-1mm}

    Schema language & Previously introduced formalism & Our abstraction \\

    \vspace{-1mm}

    \ & \ & \ \\

    \hline\hline

    \ & \ & \ \\

    \vspace{-1mm}

    \emph{W3C DTDs} & \emph{DTD}s and \emph{ltd}s & $\xDTDs{\dRE}$\\

    \vspace{-1mm}

    \ & \ & \ \\

    \hline

    \multirow{4}{*}{\emph{W3C XSD}} & \ & \multirow{4}{*}{$\xSDTDs{\dRE}$} \\

    \vspace{-1mm}

    \ & \emph{Single-Type Tree Grammars} & \ \\

    \vspace{-1mm}

    \ & \ & \ \\

    \vspace{-1mm}

    \ & and \emph{single-type EDTD}s & \ \\

    \vspace{-1mm}

    \ & \ & \ \\

    \hline

    \multirow{4}{*}{\emph{Relax NG}} & \ & \multirow{4}{*}{$\xEDTDs{\nRE}$} \\

    \vspace{-1mm}

    \ & \emph{unranked regular tree languages} & \ \\

    \vspace{-1mm}

    \ & \ & \ \\

    \vspace{-1mm}

    \ & (\emph{specialized ltd}s and \emph{EDTD}s) & \ \\

    \ & \ & \ \\
\end{tabular}
\end{table}
\normalsize

\subsubsection{$\rDTD$ types}
The following definition generalizes definitions considered in the literature such as \emph{ltd}s \cite{PapakonstantinouVianu00,BalminPapakonstantinou04} or \emph{DTD}s \cite{MartensNeven06,MartensNiehren07}, and defined for analyzing the properties of \emph{W3C Document Type Definitions}. As we marry these views, we define the following class of abstractions capturing all of them.

\begin{definition}
An $\rDTD$ is formalized as a triple $\tau = \langle \Sigma, \pi, s \rangle$ where
\begin{enumerate}[ \ $\centerdot$]
  \item $\Sigma$ is an alphabet (the \textbf{element names});

  \item $\pi$ is a function mapping the symbols of $\Sigma$ to $\mathcal{R}$-types still over $\Sigma$;

  \item $s \in \Sigma$ is the \textbf{start symbol}.
\end{enumerate}
A tree $t$, having labels over $\Sigma$ belongs to $\langOf{\tau}$ if and only if:
$\lab{\rootnode{t}} = s$ and $\childstr{x} \in \langOf{\pi(\lab{x})}$, for each node $x$ of $t$.
For a given element name $a$, the regular language $\langOf{\pi(a)}$, associated to $a$, is usually called the \textbf{content model} of $a$. \qed
\end{definition}

Notice that, due to the above definition, $\rDTDs$ with useless element names, or even defining the empty language, do exist. This is because the above definition allows to specify $\rDTDs$ that are, in a sense, ``not reduced'' (think about finite automata with unreachable states). Since it is much more convenient to deal with types that are not effected by these drawbacks, after giving some more definition, we formalize the notion of \textbf{reduced} types.

We introduce the $\dFA$ $\dual{\tau}$. It is the language consisting of the set of paths from the root to a leaf in trees in $\langOf{\tau}$ and it is in some sense the \emph{vertical language} of $\tau$.
\begin{definition}
Let $\tau = \langle \Sigma, \pi, s \rangle$ be an $\rDTD$. We build from $\tau$ the \textbf{dual} $\dFA$ $\dual{\tau} = \langle K, \Sigma, \delta, q_0, F \rangle$ as follows:
\begin{enumerate}[ \ $\centerdot$]
    \item $K = \{q_0\} \cup \{q_a : a \in \Sigma\}$;

    \item $\delta(q_0,s) = q_s$;

    \item for each $a,b \in \Sigma$, $\delta(q_a,b) = q_b$ iff $b$ appears in the alphabet of $\pi(a)$;

    \item $q_a \in F$ iff $\varepsilon \in \langOf{\pi(a)}$. \qed
\end{enumerate}
\end{definition}
Before defining a set of conditions ensuring that all the content models of a given $\rDTD$ $\tau$ are well defined and have no redundancy w.r.t. the language $\langOf{\tau}$, we mark the states of $\dual{\tau}$ (in a bottom-up style) as follows:
\begin{enumerate}
  \item Mark each final state of $\dual{\tau}$ as \emph{bound};
  \item For each non-\emph{bound} state $q_b$, consider the set $\Sigma_b \subseteq \Sigma$ where $\delta(q_b,a)$ is \emph{bound} iff $a \in \Sigma_b$. If $\langOf{\pi(b)} \cap \Sigma_b^+ \neq \emptyset$, then mark also $q_b$ as \emph{bound};
  \item Repeat step 2 until no more states can be marked.
\end{enumerate}

\begin{definition} Let $\tau$ be an $\rDTD$. We say that $\tau$ is \textbf{reduced} iff
\begin{enumerate}[ \ $\centerdot$]
  \item Each state of $\dual{\tau}$ is in at least a path from $q_0$ to a final state in $F$;

  \item Each state of $\dual{\tau}$ is bound;

  \item $\langOf{\dual{\tau}}$ is nonempty. \qed
\end{enumerate}
\end{definition}
We consider only \emph{reduced} $\rDTDs$ where, by the previous definition, it is clear that $\langOf{\tau} \neq \emptyset$. Note that for a given $\rDTD$ $\tau$, it is very easy to build $\dual{\tau}$ and for each ``unprofitable'' state $q_a$
\begin{enumerate}[ \ $\centerdot$]
  \item remove the element name $a$ from $\Sigma$;

  \item remove the rule $\pi(a)$ from $\pi$;

  \item modify the rules containing $a$ in their content models (using standard regular language manipulation) to produce only words not containing $a$ (see \cite{MartensNevenSchwentick09}, for more details.)
\end{enumerate}
Finally, we notice that only the last step of the reducing algorithm may depend on the choice of $\mathcal{R}$. Clearly, an $\rDTD$ and its reduced version describe the same language.

From a theoretical point of view, $\rDTDs$ do not express more than the \emph{local tree languages} \cite{MurataLeeMani05}. In particular, $\xDTDs{\nFA}$, $\xDTDs{\dFA}$ and $\xDTDs{\nRE}$ exactly capture this class of languages while $\xDTDs{\dRE}$ are less expressive \citep{PapakonstantinouVianu00,MurataLeeMani05}.
Nevertheless, the last class of types (using deterministic regular expressions \cite{BruggemannWood98} and that does not capture all the local tree languages) is, from a structural point of view, the closest to \emph{W3C DTDs}.

In this paper, for a given $\rDTD$ where $\mathcal{R}$ stands for $\dFAs$ or $\nFAs$ (for shortness, w.l.o.g., and only in examples) we often specify $\pi$ as a function that maps $\Sigma$-symbols to $\Sigma$-$\nREs$ (recall that any regular expression of size $n$ can be transformed into an equivalent $\varepsilon$-free \nFA\ with $\mathcal{O}(n \log^2 n)$ transitions in time $\mathcal{O}(n \log^2 n)$ \citep{HagenahMuscholl98,HromkovicSeibert97}.)

Finally, an example of $\xDTD{\dRE}$ is $\tau_1 = \langle \{s_1,c\},\pi_1,s_1\rangle$ with $\pi_1(s_1) = c^*$ and $\pi_1(c) = \varepsilon$. In the rest of the paper, we often omit to specify rules such as $\pi_1(c) = \varepsilon$; i.e., if no rule is given for a label, nodes with this label are assumed to be (solely) leaves.

\subsubsection{$\rSDTD$ types}

The following definition generalizes definitions considered in the literature such as \emph{Single-Type Tree Grammar}s \cite{MurataLeeMani05} or \emph{single-type EDTD}s \cite{MartensNeven06}, and defined for analyzing the properties of \emph{W3C XML Schema Definitions}. Also here, we define a class of abstractions capturing all of them.

\begin{definition}
An $\rSDTD$ (standing for single-type extended $\rDTD$) is a quintuple $\tau = \langle \Sigma, \tilde{\Sigma}, \pi, \tilde{s}, \mu \rangle$ where
\begin{enumerate}[ \ $\centerdot$]
  \item $\tilde{\Sigma}$ are the \textbf{specialized element names};

  \item $\langle \tilde{\Sigma}, \pi, \tilde{s} \rangle$ is an $\rDTD$ on $\tilde{\Sigma}$ and denoted by $\textsf{dtd}(\tau)$;

  \item $\mu : \tilde{\Sigma} \rightarrow \Sigma$ is a mapping from all the specialized element names \emph{onto} the set of element names. For each $a \in \Sigma$, we denote by $\tilde{a}^1, \ldots, \tilde{a}^n$ the distinct elements in $\tilde{\Sigma}$ that are mapped to $a$. This set is denoted $\tilde{\Sigma}(a)$;

  \item Let $\dual{\textsf{dtd}(\tau)}$ be $\langle K, \tilde{\Sigma}, \tilde{\delta}, q_0, F \rangle$. Build from this $\dFA$ the possibly $\nFA$ $\dual{\tau} = \langle K, \Sigma, \delta, q_0, F \rangle$ where for each $q,q'\in K$ and $a \in \Sigma$, $\delta(q,a) = q'$ iff there is an element $\tilde{a} \in \tilde{\Sigma}$ such that $\tilde{\delta}(q,\tilde{a}) = q'$. We require that $\dual{\tau}$ is a $\dFA$ (this captures the single-type requirement). Also in this case, $\dual{\tau}$ defines the vertical language of $\tau$.
\end{enumerate}
A tree $t$, labeled over $\Sigma$, is in $\langOf{\tau}$ if and only if there exists a tree $t' \in \langOf{\textsf{dtd}(\tau)}$ such that $t = \mu(t')$ (where $\mu$ is extended to trees). Informally, we call $t'$ a witness for $t$. Finally, an $\rSDTD$ $\tau$ is \textbf{reduced} if and only if $\textsf{dtd}(\tau)$ is. \qed
\end{definition}
As for $\rDTDs$, we consider only \emph{reduced} $\rSDTDs$.

From a theoretical point of view, $\rSDTDs$ are more expressive than $\rDTDs$ but do not capture the \emph{unranked regular tree languages} yet.

\subsubsection{$\rEDTDs$ types}
The following definition generalizes definitions considered in the literature such as \emph{specialized ltd}s \cite{PapakonstantinouVianu00,BalminPapakonstantinou04} or \emph{EDTD}s \cite{MartensNeven06}. Such formalisms (like \emph{Relax-NG}), from a structural perspective, express exactly the \emph{homogeneous unranked regular tree languages} and are as expressive as \emph{unranked tree automata} or \emph{Regular Tree Grammars} \cite{BruggemannMurata01}.

\begin{definition}
An $\rEDTD$ (extended $\rDTD$) $\tau$ is an $\rSDTD$ without the single-type requirement. More formally, the automaton $\dual{\tau}$, built as for $\rSDTD$, may be here an $\nFA$. The language $\langOf{\tau}$ is defined as for $\rSDTDs$. \qed
\end{definition}

\subsection{Distributed Documents}
In the context of distributed architectures (e.g., P2P architectures), distributed documents (or distributed trees), such as AXML documents, are XML documents that may contain embedded \emph{function calls}. In particular, a \emph{distributed XML document} $\Tt$ can be viewed as a collection of (classical) XML documents $\ttn$ brought together by a unique (special) XML document $\Tf$, the \emph{kernel}, some of whose leaf-nodes, called function-nodes, play the role of ``docking points'' for the external resources $\ffn$. The ``activation'' of a node of $T$ having a function as label, say $\f_i$, consists in a call to resource (or function) $\f_i$ the result of which is still an XML document, say $t_i$. When $\f_i$ is invoked, its result is used to extend the kernel $\Tf$. Thus, each docking point connects the peer that holds the kernel and invokes the resource $\f_i$, and the peer that provides the corresponding XML document $t_i$. For simplicity of notation, for labeling a function-node we use exactly the name of the resource it refers. For instance, the tree $T_0 = s(a \ \f_1 \ b(\f_2))$ is a kernel having $s$ as root, and containing two function-nodes referring the external resource $\f_1$ and $\f_2$.

The {\em extension} $\extTt$ of $T$ is the whole XML document (without any function at all) obtained from the distributed document $\Tt$ by replacing each node referring resource $\f_i$ with the forest of XML trees (in left-to-right order) directly connected to the root of $t_i$. This process is called \emph{materialization}. For instance, the extension of kernel $T_0$ would be $s(a \ c(dd) \ b(d(ef)))$ in case of resources $\f_1$ and $\f_2$ provided trees $s_1(c(dd))$ and $s_2(d(ef))$, respectively.

An interesting task is to associate a type $\tau_i$ (e.g., a \emph{W3C XSD} document) to each resource $\f_i$ in such a way that the XML document $t_i$ returned as answer is {\em valid} w.r.t. this type and any materialization process always produces a document $\extTt$ valid w.r.t. a given global type $\tau$ (still specified by the \emph{W3C XSD} syntax). A global type and a kernel document represent the (top-down) \emph{design} of a given distributed architecture. A collection of types associated to the function calls in such a design is called a \emph{typing}. Given a distributed design, we would like to know whether either a precise typing has some properties or a typing with some properties does exist. But also, we could directly start from a kernel $T$ and a typing (bottom-up design) and analyze the properties of the tree language consisting in each possible extension $\extTt$.

More formally, let $\calL$ and $\calF$ be two alphabets, respectively, of \emph{element names} (such as $s$, $a$, $b$, $c$, etc.) and \emph{function symbols} (such as $\f$, $\g$, etc.). A \emph{kernel document} or \emph{kernel tree} $\Tf$ (or also $T(\f_n)$, with $(\f_n)$ denoting a sequence%
\footnote{We denote a finite sequence of objects $(x_1,\ldots,x_n)$ over an index set $I = \{1,\ldots,n\}$ by $(x_n)$ and we often omit the specification of the index set $I$.} of length $n$) is a tree over ($\calL \cup \calF$) where:
\begin{enumerate}[(i)]
  \item the root is an element node (say $s_0$);
  \item the function nodes $\f_1,\ldots,\f_n$ are leaf nodes;
  \item no function symbol occurs more than once.
\end{enumerate}
In particular, for each non-leaf node of $T$, say $x$, the \emph{kernel string} $\childstr{x}$, with $k \geq 0$ functions, is of the form $w_{h} \f_{h+1} w_{h+1} \ldots \f_{h+k} w_{h+k}$ (for some $h$ in $[1..n]$) where $w_i \in \calL^*$ for each $i \in \{h,\ldots,h+k\}$, $\f_i \in \calF$ for each $i \in \{h+1,\ldots,k\}$, and $\f_i \neq \f_j$ for each $i \neq j$.

We next consider its semantics. It is defined by providing a tree for each function-node. In particular, an {\em extension} $\ext$ maps each $i$ in $[1..n]$ to a tree $t_i = \ext(\f_i)$. The {\em extension} $\extTt$ of a kernel $\Tf$ is obtained by replacing each $\f_i$ with the forest of trees (in left-to-right order) directly connected to the root of $t_i$.

A \emph{type} $\tau$ for a kernel tree $T$ is one of an $\rDTD$, $\rSDTD$, or $\rEDTD$. Given an extension $\ttn$, we say that tree $\Tt$ satisfies type $\tau$ if and only if $\extTt$ does. This motivates requirement (iii) to avoid irregularities: For instance, in the kernel $T_1 = s(\f \ \f)$ the children of $s$ in any extension of $T_1$ are of the form $ww$ for some word $w$. But since this is not a regular language, the type of $T_1$ cannot be defined by none of the three adopted formalisms. Although we disallow the same function to appear twice, several functions may share the same type. Also, even if for labeling a function-node we use exactly the name of the resource it refers (for simplicity of notation), this does not prohibit a resource to provide two XML subtrees to be attached to the kernel. In fact, different names (function symbols) can be associated to the same resource still preserving extensions from irregularities.

We introduce {\em typings} to constrain the types of the function calls of a kernel document.  A \emph{typing} for a kernel tree $T(\f_n)$ is a positional mapping from the functions in $(\f_n)$ to a sequence $(\tau_n)$ of types (schema documents).  Now, as we replace each $\f_i$ (in the extensions of $T$) with a forest of XML documents then, for each type $\tau_i$ associated to $\f_i$, we actually use a schema document containing an ``extra'' element name, say $s_i$, being only the label of the root in all the trees in $\langOf{\tau_i}$.

\begin{definition}
We denote by $\extTn$ the tree language consisting of all possible extensions $\extTt$ where $t_i \models \tau_i$ ($t_i$ is valid w.r.t $\tau_i$) for each $i$. \qed
\end{definition}

\begin{definition}\label{DefTn}
We denote by $\Tn$ the $\xEDTD{\nFA}$ (or $\xEDTD{\nRE}$) constructed from $T$ and $(\tau_n)$ in the obvious way such that $\langOf{\Tn} = \extTn$. \qed
\end{definition}

In Section \ref{rEDTDsTyping} we will show precisely how to build $\Tn$ in polynomial time, prove that the construction is semantically correct, and establish that the size of $\Tn$ is purely linear in the size of $T$ and $(\tau_n)$.
Let us illustrate for now the issues with an example. Observe, for instance, that for the tree $T = s_0(a(b) \f_1 a(c))$, no matter which type $\tau_1$ is, there is no $\rDTD$-typing expressing the language $\ext_T(\tau_1)$. Indeed, this is even the case for $T = s_0(a(b) a(c))$ with no function at all. If we consider the tree $T = s_0(a(\f_1) a(\f_2))$, then the typing $\langOf{\tau_1} = \{s_1(b)\}$, $\langOf{\tau_2} = \{s_2(c)\}$ prohibits that $\ext_T(\tau_1,\tau_2)$ is expressible by an $\rDTD$-type because $\langOf{\Tn} = \{s_0(a(b) a(c))\}$ entailing that the content model of $b$ is non-regular; while the typing $\langOf{\tau_1} = \{s_1(b)\}$, $\langOf{\tau_2} = \{s_2(b)\}$ allows that, because $\langOf{\Tn} = \{s_0(a(b) a(b))\}$ entailing that all the content models of $s_0$, $a$ and $b$ are regular languages, $\{aa\}$, $\{b\}$ and $\emptyset$, respectively. Such situations motivated Definition \ref{DefTn}.

Before concluding this section, we adapt the previous definitions to strings in the straightforward way. (We will often use reductions to strings problems in the paper.) Let $w(\f_n) = w_0 \f_1 w_1 \ldots \f_n w_n$ be a \emph{kernel string}. For typing strings, we use $\mathcal{R}$-types where $\mathcal{R} \in \{\nFA, \dFA, \nRE, \dRE\}$.
A \emph{typing} for $w(\f_n)$ is still a positional mapping from the functions in $(\f_n)$ to a sequence $(\tau_n)$ of $\mathcal{R}$-types. By $\extwn$ we still denote the string language consisting of all possible extensions of $w$, and by $\wn$ the $\nFA$ (or $\nRE$) constructed from $w$ and $(\tau_n)$ is such a way that $\langOf{\wn} = \extwn$.

We will use in our proofs a generalization to ``Boxes''. A \emph{kernel box} $B(\f_n) = B_0 \f_1 B_1 \ldots \f_n B_n$ is, here, a finite regular language over $(\calL \cup \calF)$ where $\f_1,\ldots,\f_n$ are as above, and each $B_i$ is a box (of a fixed width) over $\calL$. With $B(\tau_n)$ we denote the $\nFA$ (or $\nRE$) constructed from $B$ and $(\tau_n)$ is such a way that $\langOf{B(\tau_n)} = \textsf{ext}_B(\tau_n)$.

\subsection{The Typing Problems}\label{typing}

In this section, we introduce the notion of distributed XML design, define the design problems that are central to the present work, and give the overview of the complexity results. We consider two different approaches, \emph{bottom-up} and \emph{top-down}, according to whether the distributed design, other than a \emph{kernel tree}, consists of a \emph{typing} or a \emph{target type}, respectively.

\begin{definition}
Let $\mathcal{S}$ be a schema language, and $\Tf$ be a kernel document. We call $\mathcal{S}$-\textbf{design} (or just design) one of the following:
\begin{enumerate}[ \ $\centerdot$]
  \item $D=\langle (\tau_n), \Tf \rangle$ where $(\tau_n)$ is an $\mathcal{S}$-typing. This is \textbf{bottom-up design}.

  \item $D=\langle \tau, \Tf \rangle$ where $\tau$ is a (target) $\mathcal{S}$-type. This is \textbf{top-down design}. \qed
\end{enumerate}
\end{definition}

Intuitively, given a bottom-up design, one would like to find a global type that captures the typing of the global document. On the other hand, given a top-down design, one would like to find types for the local documents that will guarantee the global type.

With the following definition, we start the bottom-up analysis. Notice that the concepts used for bottom-up design will be also useful when we consider top-down design.

\begin{definition}\label{DefConsistS}
Given an $\mathcal{S}$-design $D = \langle (\tau_n), T \rangle$, the $\mathcal{S}$-typing $(\tau_n)$ is \textbf{$\mathcal{S}$-consistent} with $T$ (simply consistent when $\mathcal{S}$ is understood) if there exists an $\mathcal{S}$-type $\tau$ such that $\langOf{\tau} = \extTn$, in other words, if $\extTn$ is definable by some $\mathcal{S}$-type. This problem (deciding whether an $\mathcal{S}$-typing is $\mathcal{S}$-consistent with a kernel tree) is called $\consistentS$. \qed
\end{definition}

We will denote by $\textsf{type}_{T,\mathcal{S}}(\tau_n)$, or $\typeTn$ when $\mathcal{S}$ is understood, the $\mathcal{S}$-type when it exists such that $\langOf{\typeTn} = \extTn$. Notice that if both $\cal S$ and $T$ are fixed, then $\typeTn$ plays the role of a function from  the set of all possible $\mathcal{S}$-typings of length $n$ to a set of certain $\mathcal{S}$-types. According to every possible decision-answer of $\consistentS$ (where $T$ is now fixed), such a function might be always definable, never, or only for some $\mathcal{S}$-typing. Finally, the complexity of deciding $\consistentS$ or computing $\typeTn$ (with an estimation, w.r.t. $\Tn$, of its possible size), may vary considerably due to $\mathcal{S}$.

Table \ref{tableConsistResults} summarizes the complexity results of $\consistentS$.
We vary $\mathcal{S}$
among $\rDTDs$, $\rSDTDs$ and $\rEDTDs$, for various kinds of $\mathcal{R}$. In all cases
but $\dRE$, we get tight bounds. For \texttt{DTDs} and \texttt{SDTDs} with $\dRE$, we provide nonmatching lower and upper bounds. The table also shows the size that
$\typeTn$ may have in the worst case. Again this is given precisely
for all cases but $\dRE$. For \texttt{DTDs} and \texttt{SDTDs} with $\dRE$, we provide
nonmatching bounds.

In the next sections, we systematically analyze the complexity of this
problem by varying $\mathcal{S}$ among $\rDTDs$, $\rSDTDs$ and $\rEDTDs$, and we will
consider $\typeTn$ for each of these schema languages. We next give
an example to illustrate some of the main concepts introduced.

\begin{table}[h]
\caption{Complexity results of $\consistentS$ compared with the  worst-case-optimal size of $\typeTn$ with respect to $m = \| \Tn \|$.} \label{tableConsistResults}\centering \scriptsize
\begin{tabular}{c||ccc}
    \noalign{\smallskip} \noalign{\smallskip}\noalign{\smallskip}

    \vspace{-2mm}

    \ & \ & \ & \ \\

    \vspace{-1mm}

    \ & \ & \ & \ \\

    \vspace{-1mm}

    \ & \texttt{-DTDs} & \texttt{-SDTDs} & \texttt{-EDTDs}\\

    \vspace{-1mm}

    \ & \ & \ & \ \\

    \hline\hline

    \ & \ & \ & \ \\

    \vspace{-1mm}

    \ & \classc{PSPACE} & \classc{PSPACE} & $\class{DTIME}(\mathcal{O}(1))$\\

    $\nFA$ & \ & \ & \ \\

    \ & $\Theta(m)$ & $\Theta(m)$ & $\Theta(m)$ \\

    \vspace{-1mm}

    \ & \ & \ & \ \\

    \hline

    \ & \ & \ & \ \\

    \vspace{-1mm}

    \ & \classc{PSPACE} & \classc{PSPACE} & $\class{DTIME}(\mathcal{O}(1))$\\

    $\nRE$ & \ & \ & \ \\

    \ & $\Theta(m)$ & $\Theta(m)$ & $\Theta(m)$ \\

    \vspace{-1mm}

    \ & \ & \ & \ \\

    \hline

    \ & \ & \ & \ \\

    \vspace{-1mm}

    \ & $\classc{PSPACE}$ & $\classc{PSPACE}$ & $\class{DTIME}(\mathcal{O}(1))$\\

    $\dFA$ & \ & \ & \ \\

    \ & $\Theta(2^m)$ & $\Theta(2^m)$ & $\Theta(m^2)$ \\

    \vspace{-1mm}

    \ & \ & \ & \ \\

    \hline

    \ & \ & \ & \ \\

    \vspace{-1mm}

    \ & \classh{PSPACE} $\leftrightsquigarrow$ $\class{EXPTIME}$ & \classh{PSPACE} $\leftrightsquigarrow$ $\class{EXPTIME}$ & $\class{DTIME}(\mathcal{O}(1))$ \\

    $\dRE$ & \ & \ & \ \\

    \ & $\Omega(2^m)$ $\leftrightsquigarrow$ $\mathcal{O}(2^{2^m})$ & $\Omega(2^m)$ $\leftrightsquigarrow$ $\mathcal{O}(2^{2^m})$ & $\Theta(m)$ \\

    \vspace{-2mm}

    \ & \ & \ & \ \\
\end{tabular}
\end{table}
\normalsize

\begin{example}\label{newTypeExample}
Consider the kernel $T = s_0(a \ \f_1 \ c \ \f_2)$ and the pair $\tau_1 = \langle \{s_1,b\},\pi_1,s_1\rangle$ and $\tau_2 = \langle \{s_2,d\},\pi_2,s_2\rangle$ of $\xDTD{\dRE}$-types, with $\pi_1(s_1) = b^*$ and $\pi_2(s_2) = d^*$. The activation of both $\f_1$ and $\f_2$ may return trees $s_1(bb)$ and $s_2(d)$, respectively. These trees can be plugged into $T$ producing the extension $s_0(abbcd)$. The tree language obtained by considering each possible extension of $T$ is $\ext_T(\tau_1,\tau_2) = \{s_0(ab^ncd^m) : n,m \geq 0\}$. Now, we have:
\[
    \textsf{type}_T(\tau_1, \tau_2) = \langle \{s_0,a,b,c,d\},\pi,s_0\rangle
\]
where $\pi(s_0) = a \ b^* c \ d^*$ and all the other element names other that $s_0$ are leaves. Finally, $(\tau_1,\tau_2)$ is $\xDTD{\dRE}$-consistent with $T$. \qed
\end{example}

We now define the top-down design problems. But before, we introduce some straightforward notation. Let $\tau$ and $\tau'$ be two types. We say that:
\begin{enumerate}[ \ $\centerdot$]
  \item $\tau \equiv \tau'$ (equivalent) iff $[\tau] = [\tau']$

  \item $\tau \leq \tau'$ (smaller or equivalent)  iff $[\tau] \subseteq [\tau']$

  \item $\tau < \tau'$ (smaller) iff $[\tau] \subset [\tau']$
\end{enumerate}
and also that, given two typings $(\tau_n)$ and $(\tau_n')$:
\begin{enumerate}[ \ $\centerdot$]
  \item $(\tau_n) \equiv (\tau_n')$ iff $\tau_i \equiv \tau_i'$ for each $i$

  \item $(\tau_n) \leq (\tau_n')$ iff $\tau_i \leq \tau_i'$ for each $i$

  \item $(\tau_n) < (\tau_n')$ iff $(\tau_n) \leq (\tau_n')$ and $\tau_i < \tau_i'$ for some $i$
\end{enumerate}

\begin{definition}\label{TypingProperties}
Given an $\mathcal{S}$-design $D = \langle \tau, T \rangle$, we say that a typing ($\tau_n$)
is:
\begin{enumerate}[ \ $\centerdot$]
  \item \textbf{sound} if $\extTn \subseteq \langOf{\tau}$;

  \item \textbf{maximal} if it is sound, and there is no other
      sound typing $(\tau_n')$ s.t. $(\tau_n) < (\tau_n')$;

  \item \textbf{complete} if $\extTn \supseteq \langOf{\tau}$;

  \item \textbf{local} if $\extTn = \langOf{\tau}$, namely if it is both sound and complete;

  \item \textbf{perfect} if it is local, and $(\tau_n') \leq (\tau_n)$ for each other
      sound typing $(\tau_n')$;

  \item $D$-\textbf{consistent} if it is an $\mathcal{S}$-typing which is $\mathcal{S}$-consistent as well.  \qed
\end{enumerate}
\end{definition}

\begin{remark}
It should be clear that for a given $\mathcal{S}$-design $D = \langle \tau, T \rangle$ we could have sound typings that are not $D$-consistent. But, note that, it is even possible to have a sound typing where $T(\tau_n)$ does not define a regular tree language. Consider the design $D$ where $T = s_0(\f_1)$ and $\tau = s_0(a^+b^+)$. Clearly, the typing $\langOf{\tau_1} = \{s_1(a^nb^n):n>0\}$ is sound but $\langOf{T(\tau_1)}$ is not regular. Anyway, we prove in Section \ref{additionalProperties} (for strings, but the results generalizes to trees due to our reductions) that if an $\mathcal{S}$-design admits a sound typing $(\tau_n)$, then it also admits a sound $\xEDTD{\nFA}$-typing $(\tau_n')$ such that $(\tau_n) \leq (\tau_n')$.

Also, by definition of maximality, note that for instance, for a given $\xDTD{\dRE}$-design $D$, a $\xDTD{\dRE}$-typing $(\tau_n)$
is not maximal even if there is a sound $\xDTD{\nFA}$-typing $(\tau_n')$ for $D$ such that $(\tau_n) < (\tau_n')$. One could have some objection to such a definition. Anyway, \citet{MartensNiewerthSchwentick10} proved that whenever the illustrated situation happens, then there is also a $\xDTD{\dRE}$-typing $(\tau_n'')$ such that $(\tau_n) < (\tau_n'')$.\qed
\end{remark}

Clearly, local typings present the advantage of allowing a local verification of document consistency (soundness and completeness by definition). Also, no consistent document is ruled out (completeness). Maximal locality guarantees that in some sense, no unnecessary constraints are imposed to the participants. Finally, perfect typings are somehow the ultimate one can expect in terms of not imposing constraints to the participants.  Many designs will not accept a perfect typing. However, there are maximal sound typings which are not local. This is not surprising as there are designs that have at least a sound typing but do not allow any local at all, and clearly, if there is a sound typing, then there must also exist a maximal sound one. We will see examples that separate these different classes further. But before, we make an observation on $D$-consistency and formally state the problems studied in the paper.

Let $\mathcal{S}$ be any schema language among $\rDTDs$, $\rSDTDs$, and $\rEDTDs$, where $\mathcal{R} \in \{\nFA, \dFA, \nRE, \dRE\}$. Whenever we consider a top-down $\mathcal{S}$-design $D = \langle \tau, T\rangle$, we require that a typing $(\tau_n)$ for $D$ has to be $D$-consistent, namely both $T(\tau_n)$ is $\mathcal{S}$-consistent (it has an equivalent $\mathcal{S}$-type) and each $\tau_i$ is an $\mathcal{S}$-type. In order to verify such a condition, we can exploit the techniques that we have developed for bottom-up design. In particular, it is not hard to see that if $(\tau_n)$ is not $\mathcal{S}$-consistent, then it can not be local. Thus, our approach aims at isolating problems concerning locality from those concerning consistency.

\begin{definition}
$\locS$, $\mlS$, $\perfS$ are the following decision problems. Given an $\mathcal{S}$-design $D = \langle \tau, T\rangle$ and a $D$-consistent typing $(\tau_n)$, is $(\tau_n)$ a local, or maximal local, or perfect typing for $D$, respectively? \qed
\end{definition}

\begin{definition}
$\elocS$, $\emlS$, $\eperfS$ are the following decision problems. Given an $\mathcal{S}$-design $\langle \tau, T \rangle$, does there exist a local, or maximal local, or perfect $D$-consistent typing for this design, respectively? \qed
\end{definition}

We similarly define the corresponding {\em word} problems ($\mathcal{S}$ is simply $\mathcal{R}$). We have $\locR$, $\mlR$, $\perfR$, $\elocR$, $\emlR$ and $\eperfR$. Finally, we will use in proofs box versions of the problems, $\elocRB$, $\emlRB$ and $\eperfRB$.

\begin{remark}
In this paper, although we analyze all the three defined schema languages ($\rDTDs$, $\rSDTDs$, and $\rEDTDs$) for top-down designs, after providing reductions from trees to strings, we specialize the analysis to the case of $\mathcal{R} = \nFA$. More tractable problems may be obtained by considering deterministic content models or restricted classes of regular expressions \citep{MartensNeven04,GhelliColazzo07} as made by \citet{MartensNiewerthSchwentick10}. Also, notice that we pay more attention to maximum locality rather than to maximality proper. In fact, for the latter notion, the existence problem is trivial. Moreover, the complexity of the verification problem essentially coincides for both notions.
Nevertheless, one could be interested in a maximal sound typing when, for some reason, the design can not be improved and does not admit any local typing. There could be even cases where a local typing does not exist but, there is a unique maximal sound typing comprising any other possible sound typing, a sort of quasi-perfect typing. For instance, the design $T = s(a \ \f_1)$ and $\tau = s(ab^* + d)$ has such a property. Our techniques can be easily adapted to these cases, too. \qed
\end{remark}

Table \ref{tableResults} gives an overview of the complexity results for the typings problems previously defined. We will see in Section \ref{TopDownDesign} that, for $\rDTDs$ and $\rSDTDs$, each problem on trees is logspace-reducible to a set of problems on strings (thus, it suffices to prove the results in Table \ref{tableResults} for words) and that, for $\rEDTDs$, the problems on trees depend on the problems on boxes in a more complex manner. In particular, row $D$ includes two problems that are actually the same (they only differ if $\mathcal{R} = \dREs$, as shown in \citet{MartensNiewerthSchwentick10}). Each number in brackets refers either to the corresponding statement/proof in the paper (if rounded) or the paper where the particular result has already been proved (if squared).

\begin{table}[h]
\caption{Complexity results in case of top-down design} \label{tableResults}\centering \scriptsize
\begin{tabular}{cc||c|c}
    \noalign{\smallskip} \noalign{\smallskip}\noalign{\smallskip}

    \vspace{-2mm}

    \ & \ & \ & \ \\

    \vspace{-1mm}

    \ & \ & [1] & [2] \\

    \vspace{-1mm}

    \ & \ & \ & \ \\

    \vspace{-1mm}

    \ & \ & $\nFAs$ / $\xDTDs{\nFA}$ / $\xSDTDs{\nFA}$ & $\xEDTDs{\nFA}$\\

    \vspace{-1mm}

    \ & \ & \ & \ \\

    \hline\hline

    \ & \ & \ & \ \\

    \vspace{-1mm}

    [A] & \textsc{loc} & $\classc{PSPACE}$ (\ref{LocNfaPspaceComplete}) & $\classc{EXPTIME}$ (\ref{LocEdtdNfaExpCompl})\\

    \ & \ & \ & \ \\

    \hline

    \ & \ & \ & \ \\

    \vspace{-2mm}

    \ & \ & in $\class{PSPACE}$ \cite{MartensNiewerthSchwentick10} $\vee$ (\ref{nfa-ml-pspace}) & $\classh{EXPTIME}$ (\ref{CorolLocExpHard}) \\

    \vspace{-1mm}

    [B] & \textsc{ml} & \ & \ \\

    \vspace{-1mm}

    \ & \ & $\classh{PSPACE}$ (\ref{check-nfa-hard}) & in 2-$\class{EXPSPACE}$ (\ref{MLedtdNFA2EXPSPACE})\\

    \vspace{-1mm}

    \ & \ & \ & \ \\

    \hline

    \ & \ & \ & \ \\

    \vspace{-2mm}

    \ & \ & \ & $\classh{EXPTIME}$ (\ref{CorolLocExpHard}) \\

    \vspace{-1mm}

    [C] & \textsc{perf} & $\classc{PSPACE}$ (\ref{PerfNfaPspaceComplete}) & \ \\

    \vspace{-1mm}

    \ & \ & \ & in $\class{coNEXPTIME}$ (\ref{PERFedtdNFAcoNEXP})\\

    \vspace{-1mm}

    \ & \ & \ & \ \\

    \hline\hline

    \ & \ & \ & \ \\

    \vspace{-2mm}

    \ & \ & $\classh{PSPACE}$ (\ref{nfa-locHard}) & $\classh{EXPTIME}$ (\ref{ElocEdtdNfaExpHard})\\

    \vspace{-1mm}

    [D] & $\exists$\textsc{-loc}/$\exists$\textsc{-ml} & \ & \ \\

    \vspace{-1mm}

    \ & \ & in $\class{EXPSPACE}$ \cite{MartensNiewerthSchwentick10} $\vee$ (\ref{ThmExpspace}) & in 2-$\class{EXPSPACE}$ (\ref{Thme-loc-ml-2Expspace}) \\

    \vspace{-1mm}

    \ & \ & \ & \ \\

    \hline

    \ & \ & \ & \ \\

    \vspace{-2mm}

    \ & \ & \ & $\classh{EXPTIME}$ (\ref{ElocEdtdNfaExpHard}) \\

    \vspace{-1mm}

    [E] & $\exists$\textsc{-perf} & $\classc{PSPACE}$ (\ref{nfa-pspace-complete}) & \ \\

    \vspace{-1mm}

    \ & \ & \ & in $\class{coNEXPTIME}$ (\ref{EPERFedtdNFAcoNEXP})\\

    \vspace{-2mm}

    \ & \ & \ & \ \\
\end{tabular}
\end{table}
\normalsize

We now present examples that separate the different design properties of typings.

\begin{example}
Let $\tau = \langle\{s,a,b,c\},\pi,s\rangle$ be an $\xDTD{\nRE}$ where $\pi(s) = a^*bc^*$, and $T=s(\f_1\f_2)$ be a kernel tree. It is easy to see that both $s_1(a^*bc^*), s_2(c^*)$ and $s_1(a^*), s_2(a^*bc^*)$ are \emph{local typings} as $a^*bc^* c^* \equiv a^* a^*bc^* \equiv a^*bc^*$. In fact, they are also \emph{maximal local typings}, and so there is no perfect typing for this design.  Observe that, for instance, $s_1(a?), s_2(a^*bc^*)$ is still a \emph{local typing} that, however, is not maximal because it imposes unnecessary constraints to the local sites. If desired, one could leave them more freedom, e.g., type the first function with $a^*$. \qed
\end{example}

\begin{example}
Let $\tau = s(a^* b c^*)$ be a type and $T=s(\f_1 b \f_2)$ be a kernel tree. The typing $s_1(a^*), s_2(c^*)$ is \emph{perfect}. This has to be an excellent typing since there is no alternative maximal local typing. \qed
\end{example}

\begin{example}\label{umlNotPerfect}
Let $\tau = (ab)^*$ be a type and $T=s(\f_1 \f_2)$ be a kernel tree. The typing $s_1((ab)^*), s_2((ab)^*)$ is a unique maximal local but it is not perfect. Consider, in fact, typing $s_1(a), s_2(b)$. It is sound but $(a, b) \leq ((ab)^*, (ab)^*)$ does not hold. Clearly, a perfect typing cannot exist. \qed
\end{example}

\begin{example}
Let $\tau = (ab)^+$ be a type and $T=s(\f_1 \f_2)$ be a kernel tree. There are three maximal local typings:
\[
    s_1((ab)^*), \ s_2((ab)^+) \ \ \ \ \
    s_1((ab)^*a), \ s_2(b(ab)^*) \ \ \ \ \
    s_1((ab)^+), \ s_2((ab)^*)
\]
according to whether either $s_1$, $s_1(a)$, or none of them may belong to each possible $\textsf{ext}(\f_1)$, respectively. \qed
\end{example}

The following theorem completes the comparison of the properties of typing we study. (Note that its converse is not true by Example \ref{umlNotPerfect}.)

\begin{theorem}
Every perfect typing is unique maximal local.
\end{theorem}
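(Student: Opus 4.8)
The plan is to unfold the phrase ``unique maximal local'' into two subclaims and dispatch each by a short antisymmetry argument on the componentwise inclusion preorder $\leq$ on typings. Throughout I read ``unique'' as \emph{unique up to the equivalence} $\equiv$, since $\leq$ is only a preorder and is antisymmetric merely modulo $\equiv$. So fix a perfect typing $(\tau_n)$ for the design $D = \langle \tau, T\rangle$. The two things to establish are: (a) $(\tau_n)$ is itself maximal local, and (b) any maximal local typing equals $(\tau_n)$ up to $\equiv$.

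For (a) I would argue as follows. By definition a perfect typing is local, hence sound, so only maximality remains. Suppose for contradiction that some sound typing $(\tau_n')$ satisfies $(\tau_n) < (\tau_n')$; by the definition of $<$ this gives $\langOf{\tau_i} \subset \langOf{\tau_i'}$ for some index $i$. On the other hand, perfection of $(\tau_n)$ applied to the sound typing $(\tau_n')$ yields $(\tau_n') \leq (\tau_n)$, hence $\langOf{\tau_i'} \subseteq \langOf{\tau_i}$. These two inclusions are incompatible, so no such $(\tau_n')$ exists and $(\tau_n)$ is maximal; being also local, it is maximal local.

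For (b) I would let $(\sigma_n)$ be any maximal local typing. Being local it is sound, so perfection of $(\tau_n)$ forces $(\sigma_n) \leq (\tau_n)$. But $(\tau_n)$ is itself local, hence sound; were this inequality strict, i.e. $(\sigma_n) < (\tau_n)$, then $(\tau_n)$ would be a sound typing strictly dominating $(\sigma_n)$, contradicting the maximality of $(\sigma_n)$. Therefore $(\sigma_n) \leq (\tau_n)$ holds with no strictly smaller component, which by the definitions of $\leq$ and $<$ means $\langOf{\sigma_i} = \langOf{\tau_i}$ for every $i$, that is $(\sigma_n) \equiv (\tau_n)$. Hence every maximal local typing coincides with $(\tau_n)$ up to equivalence, which is exactly uniqueness.

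There is no serious obstacle here: the argument is purely order-theoretic, relying only on the antisymmetry of componentwise language inclusion modulo $\equiv$ and on soundness being preserved under the comparisons involved. The only points requiring care are the bookkeeping of $<$ versus $\leq$ on typings (global $\leq$ together with strictness in at least one component) and the interpretation of ``unique'' modulo $\equiv$; once these are pinned down, both subclaims reduce to the single observation that a perfect typing dominates every sound typing, while no sound typing may strictly dominate a maximal one.
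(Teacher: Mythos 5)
Your proof is correct and follows essentially the same route as the paper's: perfection gives $(\tau_n')\leq(\tau_n)$ for every sound typing, from which maximality follows by the incompatibility of $(\tau_n)<(\tau_n')$ with $(\tau_n')\leq(\tau_n)$, and uniqueness (modulo $\equiv$) follows because any other maximal local typing is sound, hence dominated by $(\tau_n)$, and could not be maximal if the domination were strict. You merely spell out the antisymmetry bookkeeping that the paper leaves implicit.
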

\begin{proof}\ONLINE{
Consider a perfect typing $(\tau_n)$ for $T(\f_n)$ and $\tau$. We observe that $(\tau_n)$ is local, by definition. Moreover, by definition, for each other sound (so also local) typing $(\tau_n')$, we have $(\tau_n') \leq (\tau_n)$. The typing, $(\tau_n)$ is maximal because there exists no other sound typing $(\tau_n'')$ such that $(\tau_n) < (\tau_n'')$, and it is unique because there is no another local typing $(\tau_n'')$ such that for some index $i$ and some string $w$, then $w \in \langOf{\tau_i''}$ but $w \notin \langOf{\tau_i}$.}
\end{proof}

\section{Bottom-up design}\label{BottomUpDesign}

In this section, we consider bottom-up design.

\subsection{$\rEDTDs$ typing}\label{rEDTDsTyping}

Let $T(\f_n)$ be a kernel and $(\tau_n)$ be an $\rEDTD$-typing where each $\tau_i = \langle \Sigma_i, \tilde{\Sigma}_i, \pi_i, \tilde{s}_i, \mu_i \rangle$. We next present the construction of $\Tn$, that (to be as general as possible) is an $\xEDTD{\nFA}$. We use the following notations:
\begin{enumerate}
  \item $\Sigma_0$ contains the element names in $T$ (the labels but not the functions);

  \item $\tilde{\Sigma}_0$ contains a specialized element name $\tilde{a}_0^x$, for each $a \in \Sigma_0$ and each node $x$ of $T$ with label $a$.

  \item $s_0$ is the root of $T$;

  \item $s_i$ is the root of trees in $\langOf{\tau_i}$ for each $i$.
\end{enumerate}
We also make without loss of generality the following assumptions:
\begin{enumerate}
  \item $\tilde{\Sigma}_i \cap \tilde{\Sigma}_{j} = \emptyset$, for each $i,j, i \neq j$. (Note that $\Sigma_i \cap \Sigma_{j \neq i}$ may be nonempty.)
\end{enumerate}
Consider the $\xEDTD{\nFA}$ $\Tn = \langle \Sigma, \tilde{\Sigma}, \pi, \tilde{s}_0, \mu \rangle$ defined as follows:
\begin{enumerate}
  \item $\Sigma = \Sigma_0 \cup (\Sigma_1 \mysetminus \{s_1\}) \cup \ldots \cup (\Sigma_n \mysetminus \{s_n\})$;

  \item $\tilde{\Sigma} = \tilde{\Sigma}_0 \cup (\tilde{\Sigma_1} \mysetminus \{\tilde{s}_1\}) \cup \ldots \cup (\tilde{\Sigma_n} \mysetminus \{\tilde{s}_n\})$;

  \item $\tilde{s}_0 = \tilde{s}_0^x$, where $x$ is the root of $T$;

  \item $\mu(\tilde{a}) = a$ for each $\tilde{a} \in \tilde{\Sigma}$;

  \item $\pi(\tilde{a}_0^x) = \nFA(\{\varepsilon\})$ for each leaf-node $x$ of $T$ with label $a \in \Sigma_0$;

  \item $\pi(\tilde{a}_i) = \nFA(\langOf{\pi_i(\tilde{a}_i)})$ for each $\tilde{a}_i \in \tilde{\Sigma}$ with $i$ in $[1..n]$

  \item for each node $x$ of $T$ with label $a$ and children $y_1 \ldots y_p$, we define $\pi(\tilde{a}_0^x) = \nFA(L_1 \ldots L_p)$ where each language $L_k$ is
    \begin{enumerate}[$\centerdot$]
      \item $\{\tilde{b}_0^{y_k}\}$ if $y_k$ has label $b \in \Sigma$;

      \item $\langOf{\pi_i(\tilde{s}_i)}$ if $y_k$ is labeled by $f_i$.
    \end{enumerate}
\end{enumerate}
The previous algorithm clearly runs in polynomial time by scanning the tree $T$ and preforming some easy regular language manipulation. Also, the size of $\Tn$ is linear in the size of the input pair $T$ and $(\tau_n)$. This is clearly true for $\mathcal{R} \in \{\nFA, \dFA\}$ where only a linear number of $\varepsilon$-transitions is required. If $\mathcal{R} \in \{\nRE, \dRE\}$, it is also true because the translation from regular expressions to $\nFAs$ produce at most an $n \log^2n$ blow up but because in these cases we might define $\Tn$ directly as an $\xEDTD{\nRE}$-type of actual linear size.
These considerations immediately yield the following proposition:

\begin{proposition}\label{TnPolyTimeESize}
Given $T(\f_n)$ and $(\tau_n)$, the $\xEDTD{\nFA}$-type $\Tn$ can be constructed in polynomial time, and its size is linear in the input pair.
\end{proposition}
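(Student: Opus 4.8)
The plan is to establish the two claims---polynomial running time and linear output size---by a direct bookkeeping analysis of the construction given just above, the only delicate point being the concatenations performed at internal nodes and the translation of regular expressions into $\nFAs$. Throughout I measure the input by $\|T\| + \sum_{i=1}^n \|\tau_i\|$, and I note at the outset that the whole construction is driven by a single traversal of the kernel tree $T$ (together with one scan of the input typing). Hence it suffices to check that the work charged to each node and each input type is at most polynomial, and that the rules produced have total size linear in the input.

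First I would account for the alphabets. By items~1--2 of the construction, $\Sigma$ and $\tilde{\Sigma}$ are obtained from the unions $\Sigma_0,\Sigma_1,\ldots,\Sigma_n$ and $\tilde{\Sigma}_0,\tilde{\Sigma}_1,\ldots,\tilde{\Sigma}_n$ by removing a few root symbols. Since $\Sigma_0$ collects the labels of $T$ and $\tilde{\Sigma}_0$ contains exactly one specialized symbol $\tilde{a}_0^x$ per node $x$ of $T$, we have $|\Sigma_0|\le\|T\|$ and $|\tilde{\Sigma}_0|=\|T\|$, while the remaining summands are bounded by the sizes of the corresponding input types; thus both alphabets have size $\mathcal{O}(\|T\|+\sum_i\|\tau_i\|)$ and are enumerated during the traversal. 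Turning to the content-model assignment $\pi$, the inherited rules of item~6 simply re-use each content model $\pi_i(\tilde{a}_i)$ of $\tau_i$ exactly once, contributing $\mathcal{O}(\sum_i\|\tau_i\|)$ in total, and the leaf rules of item~5 are the constant-size $\nFA(\{\varepsilon\})$, contributing $\mathcal{O}(\|T\|)$.

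The main obstacle is item~7, the internal-node rule $\pi(\tilde{a}_0^x)=\nFA(L_1\ldots L_p)$, whose concatenated factors could a priori blow up the size. Here each singleton factor $\{\tilde{b}_0^{y_k}\}$ contributes one symbol per child-edge of $T$, hence $\mathcal{O}(\|T\|)$ summed over all internal nodes; and each non-trivial factor $\langOf{\pi_i(\tilde{s}_i)}$ is a root content model of a function child $\f_i$. The key observation is that requirement~(iii) on kernels forces each function symbol $\f_i$ to occur exactly once in $T$, so each such root content model is spliced in at most once, again contributing $\mathcal{O}(\sum_i\|\tau_i\|)$. Summing the three kinds of rules yields total size $\mathcal{O}(\|T\|+\sum_i\|\tau_i\|)$. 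For the running time, the traversal performs only concatenations (realized by a linear number of $\varepsilon$-transitions) and, when $\mathcal{R}\in\{\nRE,\dRE\}$, conversions of regular expressions into $\nFAs$; by \citep{HagenahMuscholl98,HromkovicSeibert97} each such conversion costs at most an $\mathcal{O}(n\log^2 n)$ factor, which keeps the procedure polynomial. Finally, to secure exact linear size even for $\mathcal{R}\in\{\nRE,\dRE\}$, I would observe that the entire construction can be carried out verbatim at the level of regular expressions---the factors $L_k$ being literal symbols or the original re-used regexes---producing $\Tn$ directly as an $\xEDTD{\nRE}$-type and thereby avoiding the conversion blow-up altogether.
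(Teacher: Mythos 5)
Your proof is correct and follows essentially the same route as the paper's own (brief) justification: a single traversal of $T$ with per-node accounting, the observation that for $\mathcal{R}\in\{\nFA,\dFA\}$ only a linear number of $\varepsilon$-transitions is needed, and the same fallback of emitting $\Tn$ directly as an $\xEDTD{\nRE}$-type to dodge the $n\log^2 n$ regex-to-$\nFA$ overhead. Your bookkeeping is merely more explicit (in particular, invoking requirement~(iii) to bound the number of times each root content model is spliced in), which the paper leaves implicit.
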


Now we prove that our construction preserves the semantics of $\extTn$.

\begin{theorem}\label{TnIsWellDef}
Given a kernel $T(\f_n)$ and an $\rEDTD$-typing $(\tau_n)$, $\langOf{\Tn} = \extTn$ holds for each possible $\mathcal{R}$.
\end{theorem}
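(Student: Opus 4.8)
The plan is to prove the two inclusions $\langOf{\Tn} \subseteq \extTn$ and $\extTn \subseteq \langOf{\Tn}$ separately, in both cases exploiting the witness-tree characterization of $\rEDTD$ membership. Recall that $t \in \langOf{\Tn}$ holds exactly when there is a witness $t' \in \langOf{\textsf{dtd}(\Tn)}$ with $\mu(t') = t$, i.e.\ a tree over $\tilde{\Sigma}$ whose root is labeled $\tilde{s}_0$ and in which $\childstr{x} \in \langOf{\pi(\lab{x})}$ at every node $x$; and $t \in \extTn$ holds exactly when $t = \extTt$ for some trees $t_1,\ldots,t_n$ with each $t_i \models \tau_i$, the latter again witnessed by some $t_i' \in \langOf{\textsf{dtd}(\tau_i)}$ with $\mu_i(t_i') = t_i$. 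The heart of the argument is therefore a structural correspondence between witnesses $t'$ of $\Tn$ and tuples of witnesses $(t_1',\ldots,t_n')$ of the $\tau_i$, under which $\mu(t')$ is exactly the extension obtained from the $t_i = \mu_i(t_i')$.

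For the inclusion $\extTn \subseteq \langOf{\Tn}$, I would take valid $t_i \models \tau_i$ with witnesses $t_i'$ and build a witness $t'$ for $\Tn$ as follows: copy the element-node skeleton of $T$, relabeling each element node $x$ (with $\lab{x} = a$) by the fresh specialized symbol $\tilde{a}_0^x$, and, inside the child sequence of the parent, replace each function node $\f_i$ by the forest of subtrees hanging directly below $\rootnode{t_i'}$ (thereby discarding the root symbol $\tilde{s}_i$). One then checks $t' \in \langOf{\textsf{dtd}(\Tn)}$ node by node: at an element leaf of $T$ the child string is $\varepsilon \in \langOf{\pi(\tilde{a}_0^x)}$; at an internal node $x$ with children $y_1 \ldots y_p$ the child string factors into a concatenation in $L_1 \ldots L_p$, where each element child contributes its forced singleton $\tilde{b}_0^{y_k}$ and each function child $\f_i = y_k$ contributes $\childstr{\rootnode{t_i'}} \in \langOf{\pi_i(\tilde{s}_i)} = L_k$ (valid since $t_i' \in \langOf{\textsf{dtd}(\tau_i)}$); and at a grafted node labeled $\tilde{a}_i$ the content model is by construction $\langOf{\pi_i(\tilde{a}_i)}$, which $t_i'$ already satisfies. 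Since $\mu(\tilde{a}_0^x) = \lab{x}$ and $\mu(\tilde{a}_i) = \mu_i(\tilde{a}_i)$, applying $\mu$ to $t'$ reproduces exactly $\extTt$, so $t \in \langOf{\Tn}$.

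For the converse $\langOf{\Tn} \subseteq \extTn$, I would start from a witness $t'$ and decompose it. Descending from the root, the nodes carrying labels in $\tilde{\Sigma}_0$ are forced to form an isomorphic copy of the element-node skeleton of $T$: in the child string of a node $\tilde{a}_0^x$, each element child of $x$ must appear as the forced singleton $\tilde{b}_0^{y_k}$, while each function child $\f_i$ opens a block of labels from $\tilde{\Sigma}_i$, namely a string of $\langOf{\pi_i(\tilde{s}_i)}$. Because the alphabets $\tilde{\Sigma}_1,\ldots,\tilde{\Sigma}_n$ are pairwise disjoint and disjoint from $\tilde{\Sigma}_0$, and because each $\tilde{s}_i$ occurs only as a root symbol (so never inside another content model), this factorization is unambiguous and, once a $\tilde{\Sigma}_i$-block is entered, the whole subtree below it stays over $\tilde{\Sigma}_i$ with content models identical to those of $\tau_i$. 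Re-attaching a fresh root $\tilde{s}_i$ above the forest extracted at each $\f_i$ yields a tree $t_i'$ satisfying $\textsf{dtd}(\tau_i)$ (its root meets $\pi_i(\tilde{s}_i)$ by the block condition, and every other node meets $\pi_i$ because $\Tn$ copies those content models). Setting $t_i = \mu_i(t_i')$ gives valid trees whose extension is precisely $\mu(t') = t$, hence $t \in \extTn$.

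Finally, I would observe that the whole argument is about languages: $\langOf{\Tn}$ is determined solely by $T$ together with the content-model languages $\langOf{\pi_i(\cdot)}$ and $\langOf{\pi_i(\tilde{s}_i)}$, not by how these regular languages are presented, so the equality holds uniformly for every $\mathcal{R}$. I expect the main obstacle to be the soundness direction, specifically justifying that a witness of $\Tn$ decomposes cleanly into a $T$-skeleton and independent $\tau_i$-subtrees: this is exactly where the disjointness of the $\tilde{\Sigma}_i$ and the convention that each $\tilde{s}_i$ labels only roots are indispensable, and where one must argue that the concatenation $L_1 \ldots L_p$ cannot be reparsed so as to mix a function block with a neighbouring element child or with another function block.
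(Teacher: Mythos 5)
Your proposal is correct and follows essentially the same route as the paper, whose own proof is just a compressed sketch of the same idea: the pairwise disjointness of the specialized alphabets $\tilde{\Sigma}_0,\tilde{\Sigma}_1,\ldots,\tilde{\Sigma}_n$ (no ``competition'' among witnesses) guarantees that witnesses of $\Tn$ correspond exactly to a $T$-skeleton with independent $\tau_i$-witnesses grafted at the docking points. Your two-inclusion, witness-by-witness elaboration, including the explicit use of the convention that each $\tilde{s}_i$ labels only roots, is a faithful and more detailed rendering of that argument.
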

\begin{proof}
By construction of $\Tn$, we assume the specialized element names in each type $\tau_i$ of $(\tau_n)$ to be different (in fact, they could always be renamed appropriately before building $\Tn$). Also, the specialized element names added for giving witnesses to the nodes of $T$ labeled with an element name belong to a fresh set (it is $\tilde{\Sigma}_0$). This means that there is no ``competition'' among all of these witnesses. So we just create new content models that exactly allow all and only the trees being valid for each $\tau_i$ and the non-function nodes that are already in $T$. But this is exactly the semantic definition of $\extTn$.
\end{proof}

\begin{corollary}\label{EDTDconstTime}
All the problems $\consistent{\xEDTD{\nFA}}$, $\consistent{\xEDTD{\dFA}}$, $\consistent{\xEDTD{\nRE}}$, and $\consistent{\xEDTD{\dRE}}$ always have a yes answer.
Thus, they are decidable in constant time.
\end{corollary}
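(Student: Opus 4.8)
The plan is to reduce the entire corollary to a single observation: for each fixed choice of $\mathcal{R}$, the language $\extTn$ is \emph{always} definable by an $\rEDTD$-type over the same $\mathcal{R}$. Granting this, each of the four problems becomes an identically-true predicate, so a decision procedure may discard its input and answer ``yes'' in constant time; the two assertions of the corollary then coincide. Thus the real work is to show that the construction of Section \ref{rEDTDsTyping} never has to leave the class $\mathcal{R}$ in which the input typing is given. The starting point is Theorem \ref{TnIsWellDef}, which already guarantees $\langOf{\Tn} = \extTn$ for every $\mathcal{R}$.

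The two nondeterministic cases are immediate. For $\mathcal{R} = \nFA$ the object $\Tn$ built in Section \ref{rEDTDsTyping} is by definition an $\xEDTD{\nFA}$-type, so nothing remains to prove. For $\mathcal{R} = \nRE$ I would simply inspect the content models produced by the construction: each is either $\{\varepsilon\}$, a singleton $\{\tilde{b}_0^{y_k}\}$, a verbatim copy $\pi_i(\tilde{a}_i)$ of an input content model, or a concatenation $L_1 \cdots L_p$ of such languages, and every one of these is plainly an $\nRE$. Hence $\Tn$ can be taken directly as the $\xEDTD{\nRE}$-type promised by Definition \ref{DefTn}.

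The substance lies in the two deterministic cases $\mathcal{R} \in \{\dFA, \dRE\}$, where I must verify that the concatenation content models attached to internal nodes do not break determinism. The decisive structural fact, which I would isolate first, is that each such concatenation $L_1 \cdots L_p$ runs over \emph{pairwise disjoint} specialized alphabets: the singleton factors live in the fresh set $\tilde{\Sigma}_0$, the factor contributed by a function node $\f_i$ lives in $\tilde{\Sigma}_i$, and by the standing assumption $\tilde{\Sigma}_i \cap \tilde{\Sigma}_j = \emptyset$ (plus the freshness of $\tilde{\Sigma}_0$ and the fact that no function repeats in the kernel) these alphabets are disjoint. For $\dFA$s I would thread the automata $A_1, \ldots, A_p$ into one machine: since the current input letter alone reveals which block $\Sigma_k$ it belongs to, control passes deterministically from an accepting state of $A_j$ to $A_{j'}$ exactly when the letter read lies in $\Sigma_{j'}$ and every intermediate block accepts $\varepsilon$, while all remaining transitions are forced; the copied content models $\pi_i(\tilde{a}_i)$ are $\dFA$s already by hypothesis.

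The step I expect to be genuinely delicate is the $\dRE$ closure, because one-unambiguity is \emph{not} preserved under concatenation in general; disjointness of alphabets is what rescues it. I would argue directly on the marked expression $\tilde{r} = \tilde{r}_1 \cdots \tilde{r}_p$: if $w\tilde{a}^i u$ and $w\tilde{a}^j v$ both lie in $\langOf{\tilde{r}}$ with $i \neq j$, then since $a$ belongs to exactly one block alphabet both positions $i, j$ must index the same factor $r_\ell$; and because in every word of $L_1 \cdots L_p$ all letters of block $\ell$ precede those of block $\ell+1$, the common prefix $w$ is confined to the factors up to $\ell$, making $w\tilde{a}^i$ and $w\tilde{a}^j$ prefixes of marked words of the one-unambiguous $r_\ell$ and forcing $i = j$, a contradiction. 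This establishes that $r_1 \cdots r_p$ is one-unambiguous, hence a $\dRE$, so rebuilding $\Tn$ with these deterministic content models yields an $\xEDTD{\dFA}$- (resp.\ $\xEDTD{\dRE}$-) type for $\extTn$ in all remaining cases, completing the constant-time decidability.
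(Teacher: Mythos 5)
Your proposal is correct and follows essentially the same route as the paper: both reduce the corollary to the observation that every content model of $\Tn$ stays within the class $\mathcal{R}$, with the deterministic cases resting on the fact that the concatenated factors $L_1 \cdots L_p$ range over pairwise disjoint specialized alphabets. The only difference is one of detail — the paper asserts the $\dRE$ closure under disjoint-alphabet concatenation without proof, whereas you verify it explicitly on the marked expression (and your $\dFA$ threading construction, while valid, is not strictly needed since every regular language already has a $\dFA$).
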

\begin{proof}
For $\consistent{\xEDTD{\nFA}}$, $\consistent{\xEDTD{\dFA}}$, and $\consistent{\xEDTD{\nRE}}$ the decision-answer is always \textbf{\emph{``yes''}} because each content model in $\Tn$ is, respectively, already an $\nFA$, expressible by a $\dFA$, and expressible by an $\nRE$.

\medskip

\noindent For $\consistent{\xEDTD{\dRE}}$ the decision-answer is always \textbf{\emph{``yes''}} as well, but the reason is less obvious. In general, there are regular languages not expressible by $\dREs$. Anyway, in our case, by considering how $\pi$ is built in $\Tn$, we are sure that each content model has an equivalent $\dRE$. In fact, $\pi(\tilde{a}_0^x) = \varepsilon$ (step 4) is already a $\dRE$; $\pi(\tilde{a}_i) = \nFA(\langOf{\pi_i(\tilde{a}_i)})$ (step 5) has an equivalent $\dRE$ because $\pi_i(\tilde{a}_i)$ is already a $\dRE$ by definition; $\pi(\tilde{a}_0^x) = \nFA(L_1 \ldots L_p)$ (step 6) is expressible by a $\dRE$ because each $L_k$ originates itself from a $\dRE$ and does not share any symbol with any $L_{j\neq k}$.
\end{proof}

By Corollary \ref{EDTDconstTime}, we now give a safe and easy construction of $\typeTn$ from $\Tn$ according to the schema language $\mathcal{S}$ used for $(\tau_n)$.
\begin{enumerate}[ \ $\centerdot$]
  \item For $\xEDTDs{\nFA}$, we choose $\typeTn = \Tn$;

  \item For $\xEDTDs{\dFA}$, we modify $\Tn$ by computing the $\varepsilon$-closure for each content model. Notice that this can be done in polynomial time and the size of $\typeTn$ is at most quadratic (and there are cases where this could really happen) in the size of $\Tn$ because each content model originates from $\dFAs$ that do not share any symbol.

  \item For $\xEDTDs{\nRE}$ or $\xEDTDs{\dRE}$, we modify the content models of $\Tn$ as follows: each $\pi(\tilde{a}_i) = \pi_i(\tilde{a}_i)$ and each $\pi(\tilde{a}_0^x) = R_1 \ldots R_p$ where the generic $R_k$ is either $\tilde{b}_0^{y_k}$ or $\pi_i(\tilde{s}_i)$ (compare with the $\Tn$ definition). Also here the size of $\typeTn$ is linear in the size of $\Tn$ due to the previous corollary.
\end{enumerate}

\subsection{$\rSDTDs$ typing}

For $\rSDTDs$ we also use $\Tn$ as defined for $\rEDTDs$ because any $\rSDTD$ can be seen as a special $\rEDTD$ and the algorithm for building $\Tn$ still works with no problem. At this point, it should be clear that $\Tn$ can easily not be an $\rSDTD$ because of our assumptions ($\tilde{\Sigma}_i \cap \tilde{\Sigma}_{j \neq i} = \emptyset$). But, in this case it is also possible that $\Tn$ does not have an equivalent $\rSDTD$. Indeed, $T(\f_n)$ may contain some pattern that already prohibits obtaining an $\rSDTD$ for any possible typing $(\tau_n)$, or it may contain a function-layout that prohibits obtaining an $\rSDTD$ for some $(\tau_n)$. So, we have to discriminate when this is possible or not. Such a problem (deciding whether an $\rEDTD$ has an equivalent $\rSDTD$) is in general (when $\mathcal{R}$ stands for $\nREs$ or $\nFAs$) an $\classc{EXPTIME}$ problem \citep{MartensNeven06}. Nevertheless, we will show that, in our case, it is almost always ``easier'' and in particular that, in general, it depends on the complexity of equivalence between tree-languages specified by $\xSDTDs{\nFA}$, which, in turn depends on the complexity of equivalence between string-languages specified by $\nFAs$. Before giving proofs of that, we illustrate the definition of distributed document using $\xSDTD{\nRE}$-types.
\begin{example}
Let $T = s_0(\f_1 \ a(b \ \f_2) \ c)$ be a kernel tree and $\tau_1, \tau_2$ be two $\xSDTD{\nRE}$-types describing respectively $b \cdot d^+ \cdot {a(b^+)}^*$ and $b^*$. In the $\xSDTDs{\nRE}$ syntax, $\tau_1 = \langle \{s_1, a, b, d \}, \{\tilde{s}_1, \tilde{a}_1, \tilde{b}_1, \tilde{d}_1\}, \pi_1, \tilde{s}_1, \mu_1 \rangle$ and $\tau_2 = \langle \{s_2, b \}$, $\{\tilde{s}_2, \tilde{b}_2\}$, $\pi_2, \tilde{s}_2, \mu_2 \rangle$ two types where:
\begin{enumerate}[ \ $\centerdot$]
  \item $\Sigma_0 = \{s_0, a, b, c\}$; $\tilde{\Sigma}_0 = \{\tilde{s}_0^1, \tilde{a}_0^3, \tilde{b}_0^4, \tilde{c}_0^6,\}$, where $\{1,3,4,6\}$ are the nodes of $T$ with label in $\Sigma_0$ (based on a preorder traversal of $T$;)

  \item $\pi_1(\tilde{s}_1) = \tilde{b}_1 \cdot \tilde{d}_1^+ \cdot \tilde{a}_1^*$; $\pi_1(\tilde{a}_1) = \tilde{b}_1^+$; $\pi_2(\tilde{s}_2) = \tilde{b}_2^*$; $\pi_1(\tilde{b}_1) = \pi_1(\tilde{d}_1) = \pi_2(\tilde{b}_2) = \varepsilon$;

  \item $\mu_1$ and $\mu_2$ are clear;
\end{enumerate}
For instance, the activation of both $\f_1$ and $\f_2$ may return trees $s_1(bda(bbb))$ and $s_2(bb)$, respectively. In general, the resulting type is $s_0(b \cdot d^+ \cdot {a(b^+)}^+ \cdot c)$. It can be described by an $\xSDTDs{\nRE}$. Thus, $(\tau_1,\tau_2)$ is an $\xSDTDs{\nRE}$-typing consistent with $T$.\qed
\end{example}

Now we need to introduce some definitions and mention previous results.

\begin{lemma}\label{singTypeLab1}
Let $\tau = \langle \Sigma, \tilde{\Sigma}, \pi, \tilde{s}, \mu \rangle$ be an $\rSDTD$. For each $\tilde{a} \in \tilde{\Sigma}$, also $\tau(\tilde{a}) = \langle \Sigma, \tilde{\Sigma}, \pi, \tilde{a}, \mu \rangle$ is.
\end{lemma}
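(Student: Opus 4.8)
The plan is to check directly that $\tau(\tilde a)$ satisfies every clause of the definition of an $\rSDTD$, observing that only the single-type clause is not immediate. Replacing the start symbol $\tilde s$ by $\tilde a$ leaves the specialized alphabet $\tilde\Sigma$, the element alphabet $\Sigma$, the content-model map $\pi$, and the projection $\mu$ unchanged. Hence $\langle \tilde\Sigma, \pi, \tilde a\rangle$ is still an $\rDTD$ over $\tilde\Sigma$ (since $\tilde a\in\tilde\Sigma$ is a legal start symbol), and $\mu\colon\tilde\Sigma\to\Sigma$ is still a surjection onto $\Sigma$. Everything therefore reduces to proving that the projected automaton $\dual{\tau(\tilde a)}$ is a $\dFA$.

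First I would compare the two duals $\dual{\textsf{dtd}(\tau)}$ and $\dual{\textsf{dtd}(\tau(\tilde a))}$, both $\dFA$s over $\tilde\Sigma$ with the common state set $K=\{q_0\}\cup\{q_{\tilde b}:\tilde b\in\tilde\Sigma\}$. By the definition of the dual of an $\rDTD$, the final set $F=\{q_{\tilde b}:\varepsilon\in\langOf{\pi(\tilde b)}\}$ and every transition leaving a state $q_{\tilde b}$ (namely $\tilde\delta(q_{\tilde b},\tilde c)=q_{\tilde c}$ exactly when $\tilde c$ occurs in the alphabet of $\pi(\tilde b)$) are determined by $\pi$ alone, hence coincide in the two automata. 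The sole difference is the unique transition out of the initial state: $\tilde\delta(q_0,\tilde s)=q_{\tilde s}$ for $\tau$ versus $\tilde\delta'(q_0,\tilde a)=q_{\tilde a}$ for $\tau(\tilde a)$. In both cases $q_0$ is non-final and has out-degree exactly one.

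Next I would push this comparison through the $\mu$-projection that yields $\dual\tau$ and $\dual{\tau(\tilde a)}$. Projecting equal transitions gives equal transitions, so the resulting functions $\delta$ and $\delta'$ agree on $F$ and on every transition out of each $q_{\tilde b}$; the only discrepancy is $\delta(q_0,\mu(\tilde s))=q_{\tilde s}$ against $\delta'(q_0,\mu(\tilde a))=q_{\tilde a}$. Because $\tau$ is an $\rSDTD$, $\dual\tau$ is a $\dFA$, so $\delta$ is already deterministic at every $q_{\tilde b}$, and $\dual{\tau(\tilde a)}$ inherits this verbatim. At $q_0$ the automaton $\dual{\tau(\tilde a)}$ carries a single outgoing transition, which is trivially deterministic. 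Thus $\dual{\tau(\tilde a)}$ is a $\dFA$, i.e.\ the single-type requirement holds and $\tau(\tilde a)$ is an $\rSDTD$.

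The argument involves no computation; the one spot that demands care --- and where a hasty proof could go wrong --- is the single-type clause, since the $\mu$-projection can in principle fuse two distinct $\tilde\Sigma$-edges with the same $\mu$-image out of one state into a nondeterministic choice. What rules this out is precisely that changing the start symbol affects only the single edge leaving $q_0$, a state of out-degree one, so it creates no new pair of transitions that share a source and have the same $\mu$-image; all states $q_{\tilde b}$ that could carry such a pair are governed by the unchanged $\pi$ and were already deterministic in $\dual\tau$. (Should one additionally insist on a \emph{reduced} $\rSDTD$, one may afterwards trim the states no longer reachable from $q_0$ under the new start symbol, a standard operation that alters neither the defined language nor determinism.)
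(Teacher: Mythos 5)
Your proposal is correct and follows the same route as the paper, whose entire proof is ``by definition'' (with the same parenthetical caveat about reducedness that you make at the end); you have simply spelled out the definition-check, correctly isolating the single-type clause as the only point needing attention and observing that changing the start symbol alters only the one transition out of the out-degree-one state $q_0$.
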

\begin{proof}
By definition of $\rSDTD$ (in the worst case, if $\tau$ is reduced, then $\tau(\tilde{a})$ may be not.)
\end{proof}

\begin{definition}[\cite{MartensNeven06}]
A tree language $L$ is \textbf{closed under ancestor-guarded subtree exchange} if the following holds. For each $t_1,t_2 \in L$, and for each $x_1,x_2$ in $t_1,t_2$, respectively, with $\ancstrt{t_1}{x_1} = \ancstrt{t_2}{x_2}$, the trees obtained by exchanging $\treet{t_1}{x_1}$ and $\treet{t_2}{x_2}$ are still in $L$. \qed
\end{definition}

\begin{lemma}[\cite{MartensNeven06}]\label{singTypeLab3}
A tree language is definable by a $\rSDTD$ iff it is ``closed under ancestor-guarded subtree exchange'' and each content model is defined by an $\mathcal{R}$-type.
\end{lemma}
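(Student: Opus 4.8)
The plan is to establish the two implications separately. The \emph{only if} direction reduces to the determinism built into the single-type requirement, while the \emph{if} direction requires synthesizing an $\rSDTD$ out of the closure property, and that is where the real content lies.

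For the \emph{only if} direction, suppose $L = \langOf{\tau}$ for a reduced $\rSDTD$ $\tau = \langle \Sigma, \tilde{\Sigma}, \pi, \tilde{s}, \mu \rangle$. That each content model is an $\mathcal{R}$-type is immediate from the definition of $\rSDTD$. For closure, I would exploit that the single-type requirement makes $\dual{\tau}$ a $\dFA$: reading the ancestor-string $\ancstr{x}$ of a node $x$ leads to a unique state. I would then show, by induction on depth, that in \emph{every} witness $t' \in \langOf{\textsf{dtd}(\tau)}$ of a tree $t \in L$ the specialized label carried by $x$ is forced to be exactly that state --- the inductive step uses that a child's specialized type is determined by its parent's specialized type together with its own label, which is precisely what single-typedness guarantees. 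Since validity with respect to $\textsf{dtd}(\tau)$ is checked purely locally (each node's specialized children-string must lie in the content model of its specialized label), two nodes $x_1 \in t_1$, $x_2 \in t_2$ with $\ancstrt{t_1}{x_1} = \ancstrt{t_2}{x_2}$ receive identical specialized labels in any witnesses; exchanging the witness-subtrees at these points preserves validity, and projecting through $\mu$ shows the trees obtained by swapping $\treet{t_1}{x_1}$ and $\treet{t_2}{x_2}$ stay in $L$.

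For the \emph{if} direction, I would build the grammar from $L$ itself. For each ancestor-string $u \in \Sigma^+$ realized by some node of some tree of $L$, set $\textsf{CM}(u) = \{\, \childstr{x} : t \in L,\ x \in t,\ \ancstr{x} = u \,\}$. The key intermediate step is an assembly lemma: a tree $t$ whose root is correctly labeled belongs to $L$ if and only if $\childstr{x} \in \textsf{CM}(\ancstr{x})$ for every node $x$. The forward inclusion is definitional, and the backward inclusion is where closure under ancestor-guarded subtree exchange is used, repeatedly replacing subtrees of a fixed witness of $L$ by subtrees realizing the required child-strings at nodes sharing the same ancestor-string. I would then take as specialized element names the pairs consisting of a label and the class of $\textsf{CM}(u)$ among nodes carrying that label, defining the dual automaton so that the state after reading $u$ is this class; because the class depends only on $u$, the dual automaton is deterministic and the grammar is single-type, and the hypothesis that each content model is an $\mathcal{R}$-type makes it a genuine $\rSDTD$.

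The main obstacle is the \emph{if} direction, and two points demand care. First, the construction must be finite: I would argue that only finitely many distinct content models $\textsf{CM}(u)$ occur, which follows from the regularity already implicit in the content-model hypothesis (equivalently, from $L$ being a regular tree language), whence finitely many ancestor-string classes and a finite state set for the dual automaton. Second, both the well-definedness of $\textsf{CM}(u)$ as a function of $u$ alone and the backward inclusion of the assembly lemma rest entirely on closure, so the proof must invoke that hypothesis exactly at the steps where subtrees at matching ancestor-strings are interchanged; getting this bookkeeping right, rather than any single deep idea, is the crux.
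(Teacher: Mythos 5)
The paper does not prove this lemma at all --- it is imported verbatim from \cite{MartensNeven06} as a known characterization, so there is no in-paper proof to compare against. Your argument is, in substance, the standard proof from that line of work: the \emph{only if} direction via the observation that single-typedness forces the witness label of a node to be a function of its ancestor-string (so witness-subtrees at matching ancestor-strings are interchangeable), and the \emph{if} direction via the canonical grammar whose specialized names are pairs of a label and the horizontal language $\textsf{CM}(u)$ realized at ancestor-string $u$, with the assembly lemma proved by repeated subtree exchange. Both halves are sound as sketched.

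One point deserves correction rather than just care. You claim that finiteness of $\{\textsf{CM}(u)\}_u$ ``follows from the regularity already implicit in the content-model hypothesis (equivalently, from $L$ being a regular tree language).'' These are not equivalent, and the pointwise hypothesis alone does not suffice: take $L$ to be the set of trees over $\{a\}$ in which every node at depth $d$ has either $0$ or $d+1$ children. This language is closed under ancestor-guarded subtree exchange, and every individual content model $\textsf{CM}(a^{d+1}) = \{\varepsilon, a^{d+1}\}$ is finite, hence an $\mathcal{R}$-type for any $\mathcal{R}$ considered here, yet infinitely many distinct content models occur and $L$ is definable by no $\rSDTD$. The characterization is stated in \cite{MartensNeven06} for \emph{regular} tree languages (inputs given as EDTDs), and that global regularity is what bounds the number of distinct $\textsf{CM}(u)$ (finitely many reachable state-sets of a recognizing automaton along ancestor-strings). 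So your proof goes through, but only under the reading in which the hypothesis ``each content model is defined by an $\mathcal{R}$-type'' is taken to presuppose that $L$ is a regular tree language; you should invoke that assumption explicitly at the finiteness step instead of deriving it from the pointwise condition.
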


\begin{remark}
Intuitively, this means that the witness associated by an $\rSDTD$-type $\tau$ to a node $x$ of a tree $t \in \langOf{\tau}$ only depends on the string $\ancstrt{t}{x}$. This is consistent with the definition of $\dual{\tau}$ as a $\dFA$. In fact, the (unique) sequence of states that $\dual{\tau}$ scans for recognizing $\ancstrt{t}{x}$ (except the initial one) exactly gives the unique witness to each node of $t$ in the path from the root to $x$.\qed
\end{remark}

\begin{proposition}\label{oneUnambPropos}\emph{\cite{BruggemannWood98}}
\begin{enumerate}
  \item There is an equivalent $\dRE$ for each one-unambiguous regular language;

  \item Let $\autom{A}$ be a minimum $\dFA$. There is an algorithm, that runs in time quadratic in the size of $\autom{A}$, deciding whether $\langOfAutom{A}$ is one-unambiguous;

  \item There are one-unambiguous regular languages where the smallest equivalent $\dRE$ is exponential in the size of the minimum equivalent $\dFA$. (This is worst-case optimal;)

  \item There are one-unambiguous regular languages where some $\nRE$ is exponentially more succinct than the smallest equivalent $\dRE$. In particular, the language $\{(a+b)^m b (a+b)^n: m \leq n, \ n > 0\}$ has such a property;

  \item The set of all one-unambiguous regular languages is not closed under concatenation.
\end{enumerate}
\end{proposition}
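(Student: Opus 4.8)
The five parts share a single engine: the \emph{marking} $\tilde{r}$ of a regular expression (already used in the paper to define $\dRE$) together with its associated \textbf{position (Glushkov) automaton}, and the basic fact that $r$ is a $\dRE$ exactly when that position automaton is deterministic. The plan is to reduce each claim to the structure of the minimal $\dFA$ of the language at hand. Part (1) needs no work beyond unwinding the paper's own conventions: a language is \emph{declared} one-unambiguous precisely when it equals $\langOf{r}$ for some $\dRE$ $r$, so the witnessing $\dRE$ is the defining object itself.

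For part (2) I would first rely on $\autom{A}$ being a minimal $\dFA$ and compute its strongly connected components --- the \textbf{orbits} --- in linear time. Call a state a \emph{gate} of its orbit if it is final or carries a transition leaving the orbit. The decision then rests on the orbit characterisation: $\langOfAutom{A}$ is one-unambiguous iff (a) the \textbf{orbit property} holds, meaning all gates of a common orbit agree on finality and on the targets of orbit-leaving transitions, and (b) each orbit automaton (the orbit with its gates turned final) is itself one-unambiguous. I would implement this as a recursion over the orbit DAG; each recursive step strips at least one component, and each orbit-property test costs $O(|K|)$ per symbol, so the total is quadratic in $\|\autom{A}\|$. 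The substantive points to nail down are termination of the recursion (orbit automata are strictly simpler) and the quadratic accounting.

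Parts (3) and (4) are the two succinctness gaps, proved by the same lower-bound philosophy against different targets. For (3), the upper bound comes from reading a $\dRE$ off the orbit decomposition of the minimal $\dFA$ --- expressing each orbit language recursively and composing --- and checking by a size analysis that the result is at most single-exponential in $\|\autom{A}\|$; the matching lower bound requires a family $(L_n)$ whose minimal $\dFA$ is linear in $n$ but whose orbit structure branches to depth $\Theta(n)$, forcing the (essentially canonical) orbit expansion of \emph{any} $\dRE$ to spell out $2^{\Theta(n)}$ distinct subexpressions, which makes the bound worst-case optimal. For (4), a short $\nRE$ is read directly off the set-builder description of the stated language (nondeterminism simply guesses the distinguished occurrence of $b$ and the split into the $m$- and $n$-blocks), while one-unambiguity is witnessed by a $\dRE$ or, equivalently, by the orbit criterion on its minimal $\dFA$; the exponential $\dRE$ lower bound then follows from a counting/adversary argument, exploiting that a $\dRE$ forces a \emph{deterministic} traversal of its marked positions, so that the constraint $m \leq n$ cannot be guessed and must instead be encoded by separating exponentially many admissible length/position combinations that no single $\dRE$ can share.

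Part (5) needs only one worked example: take $L_1 = \langOf{(a+b)^*}$ and $L_2 = \langOf{a(a+b)}$, both one-unambiguous (their markings yield deterministic position automata), whose concatenation is the canonical language $\langOf{(a+b)^* a (a+b)}$ --- ``the second symbol from the right is $a$''. I would build its minimal $\dFA$ and exhibit two gates of one orbit that disagree on an orbit-leaving transition, so the orbit property (necessary, by the characterisation invoked in part (2)) fails and the language is not one-unambiguous; hence the class is not closed under concatenation. The main obstacle throughout is the pair of exponential lower bounds in (3) and (4): proving that \emph{no} $\dRE$ below a certain size exists requires reasoning about all $\dRE$s simultaneously, and the only real leverage is the tight correspondence between a $\dRE$ and its deterministic position automaton, which lets a Nerode-style fooling argument be transported from automata to expressions.
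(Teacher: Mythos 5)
The paper offers no proof here: the proposition is imported verbatim from \cite{BruggemannWood98}, so your sketch must be measured against Br\"uggemann-Klein and Wood's actual arguments, which it shadows closely --- but with one missing idea that breaks two of your five parts. Your recursion for part (2) claims ``orbit automata are strictly simpler,'' which is false precisely when the minimal \dFA{} is strongly connected: then there is a single orbit comprising the whole automaton, the orbit automaton is (up to finality of gates) the automaton itself, and your recursion neither terminates nor decides anything. The original proof needs an extra device at exactly this point: the notion of consistent symbols (a symbol $a$ is consistent if all final states carry an $a$-transition to one common state) and the cut automaton obtained by deleting, for a set $S$ of consistent symbols, the $S$-transitions out of final states; a strongly connected minimal automaton accepts a one-unambiguous language iff some such cut shatters it into smaller orbits that pass the orbit test recursively. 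The same omission sinks your part (5) as planned: the minimal \dFA{} of $\langOf{(a+b)^*a(a+b)}$ has four states forming a \emph{single} orbit with no orbit-leaving transitions whatsoever, so its gates (the two final states) satisfy the orbit property \emph{vacuously} --- the disagreeing gates you intend to exhibit do not exist. Non-one-unambiguity of this language is detected only by the cut test (no symbol is consistent, since the two final states send both $a$ and $b$ to different targets), so your concluding example for non-closure under concatenation cannot be finished by the orbit property alone.

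Separately, you misread the witness in part (4). The set $\{(a+b)^m b (a+b)^n : m \leq n,\ n > 0\}$ is a \emph{family} indexed by fixed $m \leq n$ (this is how the paper itself uses it in Corollary~\ref{oneUnambCoroll}, taking $r_1 = (a+b)^m$ and $r_2 = (a+b)^n$ as concrete expressions); read with $m,n$ varying inside one language it says ``some $b$ occurs in the first half of the word,'' which is not even regular, and your lower-bound heuristic (``the constraint $m \leq n$ cannot be guessed'') attacks this wrong object. For fixed $m \leq n$ the language is finite, the \nRE{} $(a+b)^m b (a+b)^n$ is of linear size, and the whole content is the exponential lower bound on \emph{every} equivalent \dRE{} --- which, like your part (3) lower bound, cannot be waved through via a ``canonical orbit expansion of any \dRE{}'' (no such canonical form exists); both bounds require a genuine fooling argument transported through the correspondence between a \dRE{} and its deterministic Glushkov automaton, as in the source. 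Parts (1) and (2)'s necessity direction of the orbit property are fine as you state them.
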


\begin{corollary}\label{oneUnambCoroll}
\
\begin{enumerate}[$(1)$]
  \item Problem $\oneUnamb{\nRE}$ is in $\class{EXPTIME}$.

  \item For each $\nRE$ defining a one-unambiguous grammar, there exists an equivalent $\dRE$ which is, at most, doubly exponential in size. (An exact bound is still open.)

  \item There are pairs of $\dREs$ the concatenation of which, by a string separator, defines a one-unambiguous language such that the smallest equivalent $\dRE$ has an exponential size.
\end{enumerate}
\end{corollary}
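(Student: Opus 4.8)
The plan is to obtain all three items as consequences of Proposition~\ref{oneUnambPropos}, combined with the standard size bounds for the translations $\nRE \to \nFA \to \dFA$ recalled in Section~\ref{preliminaries}.

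\emph{Item $(1)$.} Proposition~\ref{oneUnambPropos}(2) already supplies a decision procedure, but it runs on a \emph{minimum} $\dFA$, whereas the input here is an $\nRE$. First I would turn the input $\nRE$ $r$ of size $n$ into an equivalent ($\varepsilon$-free) $\nFA$ of size $\mathcal{O}(n \log^2 n)$ in polynomial time~\cite{HagenahMuscholl98,HromkovicSeibert97}; then determinize by the subset construction and minimize, yielding a minimum $\dFA$ $\autom{A}$ with $2^{\mathcal{O}(n)}$ states. Running the quadratic-time test of Proposition~\ref{oneUnambPropos}(2) on $\autom{A}$ costs $\big(2^{\mathcal{O}(n)}\big)^2 = 2^{\mathcal{O}(n)}$ time, so the whole procedure is exponential and $\oneUnamb{\nRE} \in \class{EXPTIME}$.

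\emph{Item $(2)$.} Here I would chain two size blow-ups. Given a one-unambiguous $\nRE$ of size $n$, its minimum $\dFA$ has size $m = 2^{\mathcal{O}(n)}$ as above, and an equivalent $\dRE$ exists by Proposition~\ref{oneUnambPropos}(1) (so that a smallest such $\dRE$ is well defined). I read the ``worst-case optimal'' clause of Proposition~\ref{oneUnambPropos}(3) as a matching upper bound: from a minimum $\dFA$ of size $m$ one can always build an equivalent $\dRE$ of size $2^{\mathcal{O}(m)}$. Composing, the $\dRE$ has size $2^{\mathcal{O}(m)} = 2^{2^{\mathcal{O}(n)}}$, doubly exponential in $n$. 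Since the $\dFA$-to-$\dRE$ blow-up is only known to be tight in the $\dFA$ parameter, the two exponentials cannot be collapsed, which is exactly why only a doubly exponential bound is claimed and the exact one is left open.

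\emph{Item $(3)$.} This is the part I expect to be hardest, and I would base it on the $\nRE$-versus-$\dRE$ succinctness phenomenon of Proposition~\ref{oneUnambPropos}(4). A guiding observation constrains the construction: if the separator $\#$ is a \emph{fresh} symbol, then for any two $\dREs$ $r_1, r_2$ the separated concatenation $r_1\,\#\,r_2$ is \emph{already} a $\dRE$ of linear size, since the unique occurrence of $\#$ makes every marking deterministic across the boundary. Hence the witness must use a separator that also occurs inside the two factors; it is precisely this reuse that can destroy the one-unambiguity of an ordinary concatenation (Proposition~\ref{oneUnambPropos}(5)) and, in the delicate case where it is nonetheless \emph{preserved}, force a large deterministic expression. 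Concretely I would exhibit $\dREs$ $r_1, r_2$ and such a reused separator in the spirit of the family underlying Proposition~\ref{oneUnambPropos}(4), verify that $\langOf{r_1}$, $\langOf{r_2}$, and the separated concatenation are all one-unambiguous, and then bound the smallest equivalent $\dRE$ from below. For the lower bound I would use that every $\dRE$ of size $s$ has a deterministic position (Glushkov) automaton of size $\mathcal{O}(s)$~\cite{BruggemannWood98}, so it suffices to drive the minimum $\dFA$ of the separated concatenation to exponential size in $|r_1| + |r_2|$; an exponential minimum $\dFA$ then forces every equivalent $\dRE$ to be exponential as well. The main obstacle is meeting these two requirements simultaneously: keeping both factors deterministic and the reused-separator concatenation one-unambiguous, while at the same time driving its minimum $\dFA$ to exponential size.
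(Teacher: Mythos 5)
Your items $(1)$ and $(2)$ are correct and coincide with the paper's proof: the same chain $\nRE \to \nFA \to$ minimum $\dFA$ (one exponential) followed by the quadratic-time test for $(1)$, and the same composition of that exponential with the $\dFA$-to-$\dRE$ exponential from Proposition~\ref{oneUnambPropos}(3) for $(2)$. For item $(3)$ you have located the right witness family but you stop at a plan, and you substantially overestimate what remains to be done. The ``main obstacle'' you describe --- simultaneously keeping both factors deterministic, preserving one-unambiguity of the separated concatenation, and forcing an exponential blow-up --- is already fully discharged by Proposition~\ref{oneUnambPropos}(4): that item explicitly asserts that the language $\{(a+b)^m b (a+b)^n : m \leq n,\ n > 0\}$ is one-unambiguous \emph{and} that the $\nRE$ $(a+b)^m\, b\, (a+b)^n$ is exponentially more succinct than the smallest equivalent $\dRE$. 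The paper's proof therefore just takes $r_1 = (a+b)^m$, $r_2 = (a+b)^n$, separator $b$, observes that $r_1$ and $r_2$ are themselves $\dREs$ of linear size (each marked symbol occurrence is determined by its position, so there is no competition), and cites the proposition; no fresh lower-bound argument via Glushkov automata or minimum $\dFAs$ is needed. Your preliminary observation that a \emph{fresh} separator would make the concatenation trivially a linear-size $\dRE$, so the separator must be a reused symbol, is correct and is a useful motivation the paper does not state explicitly --- but to close the argument you only need to name the witnesses and invoke Proposition~\ref{oneUnambPropos}(4) directly.
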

\begin{proof}
$(1)$: Let $r$ be an $\nRE$. Build, in polynomial time from $r$, an equivalent $\nFA$ $\autom{A}$. Run the quadratic-time algorithm described in \cite{BruggemannWood98} on the minimum $\dFA$ (at most exponentially larger) equivalent to $\autom{A}$.

\medskip

\noindent $(2)$: By Proposition \ref{oneUnambPropos}, the $\dRE$ $r'$ that we construct from the $\dFA$ $\autom{A}$, introduced in (1), has at most size exponential in the size of $\autom{A}$. Thus, the size of $r'$ is at most doubly exponential in the size of $r$.

\medskip

\noindent $(3)$: Let $r_1 = (a+b)^m$ and $r_2 = (a+b)^n$ be to $\nREs$, with $m \leq n$. By definition, it is clear that they are also both $\dREs$ linear in $n$. Consider the new $\nRE$ $r = r_1 b r_2$. By the previous proposition, $r$ defines a one-unambiguous language but its smallest equivalent $\dRE$ is exponentially larger.
\end{proof}

\begin{lemma}[\cite{BexGeladeMartensNeven09}]\label{REvsDRELemma}
Problem $\oneUnamb{\nRE}$ is $\classh{PSPACE}$.
\end{lemma}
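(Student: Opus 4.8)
The statement is a lower bound that complements the \class{EXPTIME} membership established in Corollary \ref{oneUnambCoroll}(1), so the plan is to exhibit a polynomial-time many-one reduction from a \classc{PSPACE} problem to $\oneUnamb{\nRE}$. I would reduce from the \emph{universality} problem for regular expressions --- given an $\nRE$ $r$ over $\Sigma$, decide whether $\langOf{r} = \Sigma^*$ --- which is \classc{PSPACE} \cite{StockmeyerMeyer73}. A convenient feature, inherited from the standard encoding of a space-bounded computation, is that such an $r$ is already presented positively, as a union of polynomially many expressions of the form $\Sigma^* \cdot (\text{forbidden pattern}) \cdot \Sigma^*$ describing the strings that are \emph{not} valid accepting computations, so that $\langOf{r} = \Sigma^*$ holds iff the machine does not accept. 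Since a regular expression has no complement operator anyway, I will only ever use $r$ through the positive operations $\cdot$, $+$, $?$, $^*$.

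From such an $r$ I would build, in polynomial time, an $\nRE$ $\rho$ over $\Sigma$ together with a few fresh separators, of size polynomial in $|r|$, so that $\langOf{\rho}$ is one-unambiguous iff $\langOf{r} = \Sigma^*$. The design principle exploits the canonical \emph{non}-one-unambiguous language, exemplified by $(a+b)^* a (a+b)$, whose failure is caused by the bounded look-ahead needed to locate the distinguished occurrence of $a$ (the first symbol of a word cannot be deterministically assigned either to the star or to that occurrence). The reaction to words \emph{outside} $\langOf{r}$ must be obtained positively: I would install such a look-ahead pattern over \emph{all} words by means of a $\Sigma^*$-branch, while providing, \emph{only} for the words of $\langOf{r}$, an alternative branch built from $r$ that resolves the look-ahead deterministically. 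On the difference $\Sigma^* \setminus \langOf{r}$ --- which equals $\overline{\langOf{r}}$ since $\langOf{r}\subseteq\Sigma^*$ --- only the conflicting branch survives, so the neutralisation is total, and $\langOf{\rho}$ one-unambiguous, exactly when $\langOf{r} = \Sigma^*$.

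For correctness I would argue through the structural (orbit) characterisation of one-unambiguous languages of Brüggemann-Klein and Wood \cite{BruggemannWood98} underlying Proposition \ref{oneUnambPropos}, rather than by inspecting a single expression, which would only show that one syntactic form of $\rho$ fails and not that the \emph{language} does. Concretely, I would reason about the minimal \dFA\ of $\langOf{\rho}$: in the universal case it satisfies the orbit property (consistency of the gate states of every strongly connected component), whereas a word $w \notin \langOf{r}$ forces two gate states of a common orbit to disagree on finality or on their transitions leaving the orbit, violating the orbit property and thereby ruling out \emph{every} equivalent $\dRE$. Because universality and its complement are both \classc{PSPACE}, it is immaterial whether the ``yes'' instances map to one-unambiguity or to its negation.

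The main obstacle I anticipate is the correctness analysis rather than the construction. The minimal \dFA\ of $\langOf{\rho}$ may be exponential in $|r|$, since it must internally distinguish $\langOf{r}$ from its complement even though $\rho$ is built positively; the orbit argument therefore has to be carried out at the level of the language, localising the only possible violation to a single controllable orbit keyed to membership in $\overline{\langOf{r}}$. Arranging the fresh separators and the surrounding context so that every strongly connected component \emph{not} involving the gadget is automatically consistent --- and so that the intended conflict is the unique source of non-one-unambiguity --- is the delicate point, and is exactly where the detailed combinatorics of \cite{BexGeladeMartensNeven09} are needed.
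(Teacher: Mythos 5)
The paper does not prove this lemma at all: it is imported verbatim, with citation, from \cite{BexGeladeMartensNeven09}, so there is no in-paper argument to compare yours against. Your high-level strategy does coincide with the one used in that reference -- a polynomial-time reduction from universality of regular expressions, with correctness argued through the Br\"uggemann-Klein--Wood orbit characterisation on the minimal \dFA\ of the constructed language rather than through any particular syntactic form of the constructed expression. Your observations that the reduction may exploit the positive ``union of forbidden patterns'' shape of the universality instances, and that it is immaterial which side of the biconditional carries the ``yes'' answer because \class{PSPACE} is closed under complement, are both sound.

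As a proof, however, the proposal has a genuine gap: the expression $\rho$ is never actually constructed, and both directions of the claimed equivalence ``$\langOf{\rho}$ is one-unambiguous iff $\langOf{r} = \Sigma^*$'' are left unestablished. The gap is not a routine omission, because the direction you treat as the easy one is precisely the dangerous one. Prefixing a gadget by $\Sigma^*$ is exactly how the canonical non-one-unambiguous languages arise ($\Sigma^* a \Sigma$ for $|\Sigma|\geq 2$ already violates the orbit property), so a ``$\Sigma^*$-branch that installs a look-ahead over all words'' threatens to make $\langOf{\rho}$ non-one-unambiguous in the universal case as well, collapsing the reduction to a constant function. Nor does adjoining a ``resolving branch'' over $\langOf{r}$ help by itself: one-unambiguity is a property of the minimal \dFA\ of the union language, not of the union of two expressions, so the resolution must be shown to survive minimisation. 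Arranging the separators so that the minimal \dFA\ satisfies the orbit property exactly when $\langOf{r}=\Sigma^*$ is the entire content of the theorem, and your proposal explicitly defers it to ``the detailed combinatorics of \cite{BexGeladeMartensNeven09}'' -- that is, to the result being proved. Until $\rho$ is written down and the two localisation claims (a unique controllable orbit; automatic consistency of all other orbits) are verified against the minimal \dFA, the reduction is a design intention rather than a proof.
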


\begin{definition}
$\concatUnivR$ is the following decision problem. Let $\Sigma$ be an alphabet. Given two $\mathcal{R}$-types $\tau_1$ and $\tau_2$ over $\Sigma$, is $\langOf{\tau_1} \circ \langOf{\tau_2} = \Sigma^*$. \qed
\end{definition}

\begin{lemma}[\cite{StockmeyerMeyer73,JiangRavikumar93,MartensNiewerthSchwentick10}]\label{concatUnivLemma}
$\concatUnivR$ is $\classc{PSPACE}$ for each $\mathcal{R} \in \{\nFA, \nRE, \dFA, \dRE\}$.
\end{lemma}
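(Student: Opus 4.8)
The plan is to establish $\classc{PSPACE}$ by proving membership uniformly for all four formalisms and then treating hardness separately for the nondeterministic and the deterministic devices.

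For membership, I would first reduce every input to $\nFAs$: a $\dFA$ is already an $\nFA$, a $\dRE$ is a special $\nRE$, and any $\nRE$ can be turned into an equivalent $\nFA$ in polynomial time (indeed with $\mathcal{O}(n \log^2 n)$ transitions, as recalled in the preliminaries). Given $\nFAs$ $\autom{A}_1, \autom{A}_2$ for $\langOf{\tau_1}$ and $\langOf{\tau_2}$, I form the concatenation $\autom{A} = \autom{A}_1 \cdot \autom{A}_2$, whose size is polynomial and whose language is $\langOf{\tau_1} \circ \langOf{\tau_2}$. It then remains to decide whether $\langOf{\autom{A}} = \Sigma^*$, i.e. $\nFA$-universality. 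I would decide its complement in $\class{PSPACE}$: non-universality holds iff, running the subset construction on the fly, some reachable subset contains no final state of $\autom{A}$. Guessing such a word symbol by symbol while storing only the current subset uses polynomial space, so non-universality is in $\class{NPSPACE} = \class{PSPACE}$ by Savitch, and hence universality is in $\class{PSPACE}$ as well. This argument is oblivious to which $\mathcal{R}$ produced the input, so it covers all four cases at once.

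For hardness in the nondeterministic cases ($\nFA$ and $\nRE$) I would reduce from $\nFA$- (resp. regular-expression-) universality, which is $\classc{PSPACE}$ by \cite{StockmeyerMeyer73}. Given an instance $\autom{A}$ over $\Sigma$, set $\tau_1 := \autom{A}$ and let $\tau_2$ be a fixed $\mathcal{R}$-type with $\langOf{\tau_2} = \{\varepsilon\}$. Then $\langOf{\tau_1} \circ \{\varepsilon\} = \langOf{\tau_1}$, so the $\concatUnivR$ instance is positive exactly when $\autom{A}$ is universal; the reduction is clearly computable in logarithmic space.

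The genuine difficulty, and the step I expect to be the main obstacle, is hardness for the deterministic devices $\dFA$ and $\dRE$, where universality is decidable in polynomial time so that the trivial reduction above collapses. Here the source of $\class{PSPACE}$-hardness must be the concatenation itself: although each of $\langOf{\tau_1}$ and $\langOf{\tau_2}$ is recognized deterministically, the split point of the product $\langOf{\tau_1} \circ \langOf{\tau_2}$ reintroduces an existential choice that no deterministic device makes on its own. Concretely, I would reduce from the acceptance problem of a polynomial-space (deterministic) Turing machine via computation histories: a word encodes a sequence of fixed-width configurations, and it fails to be a valid accepting history precisely when it has a \emph{local} defect, namely a malformed separator, a wrong initial or final configuration, or an erroneous transition window. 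The format, initial and final checks are deterministic; the only existential ingredient is ``some transition window is wrong'', and this is exactly what the concatenation boundary supplies: $\langOf{\tau_1}$ scans the prefix up to a guessed offending position and $\langOf{\tau_2}$ verifies deterministically, by counting the (polynomial) configuration width, that the window at that position is inconsistent. Arranging $\langOf{\tau_1} \circ \langOf{\tau_2}$ to equal the set of all non-accepting histories, hence $\Sigma^*$ iff the machine rejects its input (which suffices since $\class{PSPACE}$ is closed under complement), while keeping both components deterministic, is the delicate point, following \cite{JiangRavikumar93}. For $\dRE$ one must moreover ensure that the two component languages are one-unambiguous so as to be expressible by deterministic regular expressions, the refinement settled in \cite{MartensNiewerthSchwentick10}; I expect the bulk of the technical work to lie precisely in this determinism-and-one-unambiguity bookkeeping.
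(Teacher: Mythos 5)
The paper does not actually prove this lemma: it is imported wholesale from \cite{StockmeyerMeyer73,JiangRavikumar93,MartensNiewerthSchwentick10}, so there is no in-paper argument to compare yours against. Judged on its own terms, your membership argument is correct and complete (convert to \nFAs, concatenate, decide universality of the resulting \nFA\ in $\class{NPSPACE}=\class{PSPACE}$ via on-the-fly subset tracking), and your hardness reduction for $\nFA$ and $\nRE$ from universality, padding with $\tau_2$ accepting $\{\varepsilon\}$, is correct and is essentially the content of the \cite{StockmeyerMeyer73} citation.

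For $\dFA$ and $\dRE$ you correctly identify that this is the only non-immediate part of the lemma and that the existential power must come from the split point, but the sketch as written has a concrete hole. The set of non-histories is a \emph{union} of defect types, and a single concatenation $\langOf{\tau_1}\circ\langOf{\tau_2}$ does not absorb that union for free: with $\langOf{\tau_1}=\Sigma^*$ and $\langOf{\tau_2}$ the strings beginning with a faulty transition window, you capture the window defects, but the prefix-anchored defect ``the string does not start with the initial configuration'' cannot simply be added to $\langOf{\tau_2}$ --- a valid accepting history has many suffixes that do not start with the initial configuration, so the construction would then wrongly cover valid histories as well. The published reductions avoid this by forcing the split point onto configuration boundaries with delimiters and by distributing the deterministic checks between the two factors (and, for $\dRE$, by additionally verifying one-unambiguity of each factor, which is the part settled only in \cite{MartensNiewerthSchwentick10}). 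Since you explicitly defer exactly this step to the cited works, your proposal is an honest sketch rather than a wrong proof, but as a self-contained argument the deterministic hardness is not yet established.
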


After introducing some necessary definitions and results, we are ready for proving the following theorem. It is fundamental for pinpointing the complexity of $\consistent{\rSDTDs}$, for giving size-bounds about $\typeTn$ and the guidelines for constructing it.

\begin{theorem}\label{consSDTDReduct}
Let $T(\f_n)$ be a kernel and $(\tau_n)$ be an $\rSDTD$-typing.
\begin{enumerate}
  \item If $\mathcal{R} \in \{\nFA, \nRE\}$ (nondeterministic and closed under concatenation), then $\consistent{\rSDTD}$ is polynomial-time Turing reducible to $\equivalence{\rSDTD}$ and $\typeTn$ is not larger than $\Tn$;

  \item If $\mathcal{R} = \dFA$ (deterministic and closed under concatenation), then problem $\consistent{\rSDTD}$ is polynomial-time Turing reducible to $\equivalence{\xSDTD{\nFA}}$ and $\typeTn$ has unavoidably a single-exponential blow up w.r.t. $\Tn$ in the worst case;

  \item If $\mathcal{R} = \dRE$ (not closed under concatenation), then $\consistent{\rSDTD}$ is polynomial-space Turing reducible to $\oneUnamb{\nRE}$. There are cases where the size of $\typeTn$ is, at least, exponential in the size of $\Tn$. A doubly exponential size is sufficient in the worst case. (The exact bound is still open.)
\end{enumerate}
\end{theorem}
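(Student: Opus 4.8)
The plan is to decide $\consistent{\rSDTD}$ by testing whether the $\rEDTD$ $\Tn$ --- which by Theorem~\ref{TnIsWellDef} and Proposition~\ref{TnPolyTimeESize} has linear size, is built in polynomial time, and satisfies $\langOf{\Tn}=\extTn$ --- already defines an $\rSDTD$-definable language, and to read the promised size bounds off the canonical single-type representative. The tool is Lemma~\ref{singTypeLab3}: $\extTn$ is $\rSDTD$-definable iff it is closed under ancestor-guarded subtree exchange \emph{and} each of its content models is an $\mathcal{R}$-type. I would first recast the closure condition as a local property of $\Tn$. Call two specialized names $\tilde a,\tilde a'$ of $\Tn$ \emph{co-reachable} when some common string drives $\dual{\Tn}$ to both; co-reachable names are always mapped by $\mu$ to one element name, since on a label $a$ the dual reaches only $a$-specialized states. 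I claim $\extTn$ is closed under ancestor-guarded subtree exchange iff all co-reachable names agree, where agreement is the largest equivalence $\approx$ on specialized names with $\tilde a\approx\tilde a'$ implying that $\langOf{\pi(\tilde a)}$ and $\langOf{\pi(\tilde a')}$ coincide once every name is replaced by its $\approx$-block. This is a bisimulation-style condition computable by partition refinement whose atomic step is a single equivalence test between two content-model (string) languages.

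Given the characterization, the algorithm is: compute the co-reachable pairs by a product construction on $\dual{\Tn}$ in polynomial time, then run the refinement, issuing one content-model equivalence query per comparison; there are only polynomially many queries because there are only linearly many names. If every query succeeds, the $\approx$-quotient of $\Tn$ is a genuine $\rSDTD$ (its dual is deterministic precisely because all $a$-successors of a block lie in a single block) equivalent to $\Tn$, and this quotient is the canonical $\typeTn$; otherwise $\extTn$ is not $\rSDTD$-definable. Since each query is an equivalence of content-model languages, it is an instance of $\equivalence{\rSDTD}$ (trivially, by reading the two string languages as child languages under a common root), which is what yields the stated polynomial-time Turing reductions, and which in turn rests on string equivalence of $\nFAs$.

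The three cases diverge only in how the merged content models of the quotient are represented. In case~(1), $\mathcal{R}\in\{\nFA,\nRE\}$, a merged model is a union of $\nFAs$ (or $\nREs$), again an $\mathcal{R}$-type of the same order of size; moreover in the \emph{consistent} case every family of merged models already defines one language, so a single representative suffices and the quotient, having no more names than $\Tn$, gives $\typeTn$ no larger than $\Tn$. In case~(2), $\mathcal{R}=\dFA$, the queries are still run on the $\dFAs$ regarded as $\nFAs$, so consistency reduces to $\equivalence{\xSDTD{\nFA}}$; but realizing a merged model as one $\dFA$ means determinizing after the renaming that identifies co-reachable names, costing one exponential, and I would make this unavoidable by a kernel in which identifying the co-reachable names turns a deterministic content model into a language with no subexponential $\dFA$. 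In case~(3), $\mathcal{R}=\dRE$, the content-model half of Lemma~\ref{singTypeLab3} becomes binding: a merged model arises by concatenation and union at the docking points, and deciding whether it is a $\dRE$ is exactly a one-unambiguity test, so consistency is polynomial-space Turing reducible to $\oneUnamb{\nRE}$ (polynomial space because the co-reachability equivalence tests precede the oracle calls). By Corollary~\ref{oneUnambCoroll}(2) an $\nRE$ for a one-unambiguous language has a $\dRE$ of at most doubly exponential size, giving the upper bound on $\|\typeTn\|$, while Corollary~\ref{oneUnambCoroll}(3), transplanted to a kernel that concatenates two $\dRE$ contents across a docking point, furnishes the forced exponential lower bound.

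The main obstacle is the characterization in the first paragraph together with its size consequences. I must prove that the $\approx$-quotient is genuinely single-type and equivalent to $\Tn$ --- the delicate converse direction, where agreement on content models has to be lifted to agreement on whole subtrees by an induction that exploits the very determinism the quotient manufactures. I must also show that the two representation blow-ups are not artifacts, exhibiting concrete kernels in which materialization is forced to union or concatenate exactly the witnessing hard languages of Proposition~\ref{oneUnambPropos} and Corollary~\ref{oneUnambCoroll}. Everything else --- co-reachability, partition refinement, and the packaging of the string tests as tree-equivalence queries --- I expect to be routine.
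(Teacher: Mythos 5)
Your overall strategy coincides with the paper's: build \(\Tn\) (Proposition~\ref{TnPolyTimeESize}, Theorem~\ref{TnIsWellDef}), decide definability via Lemma~\ref{singTypeLab3} by merging conflicting specialized element names with oracle calls to \(\equivalence{\rSDTD}\), and obtain the three size bounds from exactly the same sources (concatenation of \(\dFAs\), Proposition~\ref{oneUnambPropos} and Corollary~\ref{oneUnambCoroll}). The paper realizes the merging as an explicit bottom-up traversal of the kernel tree \(T\); your partition-refinement formulation is a repackaging of the same computation, not a different proof.

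The one genuine gap is the central characterization you lean on: ``\(\extTn\) is closed under ancestor-guarded subtree exchange iff all co-reachable names are related by the largest congruence \(\approx\).'' As stated for an arbitrary \(\rEDTD\) this is false in both directions. First, language equivalence of the subtree languages \(\langOf{\Tn(\tilde a)}\) and \(\langOf{\Tn(\tilde a')}\) is strictly coarser than your syntactic congruence (one content model may realize as a union what the other realizes with a single name), so exchange-closure need not force \(\tilde a\approx\tilde a'\). Second, exchange-closure of the whole language does not even force pairwise equality of subtree languages of co-reachable names when specializations are correlated across a union (e.g.\ \(\pi(\tilde s)=\tilde a^1\tilde b^1+\tilde a^1\tilde b^2+\tilde a^2\tilde b^1+\tilde a^2\tilde b^2\) with \(\langOf{\tilde a^1}\neq\langOf{\tilde a^2}\) defines an exchange-closed, \(\rSDTD\)-definable language in which no congruence can merge \(\tilde a^1\) with \(\tilde a^2\)). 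Your claim is rescued only by the specific shape of \(\Tn\): every content model is either inherited from a single-type \(\tau_i\) or is a \emph{concatenation} \(L_1\ldots L_p\) of single-type factors, so no such correlations arise, and equivalence of two reduced single-type grammars yields a level-by-level correspondence of content models (Proposition~\ref{equivSDTDs}) that lifts language equality to your congruence. You must invoke this structure explicitly --- it is precisely what the paper's bottom-up traversal of \(T\) exploits. Note also that you flag the wrong direction as delicate: soundness of the quotient (congruence implies equivalence) is the routine half; completeness (definability implies the congruence relates all co-reachable pairs) is where the structural argument is needed.
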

\begin{proof}\ONLINE{
First of all we observe that, by construction, the only content models of $\Tn$ that might not satisfy the single-type requirement are those related to the witnesses of the non-leaf nodes of $T$. More formally, let $x$ be any non-leaf node of $T$ the label of which is denoted by $a$, the content model $\pi(\tilde{a}_0^x)$ of its (unique) witness $\tilde{a}_0^x$ is the only one(s) that may contain some conflict. All other content models either refer to leaves ($\varepsilon$ is single-type) or come from some $\tau_i$ (that is already single-type.)

\medskip

\noindent \textbf{Case 1.} \textsc{Proof Idea}: Consider $(\tau_n)$ being simply an $\rEDTD$. Build $\Tn$ and (from it) $\typeTn = \langle \Sigma, \tilde{\Sigma}, \pi, \tilde{s}_0, \mu \rangle$ (both in polynomial time as described in Section \ref{rEDTDsTyping}) and try to ``simplify'' the latter (in a bottom-up way starting from the nodes of $T$ having only leaves as children and going on to the root) for satisfying the single-type requirement. If the algorithm does not fail during its run ($\consistent{\rSDTD}$ admits a \textbf{``yes''} answer), then the resulting $\typeTn$ is now an $\rSDTD$.
During the proof we only make use of the ``ancestor-guarded subtree exchange'' property, and so, by Lemma \ref{singTypeLab3}, we can conclude that if we cannot simplify $\typeTn$, then it does not have an equivalent $\rSDTD$. Moreover, due to the simplification process that does not change the structure of $\Tn$ but only \emph{merges} some specialized element names, then the resulting type is at most as large as the original one. Finally, to check the subtree exchange property we only use equivalence between $\rSDTDs$ and the number of performed steps is clearly polynomial in the size of $\Tn$ witch, by Proposition \ref{TnPolyTimeESize}, is polynomial in $T$ and $(\tau_n)$.

More formally, for each node $x$ of $T$ having only leaves as children and of course an element name as label, say $a$, and for each pair of children $y \neq z$ of $x$, do:
\begin{enumerate}
  \item If both $y$ and $z$ are not function nodes and have the same label, say $b$. As $\pi(\tilde{b}_0^y) = \pi(\tilde{b}_0^z) = \varepsilon$ (by definition) we can consider hereafter, by Lemma \ref{singTypeLab3}, $\tilde{b}_0^y$ and $\tilde{b}_0^z$ the same element.

  \item If only one of the two, say $y$, has an element name as label, say $b$, while $z$ has a function as label, say $\f_i$, and $\pi_i(\tilde{s}_i)$ contains in its specification an element $\tilde{b}_i$ (at most one, as $\tau_i$ is already an $\rSDTD$), by Lemma \ref{singTypeLab3}, if $\langOf{\textsf{type}_T(\tau_n, \tilde{b}_i)} = \{b()\}$, then we can consider hereafter, $\tilde{b}_i$ and $\tilde{b}_0^y$ the same element; otherwise we can conclude that $\typeTn$ does not have an equivalent $\rSDTD$.

  \item Finally, if both $y$ and $z$ are function nodes having label $\f_i$ and $\f_j$, respectively, for each element name in $\Sigma$, say $b$, if both $\pi_i(\tilde{s}_i)$ and $\pi_j(\tilde{s}_j)$ contain in their specifications the elements $\tilde{b}_i$ and $\tilde{b}_j$ (at most one for each of them, as $\tau_i$ and $\tau_j$ are already $\rSDTD$), by Lemma \ref{singTypeLab3}, if $\langOf{\textsf{type}_T(\tau_n, \tilde{b}_i)} = \langOf{\textsf{type}_T(\tau_n, \tilde{b}_j)}$ (by construction, this can be done by deciding whether the two $\rSDTDs$ $\tau_i(\tilde{b}_i)$  and $\tau_j(\tilde{b}_j)$ are equivalent), then we can consider hereafter, $\tilde{b}_i$ and $\tilde{b}_j$ the same element; otherwise if for some $\tilde{b}_i$ and $\tilde{b}_j$ this is not true, we can conclude that $\typeTn$ does not have an equivalent $\rSDTD$;
\end{enumerate}
If the corresponding condition is satisfied for each $y$ and $z$, then we can conclude that $\pi(\tilde{a}_0^x)$ complies with the single-type requirement, that $\textsf{type}_T(\tau_n, \tilde{a}_0^x)$ has an equivalent $\rSDTD$ (obtained by applying the previous steps), and that it can be used for checking equivalences when we consider the parent of $x$, its children and (some modifications of) the three previous steps (see further.)

If $\textsf{type}_T(\tau_n, \tilde{a}_0^x)$ has an equivalent $\rSDTD$ for each considered node $x$, then the next iteration considers each node $x'$ of $T$ having only leaves as children or a node already analyzed. We perform Step 3 exactly as above, while Step 1 or Step 2 with the following trivial changes. Let $y$ be, now, a non-leaf node (instead of a leaf one):
\begin{enumerate}[ \ $\centerdot$]
  \item[$1'.$] If both $y$ and $z$ are not function nodes and have the same label, say $b$. By Lemma \ref{singTypeLab3}, if $\langOf{\textsf{type}_T(\tau_n, \tilde{b}_0^y)} = \{b()\}$ we can consider hereafter, $\tilde{b}_0^y$ and $\tilde{b}_0^z$ the same element; otherwise we can conclude that $\typeTn$ does not have an equivalent $\rSDTD$;

  \item[$2'.$] If only one of the two, say $y$, has an element name as label, say $b$, while $z$ has a function as label, say $\f_i$, and $\pi_i(\tilde{s}_i)$ contains in its specification an element $\tilde{b}_i$ (at most one, as $\tau_i$ is already an $\rSDTD$), by Lemma \ref{singTypeLab3}, if $\langOf{\textsf{type}_T(\tau_n, \tilde{b}_i)} = \langOf{\textsf{type}_T(\tau_n, \tilde{b}_0^y)}$, then we can consider hereafter, $\tilde{b}_i$ and $\tilde{b}_0^y$ the same element; otherwise we can conclude that $\typeTn$ does not have an equivalent $\rSDTD$.
\end{enumerate}
Finally, if we reach the root of $T$ and after checking equivalences on its children we can conclude that $\pi(\tilde{s_0})$ complies with the single-type requirement, then $\textsf{type}_T(\tau_n, \tilde{s_0}) = \typeTn$ is now (after merging the prescribed specialized element names) an $\rSDTD$.

\medskip

\noindent \textbf{Case 2.} If $\mathcal{R} = \dFA$ then, when we merge some specialized element names in the same content model, we can obtain an $\nFA$. So, we can still invoke the $\equivalence{\xSDTD{\nFA}}$ problem but the size of $\typeTn$ may be exponential as we want it to be a $\xSDTD{\dFA}$, and there are cases for which this may happen already by concatenating two $\dFAs$ \cite{Yu01}.
Given that, the blowup cannot be larger than single-exponential, this bound is optimal.

\medskip

\noindent \textbf{Case 3.} If $\mathcal{R} = \dRE$ then, when we merge some specialized element names in the same content model, we can obtain (due to the concatenation and by Proposition \ref{oneUnambPropos}) an $\nRE$ that may not be expressible by a $\dRE$. We can still invoke the $\equivalence{\xSDTD{\nRE}}$ problem (as necessary condition) but
we also have to invoke the $\oneUnamb{\nRE}$ problem (at least as hard as the first one). Notice that this new check does not compromise the soundness of the algorithm. In fact, for each possible $\xSDTD{\dRE}$ (if any) equivalent to $\typeTn$ the unique witness that can be assigned to $x$, due to Lemma \ref{singTypeLab3}, must define the same language as $\pi(\tilde{a}_0^x)$ by applying $\mu$ to them. Finally, if both the two decision problems answer yes, then we can consider a new iteration of the previous algorithm. In case that each content model has an equivalent $\dRE$ specification and we reach the root of $T$, we can conclude that $\typeTn$ is now a $\xSDTD{\dRE}$. By Proposition \ref{oneUnambPropos}, there are cases where $\typeTn$ may require, at least, single-exponential size. By Corollary \ref{oneUnambCoroll}, a doubly exponential size is sufficient in the worst case.
}
\end{proof}

We now have the following result:

\begin{corollary}\label{SDTDpspacePtime}
\
\begin{enumerate}[$(1)$]
  \item Problems $\consistent{\xSDTD{\nRE}}$ and $\consistent{\xSDTD{\nFA}}$ are $\classc{PSPACE}$;

  \item Problem $\consistent{\xSDTD{\dFA}}$ is $\classc{PSPACE}$;

  \item Problem $\consistent{\xSDTD{\dRE}}$ is both $\classh{PSPACE}$ and in $\class{EXPTIME}$;
\end{enumerate}
\end{corollary}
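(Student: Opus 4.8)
The plan is to obtain all three upper bounds directly from the reductions of Theorem~\ref{consSDTDReduct} combined with the complexity of their target problems, and to obtain all three lower bounds from a single uniform reduction out of $\concatUnivR$, which Lemma~\ref{concatUnivLemma} certifies to be \classc{PSPACE} for every $\mathcal{R}\in\{\nFA,\nRE,\dFA,\dRE\}$.

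For the upper bounds the one auxiliary fact I would first pin down is that $\equivalence{\xSDTD{\nFA}}$ is in \class{PSPACE}. I would argue this from the single-type restriction: since the dual automaton is a \dFA, the witness (type) assigned to a node depends only on its ancestor string, so two reduced $\xSDTD{\nFA}$s are equivalent iff their duals are equivalent \dFAs and, for every pair of types reached at a common ancestor string (equivalently, every reachable state pair in the product of the two duals), the associated content models agree after relabelling via $\mu$ --- a relabelling that is injective thanks to the single-type condition. This amounts to polynomially many \nFA-equivalence tests over $\Sigma$, each in \class{PSPACE}, so the whole test is in \class{PSPACE} (essentially the bound of \cite{MartensNeven06}); as an \nRE\ converts to an \nFA\ in polynomial time, $\equivalence{\xSDTD{\nRE}}$ is in \class{PSPACE} too. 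Parts (1) and (2) then follow immediately, since Theorem~\ref{consSDTDReduct}(1)--(2) gives polynomial-time Turing reductions of the relevant consistency problems to $\equivalence{\xSDTD{\nFA}}$ or $\equivalence{\xSDTD{\nRE}}$ and \class{PSPACE} is closed under such reductions; the single-exponential blow-up flagged in Theorem~\ref{consSDTDReduct}(2) concerns the size of the witnessing type $\typeTn$, not the decision.

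For part (3) the subtle point --- the step I expect to need the most care --- is to separate the cost of \emph{deciding} consistency from the doubly-exponential size of a $\dRE$ one might eventually write out. Here I would observe that the procedure of Theorem~\ref{consSDTDReduct}(3) performs only polynomially many tests and that each merged content model it inspects is an \nRE\ of polynomial size; hence every call to $\oneUnamb{\nRE}$ and to $\equivalence{\xSDTD{\nRE}}$ is on an input of polynomial size. Since $\oneUnamb{\nRE}$ is in \class{EXPTIME} by Corollary~\ref{oneUnambCoroll}(1) and $\equivalence{\xSDTD{\nRE}}$ is in \class{PSPACE}, inlining the polynomially many polynomial-size calls yields an overall \class{EXPTIME} bound, giving $\consistent{\xSDTD{\dRE}}\in\class{EXPTIME}$.

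For the lower bounds I would use one construction for all four $\mathcal{R}$. Given $\mathcal{R}$-types $\rho_1,\rho_2$ over $\Sigma$ with (w.l.o.g.) nonempty languages, asking whether $\langOf{\rho_1}\circ\langOf{\rho_2}=\Sigma^*$ is exactly $\concatUnivR$. I take the kernel $T=s_0(a(\f_1\ \f_2)\ a(\f_3))$ and the $\rSDTD$-typing whose roots carry content models $\rho_1,\rho_2$ and $\Sigma^*$ respectively (with $\Sigma^*$ an $\mathcal{R}$-type in every case, e.g. the \dRE\ $(a_1+\cdots+a_k)^*$), all remaining element names being leaves. In any extension the two $a$-nodes share the ancestor string $s_0a$, the left one ranging over $\langOf{\rho_1}\circ\langOf{\rho_2}$ and the right one over $\Sigma^*$; all content models of $\extTn$ are regular, so by Lemma~\ref{singTypeLab3} the typing is $\rSDTD$-consistent iff $\extTn$ is closed under ancestor-guarded subtree exchange. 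Exchanging the two $a$-subtrees shows this holds iff $\langOf{\rho_1}\circ\langOf{\rho_2}=\Sigma^*$ --- the nonemptiness assumption is precisely what lets me, in the failing direction, exhibit a genuine extension whose left child is a missing word. Specialising $\mathcal{R}$ makes $\consistent{\xSDTD{\nFA}}$, $\consistent{\xSDTD{\nRE}}$, $\consistent{\xSDTD{\dFA}}$, and $\consistent{\xSDTD{\dRE}}$ all \classh{PSPACE}, which together with the upper bounds yields (1) and (2) as \classc{PSPACE} and (3) as \classh{PSPACE} and in \class{EXPTIME}. The two points most likely to need tightening are the clean justification of the auxiliary $\equivalence{\xSDTD{\nFA}}\in\class{PSPACE}$ bound and the careful exchange/nonemptiness argument underlying the reduction.
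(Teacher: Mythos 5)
Your proposal is correct and follows the paper's proof in all essentials: the upper bounds come from Theorem~\ref{consSDTDReduct} together with the \class{PSPACE} bound on $\equivalence{\xSDTD{\nFA}}$/$\equivalence{\xSDTD{\nRE}}$ and the \class{EXPTIME} bound on $\oneUnamb{\nRE}$ from Corollary~\ref{oneUnambCoroll}, and the hardness reduction from $\concatUnivR$ uses exactly the paper's kernel $T=s_0(a(\f_1\ \f_2)\ a(\f_3))$ with content models $\rho_1,\rho_2,\Sigma^*$ at the three roots. The one place where you genuinely diverge is the lower bound for case (1): the paper disposes of $\nFA$ and $\nRE$ by citing the \classh{PSPACE}{}ness of $\equivalence{\xSDTD{\nRE}}$ and $\equivalence{\xSDTD{\nFA}}$ (leaving the reduction from equivalence to consistency implicit), whereas you run the single $\concatUnivR$ reduction uniformly over all four choices of $\mathcal{R}$, which Lemma~\ref{concatUnivLemma} licenses. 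Your route is more self-contained and makes the subtree-exchange argument (Lemma~\ref{singTypeLab3}) and the nonemptiness proviso explicit, at the cost of not reusing the known equivalence-hardness result; both are valid. Your added observations --- that $\equivalence{\xSDTD{\nFA}}\in\class{PSPACE}$ follows from reachable state pairs of the product of the dual \dFAs{} plus \nFA-equivalence of content models, and that the \class{EXPTIME} bound in (3) is unaffected by the doubly exponential size of $\typeTn$ because every oracle query is of polynomial size --- are correct and merely fill in details the paper delegates to \cite{MartensNeven06} and to the proof of Theorem~\ref{consSDTDReduct}.
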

\begin{proof}
\textbf{Membership}. For (1) and (2) consider that both \textsc{equiv}$_{[\scriptsize\xSDTD{\nRE} \normalsize]}$ and \textsc{equiv}$_{[\scriptsize\xSDTD{\nFA} \normalsize]}$ are feasible in $\class{PSPACE}$ \citep{MartensNeven06}. While for (3) we also consider that $\oneUnamb{\nRE}$ is doable in $\class{EXPTIME}$, by Corollary \ref{oneUnambCoroll}.

\medskip

\noindent \textbf{Hardness}. For (1) we know that both \textsc{equiv}$_{[\scriptsize\xSDTD{\nRE} \normalsize]}$ and \textsc{equiv}$_{[\scriptsize\xSDTD{\nFA} \normalsize]}$ are also $\classh{PSPACE}$ \citep{MartensNeven06}.

For (2) and (3) we directly consider a reduction from $\concatUnivR$ ($\classh{PSPACE}$, by Lemma \ref{concatUnivLemma}) to problem $\consistent{\rSDTD}$ ($\mathcal{R} \in \{\dFA, \dRE\}$). In particular, let $A_1$, $A_2$ be two $\mathcal{R}$-types, we consider the consistency problem for the kernel tree $T = s(a(\f_1 \f_2) \ a(\f_3))$ and the $\rSDTDs$ typing $(\tau_1, \tau_2, \tau_3)$ where the trees in $\tau_1,\tau_2$ have only one level other than the root, $\pi_1(\tilde{s}_1) = A_1$, $\pi_2(\tilde{s}_2) = A_2$, and $\langOf{\pi_3(\tilde{s}_3)} = \Sigma^*$. It is easy to see that $(\tau_1, \tau_2, \tau_3)$ is consistent with $T$ if and only if $\langOf{A_1} \circ \langOf{A_2} = \Sigma^*$.
\end{proof}

We conclude this section with a remark.

\begin{remark}
The exponential blow-up affecting $\typeTn$ may suggest that there are cases for which it may be better to store an XML document in a distributed manner keeping each part valid w.r.t. its local (and small) type $\tau_i$ rather than validate the whole document w.r.t. a very large type.\qed
\end{remark}

\subsection{$\rDTDs$ typing}

Even for $\rDTDs$ we use $\Tn$ as defined for $\rEDTDs$. But here the algorithm we introduced for $\rEDTDs$ does not work any more because an $\rDTD$-typing is structurally different from an $\rSDTD$ or an $\rEDTD$.

Let $T$ be a kernel and $(\tau_n)$ be an $\rDTD$-typing. Before building $\Tn$ we construct, from $(\tau_n)$, an equivalent $\rSDTD$-typing $(\tau_n')$ as follows. Let $\tau_i = \langle \Sigma_i, \pi_i, s_i \rangle$ be the $i^{th}$ type in $(\tau_n)$.
Consider the $\rSDTD$-type $\tau_i = \langle \Sigma_i, \tilde{\Sigma}_i, \pi_i', \tilde{s}_i, \mu_i \rangle$ defined as follows:
\begin{enumerate}[ \ $\centerdot$]
  \item $\tilde{a} \in \tilde{\Sigma}_i$ iff $a \in \Sigma_i$;
  \item $\mu_i$ is a bijection between $\tilde{\Sigma}_i$ and $\Sigma_i$;
  \item $\pi_i'(\tilde{a})$ = $\mu^{-1}(\pi_i(a))$.
\end{enumerate}

The two types are trivially equivalent. So, we can build the new $\xEDTD{\nFA}$-type (or $\xEDTD{\nRE}$-type), representing $\extTn$, by using $(\tau_n')$. But since the overhead of constructing $(\tau_n')$ is completely negligible, we still denote it by $\Tn$ instead of $T(\tau_n')$.

Also in this case we would like to decide whether $\Tn$ has an equivalent $\rDTD$-type or not, and even the general problem (when $\mathcal{R}$ stands for $\nREs$ or $\nFAs$) of deciding whether an $\rEDTD$ has an equivalent $\rDTD$ is $\classc{EXPTIME}$ \citep{MartensNeven06}. As for $\rSDTD$, we will show that in our settings we can do better.

\begin{definition}[\cite{PapakonstantinouVianu00}]
A tree language $L$ is \textbf{closed under subtree substitution} if the following holds. Whenever for two trees $t_1,t_2 \in L$ with nodes $x_1$ and $x_2$, respectively, $\labt{t_1}{x_1} = \labt{t_2}{x_2}$, then the trees obtained, from $t_1$ and $t_2$, by exchanging $\treet{t_1}{x_1}$ and $\treet{t_2}{x_2}$ are still still in $L$. \qed
\end{definition}

\begin{lemma}[\cite{PapakonstantinouVianu00}]\label{cloUndSubSubLem}
A tree language is definable by an $\rDTD$ iff it is ``closed under subtree substitution'' and each content model is defined by an $\mathcal{R}$-type.
\end{lemma}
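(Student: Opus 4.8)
The plan is to prove the two directions separately, spending almost all the effort on the ``if'' direction, since the ``only if'' direction is a direct verification. For the \emph{only if} direction, suppose $L = \langOf{\tau}$ for an $\rDTD$ $\tau = \langle \Sigma, \pi, s \rangle$. Each content model is $\langOf{\pi(a)}$, an $\mathcal{R}$-type by definition, so that half is immediate. For closure, take $t_1,t_2 \in L$ and nodes $x_1,x_2$ with $\labt{t_1}{x_1} = \labt{t_2}{x_2} = a$, and let $t_1'$ be obtained from $t_1$ by replacing $\treet{t_1}{x_1}$ with $\treet{t_2}{x_2}$. Every node of $t_1'$ lies either outside the grafted subtree, where its children string agrees with $t_1$ (in particular the parent of $x_1$ still sees a child labeled $a$ in the same position, because $x_1$ and $x_2$ carry the same label), or inside it, where its children string is inherited from $t_2$; in both cases the string lies in the appropriate $\langOf{\pi(\cdot)}$ and the root label is unchanged, so $t_1' \in L$, and symmetrically for the companion tree.

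For the \emph{if} direction, I assume $L$ is closed under subtree substitution, that for each label $a$ the set $C_a = \{\childstrt{t}{x} : t \in L,\ \labt{t}{x} = a\}$ of children strings occurring under $a$-labeled nodes is an $\mathcal{R}$-type (this is the content of ``each content model is defined by an $\mathcal{R}$-type''), and that every tree in $L$ shares a common root label $s$ (forced, since any $\rDTD$ language has a unique start symbol). I then build $\tau = \langle \Sigma, \pi, s \rangle$ with $\Sigma$ the set of labels occurring in $L$, with $\pi(a)$ an $\mathcal{R}$-type for $C_a$, and start symbol $s$. The inclusion $L \subseteq \langOf{\tau}$ is free: each $t \in L$ has root labeled $s$, and each children string $\childstrt{t}{x}$ lies in $C_a = \langOf{\pi(\labt{t}{x})}$ by construction.

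The real work is $\langOf{\tau} \subseteq L$, which I would establish through the following claim, proved by induction on the height of $\treet{t}{x}$: for every $t \in \langOf{\tau}$ and every node $x$ labeled $a$, there exist $t' \in L$ and a node $x'$ labeled $a$ with $\treet{t'}{x'} = \treet{t}{x}$. When $x$ is a leaf, $\childstrt{t}{x} = \varepsilon \in C_a$, so some member of $L$ has an $a$-labeled leaf, which supplies the single-node subtree. When $x$ is internal with children $y_1,\ldots,y_k$ of labels giving $\childstrt{t}{x} \in C_a$, I pick a witness $t_0 \in L$ having an $a$-labeled node whose children carry exactly those labels, and then, using the induction hypothesis to locate each $\treet{t}{y_j}$ inside some tree of $L$ below a node of the matching label, I graft these subtrees into $t_0$ one child at a time via subtree substitution. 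Each graft keeps the tree in $L$ (the exchanged nodes share a label) and leaves the previously grafted children untouched, so after $k$ independent steps I obtain a tree of $L$ whose $a$-node realizes precisely $\treet{t}{x}$.

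Applying the claim to $x = \rootnode{t}$ yields a tree $t_1 \in L$ that contains $t$ as the subtree below an $s$-labeled node $x_0$; one final subtree substitution of $t = \treet{t_1}{x_0}$ against the root of any member of $L$ (also labeled $s$) produces $t$ itself as a member of $L$, completing the inclusion. The main obstacle is organizing this converse induction cleanly: one must argue that the successive grafts are genuinely independent (they occur at distinct children) so that they compose, and one must bridge the gap between ``$t$ occurs as a subtree of some member of $L$'' and actual membership $t \in L$, which is exactly where the common-root-label hypothesis is consumed. By comparison the forward direction and the inclusion $L \subseteq \langOf{\tau}$ are mere bookkeeping.
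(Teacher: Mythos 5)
The paper does not prove this lemma at all --- it is imported verbatim from \cite{PapakonstantinouVianu00} as a known characterization of local tree languages --- so there is no in-paper proof to compare against; what you have written is a self-contained proof of the cited result, and it is essentially the standard argument. Both directions are sound: the ``only if'' direction is indeed pure bookkeeping, since $\rDTD$-validity is a conjunction of purely local conditions (root label plus one membership $\childstrt{t}{x} \in \langOf{\pi(\labt{t}{x})}$ per node), each preserved by exchanging subtrees rooted at equally labeled nodes; and your converse induction on the height of $\treet{t}{x}$, grafting the children's subtrees one at a time into a witness from $L$ and finishing with a root-level exchange, is exactly the right mechanism, with the successive grafts composing because they act below distinct children whose root labels are unchanged. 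The one point worth flagging is the common-root-label hypothesis you introduce: as literally stated the lemma is false without it (e.g.\ $L=\{a(b),\,c(b)\}$ is closed under subtree substitution and has regular content models but is not definable by an $\rDTD$ with a single start symbol), and your parenthetical justification ``forced, since any $\rDTD$ language has a unique start symbol'' is circular in the ``if'' direction, where definability is the conclusion rather than a hypothesis. You are right that the assumption is needed and you are explicit about where it is consumed, so this is a defect of the statement as transcribed rather than of your argument; it would be cleaner to state it as an explicit additional hypothesis (or to fold ``the set of root labels is a singleton'' into the notion of content model) rather than to claim it is forced.
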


The following theorem (with the related corollary) concludes the set of results for the bottom-up design problem, and gives the last guidelines for constructing $\typeTn$ or evaluating its size.

\begin{theorem}\label{consDTDReduct}
Let $T(\f_n)$ be a kernel and $(\tau_n)$ be an $\rDTD$-typing.
\begin{enumerate}
  \item If $\mathcal{R} \in \{\nFA, \nRE\}$, then $\consistent{\rDTD}$ is polynomial-time Turing reducible to $\equivalence{\rSDTD}$ and $\typeTn$ is linear in $\Tn$;

  \item If $\mathcal{R} = \dFA$, then $\consistent{\rDTD}$ is polynomial-time Turing reducible to $\equivalence{\xSDTD{\nFA}}$ and $\typeTn$ has unavoidably a single-exponential blow up w.r.t. $\Tn$ in the worst case;

  \item If $\mathcal{R} = \dRE$, then $\consistent{\rDTD}$ is polynomial-space Turing reducible to $\oneUnamb{\nRE}$ and there are case where $\typeTn$ is, at least, exponentially larger than $\Tn$. A doubly exponential size is sufficient in the worst case. (The exact bound is still open.)
\end{enumerate}
\end{theorem}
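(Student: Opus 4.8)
The plan is to follow the proof of Theorem~\ref{consSDTDReduct} almost line for line, making a single conceptual substitution: the ancestor-guarded subtree exchange characterization (Lemma~\ref{singTypeLab3}) is replaced by the subtree substitution characterization of $\rDTD$-languages (Lemma~\ref{cloUndSubSubLem}). As already observed, we build $\Tn$ from the auxiliary $\rSDTD$-typing $(\tau_n')$ in polynomial time (Proposition~\ref{TnPolyTimeESize}), so that $\langOf{\Tn} = \extTn$; the only questions are whether $\langOf{\Tn}$ admits an equivalent $\rDTD$-type and, if so, how large such a type must be.

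First I would describe the decision procedure. By Lemma~\ref{cloUndSubSubLem}, $\langOf{\Tn}$ is an $\rDTD$-language iff it is closed under subtree substitution. Because this condition is purely label-based -- unlike the ancestor-guarded condition, it ignores the path leading to the exchanged node -- the grouping criterion becomes global rather than context-sensitive: for each element name $a$ I collect every specialized name $\tilde{a}$ of $\Tn$ with $\mu(\tilde{a}) = a$, namely the witnesses $\tilde{a}_0^x$ of the $T$-nodes labeled $a$ together with the copies $\tilde{a}_i$ inherited from the types $\tau_i$, and I check (bottom-up, exactly as in the proof of Theorem~\ref{consSDTDReduct}) that they all define the same subtree language, i.e. $\langOf{\textsf{type}_T(\tau_n,\tilde{a})} = \langOf{\textsf{type}_T(\tau_n,\tilde{a}')}$ for every such pair. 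Each of these subtree languages is an $\rSDTD$-type (by Lemma~\ref{singTypeLab1} on the function side, and by the bottom-up simplification on the kernel side), so each of the polynomially many comparisons is an instance of $\equivalence{\rSDTD}$; if some pair fails, no equivalent $\rDTD$ exists, and otherwise I merge all specialized names sharing a label into one content model, obtaining a type that is closed under subtree substitution and hence an $\rDTD$. This yields the claimed polynomial-time Turing reductions: to $\equivalence{\rSDTD}$ when $\mathcal{R} \in \{\nFA,\nRE\}$, and to $\equivalence{\xSDTD{\nFA}}$ when $\mathcal{R} = \dFA$, the latter because collapsing several specialized names onto a single label turns deterministic content models into nondeterministic ones.

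The size bounds are the part that needs care, and I expect the tight cases to be the main obstacle. For $\mathcal{R} \in \{\nFA,\nRE\}$ the content model of $a$ in $\typeTn$ is just the image under $\mu$ of one representative $\pi(\tilde{a})$, which merely relabels symbols, so $\typeTn$ stays linear in $\Tn$. For $\mathcal{R} = \dFA$ the projection along the (non-injective) map $\mu$, together with the concatenations introduced in the construction of $\Tn$, produces a nondeterministic machine, and presenting the required $\dFA$ forces a determinization; the blow-up is single-exponential and is unavoidable already for the concatenation of two $\dFA$s~\citep{Yu01}, which I realize by a kernel such as $s(\f_1\,\f_2)$ whose two functions feed in the two factors, so the bound is tight. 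For $\mathcal{R} = \dRE$ the same projection yields an $\nRE$ that need not be a $\dRE$, so in addition to the equivalence tests I must invoke $\oneUnamb{\nRE}$ to decide whether each merged content model is one-unambiguous, giving a polynomial-space Turing reduction to $\oneUnamb{\nRE}$. The at-least-exponential lower bound follows from the concatenation family of Proposition~\ref{oneUnambPropos}(4) transplanted into a kernel exactly as in Corollary~\ref{oneUnambCoroll}(3), while the doubly-exponential upper bound is inherited from Corollary~\ref{oneUnambCoroll}(2); as for $\rSDTDs$, the exact bound remains open.
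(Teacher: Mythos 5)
Your proposal is correct and follows essentially the same route as the paper: the paper's proof likewise converts the $\rDTD$-typing to an $\rSDTD$-typing, reruns the algorithm of Theorem~\ref{consSDTDReduct}, and adds (via Lemma~\ref{cloUndSubSubLem}) the extra check that all specialized names projecting to the same element name have equivalent content models under $\mu$, which in a bottom-up pass is the same as your subtree-language comparison. The size-bound arguments (merging forces nondeterminism for $\dFA$, hence the tight single-exponential blow-up, and the $\dRE$ bounds via Proposition~\ref{oneUnambPropos} and Corollary~\ref{oneUnambCoroll}) also coincide with the paper's.
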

\begin{proof}\ONLINE{
Build the $\rSDTD$-typing $(\tau_n')$ from $(\tau_n)$ as said before.
Perform, from $T$ and $(\tau_n')$, the decision-algorithm defined in the proof of Theorem \ref{consSDTDReduct} by enforcing, due to Lemma \ref{cloUndSubSubLem}, the additional constraint at the end of each macro-step when we assert that $\textsf{type}_T(\tau_n, \tilde{a}_0^x)$ has an equivalent $\rSDTD$:
\begin{enumerate}[ \ $\centerdot$]
  \item $\langOf{\mu(\pi(\tilde{a}'))} = \langOf{\mu(\pi(\tilde{a}''))}$ for each $\tilde{a}', \tilde{a}''$ already considered;
\end{enumerate}
Finally, notice that the (polynomial number) additional steps are special cases of calls to $\equivalence{\rSDTD}$, and that
the same observations made for $\rSDTDs$ hold for $\typeTn$ as well.}
\end{proof}

\begin{corollary}\label{DTDpspacePtime}
We have the following results:
\begin{enumerate}[$(1)$]
  \item Problems $\consistent{\xDTD{\nRE}}$ and $\consistent{\xDTD{\nFA}}$ are $\classc{PSPACE}$;

  \item Problem $\consistent{\xDTD{\dFA}}$ is $\classc{PSPACE}$;

  \item Problem $\consistent{\xDTD{\dRE}}$ is both $\classh{PSPACE}$ and in $\class{EXPTIME}$;
\end{enumerate}
\end{corollary}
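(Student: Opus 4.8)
The plan is to mirror closely the proof of Corollary \ref{SDTDpspacePtime}, replacing the structural characterisation of $\rSDTDs$ by the one for $\rDTDs$ (Lemma \ref{cloUndSubSubLem}) and feeding the reductions of Theorem \ref{consDTDReduct} into the known complexities of the relevant equivalence and one-unambiguity problems. For membership, in items (1) and (2) I would invoke Theorem \ref{consDTDReduct}(1)--(2): $\consistent{\xDTD{\nRE}}$ and $\consistent{\xDTD{\nFA}}$ are polynomial-time Turing reducible to $\equivalence{\xSDTD{\nRE}}$, and $\consistent{\xDTD{\dFA}}$ to $\equivalence{\xSDTD{\nFA}}$. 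Since both equivalence problems are in $\class{PSPACE}$ \citep{MartensNeven06} and $\class{PSPACE}$ is closed under polynomial-time Turing reductions, all three problems lie in $\class{PSPACE}$. For item (3), Theorem \ref{consDTDReduct}(3) gives a polynomial-space Turing reduction of $\consistent{\xDTD{\dRE}}$ to $\oneUnamb{\nRE}$, which is in $\class{EXPTIME}$ by Corollary \ref{oneUnambCoroll}(1); a polynomial-space machine issues at most exponentially many polynomially-sized oracle queries, each answered in exponential time, so the whole procedure runs in $\class{EXPTIME}$.

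For hardness I would reduce from $\concatUnivR$, which is $\classh{PSPACE}$ for every $\mathcal{R}\in\{\nFA,\nRE,\dFA,\dRE\}$ by Lemma \ref{concatUnivLemma}; this single reduction settles the lower bounds of all three items at once. Given two $\mathcal{R}$-types $A_1,A_2$ over $\Sigma$ (choosing fresh symbols $s,a\notin\Sigma$), I reuse the kernel $T=s(a(\f_1\f_2)\ a(\f_3))$ and the $\rDTD$-typing $(\tau_1,\tau_2,\tau_3)$ in which the trees of $\tau_1,\tau_2$ are one level deep, $\pi_1(s_1)=A_1$, $\pi_2(s_2)=A_2$, and $\langOf{\pi_3(s_3)}=\Sigma^*$. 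Then $\extTn=\{\,s(a(u)\ a(v)):u\in\langOf{A_1}\circ\langOf{A_2},\ v\in\Sigma^*\,\}$. If $\langOf{A_1}\circ\langOf{A_2}=\Sigma^*$, this equals $\{s(a(u)\ a(v)):u,v\in\Sigma^*\}$, which is an $\rDTD$ language with content models $aa$ for $s$ and $\Sigma^*$ for $a$, both expressible as $\dFAs$ and as $\dREs$, so $(\tau_1,\tau_2,\tau_3)$ is consistent. If $\langOf{A_1}\circ\langOf{A_2}\neq\Sigma^*$, I would pick $v\in\Sigma^*\setminus(\langOf{A_1}\circ\langOf{A_2})$ and exhibit two trees whose second $a$-subtree, substituted for a first $a$-subtree (legitimate since both carry label $a$), leaves $\extTn$; hence $\extTn$ is not closed under subtree substitution and, by Lemma \ref{cloUndSubSubLem}, not $\rDTD$-definable, so the typing is inconsistent. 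Thus consistency holds iff $\langOf{A_1}\circ\langOf{A_2}=\Sigma^*$, which establishes $\classh{PSPACE}$ uniformly for (1), (2) and (3).

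The only delicate point, and the one I would verify most carefully, is the correctness of this iff in the hardness step: I must confirm that subtree-substitution closure collapses the two independent ``slots'' $u$ and $v$ into a single common content model, forcing $\langOf{A_1}\circ\langOf{A_2}=\Sigma^*$ exactly as ancestor-guarded subtree exchange did in Corollary \ref{SDTDpspacePtime}. Here this is in fact immediate because the two nodes being exchanged share the label $a$, so the weaker same-label condition of Lemma \ref{cloUndSubSubLem} already applies, and no other same-label pair (the $\Sigma$-leaves all have empty content model) imposes an extra constraint. I would also dispatch the degenerate case $\langOf{A_1}\circ\langOf{A_2}=\emptyset$ via the reduced-type convention so that it does not spoil the equivalence. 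Everything else is bookkeeping inherited from the $\rSDTD$ analysis.
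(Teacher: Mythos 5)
Your proof is correct and follows essentially the same route as the paper's: membership is obtained exactly as in Corollary~\ref{SDTDpspacePtime} by feeding the reductions of Theorem~\ref{consDTDReduct} into the known bounds for $\equivalence{\xSDTD{\nFA}}$/$\equivalence{\xSDTD{\nRE}}$ and $\oneUnamb{\nRE}$, and hardness uses the same reduction from $\concatUnivR$ with the kernel $T = s(a(\f_1 \f_2)\ a(\f_3))$, with the subtree-substitution argument correctly replacing ancestor-guarded subtree exchange. The only (harmless) deviation is that you apply this reduction uniformly to item (1) as well, whereas the paper there simply cites the $\classh{PSPACE}$ of $\equivalence{\xDTD{\nFA}}$ and $\equivalence{\xDTD{\nRE}}$; your unified route is equally valid since $\concatUnivR$ is $\classh{PSPACE}$ for all four choices of $\mathcal{R}$ by Lemma~\ref{concatUnivLemma}.
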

\begin{proof}\ONLINE{
\textbf{Membership}. As for $\rSDTDs$ (see Corollary \ref{SDTDpspacePtime}).

\medskip

\noindent \textbf{Hardness}. For (1) we know that both \textsc{equiv}$_{[\scriptsize\xDTD{\nRE} \normalsize]}$ and \textsc{equiv}$_{[\scriptsize\xDTD{\nFA} \normalsize]}$ are also $\classh{PSPACE}$ \citep{StockmeyerMeyer73,MartensNeven06}.

For (2) and (3) we use the same reduction (from $\concatUnivR$) that we have used in Corollary \ref{SDTDpspacePtime}, where the problem $\consistent{\rSDTD}$ is replaced now by $\consistent{\rDTD}$. In particular, we just notice that also in this case $(\tau_1, \tau_2, \tau_3)$ is consistent with $T = s(a(\f_1 \f_2) \ a(\f_3))$ if and only if $\langOf{\pi_1(\tilde{s}_1)} \circ \langOf{\pi_2(\tilde{s}_2)} = \langOf{\pi_3(\tilde{s}_3)} = \Sigma^*$.}
\end{proof}

\section{Top-down design}\label{TopDownDesign}

In this section, we consider design problems where we start from a kernel and a given global type, and we show how to reduce each of these problems on trees to a set of typing problems on strings. In the next section, we will show how to solve the problems for strings.

\subsection{$\rDTDs$}

We briefly present some obvious results on equivalence of $\rDTDs$. The proof of the next result is obvious and thus omitted.

\begin{proposition}\label{equivDTDs}
Two reduced $\rDTDs$ $\tau_1$ and $\tau_2$ are equivalent if and only if the following are true:
\begin{enumerate}
  \item They have the same root;

  \item They use the same element names;

  \item For each element name $a$, the content models of $a$ in both are equivalent.
\end{enumerate}
\end{proposition}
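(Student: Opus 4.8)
The plan is to prove both implications, with the reverse direction (conditions $1$--$3$ imply equivalence) immediate and the forward direction carrying all the weight. For the easy direction I would appeal directly to the semantics of an $\rDTD$: a tree $t$ lies in $\langOf{\tau_i}$ exactly when $\lab{\rootnode{t}} = s_i$ and $\childstr{x} \in \langOf{\pi_i(\lab{x})}$ for every node $x$. If the two DTDs share the start symbol, the element names, and pointwise equivalent content models, then these two membership conditions are the same predicate on $t$, so $\langOf{\tau_1} = \langOf{\tau_2}$ follows with no further work.

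For the forward direction, assume $\langOf{\tau_1} = \langOf{\tau_2}$; note this common language is nonempty since both DTDs are reduced. The fact I would isolate first is that reducedness makes every element name \emph{realizable}: for each $a \in \Sigma_i$ there is a tree in $\langOf{\tau_i}$ containing a node labeled $a$. This combines the two reducedness conditions on $\dual{\tau_i}$ -- the reachability condition supplies a valid ancestor-string from the root down to a node that can carry label $a$, and the bottom-up ``bound'' marking guarantees that every element name admits a finite, fully valid completion (a subtree all of whose content-model constraints hold). Condition~$1$ then follows because every tree in the shared language has root label $s_1$ and root label $s_2$, forcing $s_1 = s_2$; condition~$2$ follows because, by realizability, the set of element names of $\tau_i$ is exactly the set of labels occurring across the trees of the language, which is identical for the two DTDs.

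The crux is condition~$3$, and I expect the witness-tree construction to be the main obstacle. Suppose toward a contradiction that $\langOf{\pi_1(a)} \neq \langOf{\pi_2(a)}$ for some $a$, say $w \in \langOf{\pi_1(a)} \setminus \langOf{\pi_2(a)}$. I would build a single tree $t \in \langOf{\tau_1}$ that is forced out of $\langOf{\tau_2}$ at a node labeled $a$. Using reachability in $\dual{\tau_1}$, lay down a spine from the root $s_1$ to a node $x$ bearing label $a$; at each interior spine node with label $c$ choose a child string in $\langOf{\pi_1(c)}$ that actually contains the next spine label (such a word exists precisely because the corresponding $\dual{\tau_1}$-transition is present). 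Set the children of $x$ so that $\childstr{x} = w$. Finally, replace every off-spine child, and every child generated by $w$, by a finite valid subtree obtained from the bound-completions. By construction every node's child string satisfies $\pi_1$, so $t \in \langOf{\tau_1}$; but $x$ has label $a$ and child string $w \notin \langOf{\pi_2(a)}$, so $t \notin \langOf{\tau_2}$, contradicting the assumed equality. A symmetric argument yields the reverse inclusion, so $\langOf{\pi_1(a)} = \langOf{\pi_2(a)}$. The delicate points I would verify are that the completions are genuinely finite and valid -- which is exactly what the ``bound'' marking provides -- and that inserting $w$ at $x$ does not disturb validity elsewhere, which holds because the off-spine subtrees are completed independently of one another.
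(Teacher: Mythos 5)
The paper does not actually give a proof of this proposition --- it is declared ``obvious and thus omitted'' --- so there is nothing to compare against; your argument is correct and supplies exactly the details that the authors elided. The reverse direction is immediate from the per-node semantics of $\rDTD$ membership, as you say. For the forward direction, your realizability observation is precisely what the two reducedness conditions on $\dual{\tau_i}$ are designed to deliver: reachability of $q_a$ gives a spine of labels from $s_i$ down to an occurrence of $a$, and boundness gives finite valid completions for every off-spine symbol, so every element name occurs in some tree of the language. From this, conditions~1 and~2 follow as you state, and the witness-tree refutation of condition~3 is sound because $\rDTD$ validation is deterministic per node: once $x$ carries label $a$ and $\childstr{x} = w \notin \langOf{\pi_2(a)}$, the tree is unconditionally rejected by $\tau_2$, with no alternative ``parse'' to worry about (unlike the $\rSDTD$/$\rEDTD$ case, where witnesses introduce choice and the analogous statement is genuinely harder --- cf.\ Proposition~\ref{equivSDTDs}). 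The only point worth flagging is your reliance on reading ``$b$ appears in the alphabet of $\pi(a)$'' as ``$b$ occurs in some word of $\langOf{\pi(a)}$'' when selecting the child string containing the next spine label; this is clearly the intended reading for reduced types (the marking step itself quantifies over actual words of the content model), but it is a definitional convention rather than something you derive.
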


\begin{theorem}\label{reductionThmDTD}
Let $D = \langle \tau, T(\f_n) \rangle$ be a distributed design where $\tau = \langle \calL,\pi,s\rangle$ is an $\rDTD$. The following are equivalent:
\begin{enumerate}[$(1)$]
  \item $D$ admits a local $\rDTD$-typing;

  \item The $\mathcal{R}$-design $D^x = \langle \pi(\lab{x}), \childstr{x}\rangle$ admits a local $\mathcal{R}$-typing for each node $x$ in $T$ where $\lab{x} \in \calL$.
\end{enumerate}
\end{theorem}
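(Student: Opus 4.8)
The plan is to prove both implications through a \emph{horizontal} (word-level) correspondence, exploiting the fact that, for an $\rDTD$, validity of a tree is checked node by node against the content model of each label, and that no function symbol occurs twice in $T$ (so each $\f_i$ is the child of a unique node $x$ and is therefore constrained by a single word-design $D^x$). This is what makes the reduction to strings clean: the type assigned to $\f_i$ factors into its root content model, which is governed by $D^x$, and the content models below the root, which can simply mirror the global ones.

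For the direction $(1)\Rightarrow(2)$, I would start from a local $\rDTD$-typing $(\tau_n)$, write $\tau_i=\langle\Sigma_i,\pi_i,s_i\rangle$, and associate to each function child $\f_i$ of a node $x$ the word-type $R_i:=\pi_i(s_i)$ (the root content model of $\tau_i$). The claim to establish is $\ext_{\childstr{x}}(\{R_i\})=\langOf{\pi(\lab{x})}$, i.e.\ that this word-typing is local for $D^x$. Soundness ($\subseteq$) is immediate: any horizontal extension realizable at $x$ occurs as the children string of $x$ in some tree of $\ext_T(\tau_n)=\langOf{\tau}$, hence lies in $\langOf{\pi(\lab{x})}$. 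For completeness ($\supseteq$) I would use that $\langOf{\tau}$ is a local tree language, hence closed under subtree substitution (Lemma \ref{cloUndSubSubLem}): given $u\in\langOf{\pi(\lab{x})}$, since $\tau$ is reduced there is a tree in $\langOf{\tau}$ with an $\lab{x}$-node whose children string is $u$; substituting that subtree at $x$ into a fixed tree of $\ext_T(\tau_n)$ yields a tree still in $\langOf{\tau}=\ext_T(\tau_n)$ in which $x$ realizes $u$, so $u\in\ext_{\childstr{x}}(\{R_i\})$. Because each $R_i$ is an $\mathcal{R}$-type, this is a genuine local $\mathcal{R}$-typing.

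For $(2)\Rightarrow(1)$, I would build $(\tau_n)$ explicitly. For the unique node $x$ having $\f_i$ as a child, take the $\mathcal{R}$-type $R_i$ given by a local word-typing of $D^x$, and define $\tau_i$ by setting the root content model $\pi_i(s_i)=R_i$ and $\pi_i(b)=\pi(b)$ for every other reachable label $b$, then reducing. Copying the global content models below the root ensures that any subtree hanging off a materialized function is validated against exactly the global rules. I would then verify $\ext_T(\tau_n)=\langOf{\tau}$ directly. Soundness: every node of a tree in $\ext_T(\tau_n)$ is either a skeleton node $x$ of $T$, whose children string lies in $\ext_{\childstr{x}}(\{R_i\})\subseteq\langOf{\pi(\lab{x})}$ by word-soundness, or a node strictly below a materialized function, whose content model is $\pi(b)$ by construction; hence the tree satisfies $\tau$. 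Completeness: given $t\in\langOf{\tau}$, I would peel off the $T$-skeleton top-down. At each skeleton node $x$ the children string lies in $\langOf{\pi(\lab{x})}\subseteq\ext_{\childstr{x}}(\{R_i\})$ by word-completeness, so it factors as the literal segments $w_j$ prescribed by $\childstr{x}$ interleaved with substrings $u_i\in\langOf{R_i}$; the $w_j$-children are matched to the corresponding element children of $x$ in $T$ and recursed upon, while each forest cut out by $u_i$ is wrapped under a fresh root $s_i$ to form $t_i\models\tau_i$. Finiteness of $T$ guarantees termination, and everything below a function is validated by $\pi_i=\pi$.

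The main obstacle I anticipate is the completeness half of $(1)\Rightarrow(2)$, namely showing that the \emph{specific} node $x$ can realize \emph{every} string of $\langOf{\pi(\lab{x})}$ (not merely that some $\lab{x}$-node somewhere can): this is exactly where closure under subtree substitution (Lemma \ref{cloUndSubSubLem}) and reducedness of $\tau$ are essential, and it is the step that legitimizes the purely horizontal reduction. A secondary point needing care is checking, in $(2)\Rightarrow(1)$, that the factorization of $\childstr{x}$ inside $\childstr{\rootnode{t}}$ and its descendants places the fixed segments $w_j$ verbatim and in order, so that labels match the skeleton of $T$; this follows from the very definition of $\ext_{\childstr{x}}$, but must be stated to make the top-down peeling well defined, and it also handles the degenerate case of an element-labelled leaf $x$ of $T$, where locality of $D^x$ forces $\langOf{\pi(\lab{x})}=\{\varepsilon\}$.
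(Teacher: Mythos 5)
Your proposal is correct and follows essentially the same route as the paper: the $(2)\Rightarrow(1)$ construction (root content model taken from the word-typing, all other rules copied from $\pi$) is exactly the paper's, and for $(1)\Rightarrow(2)$ you derive the key fact---that the kernel node $x$ must realize all of $\langOf{\pi(\lab{x})}$---directly from closure under subtree substitution and reducedness, whereas the paper packages the same fact as Proposition \ref{equivDTDs} applied to $\typeTn \equiv \tau$. The difference is one of presentation, not of substance.
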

\begin{proof}
$(1) \Rightarrow (2)$: Let $(\tau_n)$ be a local typing for $D$, then $\typeTn \equiv \tau$ holds. This means (by Proposition \ref{equivDTDs}) that for each node $x$ in $T$ such that $\lab{x}\in \calL$, the content model $\pi(\lab{x})$ of $x$ has an equivalent specification in $\typeTn$. But this means that the subset of types in $(\tau_n)$ in bijection with the functions of $\childstr{x}$ represents a local typing for $D^x$ as well.

\medskip

\noindent $(2) \Rightarrow (1)$: Also in this case, by Proposition \ref{equivDTDs}, since each node $x$ such that $\lab{x} \in \calL$ has a local typing, this means that such a typing allows describing exactly the content model $\pi(\lab{x})$. Thus, by combining all the local typings of the various string-designs with the content models of $\tau$ we obtain a $D$-consistent typing also local for $D$. To be more precise, we now show how to exploit the local string-typings for building a local typing for $D$. First of all we observe that, for each $i$ in $[1..n]$, there exists only one node $x$ of $T$ such that $\f_i$ is in the kernel string $\childstr{x}$ of $D^x$. Since each $D^x$ admits a local typing, then there is a sequence, say $(\tau_1^{str}, \ldots, \tau_n^{str})$, of string-types (one for each function) allowing that. In particular, if for some $x$, $\childstr{x}$ has no function, then this necessarily means that $D^x$ admits a trivial local typing, namely $\langOf{\pi(\lab{x})} = \{\childstr{x}\}$ must hold. Let $i$ be an index in $[1..n]$, and $x$ be the parent of $\f_i$. The new type (not necessarily reduced) $\tau_i = \langle \Sigma_i, \pi_i, s_i \rangle$ is defined as follows:
\begin{enumerate}[ \ $\centerdot$]
  \item $\Sigma_i = \Sigma \cup \{s_i\}$;
  \item $\pi_i$ contains all the rules of $\pi$ and the extra rule $\pi_i(s_i) = \tau_i^{str}$.
\end{enumerate}
Finally, it is very easy to see that, $T(\tau_n)$ is structurally equivalent to  $\tau$.
\end{proof}

\begin{corollary}\label{dtd2strReduction}
The problems $\loc{\rDTD}$, $\ml{\rDTD}$ $\perf{\rDTD}$, $\eloc{\rDTD}$, $\eml{\rDTD}$ and $\eperf{\rDTD}$ are logspace Turing reducible to $\locR$, $\mlR$, $\perfR$, $\elocR$, $\emlR$  and $\eperfR$, respectively.
\end{corollary}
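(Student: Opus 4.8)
The plan is to lift Theorem~\ref{reductionThmDTD} from plain locality to all six problems by exploiting the fact that the function symbols of $T$ are \emph{partitioned} among its nodes: each $\f_i$ is a child of exactly one node $x$ with $\lab{x}\in\calL$. First I would make the tree/string correspondence explicit. Given a $D$-consistent $\rDTD$-typing $(\tau_n)$ with $\tau_i=\langle\Sigma_i,\pi_i,s_i\rangle$, I associate to each node $x$ with $\lab{x}\in\calL$ the string design $D^x=\langle\pi(\lab{x}),\childstr{x}\rangle$ together with the string typing $(\tau^{str})_x$ that assigns to every $\f_i$ occurring in $\childstr{x}$ its root content model $\pi_i(s_i)$, an $\mathcal{R}$-type. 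Conversely, the construction in the proof of Theorem~\ref{reductionThmDTD}~$(2)\Rightarrow(1)$ assembles, from one string typing per node, a single $\rDTD$-typing for $D$ whose $i$-th component carries all rules of $\pi$ plus the extra root rule $\pi_i(s_i)=\tau_i^{str}$.

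Second, I would prove one \emph{decomposition lemma}: under $D$-consistency, each of the properties \textbf{sound}, \textbf{local}, \textbf{maximal}, and \textbf{perfect} of $(\tau_n)$ holds iff the corresponding property of $(\tau^{str})_x$ w.r.t.\ $D^x$ holds at \emph{every} node $x$ with $\lab{x}\in\calL$. The local case is exactly Proposition~\ref{equivDTDs}: $\typeTn\equiv\tau$ iff the content models agree element-name by element-name, which is precisely locality of every $(\tau^{str})_x$. The crux is that consistency and completeness pin the subtree rules of each $\tau_i$ to those of $\tau$ (a sound typing can never allow a subtree outside $\langOf{\tau}$, and a local one must allow all of them), so the only free parameters are the root content models $\pi_i(s_i)$, and these vary \emph{independently} across distinct nodes. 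Hence the order $\le$ on the relevant tree typings is the product of the orders on the string typings $(\tau^{str})_x$, and since maximality and perfection are defined purely through $\le$ and soundness, a maximal (resp.\ greatest) element of the product poset is exactly a tuple of maximal (resp.\ greatest) string typings.

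Third, I would turn the decomposition into the six logspace Turing reductions. For $\loc{\rDTD}$, $\ml{\rDTD}$, $\perf{\rDTD}$ the machine traverses $T$ and, for each node $x$ with $\lab{x}\in\calL$, writes $D^x$ and $(\tau^{str})_x$ by copying $\pi(\lab{x})$ and the relevant root content models (clearly in logspace), queries the string oracle $\locR$, $\mlR$, or $\perfR$, and accepts iff all answers are ``yes''. For $\eloc{\rDTD}$, $\eml{\rDTD}$, $\eperf{\rDTD}$ the same traversal queries $\elocR$, $\emlR$, $\eperfR$ at each node and accepts iff all succeed; this is sound because, by the assembly of Theorem~\ref{reductionThmDTD}~$(2)\Rightarrow(1)$ and the independence of the nodes, a global object of the required kind exists iff one exists at every node. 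Since $T$ has at most $\|T\|$ nodes, there are polynomially many oracle calls with only a logspace tree walk between them, so every reduction is logspace Turing.

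The main obstacle I expect is the decomposition lemma for \textbf{maximal} and \textbf{perfect}, not plain or existential locality (which follow almost directly from Theorem~\ref{reductionThmDTD} and Proposition~\ref{equivDTDs}). I must argue rigorously that a sound typing cannot gain room by enlarging subtree rules rather than root content models: among sound typings the subtree rules are forced to lie within $\tau$, and a \emph{local} typing already uses them to the full, so any sound enlargement in the order $\le$ acts only on the (mutually independent) root content models. Only once this independence is established does the product-poset argument make maximality and perfection decompose node-wise, and I must also check the converse direction, namely that the assembled tree typing inherits maximality or perfection from maximal or perfect string typings, which again rests on the independence of content models across distinct nodes.
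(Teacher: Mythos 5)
Your proposal is correct and follows essentially the same route as the paper: a logspace document-order scan of $T$ that, at each node $x$ with $\lab{x}\in\calL$, issues one oracle call on the induced string design $D^x$ (with the corresponding root content models as the string typing), justified by the node-wise decomposition of Theorem~\ref{reductionThmDTD} and Proposition~\ref{equivDTDs}. The paper dispatches the maximal and perfect cases in one sentence by asserting they ``only depend on the structure of the various $D^x$''; your product-poset argument --- soundness pins the non-root content models to those of $\tau$, so the order on typings factors through the independent root content models --- is exactly the justification that assertion leaves implicit, worked out in more detail.
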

\begin{proof}
Let $D = \langle \langle \calL,\pi,s\rangle, T(\f_n) \rangle$ be a top-down $\rDTD$-design.

Consider, firstly, the $\eloc{\rDTD}$ problem. Scan $T$ in document order, which is well known to be feasible in logarithmic space \citep{CookMcKenzie87}. For each node $x$ in $T$ such that $\lab{x} \in \calL$, solve the problem $\elocR$ for the design $D^x$.

If we consider, instead, the problem $\loc{\rDTD}$, as $(\tau_n)$ is $D$-consistent, then $\typeTn$ exists and there are no different content models for the same element name. So it is also enough to scan $T$ in document order, and for each node $x$ in $T$ such that $\lab{x} \in \calL$, solve the problem $\locR$ for the design $D^x$ and the subset of types from $(\tau_n)$ in bijection with the functions in $D^x$.

For the maximal and perfect requirements, as they are specializations of the local requirement, it is enough to observe that, by Theorem \ref{reductionThmDTD}, they only depend on the structure of the various $D^x$.
\end{proof}

\subsection{$\rSDTDs$}

Before proving that a similar reduction still holds for $\rSDTDs$, we need a proposition and a new definition.

\begin{proposition}\label{equivSDTDs}
Let $\tau_1 = \langle \Sigma_1, \tilde{\Sigma_1}, \pi_1, \tilde{s}_1, \mu_1 \rangle$ and $\tau_2 = \langle \Sigma_2, \tilde{\Sigma_2}, \pi_2, \tilde{s}_2, \mu_2 \rangle$ be two reduced $\rSDTDs$. If they are equivalent, then for each $i,j$ in $[1..2]$, and each $\tilde{a}_i \in \tilde{\Sigma_i}$ there is $\tilde{a}_j \in \tilde{\Sigma_j}$ such that $\mu_i(\pi_i(\tilde{a}_i)) = \mu_j(\pi_j(\tilde{a}_j))$.
\end{proposition}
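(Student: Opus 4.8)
The plan is to exploit the defining feature of single-type grammars recalled in the Remark following Lemma~\ref{singTypeLab3}: since $\dual{\tau}$ is a \dFA, the specialized name (witness) that a reduced \rSDTD\ $\tau$ assigns to a node $x$ of a tree $t \in \langOf{\tau}$ depends only on the real ancestor string $\ancstrt{t}{x}$, and not on $t$ or on $x$ otherwise. First I would make this precise: reading $\ancstrt{t}{x}$ in $\dual{\tau}$ drives the automaton through a unique sequence of states, so there is a well-defined partial map $w \mapsto \mathrm{wit}_\tau(w)$ sending each \emph{realizable} ancestor string $w \in \Sigma^+$ (i.e.\ one occurring in some $t \in \langOf{\tau}$) to the unique $\tilde{a} \in \tilde{\Sigma}$ that witnesses every node with that ancestor string. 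The determinism of $\dual{\tau}$ is exactly what forces both this uniqueness and the uniqueness of the witnessing tree of a given $t$.

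The crux is a single lemma tying each content model to a language-intrinsic object. For a realizable $w$, let $C_\tau(w)$ denote the set of real children strings $\childstr{x}$, ranging over all nodes $x$ at ancestor string $w$ in all trees of $\langOf{\tau}$; this set is manifestly determined by $\langOf{\tau}$ alone. I would prove
\[
    C_\tau(w) = \mu\bigl(\langOf{\pi(\mathrm{wit}_\tau(w))}\bigr).
\]
The inclusion $\subseteq$ is immediate: a node witnessed by $\tilde{a}$ has, in the witnessing tree, specialized children forming a word of $\langOf{\pi(\tilde{a})}$, whose $\mu$-image is its real children string. The reverse inclusion $\supseteq$ is where reducedness enters and is the main obstacle: given any $v = \tilde{b}_1 \cdots \tilde{b}_k \in \langOf{\pi(\tilde{a})}$ with $\tilde{a} = \mathrm{wit}_\tau(w)$, I must exhibit a tree in $\langOf{\tau}$ realizing $\mu(v)$ as the children of a node at ancestor string $w$. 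I would start from any witnessing tree with a node $x$ at ancestor string $w$ and graft onto $x$ the children $\tilde{b}_1,\ldots,\tilde{b}_k$; because $\tau$ is reduced every $\tilde{b}_m$ is bound, hence roots some valid subtree of $\textsf{dtd}(\tau)$ (invoking Lemma~\ref{singTypeLab1} to re-root at $\tilde{b}_m$), and because the part above $x$ is untouched the result is again a valid witnessing tree; applying $\mu$ finishes it. This is precisely an instance of ancestor-guarded subtree exchange, so Lemma~\ref{singTypeLab3} could be cited in place of the explicit grafting.

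With this identity in hand the proposition is immediate. Equivalence $\langOf{\tau_1} = \langOf{\tau_2}$ makes exactly the same strings $w$ realizable in both, and forces $C_{\tau_1}(w) = C_{\tau_2}(w)$ for every such $w$, since $C_\tau(w)$ only reads the tree language. Therefore $\mu_1\bigl(\langOf{\pi_1(\mathrm{wit}_{\tau_1}(w))}\bigr) = \mu_2\bigl(\langOf{\pi_2(\mathrm{wit}_{\tau_2}(w))}\bigr)$. For the stated quantifier, the cases $i=j$ are trivial (take $\tilde{a}_j = \tilde{a}_i$); for $i \neq j$ I would use reducedness of $\tau_i$: every $\tilde{a}_i \in \tilde{\Sigma}_i$ corresponds to a reachable, bound state of $\dual{\textsf{dtd}(\tau_i)}$, so there is a realizable $w$ with $\mathrm{wit}_{\tau_i}(w) = \tilde{a}_i$, and setting $\tilde{a}_j = \mathrm{wit}_{\tau_j}(w)$ yields $\mu_i(\langOf{\pi_i(\tilde{a}_i)}) = \mu_j(\langOf{\pi_j(\tilde{a}_j)})$. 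I expect the only delicate point to be confirming that reducedness truly guarantees realizability of \emph{every} word of \emph{every} content model (boundedness of all specialized names together with reachability), which is exactly what the state-marking in the definition of \textbf{reduced} secures.
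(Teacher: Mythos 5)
Your proof is correct and follows essentially the same route as the paper's: reducedness to realize each specialized name at some node of some tree, determinism of $\dual{\tau}$ to make the witness a function of the ancestor string, and ancestor-guarded subtree exchange to force the ($\mu$-images of the) content models to agree. The paper's own argument is a terser version of yours — it fixes one tree and one node and invokes subtree exchange directly rather than isolating the identity $C_\tau(w)=\mu(\langOf{\pi(\mathrm{wit}_\tau(w))})$ as a lemma — but the ingredients and their roles are identical.
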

\begin{proof}\ONLINE{
As $\tau_1$ is reduced, for each specialized element name $\tilde{a}_1 \in \tilde{\Sigma_1}$ there is a tree $t \in \langOf{\tau_1}$ such that its unique witness $t'$ contains at least one node having $\tilde{a}_1$ as label. So, let us fix $\tilde{a}_1$, $t$, and a node $x$ of $t$ such that $\ancstr{x}$ ends with the element name $a$. As $\tau_1$ and $\tau_2$ are equivalent, there must exist also a unique witness $t''$ for $t$ produced by $\tau_2$. Let us denote by $\tilde{a}_2$ the specialized element name associated to $x$ in $t''$. As $\langOf{\tau_1} = \langOf{\tau_2}$, if $\mu_1(\pi_1(\tilde{a}_1)) \neq \mu_2(\pi_2(\tilde{a}_2))$, then we would violate the ancestor-guarded subtree exchange property.}
\end{proof}

\begin{definition}
Let $D = \langle \tau, T(\f_n) \rangle$ be a distributed design where the type $\tau = \langle \Sigma, \tilde{\Sigma}, \pi, \tilde{s}, \mu \rangle$ is an \rSDTD. For each node $x$ in $T$ such that $\lab{x} \in \calL$ we denote by $D^x = \langle \pi(\tilde{a}),w_x\rangle$ the the unique string-design induced by $D$, where $\tilde{a}$ is the (unique) witness assigned by $\tau$ to $x$. Moreover, $w_x = \varepsilon$ if $x$ is a leaf, and it is the string obtained from $\textsf{children}(x)$ by changing each non-function node with the corresponding (unique) witness assigned by $\tau$, otherwise.\qed
\end{definition}

\begin{theorem}\label{reductionThmSDTD}
Let $D = \langle \tau, T(\f_n) \rangle$ be a distributed design where the type $\tau = \langle \Sigma, \tilde{\Sigma}, \pi, \tilde{s}, \mu \rangle$ is an $\rSDTD$. The following are equivalent:
\begin{enumerate}[$(1)$]
  \item $D$ admits a local $\rSDTD$-typing;

  \item Each $\mathcal{R}$-design induced by $D$ admits a local $\mathcal{R}$-typing.
\end{enumerate}

\end{theorem}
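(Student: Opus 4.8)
The plan is to follow the same node-by-node decomposition used for $\rDTDs$ in Theorem~\ref{reductionThmDTD}, the essential new ingredient being that we must now propagate the \emph{witness} (specialized element name) that $\tau$ assigns to each node. The argument rests on three facts: Proposition~\ref{equivSDTDs}, by which equivalent reduced $\rSDTDs$ carry, modulo $\mu$, the same content models; Lemma~\ref{singTypeLab3}, which characterises $\rSDTD$-definability through closure under ancestor-guarded subtree exchange; and the determinism of $\dual{\tau}$ (noted after Lemma~\ref{singTypeLab3}), by which the witness $\tilde{a}$ of a node $x$ is uniquely fixed by $\ancstr{x}$. Because that witness depends only on the ancestor string, which is left unchanged when a function forest is plugged in at $x$, the locality question genuinely localises to the induced string-designs $D^x = \langle \pi(\tilde{a}), w_x \rangle$.

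For $(1)\Rightarrow(2)$ I would start from a local $\rSDTD$-typing $(\tau_n)$, so that $\langOf{\Tn} = \langOf{\tau}$ with both reduced. Fix a node $x$ with $\lab{x}\in\calL$ and its witness $\tilde{a}$. By Proposition~\ref{equivSDTDs} there is a witness of $\Tn$ whose content model agrees with that of $\tilde{a}$ after applying $\mu$; determinism of $\dual{\tau}$ and of $\dual{\Tn}$ forces this correspondence to respect $\ancstr{x}$, hence to pair $\tilde{a}$ with the witness $\Tn$ assigns to $x$. Reading off the content model of $x$ in $\Tn$ and splitting it at the function children of $x$ yields, for each $\f_i \in \children{x}$, the string-type given by the root content model $\pi_i(\tilde{s}_i)$ of $\tau_i$, transported through that witness correspondence. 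This string-typing is local for $D^x$: its extension over $w_x$ is exactly the content model of $x$ in $\Tn$, which modulo the witness renaming equals $\langOf{\pi(\tilde{a})}$.

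For $(2)\Rightarrow(1)$ I would assemble a global typing from the local string-typings, as in the $\rDTD$ construction but keeping the specialisation structure. For each $\f_i$ with unique parent $x$, let $\sigma_i$ be the string-type assigned to $\f_i$ by a local typing of $D^x$. Define $\tau_i$ to be the $\rSDTD$ obtained by adjoining a fresh root element $s_i$ with a fresh root witness $\tilde{s}_i$, setting $\pi_i(\tilde{s}_i)=\sigma_i$ over a private copy of $\tilde{\Sigma}$, and reusing the content models and the map $\mu$ of $\tau$ on that copy. One then checks $\langOf{\Tn}=\langOf{\tau}$ by two inclusions: soundness of each $\sigma_i$ makes every extension admit a valid witness assignment under $\tau$ (so $\subseteq$); completeness lets one cut any $t\in\langOf{\tau}$ at the function positions of $T$, decompose $\childstr{x}$ according to $w_x$, cap each resulting forest with $s_i$ to obtain a valid $t_i\models\tau_i$, and reassemble into $t$ (so $\supseteq$). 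Since the resulting language is the $\rSDTD$ $\tau$ itself, the typing is automatically $D$-consistent.

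I expect the delicate step to be the gluing at the function positions in $(2)\Rightarrow(1)$: verifying that the forest contributed by $t_i$ receives exactly the witnesses $\tau$ prescribes in context, so that the single-type (deterministic vertical language) requirement is preserved and no new ancestor-guarded subtree exchange is introduced. Here I would lean on Lemma~\ref{singTypeLab3} together with the determinism of $\dual{\tau}$: because the witness of $x$ is fixed by $\ancstr{x}$, which is identical in $T$ and in every extension, and because the children witnesses occurring in $\sigma_i$ are drawn from $\tau$'s own content model $\pi(\tilde{a})$, every exchange permitted in $\langOf{\Tn}$ is already permitted in $\langOf{\tau}$ and conversely. The maximal and perfect variants then reduce in the same way, as in Corollary~\ref{dtd2strReduction}, since by the above they depend only on the structure of the induced designs $D^x$.
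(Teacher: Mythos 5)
Your proposal is correct and follows essentially the same route as the paper: the forward direction pairs the unique witnesses that $\tau$ and $\Tn$ assign to each non-function node (via Proposition~\ref{equivSDTDs} and the determinism of the dual automata) and reads off the local string-typing from the content model, while the backward direction rebuilds each $\tau_i$ by adjoining a fresh root witness $\tilde{s}_i$ with $\pi_i(\tilde{s}_i)$ set to the string-type and reusing $\pi$ and $\mu$ on a copy of $\tilde{\Sigma}$, exactly as in the paper's construction. Your explicit attention to the gluing step and to the function-free induced designs matches the paper's closing remark on preserving the ancestor-guarded subtree exchange property.
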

\begin{proof}\ONLINE{
$(1) \Rightarrow (2)$: Since $D$ admits a local $\rSDTD$-typing, say $(\tau_n)$, then $(\tau_n)$ is $\rSDTD$-consistent with $T$, and $\typeTn \equiv \tau$ holds. For each node $x$ of $T$ having an element name as label, say $a$, consider the unique witness associated by $\tau$ and $\typeTn$ to $x$, say $\tilde{a}_{\tau}$ and $\tilde{a}_{T}$, respectively. By hypothesis, both $\pi_{\tau}(\tilde{a}_{\tau})$ and $\pi_{T}(\tilde{a}_{T})$ satisfy the single-type requirement, and by Proposition \ref{equivSDTDs}, $\mu_{\tau}(\pi_{\tau}(\tilde{a}_{\tau})) \equiv \mu_{T}(\pi_{T}(\tilde{a}_{T}))$. Thus, as $\pi_{\tau}(\tilde{a}_{\tau})$ and the children of $x$ have a local decomposition induced by $(\tau_n)$, then also $\pi_{T}(\tilde{a}_{T})$ and the children of $x$ (namely $D^x$) have.

\medskip

\noindent $(2) \Rightarrow (1)$: We show, if the premise is true, how to build $(\tau_n)$ in such a way that $\typeTn \equiv \tau$ holds and in particular that, from a structural point of view, $\typeTn$ is equivalent to $\tau$. We denote by $(\tau_1^{str}, \ldots, \tau_n^{str})$ one possible sequence of string-types satisfying contemporarily all the string-designs. In particular, if for some node $x$ of $T$ the kernel string of $D^x$ has no function call this necessarily means that $D^x$ admits a trivial local typing, namely $\mu(\pi(\tilde{a})) = \childstr{x}$, where $\tilde{a}$ is the unique witness assigned by $\tau$ to $x$. In particular, for each function $\f_i$ consider its parent node in $T$, say $x$. The new type $\tau_i = \langle \Sigma_i, \tilde{\Sigma_i}, \pi_i, \tilde{s}_i, \mu_i \rangle$ is defined as follows:
\begin{enumerate}[ \ $\centerdot$]
  \item $\Sigma_i = \Sigma \cup \{s_i\}$;
  \item $\tilde{\Sigma}_i = \tilde{\Sigma} \cup \{\tilde{s_i}\}$;
  \item $\pi_i$ contains all the rule of $\pi$ and the extra rule $\pi_i(\tilde{s_i}) = \tau_i^{str}$;
  \item $\tilde{s}_i$ is the usual extra witness for the root of any tree in $\langOf{\tau_i}$;
  \item $\mu_i$ is defined as $\mu$ and also $\mu_i(\tilde{s_i}) = s_i$.
\end{enumerate}
It is very easy to see that, if we build $T(\tau_n)$ without renaming the specialized element names we obtain exactly $\tau$, and so $\typeTn \equiv \tau$. In particular, when we assign the witnesses to the non-function nodes of $T$ we choose exactly those assigned by $\tau$. The only difference may be in the specification of the content models because the recomposition after a decomposition may produce a different structure (for instance a different $\nFA$) being, anyway, equivalent to the original one. Notice that, the ``ancestor-guarded subtree exchange'' property is guarantied because we also require that all the designs without any function call admit local typings. For instance, if $D^x = \langle \pi(\tilde{a}), \varepsilon \rangle$ admits a local typing, where $x$ is a leaf node of $T$ and $\tilde{a}$ is its witness assigned by $\tau$, this necessarily means that $\pi(\tilde{a}) = \{\varepsilon\}$, and we automatically take it into account when we build each $\tau_i$.}
\end{proof}
\begin{corollary}\label{sdtd2strReduction}
The problems $\loc{\rSDTD}$, $\ml{\rSDTD}$, $\perf{\rSDTD}$, $\eloc{\rSDTD}$, $\eml{\rSDTD}$ and $\eperf{\rSDTD}$ are logspace Turing reducible to $\locR$, $\mlR$, $\perfR$, $\elocR$, $\emlR$  and $\eperfR$, respectively.
\end{corollary}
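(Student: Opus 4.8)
The plan is to mirror the proof of Corollary~\ref{dtd2strReduction}, replacing the role of Theorem~\ref{reductionThmDTD} by its single-type analogue Theorem~\ref{reductionThmSDTD} and of Proposition~\ref{equivDTDs} by Proposition~\ref{equivSDTDs}. Let $D = \langle \tau, T(\f_n)\rangle$ be a top-down $\rSDTD$-design with $\tau = \langle \Sigma, \tilde{\Sigma}, \pi, \tilde{s}, \mu\rangle$. The one new ingredient compared with the $\rDTD$ case is that the string-design $D^x$ induced at a node $x$ is defined through the \emph{unique witness} $\tilde{a}$ that $\tau$ assigns to $x$; hence, before we can emit the oracle query for $D^x$, we must be able to compute that witness. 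This is exactly where the single-type hypothesis pays off: since $\dual{\tau}$ is a $\dFA$, the witness of $x$ is determined by the (unique) run of $\dual{\tau}$ on $\ancstr{x}$, so it can be obtained in logarithmic space by walking the path from $\rootnode{T}$ to $x$ while storing only the current $\dual{\tau}$-state together with the pointers needed for the traversal.

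For the existence problem $\eloc{\rSDTD}$, I would scan $T$ in document order (feasible in logarithmic space \citep{CookMcKenzie87}); for every node $x$ with $\lab{x}\in\Sigma$ I compute its witness $\tilde{a}$ as above, build the induced string-design $D^x = \langle \pi(\tilde{a}), w_x\rangle$ (where $w_x$ is the children string of $x$ with each non-function label replaced by its own witness, likewise computed on the fly), and pose the oracle query $\elocR$ for $D^x$. By Theorem~\ref{reductionThmSDTD}, $D$ admits a local $\rSDTD$-typing iff every induced $D^x$ admits a local $\mathcal{R}$-typing, so the reduction answers \emph{yes} exactly when all oracle answers are \emph{yes}. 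For the verification problem $\loc{\rSDTD}$ the input additionally contains a $D$-consistent typing $(\tau_n)$; $D$-consistency guarantees that $\typeTn$ exists and assigns a single, well-defined content model to each witness, so we may again walk $T$ and, for each $x$, call $\locR$ on $D^x$ together with the subtuple of $(\tau_n)$ in positional correspondence with the functions occurring in $w_x$.

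For the maximal-local and perfect variants I would rely on the fact, implicit in the construction in the proof of Theorem~\ref{reductionThmSDTD}, that a global $\rSDTD$-typing decomposes into independent string-typings, one group per node $x$, and that the order $\leq$ on typings is componentwise. Consequently maximality and perfection of the global typing hold iff the corresponding properties hold for each induced string-typing, and the same logspace scan that emits $\locR$-queries instead emits $\mlR$-, $\perfR$-, $\emlR$-, or $\eperfR$-queries, establishing the reductions for $\ml{\rSDTD}$, $\perf{\rSDTD}$, $\eml{\rSDTD}$ and $\eperf{\rSDTD}$. The main obstacle is the logspace computation of the witnesses: this is precisely what forces the use of the single-type property (the determinism of $\dual{\tau}$) and what distinguishes the argument from the plainer $\rDTD$ case, where labels alone, rather than ancestor strings, determined the relevant content models.
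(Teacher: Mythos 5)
Your proposal is correct and follows essentially the same route as the paper, which simply states that the proof is ``exactly the same as for $\rDTDs$'': a logspace document-order scan of $T$ emitting one string-problem oracle query per element node, with correctness supplied by Theorem~\ref{reductionThmSDTD} in place of Theorem~\ref{reductionThmDTD}. Your explicit observation that the unique witness of each node can be recovered in logarithmic space by running the deterministic $\dual{\tau}$ along the ancestor path is a detail the paper leaves implicit, and it is exactly the right justification for why the induced designs $D^x$ can be produced within the logspace bound.
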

\begin{proof}
Exactly the same as for $\rDTDs$.
\end{proof}

\subsection{$\rEDTDs$}
Although $\rEDTDs$ have nice properties simplifying the $\consistent{\rEDTD}$ problem and the construction of $\typeTn$ (when we start from a kernel and an $\rEDTD$-typing), things dramatically change when we consider the problems concerning locality. The freedom of using, in the same content model, various specialized element names for the same element name has a price. Consider the following example.

\begin{example}
Let $D = \langle \tau, T \rangle$ be a $\xEDTD{\dRE}$-design where $T = s_0 (\f_1 \f_2)$ and $\tau = \langle \Sigma, \tilde{\Sigma}, \pi, \tilde{s}_0, \mu \rangle$. In particular, $\pi(\tilde{s}_0) = \tilde{a}^1 (\tilde{b}^1)^* + \tilde{a}^2 (\tilde{b}^2)^*$; $\pi(\tilde{a}^1) = \tilde{c}^1$; $\pi(\tilde{a}^2) = \tilde{d}^1$; $\pi(\tilde{b}^1) = \tilde{e}^1 + \tilde{g}^1$; $\pi(\tilde{b}^2) = \tilde{g}^1 + \tilde{h}^1$.
It is not hard to see that the string-design $\langle \pi(\tilde{s}_0), \f_1 \f_2 \rangle$ admits only two maximal local typings:
\[
  (\varepsilon, \ \tilde{a}^1 (\tilde{b}^1)^* + \tilde{a}^2 (\tilde{b}^2)^*)\ \ \ \ \ \ \
  (\tilde{a}^1 (\tilde{b}^1)^* + \tilde{a}^2 (\tilde{b}^2)^*, \ \varepsilon)
\]
But, only the first one is also maximal for $D$, while the actual second one is $(\tilde{a}^1 (\tilde{b}^1)^* + \tilde{a}^2 (\tilde{b}^2)^*, \ (\tilde{b}^3)^*)$ where $\langOf{\tau_2(\tilde{b}^3)} = b(g)$. \qed
\end{example}

The problem highlighted by the previous example originates from the fact that $\tilde{b}^1$ and $\tilde{b}^2$ can not be considered completely distinct as $\tilde{a}^1$ and $\tilde{a}^2$ (notice that $\langOf{\tau(\tilde{a}^1)} \cap \langOf{\tau(\tilde{a}^2)} = \emptyset$), and as we naturally do for two different symbols of an alphabet in string languages, yet they are witnesses for two sets of trees with a nonempty intersection. In fact, $\langOf{\tau(\tilde{b}^1)} \cap \langOf{\tau(\tilde{b}^2)} = b(g)$ can be part of $\tau_2$ in the second maximal local typing for $D$.

From this, it is unclear whether, by only analyzing content models (such as $\pi(\tilde{s}_0)$, in the previous example), we can decide whether a given design admits at least a local typing. Clearly, if we apply $\mu$ to both $(\tilde{a}^1 (\tilde{b}^1)^* + \tilde{a}^2 (\tilde{b}^2)^*) \cdot (\tilde{b}^3)^*$ and to $\pi(\tilde{s}_0)$ we obtain the same string-language, namely $ab^*$, but unfortunately, this is only a necessary condition and even if $(ab^*,b^*)$ is a maximal local typing for $ab^*$, it is not clear how to assign the witnesses for obtaining $(\tilde{a}^1 (\tilde{b}^1)^* + \tilde{a}^2 (\tilde{b}^2)^*, (\tilde{b}^3)^*)$.

The following theorems, give a further idea of the higher complexity of locality when we consider $\rEDTD$-designs.

\begin{theorem}[\citep{Seidl90,Suciu02}]\label{EquivEDTDs}
Problems \ $\equivalence{\xEDTD{\nFA}}$ \ and \ $\equivalence{\xEDTD{\nRE}}$ \ are \ $\classc{EXPTIME}$.
\end{theorem}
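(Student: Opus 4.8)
The plan is to establish both an $\class{EXPTIME}$ upper bound and $\classh{EXPTIME}$ hardness, treating the $\nRE$ and $\nFA$ cases uniformly. First I would recall that, by the polynomial-time translation from regular expressions to $\varepsilon$-free $\nFAs$ mentioned above (with only an $\mathcal{O}(n\log^2 n)$ blow-up), any $\xEDTD{\nRE}$-type converts to an equivalent $\xEDTD{\nFA}$-type in polynomial time; hence it suffices to prove the bounds for $\equivalence{\xEDTD{\nFA}}$, and the $\nRE$ case follows. I would also observe that an $\xEDTD{\nFA}$-type and a nondeterministic unranked tree automaton ($\nUTA$, in the sense of the preliminaries) are polynomially interconvertible, so the task reduces to deciding equivalence of two $\nUTAs$.

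For membership, I would reduce equivalence to two containment tests: $\langOf{\tau_1}=\langOf{\tau_2}$ iff $\langOf{\tau_1}\subseteq\langOf{\tau_2}$ and $\langOf{\tau_2}\subseteq\langOf{\tau_1}$, where each containment $\langOf{\tau_i}\subseteq\langOf{\tau_j}$ holds iff $\langOf{\tau_i}\cap\overline{\langOf{\tau_j}}=\emptyset$. The only potentially expensive step is forming the complement: to complement a $\nUTA$ I would first determinize it into a $\dUTA$ via the standard subset construction on the horizontal ($\nFA$) languages, incurring a single exponential blow-up in the number of states, and then swap accepting and non-accepting states (trivial once the automaton is bottom-up deterministic). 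The intersection is a product construction, polynomial in the two factor sizes and hence still of exponential size overall, and non-emptiness of an unranked tree automaton is decidable in time polynomial in its size by a bottom-up fixpoint marking of the reachable states. Composing these, the dominant cost is the single determinization, so the whole procedure runs in $\class{EXPTIME}$.

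For hardness I would reduce from the universality problem for nondeterministic tree automata, which is known to be $\classh{EXPTIME}$. Universality asks whether a given $\nUTA$ accepts every tree over its alphabet, and this is precisely the instance of equivalence in which the second type is a fixed $\xEDTD{\nFA}$-type accepting all trees over $\Sigma$, which can be written down in polynomial size. Thus $\equivalence{\xEDTD{\nFA}}$ is at least as hard as universality and therefore $\classh{EXPTIME}$, matching the upper bound and yielding $\classc{EXPTIME}$.

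I expect the main obstacle to be the complementation step: one must argue carefully that determinizing the unranked tree automaton is the sole source of exponential blow-up, so that a ``complement, then intersect, then test emptiness'' pipeline does not stack two exponentials, while intersection and emptiness remain polynomial in the (already exponential) determinized size. A secondary subtlety is making the hardness reduction fully explicit in the unranked setting, or invoking the standard correspondence between ranked and unranked tree automata, so that the known $\classh{EXPTIME}$ lower bound for tree-automata universality transfers directly to the $\xEDTD{\nFA}$ formalism.
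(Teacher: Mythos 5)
The paper does not prove this statement at all: it is imported verbatim from the cited literature (Seidl 1990 for the \classh{EXPTIME} lower bound on inclusion/equivalence of nondeterministic tree automata, and the later work on unranked/XML type equivalence for the matching upper bound), so there is no internal proof to compare yours against. Your reconstruction is the standard argument and is essentially correct: reduce \nREs\ to \nFAs, identify \xEDTDs{\nFA} with \nUTAs, decide equivalence by two containment tests, each via determinization (single exponential), complementation, product intersection, and polynomial-time emptiness; and obtain hardness from universality of nondeterministic tree automata. The two points you flag as obstacles are indeed the only places requiring care, and both go through: (i) in the subset construction for \nUTAs\ the horizontal transition languages must themselves be determinized and producted over all source states, but this costs $2^{\sum_q m_q}$, i.e.\ still a single exponential in the total size, and the resulting \dUTA\ is complete (every tree is assigned exactly one subset-state, possibly $\emptyset$), so complementation really is just swapping final states; (ii) the \classh{EXPTIME} lower bound for universality is classical for ranked nondeterministic tree automata and transfers to the unranked setting by the usual first-child/next-sibling encoding, and the universal type is expressible as a polynomial-size \xEDTD{\nFA} with one specialized name per label and content model $\Sigma^*$. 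The only cosmetic remark is that, as a proof of a theorem the paper attributes to prior work, an explicit citation for tree-automata universality hardness would be expected rather than ``known to be \classh{EXPTIME}.''
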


\begin{theorem}
Problems $\eloc{\rEDTD}$, $\eml{\rEDTD}$, and $\eperf{\rEDTD}$ are at least as hard as $\equivalence{\rEDTD}$.
\end{theorem}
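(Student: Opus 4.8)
The plan is to reduce $\equivalence{\rEDTD}$ to each of the three existence problems by a single logspace construction, exploiting the fact that for $\rEDTDs$ every typing is automatically $D$-consistent (Corollary \ref{EDTDconstTime}), so the only obstruction to the existence of a local/maximal-local/perfect typing is the shape of the target language.

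Given two reduced $\rEDTDs$ $\tau_1,\tau_2$ whose equivalence we must decide, I would first normalise them to share a common root symbol $r$ and an alphabet disjoint from three fresh symbols $s_0,a,b$. I then build the kernel $T = s_0(a(\f_1)\ b(\f_2))$, with two distinct function nodes each the only child of its parent, and the target $\rEDTD$ $\tau$ defining the tree language $\{\,s_0(a(t)\ b(t')) : (t,t')\in R\,\}$, where $R = (\langOf{\tau_1}\times\langOf{\tau_2}) \cup (\langOf{\tau_2}\times\langOf{\tau_1})$. This $\tau$ is an $\rEDTD$ of polynomial size: it uses a single start symbol whose content model is the disjunction $\tilde a^{(1)}\tilde b^{(1)} + \tilde a^{(2)}\tilde b^{(2)}$ (with $\mu(\tilde a^{(k)})=a$, $\mu(\tilde b^{(k)})=b$) and two renamed, pairwise disjoint copies of the specialised names of $\tau_1,\tau_2$ plugged under the $\tilde a^{(k)},\tilde b^{(k)}$; since $\tilde a^{(1)},\tilde a^{(2)}$ are distinct symbols the disjunction keeps the construction inside every $\mathcal{R}$ (including $\dRE$), so the reduction preserves $\mathcal{R}$. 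Note $R\neq\emptyset$ because reduced types define nonempty languages.

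The core of the argument is a \emph{rectangle lemma}: for nonempty sets $A,B$, the relation $(A\times B)\cup(B\times A)$ equals a Cartesian product $U\times V$ if and only if $A=B$. For the forward direction, $A=B$ gives $R=A\times A$; conversely, if $A\neq B$ there is an $x$ lying in exactly one of $A,B$, and then $(x,x)\notin R$ although $(x,x)$ belongs to $\pi_1(R)\times\pi_2(R)$, the only candidate product, so $R$ is not a rectangle. I then connect this to typings: a typing $(\tau_1',\tau_2')$ determines forest languages $U=\{u:s_1(u)\in\langOf{\tau_1'}\}$ and $V=\{v:s_2(v)\in\langOf{\tau_2'}\}$, and $\ext_T(\tau_1',\tau_2') = \{\,s_0(a(u)\ b(v)) : u\in U,\ v\in V\,\}$, so the typing is local iff $U\times V = R$. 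By the rectangle lemma this holds for some regular $U,V$ iff $\langOf{\tau_1}=\langOf{\tau_2}$; moreover, when they are equal, the typing with $U=V=\langOf{\tau_1}$ is not merely local but the componentwise-largest sound typing (any sound typing has $U'\times V'\subseteq \langOf{\tau_1}\times\langOf{\tau_1}$, hence $U',V'\subseteq\langOf{\tau_1}$), so it is simultaneously local, maximal, and perfect. Conversely, both maximal-locality and perfection entail locality, so when $\langOf{\tau_1}\neq\langOf{\tau_2}$ none of the three kinds of typing exists. Thus $\eloc{\rEDTD}$, $\eml{\rEDTD}$, and $\eperf{\rEDTD}$ all answer ``yes'' on $\langle\tau,T\rangle$ exactly when $\langOf{\tau_1}=\langOf{\tau_2}$.

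The routine parts are the polynomial construction of $\tau$ and verifying the reduction is logspace. The main obstacle I expect is pinning down the target so that existence collapses to a clean combinatorial invariant: the $(A\times B)\cup(B\times A)$ encoding (rather than the naive $A^2\cup B^2$, which only detects comparability of $A$ and $B$) is what forces genuine equality, and I must check that embedding the two swapped copies of $\tau_1,\tau_2$ under one start symbol yields a legal $\rEDTD$ and that reduced-ness guarantees $\langOf{\tau_1},\langOf{\tau_2}\neq\emptyset$, which is exactly what the lemma's ``lies in exactly one'' step needs.
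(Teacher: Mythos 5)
Your proof is correct and follows the same overall strategy as the paper's: reduce $\equivalence{\rEDTD}$ to the existence problems by building a two-function kernel and a target whose extensions decompose as a product of two independent local languages exactly when the two input types are equivalent, and then observe that in the positive case the resulting local typing is simultaneously unique maximal and perfect, so all three existence problems coincide with equivalence on these instances. The gadget differs, though. The paper uses an asymmetric ``tag plus payload'' encoding with $T = s_0(\f_1\ c\ \f_2)$ and target $s_0(acd(\langOf{\tau'}) + bcd(\langOf{\tau''}))$, i.e.\ the relation $(\{a\}\times A)\cup(\{b\}\times B)$, which is a rectangle iff $A=B$; you use the symmetric swap $(A\times B)\cup(B\times A)$ with the two functions isolated under separate $a$- and $b$-nodes, and your rectangle lemma (via the diagonal point $(x,x)$ for $x$ in the symmetric difference) is a clean, self-contained combinatorial statement. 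Both encodings rely on the standing assumption that types are reduced, hence nonempty --- the paper's because $(\{a\}\times\emptyset)\cup(\{b\}\times B)$ degenerates to a rectangle, yours both for $R\neq\emptyset$ and for the perfection step where you need every sound typing's component languages $U',V'$ to be nonempty before concluding $U',V'\subseteq A$; you flag the first use but the second is equally essential and is covered by the same convention. The paper's version has the minor advantage of staying closer to the specialized-element-name machinery ($\langOf{\tau(\tilde d_1)}$ versus $\langOf{\tau(\tilde d_2)}$) that motivates the subsequent discussion of normalized $\rEDTDs$; yours is arguably more transparent as a pure language-theoretic reduction and isolates the key invariant as an explicit lemma.
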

\begin{proof}
We define a logspace transformation $\varphi$ from $\equivalence{\rEDTD}$ to $\eloc{\rEDTD}$.
Afterwards, we show that the statement also holds for the other two problems by using exactly the same reduction. Let $\tau',\tau''$ be two arbitrary $\rEDTDs$. The application of $\varphi$ to this pair produces the design $D = \langle \tau, T \rangle$, where
\begin{enumerate}[ \ $\centerdot$]
  \item $T = s_0(\f_1 \ c \ \f_2)$

  \item $\pi(\tilde{s}_0) = \mathcal{R}(\tilde{a}_1 \tilde{c}_1 \tilde{d}_1 + \tilde{b}_1 \tilde{c}_1 \tilde{d}_2)$

  \item $\pi(\tilde{d}_1) = \mathcal{R}(\tilde{s}_0')$, where $s_0'$ is the root of the trees in $\langOf{\tau'}$

  \item $\pi(\tilde{d}_2) = \mathcal{R}(\tilde{s}_0'')$, where $s_0''$ is the root of the trees in $\langOf{\tau''}$

  \item $\pi(\tilde{a}_1) = \pi(\tilde{b}_1) = \pi(\tilde{c}_1) = \mathcal{R}(\varepsilon)$

  \item $\tilde{c}_1$ does not appear in any other content model of $\tau$ and $c$ appears exactly once in any tree in $\langOf{\tau}$
\end{enumerate}
Informally, $\langOf{\tau} = s_0(acd(\langOf{\tau'}) + bcd(\langOf{\tau''}))$. First of all, we observe that all the new content models (other than those being already in $\tau'$ and $\tau''$) can be represented by $\mathcal{R}$-types, even $\dREs$. Now, it is easy to see that $D$ admits a local typing iff $\langOf{\tau(\tilde{d}_1)} = \langOf{\tau(\tilde{d}_2)}$ iff $\tau' \equiv \tau''$. It is $\langOf{\tau_1} = s_1(a + b)$ and $\langOf{\tau_2} =  s_2(d(\langOf{\tau'}))$. Finally, we just notice that if $\tau' \equiv \tau''$ holds, then $(\tau_1,\tau_2)$ is the unique maximal local typing for $D$ which is even perfect.
\end{proof}

\begin{corollary}\label{ElocEdtdNfaExpHard}
Problems $\eloc{\rEDTD}$, $\eml{\rEDTD}$, and $\eperf{\rEDTD}$ are $\classh{EXPTIME}$ if $\mathcal{R} \in \{\nFA, \nRE\}$.
\end{corollary}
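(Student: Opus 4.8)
The plan is to derive the corollary as a direct composition of the two results immediately preceding it, namely the theorem stating that $\eloc{\rEDTD}$, $\eml{\rEDTD}$, and $\eperf{\rEDTD}$ are at least as hard as $\equivalence{\rEDTD}$, together with Theorem~\ref{EquivEDTDs}. First I would recall that the proof of that theorem exhibits a single logspace many-one transformation $\varphi$ that simultaneously reduces $\equivalence{\rEDTD}$ to each of the three existence problems: given $\rEDTDs$ $\tau'$ and $\tau''$, it builds the design $D = \langle \tau, T\rangle$ with $T = s_0(\f_1\,c\,\f_2)$ for which $D$ admits a local typing iff $\tau' \equiv \tau''$, and moreover the resulting local typing $(\tau_1,\tau_2)$ is the \emph{unique} maximal local typing and is even perfect. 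Hence the very same instance witnesses hardness for $\eloc$, $\eml$, and $\eperf$ at once, and no separate argument is needed for the three problems.

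The crucial point I would verify explicitly is that $\varphi$ is \emph{$\mathcal{R}$-preserving}. The content models that $\varphi$ introduces beyond those already occurring in $\tau'$ and $\tau''$ are $\pi(\tilde{s}_0)$, $\pi(\tilde{d}_1)$, $\pi(\tilde{d}_2)$, and the $\varepsilon$-rules for $\tilde a_1,\tilde b_1,\tilde c_1$; as observed in the preceding proof, all of these are expressible by $\dREs$, hence \emph{a fortiori} by both $\nFAs$ and $\nREs$. Consequently, for each fixed $\mathcal{R} \in \{\nFA,\nRE\}$ the map $\varphi$ restricts to a logspace reduction from $\equivalence{\xEDTD{\nFA}}$ (respectively $\equivalence{\xEDTD{\nRE}}$) to the corresponding existence problem over $\xEDTDs{\nFA}$ (respectively $\xEDTDs{\nRE}$), without ever inflating the content-model formalism.

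Finally I would invoke Theorem~\ref{EquivEDTDs}, which states that $\equivalence{\xEDTD{\nFA}}$ and $\equivalence{\xEDTD{\nRE}}$ are $\classc{EXPTIME}$ and in particular $\classh{EXPTIME}$. Since logspace reductions run in polynomial time and $\classh{EXPTIME}$ is closed under such many-one reductions, composing this hardness with $\varphi$ yields that each of $\eloc{\rEDTD}$, $\eml{\rEDTD}$, and $\eperf{\rEDTD}$ is $\classh{EXPTIME}$ for $\mathcal{R} \in \{\nFA,\nRE\}$, as claimed. The only genuine obstacle here is the $\mathcal{R}$-preservation check of the previous paragraph; once it is in place there is no remaining combinatorial work, since the corollary is a pure transitivity-of-reductions argument and the matching upper bounds are recorded separately (see Table~\ref{tableResults}).
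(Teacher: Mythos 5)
Your proposal is correct and matches the paper's (implicit) argument exactly: the corollary is obtained by composing the reduction of the preceding theorem with the $\classh{EXPTIME}$ bound of Theorem~\ref{EquivEDTDs}, and the $\mathcal{R}$-preservation point you single out is precisely the observation already made in that theorem's proof that all newly introduced content models are expressible even by $\dREs$. Nothing further is needed.
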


The equivalence between $\langOf{\tau(\tilde{d}_1)} = \langOf{\tau(\tilde{d}_2)}$, in the previous reduction, is necessary because we do not know, a priori, whether $\f_1$ is imposing a constraint on $\f_2$ or not.
In particular, this is an extreme case of the fact that $\langOf{\tau(\tilde{d}_1)} \cap \langOf{\tau(\tilde{d}_2)} \neq \emptyset$.

What we really need is to be able to consider \emph{completely distinct}, in the same content model, each pair of different specialized element names of the form $\tilde{a}$ and $\tilde{a}'$, namely $\langOf{\tau(\tilde{a})} \cap \langOf{\tau(\tilde{a}')} = \emptyset$. To do that, given an $\rEDTD$, we construct an equivalent $\nUTA$ \cite{MartensNiehren07}, we transform it into an $\dUTA$ \cite{ComonDauchetGilleron07}, and finally we try to derive a new $\rEDTD$ satisfying our requirement. If $\mathcal{R} = \dRE$ the last step could not be always possible.

Given an $\rEDTD$ $\tau = \langle \Sigma, \tilde{\Sigma}, \pi, \tilde{s}, \mu \rangle$, an equivalent $\nUTA$ $\autom{A} = \langle K, \Sigma, \Delta, F \rangle$ can be constructed as follows: $K = \tilde{\Sigma}$; $\Delta(\tilde{a},a) = \nFA(\pi(\tilde{a}))$, for each $\tilde{a} \in \tilde{\Sigma}$; $F = \{\tilde{s}_0\}$.
Now we want to transform $\autom{A}$ into an equivalent $\dUTA$ $\autom{A}^d$ (that may be exponential in size). Notice that $\autom{A}^d$ will have only one final state as well.
Finally, we convert again $\autom{A}^d$ (whenever it is possible) into an $\rEDTD$ $\tau^d$ as follows: $\tilde{\Sigma} = K$; $\pi(\tilde{a}) = \mathcal{R}(\Delta(\tilde{a},a))$, for each $\tilde{a} \in K$.

\begin{lemma}
Let $\tau^d$  be an $\rEDTD$ built as above. For each element name, say $a$, and each pair $\tilde{a},\tilde{a}'$ of different specialized element names in $\tilde{\Sigma}^d(a)$, then $\langOf{\tau^d(\tilde{a})} \cap \langOf{\tau^d(\tilde{a}')} = \emptyset$.
\end{lemma}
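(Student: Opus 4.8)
The plan is to exploit the bottom-up determinism of $\autom{A}^d$, which is precisely what distinguishes a $\dUTA$ from a general $\nUTA$. First I would make explicit the correspondence between the witness trees used by $\tau^d$ and the runs of $\autom{A}^d$. By construction $\tilde{\Sigma}^d = K$ and $\pi(\tilde{a}) = \mathcal{R}(\Delta(\tilde{a},a))$ where $a = \mu(\tilde{a})$; hence (recalling Lemma~\ref{singTypeLab1}, so that $\tau^d(\tilde{a})$ is $\tau^d$ with start symbol $\tilde{a}$) a tree $t'$ over $\tilde{\Sigma}^d$ is a witness for a tree $t \in \langOf{\tau^d(\tilde{a})}$ exactly when $t = \mu(t')$, the root of $t'$ carries $\tilde{a}$, and at every node $x$ of $t'$ carrying state $q$ the sequence of states of its children lies in $\langOf{\Delta(q,\mu(q))}$. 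This is verbatim condition~(ii) in the definition of acceptance by an unranked tree automaton. Consequently $\langOf{\tau^d(\tilde{a})}$ is exactly the set of trees $t$ admitting a run of $\autom{A}^d$ that labels $\rootnode{t}$ with $\tilde{a}$.

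Second I would prove, by a bottom-up induction on the structure of any tree $t$, that $\autom{A}^d$ admits at most one run on $t$. For a leaf with element name $a$, a run must assign it a state $q$ with $\varepsilon \in \langOf{\Delta(q,a)}$; the $\dUTA$ disjointness condition $\langOf{\Delta(q,a)} \cap \langOf{\Delta(q',a)} = \emptyset$ for $q \neq q'$ forces $q$ to be unique. For an internal node with label $a$ whose children have, by the induction hypothesis, uniquely determined states forming a word $u \in K^*$, the state $q$ of the node must satisfy $u \in \langOf{\Delta(q,a)}$, and the same disjointness condition forces this $q$ to be unique. Thus the entire run, and in particular the state assigned to $\rootnode{t}$, is uniquely determined by $t$.

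Combining the two steps concludes the argument: each tree $t$ has a single candidate root state, so $t$ can lie in $\langOf{\tau^d(\tilde{a})}$ for at most one specialized element name $\tilde{a}$. Therefore, for any two distinct $\tilde{a},\tilde{a}' \in \tilde{\Sigma}^d(a)$ we obtain $\langOf{\tau^d(\tilde{a})} \cap \langOf{\tau^d(\tilde{a}')} = \emptyset$, as claimed (and the same holds trivially for distinct states mapping to different element names, since then the root labels already differ). The step I would be most careful about is the first one: being precise that the witness-tree semantics of an $\rEDTD$ with a designated start symbol coincides with the acceptance runs of $\autom{A}^d$ pinned at the root, so that bottom-up determinism of the $\dUTA$ transfers directly into disjointness of the tree languages $\langOf{\tau^d(\tilde{a})}$. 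Once that dictionary is in place, the inductive uniqueness of runs is routine and the disjointness is immediate.
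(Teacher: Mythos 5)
Your proposal is correct and follows essentially the same route as the paper's proof: both rest on the observation that the bottom-up run of $\autom{A}^d$ assigns a unique state to the root of any tree, and that the states of $\autom{A}^d$ coincide with the specialized element names of $\tau^d$. You merely spell out explicitly (via the inductive uniqueness-of-runs argument from the $\dUTA$ disjointness condition) what the paper compresses into ``it is easy to see''.
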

\begin{proof}
It is easy to see that by a (bottom-up) run of $\autom{A}^d$ over each tree $t \in \langOf{\tau^d(\tilde{a})} \cup \langOf{\tau^d(\tilde{a}')}$, there is only one possible state (between $\tilde{a}$ and $\tilde{a}'$) that can be associated to the root of $t$, and the states of $\autom{A}^d$ coincide with the specialized element names of $\tau^d$.
\end{proof}

Now we are ready for handling $\rEDTDs$ (that we call \textbf{normalized}) satisfying the above property. But before we introduce a general property of $\rEDTDs$.

\begin{proposition}\label{generalEDTD}\emph{\cite{MartensNeven06}}
Let $\tau$ be an $\rEDTD$. Whenever for two trees $t_1,t_2 \in \langOf{\tau}$ with nodes $x_1$ and $x_2$, respectively,
there are witnesses $\tilde{t}_1$ and $\tilde{t}_2$ assigning the same specialized element name to both $x_1$ and $x_2$, then the trees obtained, from $t_1$ and $t_2$, by exchanging $\treet{t_1}{x_1}$ and $\treet{t_2}{x_2}$ are still in $\langOf{\tau}$.
\end{proposition}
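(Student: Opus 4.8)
The plan is to exploit the fact that membership in $\langOf{\tau}$ is witnessed by a tree over $\tilde{\Sigma}$ that is valid with respect to the underlying $\rDTD$ $\textsf{dtd}(\tau)$, together with the observation that validity for a DTD is a purely \emph{local} condition at each node (its children string must lie in the content model of its label). Accordingly, I would prove membership of the two exchanged trees by constructing explicit witnesses, obtained by performing the analogous subtree exchange at the level of witnesses.

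First, I would fix witnesses $\tilde{t}_1, \tilde{t}_2 \in \langOf{\textsf{dtd}(\tau)}$ with $\mu(\tilde{t}_1) = t_1$ and $\mu(\tilde{t}_2) = t_2$ such that both $x_1$ (in $\tilde{t}_1$) and $x_2$ (in $\tilde{t}_2$) carry the \emph{same} specialized element name $\tilde{a}$. Let $t_1'$ be the tree obtained from $t_1$ by replacing $\treet{t_1}{x_1}$ with $\treet{t_2}{x_2}$. I would define a candidate witness $\tilde{t}_1'$ for $t_1'$ by replacing, in $\tilde{t}_1$, the subtree rooted at $x_1$ (whose root is labeled $\tilde{a}$) with the subtree of $\tilde{t}_2$ rooted at $x_2$ (whose root is also labeled $\tilde{a}$).

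Then I would verify two things. (i) $\mu(\tilde{t}_1') = t_1'$: outside the replaced subtree, $\tilde{t}_1'$ coincides with $\tilde{t}_1$, so $\mu$ returns the same labels as in $t_1$; inside, $\mu$ applied to the subtree of $\tilde{t}_2$ rooted at $x_2$ returns $\treet{t_2}{x_2}$, which is exactly what was grafted into $t_1'$. (ii) $\tilde{t}_1' \in \langOf{\textsf{dtd}(\tau)}$: since the DTD constraints are local, I only need to recheck the nodes whose label or children string changed. The descendants of $x_1$ in $\tilde{t}_1'$ are precisely the nodes of the subtree of $\tilde{t}_2$ rooted at $x_2$, so their constraints are inherited from the validity of $\tilde{t}_2$. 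At $x_1$ itself the label is still $\tilde{a}$, while its children string is now $\childstrt{\tilde{t}_2}{x_2} \in \langOf{\pi(\tilde{a})}$ (valid in $\tilde{t}_2$), so its constraint holds. The parent of $x_1$ keeps the same children string, because the symbol in the position of $x_1$ is still $\tilde{a}$; all remaining nodes are untouched. The root label is unaffected unless $x_1$ is the root, in which case validity of $\tilde{t}_1$ forces $\tilde{a} = \tilde{s}$, so the root of $\tilde{t}_1'$ is again labeled $\tilde{s}$.

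The symmetric construction yields a witness for the tree $t_2'$ obtained by the reverse grafting, establishing that both exchanged trees belong to $\langOf{\tau}$. The only delicate point — and the one I would state most carefully — is the local re-verification at $x_1$ and at its parent: the hypothesis that $x_1$ and $x_2$ receive the \emph{same} specialized name $\tilde{a}$ is exactly what guarantees both that the parent's children string is unchanged and that the incoming children string meets the content model $\pi(\tilde{a})$. Without matching specialized names these two checks could fail, which is precisely why the exchange property for $\rEDTDs$ is conditioned on witnesses rather than holding unconditionally as in the $\rDTD$ case.
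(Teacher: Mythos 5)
Your proof is correct. The paper does not actually prove this proposition — it is imported from Martens and Neven \cite{MartensNeven06} — but your argument (graft the subtrees at the witness level, then use the locality of the $\textsf{dtd}(\tau)$ constraints plus the matching specialized name $\tilde{a}$ to re-verify validity at $x_1$, its parent, and the root) is exactly the standard proof of this exchange property, and your closing remark correctly identifies why the hypothesis on witnesses is what makes it go through for $\rEDTDs$.
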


The following lemma holds for general $\rEDTDs$ but is it also useful for normalized $\rEDTDs$. Consider the design $D = \langle \tau^d,T\rangle $ where $T = s_0 (a(\f_1) \ \f_2)$ and $\tau^d$ is a normalized $\xEDTD{\nRE}$ having $\pi(\tilde{s}_0) = (\tilde{a}^1 + \tilde{a}^2)^+$ (we ignore the other content models). As $\langOf{\tau^d(\tilde{a}^1)} \cap \langOf{\tau^d(\tilde{a}^2)} = \emptyset$, it is clear that the unique maximal local typing $(\tau_1,\tau_2)$ for $D$ has $\pi_1(\tilde{s}_1) = \pi(\tilde{a}^1) + \pi(\tilde{a}^2)$ and $\pi_2(\tilde{s}_2) = (\tilde{a}^1 + \tilde{a}^2)^*$. Thus the node under the root labeled by $a$ may have either $\tilde{a}^1$ or $\tilde{a}^2$ as witness depending on the tree replacing $\f_1$.

\begin{lemma}\label{edtdImpLemma}
Let $D = \langle \tau, T \rangle$ be an $\rEDTD$-design and $(\tau_n)$ be a local typing for $D$. For each node $x$ of $T$ having an element name as label, say $a$, there is a set of specialized element names $\tilde{\Sigma}^x \subseteq \tilde{\Sigma}(a)$ such that $\bigcup_{\tilde{a} \in \tilde{\Sigma}^x} \langOf{\tau(\tilde{a})} = \langOf{T(\tau_n, \tilde{a}_0^x)}$.
\end{lemma}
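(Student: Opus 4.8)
The plan is to exhibit the set $\tilde{\Sigma}^x$ explicitly and prove the two inclusions separately, the whole argument resting on exactly two facts: the locality of $(\tau_n)$, which gives $\langOf{\Tn} = \langOf{\tau}$, and the fact that the construction of $\Tn$ forces a \emph{unique} witness on every kernel node. I would set
\[
\tilde{\Sigma}^x = \{\, \tilde{a} \in \tilde{\Sigma}(a) : \text{some witness of some } t \in \langOf{\tau} \text{ assigns } \tilde{a} \text{ to the node } x \,\},
\]
where I identify the kernel node $x$ with the corresponding node of any $t \in \langOf{\Tn}$ (materialization leaves every kernel element-node in place, so every $t\in\langOf{\Tn}$ does contain $x$ and this identification is unambiguous). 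The first preliminary observation I would record is the forced-witness property: since $\tilde{\Sigma}_0$ contains one distinct name $\tilde{b}_0^y$ per kernel node $y$ and the content model at a kernel parent lists each kernel child as a literal singleton $\{\tilde{b}_0^{y_k}\}$ (step~7a of the $\Tn$ construction), an easy induction from the root shows the witness assignment of $\Tn$ is determined along the entire kernel spine, even though $\Tn$ is nondeterministic inside the materialized forests. Hence for every $t \in \langOf{\Tn}$ the node $x$ receives witness $\tilde{a}_0^x$, and so $\treet{t}{x} \in \langOf{T(\tau_n,\tilde{a}_0^x)}$; conversely any subtree arising at $x$ this way lies in $\langOf{T(\tau_n,\tilde{a}_0^x)}$. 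Assuming as usual that $\tau$ is reduced (so $\langOf{\tau}\neq\emptyset$, the degenerate empty case making both sides of the claim empty), I would fix one host tree $t_0 \in \langOf{\Tn}$ realizing $x$.

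For the inclusion $\subseteq$, I would take $s \in \langOf{T(\tau_n,\tilde{a}_0^x)}$ and substitute $s$ for $\treet{t_0}{x}$ in $t_0$. Both subtrees carry $\tilde{a}_0^x$ at their root, so the witness-substitution variant of Proposition~\ref{generalEDTD} keeps the result $t$ in $\langOf{\Tn}$. Then $t \in \langOf{\tau}$, so some $\tau$-witness of $t$ assigns to $x$ a name $\tilde{a}\in\tilde{\Sigma}(a)$, which by definition lies in $\tilde{\Sigma}^x$, and $\treet{t}{x}=s$ is generated from $\tilde{a}$, i.e. $s \in \langOf{\tau(\tilde{a})}$. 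For the reverse inclusion $\supseteq$, I would take $\tilde{a}\in\tilde{\Sigma}^x$ and $s\in\langOf{\tau(\tilde{a})}$; by definition there is $t\in\langOf{\tau}$ whose $\tau$-witness assigns $\tilde{a}$ to $x$, and substituting $s$ for $\treet{t}{x}$ again keeps the tree in $\langOf{\tau}$ (the root of $s$ carries $\tilde{a}$ and $\mu(\tilde{a})=a$ matches the label of $x$). Calling the result $t'$, we have $t'\in\langOf{\tau}=\langOf{\Tn}$ with $\treet{t'}{x}=s$, so the forced-witness property gives $s\in\langOf{T(\tau_n,\tilde{a}_0^x)}$. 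Together the two inclusions establish the claimed equality.

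I expect the main obstacle to be the forced-witness property of $\Tn$ on kernel nodes: it is precisely what lets me pass freely between a subtree as \emph{seen through $\Tn$} (always rooted at the single name $\tilde{a}_0^x$) and as \emph{seen through $\tau$} (rooted at any of the names in $\tilde{\Sigma}^x$), and asymmetry here would break the argument. Making the two substitution steps rigorous is the other delicate point, since Proposition~\ref{generalEDTD} is stated as an exchange between two given trees rather than as a one-sided substitution; I would either invoke its substitution corollary or re-derive what I need directly from the witness-tree semantics, composing a witness assignment out of the host tree above $x$ and the witness of $s$ below $x$, the two agreeing on the common name at $x$.
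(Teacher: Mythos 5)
Your proof is correct and follows essentially the same route as the paper's: both arguments identify $\tilde{\Sigma}^x$ with the specialized names that can actually witness the kernel copy of $x$, and both inclusions are obtained by the same two moves --- substituting a subtree at the (witness-determined) copy of $x$ and then switching between a $\Tn$-witness and a $\tau$-witness via locality and the exchange property of Proposition~\ref{generalEDTD}. The paper phrases this as pruning ``intruders'' from the full set $\tilde{\Sigma}(a)$ while you build the set directly from the names used at $x$, but the substance --- including the delicate point you yourself flag, namely that a $\Tn$-witness pins $\tilde{a}_0^x$ to a specific node even though the physical position of that node may vary across witnesses when function outputs mimic kernel siblings --- is the same, and your argument is no less rigorous on that point than the paper's.
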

\begin{proof}\ONLINE{
By hypothesis, $\langOf{\tau} = \langOf{T(\tau_n)}$. We recall that this equivalence is obtained when we consider all the possible extensions of $T$. Let $x$ be a node of $T$ having an element name as label, say $a$, and $k$ be the cardinality of $\tilde{\Sigma}(a)$. First of all, we observe that if $\tilde{\Sigma}^x = \tilde{\Sigma}(a)$, then
\[
\langOf{T(\tau_n, \tilde{a}_0^x)} \subseteq \langOf{\tau(\tilde{a}^1)} \cup \ldots \cup \langOf{\tau(\tilde{a}^k)}
\]
is trivially true. In fact, in such a case, the first member has to be necessarily a subset of the second one because, otherwise, in some extension of $T$ there would be a subtree rooted in $x$ which  $\tau$ cannot produce any witness for.

Starting from $\tilde{\Sigma}^x = \tilde{\Sigma}(a)$ we \textbf{claim} that each $\tilde{a}^j$, with $1 \leq j \leq k$, is either a ``friend'' to keep in $\tilde{\Sigma}^x$ or an ``intruder'' to remove from $\tilde{\Sigma}^x$. Finally, the resulting $\tilde{\Sigma}^x$ will prove the statement. Consider now each $\tilde{a}^j$. We distinguish two cases:
\begin{enumerate}
  \item We say that $\tilde{a}^j$ is a \emph{friend} if and only if $\langOf{\tau(\tilde{a}^j)} \subseteq \langOf{T(\tau_n, \tilde{a}_0^x)}$ because it then clearly contributes to prove the statement. We leave it in $\tilde{\Sigma}^x$.

  \item We say that $\tilde{a}^j$ is an \emph{intruder} (and we remove it from $\tilde{\Sigma}^x$) if and only if one of the following is true:

  \begin{enumerate}[$\centerdot$]
      \item $\langOf{\tau(\tilde{a}^j)} \cap \langOf{T(\tau_n, \tilde{a}_0^x)} = \emptyset$ because even if we remove it from $\tilde{\Sigma}^x$, then $\langOf{T(\tau_n, \tilde{a}_0^x)} \subseteq \bigcup_{\tilde{a} \in \tilde{\Sigma}^x} \langOf{\tau(\tilde{a})}$ is still true.

      \item $\langOf{\tau(\tilde{a}^j)} \cap \langOf{T(\tau_n, \tilde{a}_0^x)} \neq \emptyset$, $\langOf{\tau(\tilde{a}^j)} \nsubseteq \langOf{T(\tau_n, \tilde{a}_0^x)}$, and for each (sub)tree $st_1$ in the intersection there is not a tree $t_1 \in \langOf{T(\tau_n)}$ having $st_1$ as subtree rooted in $x$, such that at least a witness $\tilde{t}_1$ from $\tau$ associates $\tilde{a}^j$ to $x$. It is still an intruder because for each (sub)tree $st_1$ in the intersection and for each possible trees $t_1 \in \langOf{T(\tau_n)}$ having $st_1$ as subtree rooted in $x$, since tree $t_1$ must necessarily have a witness $\tilde{t}_1$ from $\tau$, then $\tilde{t}_1$ associates $\tilde{a}^{i\neq j}$ to $x$ entailing that each $st_1$ is contained also in  \langOf{\tau(\tilde{a}^i)}. Finally, even if we remove $\tilde{a}^j$ from $\tilde{\Sigma}^x$, then $\langOf{T(\tau_n, \tilde{a}_0^x)} \subseteq \bigcup_{\tilde{a} \in \tilde{\Sigma}^x} \langOf{\tau(\tilde{a})}$ is still true.
      \item $\langOf{\tau(\tilde{a}^j)} \cap \langOf{T(\tau_n, \tilde{a}_0^x)} \neq \emptyset$, $\langOf{\tau(\tilde{a}^j)} \nsubseteq \langOf{T(\tau_n, \tilde{a}_0^x)}$, and for some (sub)tree $st_1$ in the intersection there is a tree $t_1 \in \langOf{T(\tau_n)}$ having $st_1$ as subtree rooted in $x$ and there is a witness $\tilde{t}_1$ from $\tau$ associating $\tilde{a}^j$ to $x$. Anyway, this last case is not possible because it would contradict the hypothesis that $(\tau_n)$ is local. In fact, consider a tree $t_2 \in \tau$ where some node $y$ has also witness $\tilde{a}^j$ and the subtree rooted in $y$ is $st_2 \in \langOf{\tau(\tilde{a}^j)} \mysetminus \langOf{T(\tau_n, \tilde{a}_0^x)}$. For each node $z$ in $t_1$ such that $\ancstrt{t_1}{z} = \ancstrt{t_1}{x}$ and the witness for $z$ in $\tilde{t}_1$ is $\tilde{a}^j$, by Proposition \ref{generalEDTD}, we may replace (in $t_1$) each subtree rooted in $z$ with $st_2$. The new tree is still in $\tau$ but cannot be obtained by any possible extension of $T$ as $\langOf{T(\tau_n, \tilde{a}_0^x)}$ contains all the possible trees obtainable by all the extensions of the functions under $x$ and $st_2$ is not among them.
  \end{enumerate}
\end{enumerate}
Thus, if we consider all the feasible cases, then the claim is true as well as the theorem: $\tilde{\Sigma}^x$ is exactly the set of all friends (or a possible subset if some $\langOf{\tau(\tilde{a}^j)}$ can be obtained by the union of some $\langOf{\tau(\tilde{a}^i)}$ with $\tilde{a}^i$ still in $\tilde{\Sigma}^x$).}
\end{proof}

\begin{definition}
Let $D = \langle \tau, T(\f_n) \rangle$ be an $\rEDTD$-design where the type $\tau = \langle \Sigma, \tilde{\Sigma}, \pi, s, \mu \rangle$ is normalized. We denote by
\begin{enumerate}[ \ $\centerdot$]
  \item $\kappa$ any function associating to each node $x$ of $T$ either a set $\tilde{\Sigma}^x \subseteq \tilde{\Sigma}(a)$ if $a$ is the label of $x$, or the set $\{\f\}$ if $\f$ is the label of $x$.
  \item $D^x_{\kappa} = \langle \pi(\kappa(x)),B^x\rangle$, for each node $x$ in $T$ with $\lab{x} \in \calL$, the box-design induced by $D$ and $\kappa$ where either $B^x = \{\varepsilon\}$ if $x$ is a leaf node, or $B^x = \kappa(y_1) \ldots \kappa(y_k)$ if $\textsf{children}(x) = y_1 \ldots y_k$.
\end{enumerate}
Given a sound typing $(\tau_n)$ for $D$, we say that
\begin{enumerate}[ \ $\centerdot$]
  \item $\kappa$ is induced by the pair $(\tau_n)$ and $T$ if, for each non-function node $x$ of $T$, $\kappa(x)$ contains exactly all the specialized element names associated to $x$ by validating each possible tree in $\extTn$.

  \item $\kappa' \leq \kappa$ iff $\kappa'(x) \subseteq \kappa(x)$, for each $x$. \qed
\end{enumerate}
\end{definition}

The intention is to relate locality properties about $D$ with locality properties about each $D^x_{\kappa}$ similarly as we made for $\rSDTDs$, with the difference that here $D^x_{\kappa}$ depends on the choice of $\kappa$. Unfortunately, although $\tau$ is normalized, if $D$ admits local typings, then $\kappa$ may not be unique. Consider the following example.

\begin{example}
Let $D = \langle \tau, T \rangle$ be a normalized $\xEDTD{\dRE}$-design where $T = s_0 (\f_1 a(\f_2) \f_3)$, $\tau = \langle \Sigma, \tilde{\Sigma}, \pi, \tilde{s}_0, \mu \rangle$, $\pi(\tilde{s}_0) = (\tilde{a}^1 \tilde{a}^2)^+$, $\pi(\tilde{a}^1) = \tilde{b}^1$, and $\pi(\tilde{a}^2) = \tilde{c}^1$. We have two successfully mappings $\kappa^1$, $\kappa^2$ such that
\begin{enumerate}[ \ $\centerdot$]
  \item $\kappa^1(x_1) = \tilde{s}_0$, $\kappa^1(x_3) = \tilde{a}^1$, $D^{x_1}_{\kappa^1} = \langle (\tilde{a}^1 \tilde{a}^2)^+, \f_1 \tilde{a}^1 \f_3\rangle$, and $D^{x_3}_{\kappa^1} = \langle \tilde{b}^1, \f_2\rangle$

  \item $\kappa^2(x_1) = \tilde{s}_0$, $\kappa^2(x_3) = \tilde{a}^2$, $D^{x_1}_{\kappa^2} = \langle (\tilde{a}^1 \tilde{a}^2)^+, \f_1 \tilde{a}^2 \f_3\rangle$, and $D^{x_3}_{\kappa^2} = \langle \tilde{c}^1, \f_2\rangle$
\end{enumerate}
From them we have two different maximal local typings for $D$:
\[
  ((\tilde{a}^1 \tilde{a}^2)^*, \ \tilde{b}^1, \ \tilde{a}^2(\tilde{a}^1 \tilde{a}^2)^*)\ \ \ \ \ \ \
  ((\tilde{a}^1 \tilde{a}^2)^*\tilde{a}^1, \ \tilde{c}^1, \ (\tilde{a}^1 \tilde{a}^2)^*)
\]
\noindent Notice that they are substantially different and also that from the other possible mapping $\kappa^3$, where $\kappa^3(x_3) = \{\tilde{a}^1,\tilde{a}^2\}$, we cannot derive any local typing because if $\f_2$ is replaced by $b$, then $\f_3$ must start with $a(c)$, and if $\f_2$ is replaced by $c$, then $\f_1$ must start with $a(b)$. But $((\tilde{a}^1 \tilde{a}^2)^*\tilde{a}^1, \ \tilde{b}^1+\tilde{c}^1, \ \tilde{a}^2(\tilde{a}^1 \tilde{a}^2)^*)$ is neither local (even) nor sound. \qed
\end{example}

Now we prove the main results of this section.

\begin{theorem}\label{reductionThmEDTD}
Let $D = \langle \tau, T(\f_n) \rangle$ be a distributed design where $\tau$ is a \textbf{normalized} $\rEDTD$. The following are equivalent:
\begin{enumerate}[$(1)$]
  \item $D$ admits a local typing;

  \item There is a function $\kappa$, as defined above, such that each box-design $D^x_{\kappa}$ admits a local typing.
\end{enumerate}
\end{theorem}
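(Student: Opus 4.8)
The plan is to establish the equivalence along the same skeleton as Theorems~\ref{reductionThmDTD} and~\ref{reductionThmSDTD}, reducing locality of the whole design to locality of a family of induced sub-designs, one per non-function node of $T$. The two genuinely new ingredients are that a node no longer carries a single witness but a \emph{set} $\kappa(x)\subseteq\tilde{\Sigma}(a)$ of them (so the induced sub-designs are \emph{box}-designs rather than string-designs), and that the ancestor-guarded exchange of Lemma~\ref{singTypeLab3} must be replaced by the \rEDTD\ exchange of Proposition~\ref{generalEDTD}. Throughout I would use that $\tau$ is \textbf{normalized}, so the bottom-up run assigning specialized names is deterministic and the unique witness of a node depends only on the subtree hanging from it.

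For $(1)\Rightarrow(2)$ I would take a local typing $(\tau_n)$ for $D$ and let $\kappa$ be the function induced by $(\tau_n)$ and $T$, i.e. $\kappa(x)$ collects exactly the specialized names $\tau$ assigns to $x$ over all $t\in\extTn=\langOf{\tau}$ (with $\kappa(\rootnode{T})=\{\tilde{s}_0\}$). For each non-function node $x$ with label $a$ I define a box-typing $\sigma^x$ for $D^x_{\kappa}$ by projecting $(\tau_n)$: to a function child $\f_i$ of $x$ I assign the language of witness strings of the root-children forests of the trees $t_i\models\tau_i$, which is well defined by normalization. Lemma~\ref{edtdImpLemma} identifies the realizable subtrees at $x$ with $\bigcup_{\tilde{a}\in\kappa(x)}\langOf{\tau(\tilde{a})}=\langOf{T(\tau_n,\tilde{a}_0^x)}$; taking top-level content models, the set of all realizable children-witness strings at $x$ is $\bigcup_{\tilde{a}\in\kappa(x)}\langOf{\pi(\tilde{a})}$, the target of $D^x_{\kappa}$. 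Soundness of $(\tau_n)$ gives $\ext_{B^x}(\sigma^x)\subseteq\bigcup_{\tilde{a}\in\kappa(x)}\langOf{\pi(\tilde{a})}$ and completeness of $(\tau_n)$ the reverse inclusion, so $\sigma^x$ is local for $D^x_{\kappa}$.

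For $(2)\Rightarrow(1)$, given $\kappa$ and local box-typings $\sigma^x=(\sigma_i^x)$, I would build the global typing by setting, for each $\f_i$ with parent $x$, the type $\tau_i$ to consist of all specialized rules of $\tau$ over $\tilde{\Sigma}$ together with a fresh root $\tilde{s}_i$ whose content model is $\pi_i(\tilde{s}_i)=\sigma_i^x$, with $\mu_i$ extending $\mu$. Soundness $\extTn\subseteq\langOf{\tau}$ is immediate: in any extension the children-witness string at each $x$ lies in $\ext_{B^x}(\sigma^x)$, hence in some $\langOf{\pi(\tilde{a})}$ with $\tilde{a}\in\kappa(x)$, so $x$ receives a legal witness of $\tau$, and every plugged subtree, being a subtree of a $\tau$-witness, is itself $\tau$-valid. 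Completeness $\langOf{\tau}\subseteq\extTn$ is the delicate half: for $t\in\langOf{\tau}$ I must reproduce it as an extension, which forces the plugged forests to have exactly the lengths dictated by $t$'s skeleton, hence the \emph{natural} decomposition of each children-witness string $w$ (single symbols at the kernel-fixed element slots, blocks at the function slots). I would close it by a top-down induction showing the witness $\tilde{a}_x$ of every element node lies in $\kappa(x)$: the root witness $\tilde{s}_0$ lies in $\kappa(\rootnode{T})$, and completeness of $D^x_{\kappa}$, together with the fact that each element slot of $B^x$ is a width-one box at a kernel-fixed position, forces the element-child witnesses of $w$ into the corresponding $\kappa(y_j)$ and the function blocks into $\sigma_i^x$; normalization makes these witnesses depend only on the respective subtrees, so the induction propagates.

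The main obstacle is precisely this completeness step: a priori $w\in\ext_{B^x}(\sigma^x)$ only guarantees \emph{some} parse, whereas reproducing the given $t$ requires the \emph{kernel-forced} parse to be valid. I expect to dispose of it by combining (i) the width-one, kernel-fixed placement of the element boxes, which pins the element symbols of $w$ to $\kappa(y_j)$; (ii) normalization, which makes each node's witness a function of its subtree alone; and (iii) Proposition~\ref{generalEDTD}, used to exchange the subtrees produced by any realizable witness-matching extension with the actual subtrees of $t$ carrying the same specialized name, thereby morphing that extension into $t$ itself. The remaining clauses (maximal local, perfect) and the accompanying complexity bounds would then follow as in the $\rDTD$ and $\rSDTD$ cases, since they depend only on the structure of the induced box-designs $D^x_{\kappa}$.
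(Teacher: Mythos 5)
Your proposal is correct and follows essentially the same route as the paper: for $(1)\Rightarrow(2)$ it takes the function $\kappa$ induced by the local typing (justified via Lemma~\ref{edtdImpLemma}) and uses normalization to make the bottom-up witness assignment unique, and for $(2)\Rightarrow(1)$ it rebuilds each $\tau_i$ exactly as in the $\rSDTD$ case so that $T(\tau_n)$ is structurally equivalent to $\tau$. Your treatment of the completeness half — pinning the kernel-forced parse via the width-one element boxes, normalization, and Proposition~\ref{generalEDTD} — is in fact more detailed than the paper's own (very terse) argument, but it is the same idea spelled out.
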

\begin{proof}
$(1) \Rightarrow (2)$: Let $(\tau_n)$ be a local typing for $D$, then $T(\tau_n) \equiv \tau$ holds. Consider the function $\kappa$ induced by $(\tau_n)$ and $T$ (the choice is consistent with Lemma \ref{edtdImpLemma}). As $\tau$ is normalized, there is only one possibility for validating (in a bottom-up way) each tree in $\extTn$. If for some node $x$ of $T$ the box-design $D^x_{\kappa}$ did not admit any local typing, then there would be no possibility of generating all the strings in $\pi(\kappa(x))$. Contradiction.

\medskip

\noindent $(2) \Rightarrow (1)$: If for some $\kappa$ each box-design $D^x_{\kappa}$ admits a local typing, then we can construct each type $\tau_i$ as made for $\rSDTDs$, in such a way that $T(\tau_n)$ is structurally equivalent to $\tau$.
\end{proof}

\begin{corollary}\label{edtd2strReduction}
Problem $\eloc{\rEDTD}$ (or $\eml{\rEDTD}$ but $\mathcal{R} \neq \dRE$) for \textbf{normalized} $\rEDTDs$ is decidable by an oracle machine in $\class{NP}^\mathcal{C}$ where $\mathcal{C}$ is the complexity class of solving $\elocRB$ (or $\emlRB$).
\end{corollary}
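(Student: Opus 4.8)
The plan is to convert the ``iff'' of Theorem~\ref{reductionThmEDTD} into a guess-and-check procedure. That theorem says that a normalized $\rEDTD$-design $D=\langle\tau,T(\f_n)\rangle$ admits a local typing precisely when there is \emph{some} function $\kappa$---assigning to every node $x$ with $\lab{x}\in\calL$ a subset $\tilde{\Sigma}^x\subseteq\tilde{\Sigma}(a)$ (with $a=\lab{x}$), and $\{\f\}$ to every function node---for which each induced box-design $D^x_{\kappa}=\langle\pi(\kappa(x)),B^x\rangle$ admits a local typing. So the $\class{NP}^\mathcal{C}$ machine first \emph{nondeterministically guesses} $\kappa$; then, for each node $x$ with $\lab{x}\in\calL$, it builds $D^x_{\kappa}$ and queries the $\elocRB$ oracle on it, accepting iff every call answers ``yes''. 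Soundness and completeness of the procedure are immediate from Theorem~\ref{reductionThmEDTD}.

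Next I would verify the resource bounds. The certificate $\kappa$ is of polynomial size: $T$ has polynomially many nodes and each value $\kappa(x)$ is a subset of $\tilde{\Sigma}$, hence encodable in $\mathcal{O}(|\tilde{\Sigma}|)$ bits. Each box-design is produced in polynomial time directly from $\tau$, $T$ and $\kappa$: the kernel box $B^x=\kappa(y_1)\ldots\kappa(y_k)$ is read off from $\children{x}$, and the target content model $\pi(\kappa(x))$ is assembled from content models already present in $\tau$. There is exactly one box-design per element-labeled node, so only polynomially many oracle calls are made, each on an instance of size polynomial in $\|D\|$. As the oracle decides $\elocRB$, which lies in $\mathcal{C}$ by definition, the overall computation is that of an $\class{NP}^\mathcal{C}$ machine.

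For the maximal variant $\eml{\rEDTD}$ the same template applies with the oracle $\elocRB$ replaced by $\emlRB$: guess $\kappa$, and ask for each $x$ whether $D^x_{\kappa}$ admits a \emph{maximal} local typing. This needs the maximal analogue of Theorem~\ref{reductionThmEDTD}---that a maximal local typing for $D$ can be reassembled from per-box maximal local typings (exactly as in the $(2)\Rightarrow(1)$ direction, mimicking the $\rSDTD$ construction), and conversely that any maximal local typing of $D$ induces maximal local box-typings through its $\kappa$. The hypothesis $\mathcal{R}\neq\dRE$ enters here: for $\dRE$, maximality is measured only against other $\dRE$-typings, so per-node maximal box-typings need not recombine into a typing that is still maximal among $\dRE$-typings, and the composition step fails.

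I expect the main obstacle to be handling the target $\pi(\kappa(x))$ when $\kappa(x)$ is a genuine set $\{\tilde{a}^{j_1},\ldots,\tilde{a}^{j_\ell}\}$ rather than a singleton. In that case the box-design must test locality against $\bigcup_{\tilde{a}\in\kappa(x)}\langOf{\pi(\tilde{a})}$, so one must argue both that this union is an admissible $\mathcal{R}$-instance for $\elocRB$ and---this is the content of Lemma~\ref{edtdImpLemma}---that a good guess $\kappa$ faithfully records the witnesses any putative local typing would assign, so that existence of such a $\kappa$ is \emph{equivalent} to, not merely necessary for, existence of a local typing. Pinning down this bookkeeping is the delicate part; the size accounting and the polynomiality of each reduction step are routine.
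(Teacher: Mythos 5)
For $\eloc{\rEDTD}$ your argument is exactly the paper's: guess $\kappa$, query $\elocRB$ on each induced box-design, and invoke Theorem~\ref{reductionThmEDTD} for correctness; your additional size accounting is fine and unproblematic.

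The gap is in the $\eml{\rEDTD}$ part. You replace the oracle by $\emlRB$ and rest the argument on an unproven ``maximal analogue of Theorem~\ref{reductionThmEDTD},'' i.e.\ that per-box maximal local typings recombine into a globally maximal local typing and, conversely, that a maximal local typing induces maximal box-typings. That analogue is genuinely doubtful: maximality is not relative to a fixed $\kappa$. A sound typing $(\tau_n')$ strictly extending $(\tau_n)$ may induce a strictly larger function $\kappa' > \kappa$, so a typing that is maximal among those inducing $\kappa$ need not be maximal for $D$ at all — this is precisely the obstruction the paper has to confront explicitly in Theorem~\ref{eMLedtdNFA2EXPSPACE} (for $\eml{\xEDTD{\dRE}}$), where it must additionally verify that no $\kappa' > \kappa$ admits a local typing. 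The paper sidesteps all of this for $\mathcal{R} \neq \dRE$ with a one-line observation you missed: in that regime a maximal local typing exists if and only if a local typing exists (any local typing can be extended to a maximal one), so $\eml{\rEDTD}$ and $\eloc{\rEDTD}$ are the \emph{same} existence question and the identical guess-and-check algorithm with the $\elocRB$ oracle already decides both. Your diagnosis that $\mathcal{R} \neq \dRE$ is where the hypothesis bites is directionally right, but the clean route is via this equivalence of existence problems, not via a compositionality claim for maximality that you would still have to prove.
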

\begin{proof}
Let $\tau = \langle \calL,\pi,s\rangle$ be a type and $T(!f_n)$ be a kernel. Consider the $\eloc{\rEDTD}$ problem and the following algorithm:
\begin{enumerate}
  \item \textbf{Guess}: the function $\kappa$;

  \item \textbf{Check}: call $\elocRB$ over $D^x_{\kappa}$ for each node $x$ of $T$ with $\lab{x} \in \calL$.
\end{enumerate}
For $\eml{\rEDTD}$ we use the same algorithm since, in general ($\mathcal{R} \neq \dRE$), a maximal local typing always exists if there is a local one.
\end{proof}

Problem $\eml{\xEDTD{\dRE}}$ will be discussed in Section \ref{ComplexityForTrees}.

\begin{theorem}
Let $D = \langle \tau, T(\f_n) \rangle$ be a distributed design where $\tau$ is a \textbf{normalized} $\rEDTD$. The following are equivalent:
\begin{enumerate}[$(1)$]
  \item $D$ admits a perfect typing;

  \item There is a function $\kappa$ such that each $D^x_{\kappa}$ admits a perfect typing, and for each sound typing $(\tau_n')$ for $D$, $\kappa' \leq \kappa$ where $\kappa'$ is induced by $(\tau_n')$.
\end{enumerate}
\end{theorem}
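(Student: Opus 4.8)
The plan is to mirror the structure of Theorem~\ref{reductionThmEDTD}, upgrading ``local'' to ``perfect'' at each box-design and grafting on the global $\kappa$-domination clause. Throughout I would exploit that $\tau$ is \emph{normalized}: every tree of $\langOf{\tau}$ (equivalently, of any $\ext_T(\tau_n)$ for a sound $(\tau_n)$) carries a \emph{unique} bottom-up witness, and distinct specialized names for the same element name define disjoint tree languages, $\langOf{\tau(\tilde a)} \cap \langOf{\tau(\tilde a')} = \emptyset$. This is precisely what makes the induced function $\kappa$ well-behaved and, via Lemma~\ref{edtdImpLemma}, well-defined.

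For $(1) \Rightarrow (2)$, I would start from a perfect typing $(\tau_n)$, which is in particular local, so by Theorem~\ref{reductionThmEDTD} the $\kappa$ induced by $(\tau_n)$ already makes each box-design $D^x_\kappa$ admit a local typing. I would first sharpen this to \emph{perfect}: since every function $\f_i$ is a leaf of $T$ with a unique parent, the components of $(\tau_n)$ at the children of $x$ constitute a typing for $D^x_\kappa$; were some $D^x_\kappa$ to admit a strictly larger sound typing, substituting those components back into $(\tau_n)$ would (by normalization, since the box-design encodes exactly the constraint $\childstr{x} \in \langOf{\pi(\kappa(x))}$ modulo witness assignment) yield a sound global typing strictly above $(\tau_n)$, contradicting perfectness. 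For the domination clause I would take any sound $(\tau_n')$ inducing $\kappa'$; perfectness gives $(\tau_n') \leq (\tau_n)$, hence $\ext_T(\tau_n') \subseteq \extTn = \langOf{\tau}$, and normalization forces the witnesses occurring at each $x$ over the smaller language to be a subset, i.e.\ $\kappa'(x) \subseteq \kappa(x)$, so $\kappa' \leq \kappa$.

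For $(2) \Rightarrow (1)$, I would assemble a global typing $(\tau_n)$ from the perfect box-design typings exactly as in the $\rSDTD$ construction inside Theorem~\ref{reductionThmEDTD}; since ``perfect $\Rightarrow$ local'' holds at each box, that theorem already certifies $(\tau_n)$ as local. To prove perfectness, I would take an arbitrary sound $(\tau_n')$ inducing $\kappa'$, so $\kappa' \leq \kappa$ by hypothesis. The crux is a monotonicity step: from $\kappa'(y) \subseteq \kappa(y)$ at every child $y$ of $x$ one gets $B^x_{\kappa'} \subseteq B^x_\kappa$, and from $\kappa'(x) \subseteq \kappa(x)$ one gets $\langOf{\pi(\kappa'(x))} \subseteq \langOf{\pi(\kappa(x))}$; hence the components of $(\tau_n')$ at the children of $x$, being sound for $D^x_{\kappa'}$, are \emph{a fortiori} sound for $D^x_\kappa$. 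The perfect typing of $D^x_\kappa$ then dominates them component-wise, and since each $\f_i$ sits under a unique parent, collecting these dominations over all $x$ gives $\tau_i' \leq \tau_i$ for every $i$, i.e.\ $(\tau_n') \leq (\tau_n)$.

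I expect the main obstacle to be that backward-direction monotonicity step, namely justifying that a typing sound for the smaller box-design $D^x_{\kappa'}$ stays sound for $D^x_\kappa$: this rests on showing that the nestings $B^x_{\kappa'} \subseteq B^x_\kappa$ and $\langOf{\pi(\kappa'(x))} \subseteq \langOf{\pi(\kappa(x))}$ genuinely confine the extensions to the larger target, which in turn relies essentially on normalization to rule out ``cross-witness'' interference when $\kappa'$ and $\kappa$ assign different specialized names to the same element node. A secondary delicate point, in the forward direction, is verifying that re-substituting an improved box-design typing does not perturb the witnesses assigned elsewhere in $T$ --- again a consequence of normalization together with the fact that functions occupy leaf positions, so that the substitution is local to the subforest rooted at $x$.
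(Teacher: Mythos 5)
Your proposal follows the same route as the paper's proof: in the forward direction you take the $\kappa$ induced by the perfect typing, inherit box-level perfectness from global perfectness, and derive $\kappa'\leq\kappa$ from $(\tau_n')\leq(\tau_n)$; in the backward direction you assemble the global typing from the perfect box typings exactly as in Theorem~\ref{reductionThmEDTD}. The paper's own argument is only a two-sentence sketch in each direction (its backward direction does not even explicitly address perfectness), so your write-up is in fact more detailed than the original.

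That said, the one step you add to close the backward direction does not go through as stated. You claim that the components of $(\tau_n')$, being sound for $D^x_{\kappa'}$, are \emph{a fortiori} sound for $D^x_{\kappa}$ because $B^x_{\kappa'}\subseteq B^x_{\kappa}$ and $\langOf{\pi(\kappa'(x))}\subseteq\langOf{\pi(\kappa(x))}$. These two inclusions pull in opposite directions: soundness for a box-design is the containment $\ext_{B^x}(\cdot)\subseteq\langOf{\pi(\kappa(x))}$, and enlarging the box enlarges the \emph{left-hand} side. Concretely, the extension over $B^x_{\kappa}$ contains strings in which some non-function child $y_j$ carries a witness from $\kappa(y_j)\setminus\kappa'(y_j)$ combined with arbitrary strings from the function components of $(\tau_n')$; such combinations never occur in $\ext_T(\tau_n')$, so soundness of $(\tau_n')$ for $D$ gives no control over them, and nothing forces them into $\langOf{\pi(\kappa(x))}$. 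What your argument actually requires is that the perfect typing of $D^x_{\kappa}$ dominate every typing that is sound for some $D^x_{\kappa'}$ with $\kappa'\leq\kappa$ --- a strictly stronger property than perfectness for $D^x_{\kappa}$ alone. This is exactly the ``cross-witness interference'' you flag as the main obstacle, but the monotonicity inference you offer to dispose of it points the wrong way; closing it needs a real argument (for instance, tying $\kappa$ to the function induced by an actual sound typing and exchanging subtrees via Proposition~\ref{generalEDTD}) rather than the two set inclusions.
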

\begin{proof}
$(1) \Rightarrow (2)$: Let $(\tau_n)$ be the perfect typing of $D$ and $\kappa$ be the function induced by $(\tau_n)$. By Theorem \ref{edtd2strReduction}, each box-design $D^x_{\kappa}$ admits a local typing, and clearly it is perfect as $(\tau_n)$ is. Finally, we observe that since $(\tau_n') \leq (\tau_n)$, then $(\tau_n')$ can not induce in $\kappa'$ more elements than $(\tau_n)$.

\medskip

\noindent $(2) \Rightarrow (1)$: As we made for $\rSDTDs$, the typing $(\tau_n)$ that we can construct by the local typings of the various $D^x_{\kappa}$ (without renaming the specialized element names) together with the needful content models already in $\tau$ produces a type $\Tn$ structurally equivalent to $\tau$.
\end{proof}

\begin{corollary}\label{edtd2strPerfReduction}
Problem $\eperf{\rEDTD}$ for \textbf{normalized} $\rEDTDs$ is polynomial time reducible to $\eperfRB$.
\end{corollary}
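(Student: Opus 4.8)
The plan is to invoke the preceding characterization theorem (perfect typing for a normalized $\rEDTD$-design $D=\langle\tau,T(\f_n)\rangle$ exists iff there is a function $\kappa$ such that each box-design $D^x_{\kappa}$ admits a perfect typing \emph{and} $\kappa$ dominates every $\kappa'$ induced by a sound typing) and to turn it into an algorithm. The decisive observation, and the reason we get a \emph{deterministic} polynomial-time reduction here rather than the $\class{NP}^{\mathcal{C}}$ machine of Corollary~\ref{edtd2strReduction}, is that the dominance clause pins $\kappa$ down \emph{uniquely}: there is nothing left to guess. Concretely, if $D$ admits a perfect typing $(\tau_n)$ with induced function $\kappa$, then the dominance condition gives $\kappa'\le\kappa$ pointwise for every sound-induced $\kappa'$, hence $\kappa(x)\supseteq\bigcup_{(\tau_n')\text{ sound}}\kappa'(x)=:\kappa^*(x)$; but $(\tau_n)$ is itself sound, so $\kappa\le\kappa^*$ as well, whence $\kappa=\kappa^*$. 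Thus the only candidate that can satisfy clause~(2) is the pointwise-maximal function $\kappa^*$, and clause~(2b) is automatically met by $\kappa^*$ since it is the supremum of all sound-induced functions.

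Granting this, the reduction is: first compute $\kappa^*$, then for every non-function node $x$ of $T$ call the oracle $\eperfRB$ on the induced box-design $D^x_{\kappa^*}$, and answer \textbf{yes} iff all calls succeed. The soundness direction is the $(2)\Rightarrow(1)$ construction of the preceding theorem applied to $\kappa=\kappa^*$: composing the box-level perfect typings (without renaming specialized element names, exactly as for $\rSDTDs$) yields a typing whose $T(\tau_n)$ is structurally equivalent to $\tau$, which is perfect because each piece is and because $\kappa^*$ dominates every sound $\kappa'$. For completeness, if some $D^x_{\kappa^*}$ fails to admit a perfect typing, then by the uniqueness argument above no admissible $\kappa$ exists at all, so $D$ has no perfect typing. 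As there are only polynomially many nodes, this is a polynomial-time (Turing) reduction to $\eperfRB$, provided $\kappa^*$ itself is computable in polynomial time.

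That last provision is where the real work sits, so computing $\kappa^*$ is the step I would treat as the main obstacle. The idea is to read $\kappa^*(x)$ off directly, \emph{without} reference to any typing, as the set of specialized element names that can label $x$ across all sound extensions. Here normalization is essential: by the normalization lemma the distinct specialized names of a given element denote pairwise disjoint tree-languages, so in any extension the witness of a node is uniquely determined by the subtree hanging below it (this is also the content of Lemma~\ref{edtdImpLemma}, which guarantees the induced $\kappa$ is well defined as a set of ``friends''). Consequently $\kappa^*(x)$ is obtained by a purely automata-theoretic reachability computation on the vertical automaton $\dual{\tau}$ together with the content-model $\nFAs$: propagate the fixed ancestor context of the kernel $T$ downward from the root (whose only witness is $\tilde{s}_0$), while letting each function-child contribute, bottom-up, the full set of specialized names its root may realize under any sound type, and intersect these possibilities against the content models. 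Each such propagation step is a standard product/reachability construction and runs in polynomial time.

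The parts I expect to require the most care are (i) verifying that this reachability computation really yields $\bigcup_{(\tau_n')\text{ sound}}\kappa'(x)$ and not merely an over- or under-approximation — i.e.\ that every name it produces is witnessed by \emph{some} sound typing and every sound witness is produced — and (ii) checking that the typing reassembled from the box-level perfect typings in the $(2)\Rightarrow(1)$ step genuinely \emph{induces} $\kappa^*$ (so that the global object is perfect and not merely local). Both hinge on the normalized, disjoint-witness property, and I would discharge them by the same ancestor-guarded/subtree-exchange reasoning used for $\rEDTDs$ in Proposition~\ref{generalEDTD} and Lemma~\ref{edtdImpLemma}. Once these are in place, the stated polynomial-time reducibility of $\eperf{\rEDTD}$ to $\eperfRB$ follows.
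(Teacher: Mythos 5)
Your proposal is correct and follows essentially the same route as the paper: both observe that the domination clause forces the only viable $\kappa$ to be the pointwise supremum over all sound-induced functions, compute it deterministically by a top-down pass over $T$, and then call $\eperfRB$ once per non-function node. The paper's $\kappa$-computation is the concrete instance of your reachability sketch --- at each node it intersects the positionally-subscripted content model with a language that fixes the element-name positions and puts $\tilde{\Sigma}_j^*$ at function positions, then reads $\kappa$ off the alphabet of the intersection --- so no separate bottom-up contribution from the function children is needed.
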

\begin{proof}\ONLINE{
Let $\tau = \langle \calL, \pi, s \rangle$ be a type and $T(!f_n)$ be a kernel.

\medskip

\noindent \textsc{Proof Idea}: Build $\kappa$ in polynomial time and in a top-down style (this is the technical core of the proof) by assuming that a perfect typing exists. Thus, call $\eperfRB$ over $D^x_{\kappa}$ for each node $x$ of $T$ with $\lab{x} \in \calL$.

\medskip

\noindent More formally, let $x$ be the root of $T$ and $m$ be the number of its children. Consider the following steps:
\begin{enumerate}
  \item Build from $\childstr{x}$ a $\dRE$, that we call $r(x)$, as follows. For each $j$ in $[1..m]$,
      \begin{enumerate}[$\centerdot$]
        \item if \childstr{x}[j] is an element name, say $a$, then replace it with the set $\tilde{\Sigma}_j(a)$ (where the subscript means that all the specialized element names are renamed with $j$ as subscript);

        \item else, if \childstr{x}[j] is a function, then replace it with $\tilde{\Sigma}_j^*$ ($j$ has the same meaning as above);
      \end{enumerate}
  \item Build from $\pi(\tilde{s}_0)$ an $\mathcal{R}$-type $\tau(x)$ by replacing, in the alphabet of $\pi(\tilde{s}_0)$, each symbol of the form $\tilde{a}^{\ell}$ with $\tilde{a}^{\ell}_1, \ldots, \tilde{a}^{\ell}_m$.

  \item Perform the intersection $L = \langOf{r(x)} \cap \langOf{\tau(x)}$.

  \item For each child $y$ of $x$ having $a$ as label and position $j$, then $\kappa(y)$ contains all the elements of the form $\tilde{a}$ such that $\tilde{a}_j$ is in the alphabet of $L$.
\end{enumerate}
As we know $\kappa(y)$ for each child $y$ of $x$, then we repeat the previous steps for the children of $y$ by considering $\pi(\kappa(y))$ instead of $\pi(\tilde{s}_0)$. We will stop when we reach the leaves of $T$.

The algorithm is correct because if we have a look at the alphabet of $L$, we see that it contains, for each $j$ in $[1..m]$, exactly the specialized element names that we need to associate to the $j^{th}$ child of $x$ because are induced by all possible local typings. Intuitively, if the alphabet of $L$ contains, for instance, $\tilde{a}^{\ell}_1$ this means that there is a sound typing for $D$ that induces $\tilde{a}^{\ell}$ for the first child of $x$, and if the alphabet of $L$ does not contain, for instance, $\tilde{b}^{\ell}_3$ there is no sound typing for $D$ inducing $\tilde{b}^{\ell}$ for the third child of $x$.}
\end{proof}

Now we consider the remaining complexity result that does not require any reduction to strings.

\begin{theorem}
Problems $\loc{\rEDTD}$, $\ml{\rEDTD}$, and $\perf{\rEDTD}$ are at least as hard as $\equivalence{\rEDTD}$.
\end{theorem}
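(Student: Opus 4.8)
The plan is to re-use verbatim the logspace transformation $\varphi$ employed in the proof that $\eloc{\rEDTD}$ is at least as hard as $\equivalence{\rEDTD}$, and to turn it into a reduction for the \emph{verification} problems by additionally outputting a concrete candidate typing. Given two arbitrary $\rEDTDs$ $\tau'$ and $\tau''$, I would produce the design $D = \langle \tau, T \rangle$ with $T = s_0(\f_1 \, c \, \f_2)$ and $\langOf{\tau} = s_0(a\,c\,d(\langOf{\tau'}) + b\,c\,d(\langOf{\tau''}))$ exactly as there, and I would attach to it the typing $(\tau_1, \tau_2)$ given by $\langOf{\tau_1} = s_1(a+b)$ and $\langOf{\tau_2} = s_2(d(\langOf{\tau'}))$. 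All the new content models are $\mathcal{R}$-types (even $\dREs$), so the instance is well formed for every $\mathcal{R}$, and both $\tau_1,\tau_2$ are constructible from $\tau'$ in logarithmic space.

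First I would check that $(\tau_1,\tau_2)$ is always a legal input for the verification problems, i.e.\ that it is $D$-consistent, regardless of the relationship between $\tau'$ and $\tau''$. This is immediate from Corollary \ref{EDTDconstTime}: since $\consistent{\rEDTD}$ always answers ``yes'', $T(\tau_1,\tau_2)$ is definable by an $\rEDTD$ for every $\mathcal{R}$, so the pair $\langle D,(\tau_1,\tau_2)\rangle$ is a valid instance of $\loc{\rEDTD}$, $\ml{\rEDTD}$ and $\perf{\rEDTD}$.

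The core of the argument is a direct computation of the two tree languages. Materializing $T$ under $(\tau_1,\tau_2)$ replaces $\f_1$ by a single leaf $a$ or $b$ and $\f_2$ by the subtree $d(t')$ with $t' \in \langOf{\tau'}$, so $\ext_T(\tau_1,\tau_2) = s_0((a+b)\,c\,d(\langOf{\tau'}))$, whereas $\langOf{\tau} = s_0(a\,c\,d(\langOf{\tau'}) + b\,c\,d(\langOf{\tau''}))$. The $a$-branches already coincide, hence these two languages are equal precisely when the $b$-branches agree, i.e.\ precisely when $\langOf{\tau'} = \langOf{\tau''}$. Thus $(\tau_1,\tau_2)$ is \emph{local} for $D$ if and only if $\tau' \equiv \tau''$, which already yields the reduction from $\equivalence{\rEDTD}$ to $\loc{\rEDTD}$.

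Finally I would lift this to the two stronger notions. Since being maximal local or perfect both entail being local, if $\tau' \not\equiv \tau''$ then $(\tau_1,\tau_2)$ is not even local, hence neither maximal local nor perfect. Conversely, when $\tau' \equiv \tau''$ the same analysis as in the existence proof shows that $(\tau_1,\tau_2)$ is in fact the unique maximal local typing and is even perfect (so it is maximal local, by the theorem stating that every perfect typing is unique maximal local). Therefore, for this fixed instance, the three properties ``local'', ``maximal local'' and ``perfect'' all hold if and only if $\tau'\equiv\tau''$, and the single reduction $\tau',\tau'' \mapsto \langle D,(\tau_1,\tau_2)\rangle$ establishes all three hardness claims simultaneously. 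I expect the only delicate point to be the ``only if'' direction — certifying that a failure of equivalence genuinely destroys \emph{locality} (and not merely perfection) — which is exactly what the explicit comparison of the $b$-branches secures.
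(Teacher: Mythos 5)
Your argument is sound, but it takes a heavier route than the paper does. For the verification problems the paper uses a much simpler gadget: it sets $T = s_0(\f_1)$, lets the global type be $\tau$ with $\langOf{\tau} = s_0(\langOf{\tau'})$, and hands the verifier the candidate typing $\tau_1$ with $\langOf{\tau_1} = s_1(\langOf{\tau''})$; then $T(\tau_1)$ is exactly $s_0(\langOf{\tau''})$, so locality, maximal locality and perfection each hold if and only if $\tau' \equiv \tau''$, with no case analysis needed (since $T$ is a single function node under the root, the typing is trivially the unique candidate, and perfection/maximality collapse onto locality). You instead recycle the two-function gadget $T = s_0(\f_1\,c\,\f_2)$ from the proof for the existence problems $\eloc{\rEDTD}$, $\eml{\rEDTD}$, $\eperf{\rEDTD}$, and supply $(\tau_1,\tau_2)$ with $\langOf{\tau_1}=s_1(a+b)$, $\langOf{\tau_2}=s_2(d(\langOf{\tau'}))$ as the candidate. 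Your computation of $\ext_T(\tau_1,\tau_2)$ versus $\langOf{\tau}$ is correct in both directions (a witness in $\langOf{\tau''}\setminus\langOf{\tau'}$ kills completeness, one in $\langOf{\tau'}\setminus\langOf{\tau''}$ kills soundness via the $b$-branch), the $D$-consistency of the candidate is indeed guaranteed by Corollary~\ref{EDTDconstTime}, and the lift to maximal local and perfect correctly leans on the observation, already made in the existence proof, that when $\tau'\equiv\tau''$ this typing is the unique maximal local and perfect one. What your approach buys is a single construction serving both the existence and the verification hardness claims; what it costs is the extra bookkeeping with the $a,b,c,d$ symbols that the paper's one-function reduction avoids entirely. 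Both are valid logspace reductions from $\equivalence{\rEDTD}$.
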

\begin{proof}
We define a logspace transformation $\varphi$ from $\equivalence{\rEDTD}$ to $\loc{\rEDTD}$. Afterwards, we show that the statement also holds for the other two problems. Let $\tau',\tau''$ be two arbitrary $\rEDTDs$. The application of $\varphi$ to the this pair produces the design $D = \langle \tau, T \rangle$ and the typing $\tau_1$, where $\langOf{\tau} = s_0(\langOf{\tau'})$,  $T = s_0(\f_1)$, and $\langOf{\tau_1} = s_1(\langOf{\tau''})$. Since $T(\tau_1)$ is exactly $s_0(\langOf{\tau''})$, it is clear that $\tau \equiv T(\tau_1) $ if and only if $\tau' \equiv \tau''$. Finally, we just notice that $\tau' \equiv \tau''$ iff $\tau_1$ is both perfect and maximal local as $T$ consists of just a function node other than the root.
\end{proof}

\begin{corollary}\label{CorolLocExpHard}
Problems $\loc{\xEDTD{\nFA}}$, $\ml{\xEDTD{\nFA}}$, and $\perf{\xEDTD{\nFA}}$ are $\classh{EXPTIME}$.
\end{corollary}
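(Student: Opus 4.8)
The plan is to obtain this as a direct consequence of the immediately preceding theorem together with the known completeness of equivalence for extended DTDs. First I would instantiate that theorem at $\mathcal{R} = \nFA$: it asserts that each of $\loc{\xEDTD{\nFA}}$, $\ml{\xEDTD{\nFA}}$, and $\perf{\xEDTD{\nFA}}$ is at least as hard as $\equivalence{\xEDTD{\nFA}}$, and—what matters for a hardness transfer—the witnessing reduction $\varphi$ is a logspace (many-one) transformation.

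Next I would invoke Theorem~\ref{EquivEDTDs}, which states that $\equivalence{\xEDTD{\nFA}}$ is $\classc{EXPTIME}$; in particular it is $\classh{EXPTIME}$. Since $\class{EXPTIME}$ is closed under logspace many-one reductions, the transformation $\varphi$ from $\equivalence{\xEDTD{\nFA}}$ into each of the three target problems immediately propagates $\classh{EXPTIME}$-hardness to all of them, which is exactly the claimed result. In essence the corollary is a transitivity-of-reductions observation: logspace-hardness for $\class{EXPTIME}$ of the source problem is inherited by any problem to which the source logspace-reduces.

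There is consequently no genuine technical obstacle here; the only point deserving a moment of care is confirming that the reduction exhibited in the preceding theorem is really computable in logarithmic space, so that hardness for $\class{EXPTIME}$ (and not merely for some larger class) is preserved. This is evident from its construction, as $\varphi$ only builds the fixed kernel $T = s_0(\f_1)$ and embeds the two input $\rEDTDs$ into the content models of $\tau$ and $\tau_1$ via $\langOf{\tau} = s_0(\langOf{\tau'})$ and $\langOf{\tau_1} = s_1(\langOf{\tau''})$, with a constant amount of relabeling—plainly a logspace operation. I would state the corollary's proof as exactly this two-line combination, citing the preceding theorem for the reduction and Theorem~\ref{EquivEDTDs} for the source hardness.
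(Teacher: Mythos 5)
Your proposal is correct and matches the paper's intent exactly: the corollary is stated without an explicit proof precisely because it follows by combining the logspace reduction of the immediately preceding theorem with the $\classh{EXPTIME}$ part of Theorem~\ref{EquivEDTDs}, which is what you do. Your added check that the reduction (fixed kernel $T = s_0(\f_1)$ plus embedding the two input types under fresh roots) is logspace-computable is the only point requiring care, and you handle it correctly.
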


\begin{theorem}\label{LocEdtdNfaExpCompl}
Problem $\loc{\xEDTD{\nFA}}$ is $\classc{EXPTIME}$.
\end{theorem}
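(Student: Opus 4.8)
The plan is to prove only membership in $\class{EXPTIME}$, since $\classh{EXPTIME}$-hardness of $\loc{\xEDTD{\nFA}}$ is already granted by Corollary \ref{CorolLocExpHard}; combining the two bounds yields $\classc{EXPTIME}$.

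First I would unfold the definition of locality. An instance of $\loc{\xEDTD{\nFA}}$ is a design $D = \langle \tau, T(\f_n)\rangle$ together with a $D$-consistent typing $(\tau_n)$, and we must decide whether $(\tau_n)$ is local, i.e.\ whether $\extTn = \langOf{\tau}$. The key observation is that the left-hand side is already captured by a single $\xEDTD{\nFA}$-type: by Theorem \ref{TnIsWellDef} we have $\langOf{\Tn} = \extTn$, and by Proposition \ref{TnPolyTimeESize} the type $\Tn$ is an $\xEDTD{\nFA}$ that can be built from $T$ and $(\tau_n)$ in polynomial time, with size linear in the input pair. Hence locality is equivalent to the single test $\Tn \equiv \tau$, which is exactly an instance of $\equivalence{\xEDTD{\nFA}}$ on two types whose total size is polynomial in the size of the original input.

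Next I would invoke the complexity of that equivalence test. By Theorem \ref{EquivEDTDs}, $\equivalence{\xEDTD{\nFA}}$ is in $\class{EXPTIME}$. Since constructing $\Tn$ is a polynomial-time preprocessing step that inflates the instance only linearly, the overall procedure---build $\Tn$, then decide $\Tn \equiv \tau$---runs in exponential time, placing $\loc{\xEDTD{\nFA}}$ in $\class{EXPTIME}$. Together with Corollary \ref{CorolLocExpHard}, this establishes $\classc{EXPTIME}$.

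There is essentially no hard combinatorial core here; the only point requiring care is that $D$-consistency guarantees both $\tau$ and $\Tn$ are legitimate $\xEDTD{\nFA}$-types defining tree languages, so the query $\Tn \equiv \tau$ is well-posed and its answer coincides with locality. The main obstacle, if any, is conceptual rather than technical: recognizing that the bottom-up machinery of Section \ref{rEDTDsTyping} (Proposition \ref{TnPolyTimeESize} and Theorem \ref{TnIsWellDef}) collapses the top-down locality question into one equivalence test, after which the upper bound is immediate from the known complexity of $\xEDTD{\nFA}$ equivalence.
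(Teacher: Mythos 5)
Your proof is correct and follows exactly the paper's argument: build $\Tn$ in polynomial time via Proposition \ref{TnPolyTimeESize} (with semantic correctness from Theorem \ref{TnIsWellDef}), reduce locality to the single equivalence test $\Tn \equiv \tau$, decide it in $\class{EXPTIME}$ by Theorem \ref{EquivEDTDs}, and obtain hardness from Corollary \ref{CorolLocExpHard}. Your write-up is in fact slightly more explicit than the paper's in citing Theorem \ref{TnIsWellDef} for why the equivalence test coincides with locality.
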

\begin{proof}
\textbf{(Membership)} Let $D = \langle \tau, T \rangle$ be an $\xEDTD{\nFA}$-design and $(\tau_n)$ be a $D$-consistent typing. Build $T(\tau_n)$ in polynomial time (by Proposition \ref{TnPolyTimeESize}) and check in exponential time if $T(\tau_n) \equiv \tau$ (by Theorem \ref{EquivEDTDs}).

\medskip

\noindent \textbf{(Hardness)} By Corollary \ref{CorolLocExpHard}.
\end{proof}

\section{The typing problems for words}\label{word}

We study in this section the typing problems for words. (Recall that most of our problems for trees has been reduced to problems for words.) We present a number of complexity results. We leave for the next section, two issues, namely $\perf{\nFA}$ and $\eperf{\nFA}$, for which we will need a rather complicated automata construction.  We start by recalling a definition and a result that we will use further.

\begin{theorem}[\citep{StockmeyerMeyer73}]
$\equivalence{\nFA}$ is $\classc{PSPACE}$.
\end{theorem}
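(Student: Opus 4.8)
The plan is to establish the two directions of $\classc{PSPACE}$ separately: membership in $\class{PSPACE}$ and $\class{PSPACE}$-hardness.

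For membership I would argue that the complementary problem, \emph{non-equivalence}, lies in $\class{NPSPACE}$, and then invoke Savitch's theorem together with the closure of deterministic space under complement. Given two $\nFAs$ $\autom{A}_1$ and $\autom{A}_2$, we have $\langOf{\autom{A}_1} \neq \langOf{\autom{A}_2}$ iff some word $w$ is accepted by exactly one of them. The key observation is that one can simulate the subset constructions of both machines \emph{on the fly}: while reading $w$ symbol by symbol, it suffices to maintain, for each machine, the set of states reachable on the prefix read so far, and each such set is a subset of the (polynomially many) states and hence fits in polynomial space. A nondeterministic procedure then guesses $w$ one symbol at a time, updates both reachable-state sets, and accepts as soon as one set meets the corresponding final states while the other does not. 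Since two inequivalent $\nFAs$ already differ on a word no longer than the product of the sizes of their determinizations, a counter of polynomially many bits bounds the guessing, so the whole procedure runs in $\class{NPSPACE}$. By Savitch, $\class{NPSPACE} = \class{PSPACE}$, and since $\class{PSPACE}$ is closed under complement, $\equivalence{\nFA}$ itself is in $\class{PSPACE}$.

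For hardness it suffices to reduce a generic polynomial-space Turing machine to the special case of equivalence against the trivial automaton for $\Sigma^*$ (that is, to \emph{universality}). Given a machine $M$ deciding a $\class{PSPACE}$ language and using space $p(|x|)$ on input $x$, I would encode a halting accepting computation of $M$ on $x$ as a string $\#C_0\#C_1\#\cdots\#C_k\#$ over a suitable alphabet, where each $C_i$ is a configuration written in exactly $p(|x|)$ cells. The reduction then builds, in logarithmic space, an $\nFA$ $\autom{A}$ accepting precisely those strings that are \emph{not} valid accepting computations of $M$ on $x$. This is possible because every way of failing to be such a computation is a \emph{local} defect: the string is malformed, $C_0$ is not the initial configuration, $C_k$ is not accepting, or some consecutive pair $C_i,C_{i+1}$ is inconsistent with the transition function. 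The crucial case is the last one: since a single step of $M$ rewrites cell $j$ of a configuration using only cells $j-1,j,j+1$ of the previous one, the automaton can nondeterministically guess the position of a faulty cell and, remembering a constant-width window together with an offset of $p(|x|)+1$ symbols, verify the inconsistency. Consequently $\langOf{\autom{A}} = \Sigma^*$ exactly when $M$ has no accepting computation on $x$, i.e.\ exactly when $M$ rejects $x$; closure of $\class{PSPACE}$ under complement turns this into a reduction to $\equivalence{\nFA}$.

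The routine bookkeeping --- the precise alphabet, the gadgets detecting malformed strings, and the initial/final checks --- is standard and I would only sketch it. I expect the one genuinely delicate point to be the transition-consistency gadget in the hardness direction, where one must confirm that guessing a single erroneous cell, matched against the aligned cell one configuration later, detects \emph{all} invalid steps while keeping the number of states polynomial in $p(|x|)$ (the offset counter of length $p(|x|)+1$ being the dominant contribution).
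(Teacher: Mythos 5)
Your proof is correct and is precisely the classical Stockmeyer--Meyer argument (on-the-fly subset construction for the $\class{PSPACE}$ upper bound, and the reduction of valid-computation detection to $\nFA$ non-universality for hardness) that the paper relies on: the paper gives no proof of its own here and simply cites \citep{StockmeyerMeyer73}, which contains exactly this argument. The only detail worth adding for this paper's setting is that its $\nFAs$ admit $\varepsilon$-transitions, so the reachable-state sets in your membership argument should be taken $\varepsilon$-closed, which does not affect the space bound.
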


The hardness of the $\equivalence{\nFA}$ problem is used to show some hardness results of our problems.

\begin{theorem}\label{check-nfa-hard}
Problems $\loc{\nFA}$, $\ml{\nFA}$, $\perf{\nFA}$ are $\classh{PSPACE}$.
\end{theorem}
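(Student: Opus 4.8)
The plan is to reduce from $\equivalence{\nFA}$, which is $\classc{PSPACE}$ by the preceding theorem, to all three verification problems at once through a single construction. Given an instance of $\equivalence{\nFA}$, namely two $\nFAs$ $\autom{A}_1$ and $\autom{A}_2$ over $\Sigma$, I would form the word design $D = \langle \tau, w \rangle$ whose kernel is the trivial string $w = \f_1$ (so $w_0 = w_1 = \varepsilon$) and whose target is $\tau = \autom{A}_1$, together with the candidate typing $(\tau_1)$ given by $\tau_1 = \autom{A}_2$. Both are produced in logarithmic space. Because the kernel is a single function with no surrounding letters, $\ext_w(\tau_1) = \langOf{\tau_1}$, and hence $\langOf{w(\tau_1)} = \langOf{\autom{A}_2}$.

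The crux is that for such a one-function kernel the three notions all collapse onto equivalence with $\tau$. First, $(\tau_1)$ is \emph{local} iff $\ext_w(\tau_1) = \langOf{\tau}$, i.e.\ iff $\langOf{\autom{A}_2} = \langOf{\autom{A}_1}$, i.e.\ iff $\autom{A}_1 \equiv \autom{A}_2$. Next, every sound typing $(\tau_1')$ for $D$ is characterised exactly by $\langOf{\tau_1'} \subseteq \langOf{\autom{A}_1}$ (again since $\ext_w(\tau_1') = \langOf{\tau_1'}$), so the sound typings are downward closed and their unique $\leq$-maximum is the one with $\langOf{\tau_1'} = \langOf{\autom{A}_1}$. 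Consequently, if $\autom{A}_1 \equiv \autom{A}_2$ then $(\tau_1)$ is local, has no strictly larger sound typing (so is maximal, hence maximal local), and dominates every sound typing (so is perfect); and if $\autom{A}_1 \not\equiv \autom{A}_2$ then $(\tau_1)$ fails completeness and is therefore neither local, nor maximal local, nor perfect. Thus $(\tau_1)$ has each of the three properties iff $\autom{A}_1 \equiv \autom{A}_2$, so each of $\loc{\nFA}$, $\ml{\nFA}$, and $\perf{\nFA}$ is $\classh{PSPACE}$.

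The remaining points require care but are not a real obstacle. I must verify that the supplied typing is $D$-consistent, as the problem statements demand: for $\mathcal{R} = \nFA$ this is automatic, since $\langOf{w(\tau_1)} = \langOf{\autom{A}_2}$ is regular and thus definable by an $\nFA$-type, so $w(\tau_1)$ is consistent and $\tau_1$ is itself an $\nFA$-type. I would also record explicitly that ``maximal local'' abbreviates ``local and maximal'', so that the failure of completeness in the negative case simultaneously defeats the local, maximal-local, and perfect properties, which is exactly what lets one reduction serve all three problems.
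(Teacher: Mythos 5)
Your proposal is correct and follows essentially the same route as the paper: both reduce $\equivalence{\nFA}$ to all three verification problems via the one-function kernel $w=\f_1$, taking one automaton as the target type and the other as the candidate typing, and observing that for such a kernel locality, maximal locality, and perfection all collapse to language equivalence. Your write-up merely spells out the downward-closure of sound typings and the $D$-consistency check that the paper leaves implicit (and your phrase ``fails completeness'' in the negative case should really be ``fails soundness or completeness,'' though the conclusion of non-locality is unaffected).
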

\begin{proof}\ONLINE{
We define a logspace transformation $\varphi$, in such a way that
\[
    \equivalence{\nFA} \leq^\mathbf{\mathbf{L}}_m  \loc{\nFA}
\]
Afterwards, we show that the statement also holds for the other two problems. Let $\autom{A},$ $\autom{A}_1$ be two arbitrary $\nFAs$. The application of $\varphi$ to the pair $\autom{A},$ $\autom{A}_1$ produces the design $\langle \tau, w \rangle$ and the typing $\tau_1$, where $\tau = \autom{A}$, $w = \f_1$ and $\tau_1 = \autom{A}_1$. Since $w(\tau_1) = \autom{A}_1$, it is clear that $\tau \equiv w(\tau_1) $ if and only if $\autom{A} \equiv \autom{A}_1$. Finally, we just notice that $\autom{A} \equiv \autom{A}_1$ if and only if $\tau_1$ is both perfect and maximal local as $w$ consists of just a function.}
\end{proof}

We now consider upper bounds. Section \ref{perfect} will show that $\perf{\nFA}$ is in $\class{PSPACE}$. We next show that $\loc{\nFA}$ is.

\begin{theorem}\label{LocNfaPspaceComplete}
$\loc{\nFA}$ is in $\class{PSPACE}$ (so it is $\classc{PSPACE}$).
\end{theorem}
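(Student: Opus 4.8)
The plan is to reduce the problem directly to $\equivalence{\nFA}$. By definition, the given $D$-consistent typing $(\tau_n)$ is local for the design $\langle \tau, w \rangle$ precisely when $\extwn = \langOf{\tau}$. Recall from Section~\ref{generalSetting} that $\wn$ is the $\nFA$ satisfying $\langOf{\wn} = \extwn$; since here $\mathcal{R} = \nFA$, this automaton is obtained by taking the kernel string $w = w_0 \f_1 w_1 \ldots \f_n w_n$ and splicing in, at each docking point $\f_i$, a copy of the $\nFA$ for $\langOf{\tau_i}$ (with its extra root symbol $s_i$ removed). This is the string analogue of Proposition~\ref{TnPolyTimeESize}: it clearly runs in polynomial time and produces an $\nFA$ whose size is linear in the input pair. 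Note also that $D$-consistency is automatic in the word case, since any concatenation of regular languages is regular, so $\wn$ always exists as an $\nFA$.

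First I would construct $\wn$ as above. Locality of $(\tau_n)$ then amounts exactly to the equivalence $\langOf{\wn} = \langOf{\tau}$ of two $\nFAs$ (recall $\tau$ is itself an $\nFA$, being an $\mathcal{R}$-type with $\mathcal{R} = \nFA$). Since $\wn$ is computable in polynomial time, its size is polynomially bounded in the input, so composing this construction with the $\class{PSPACE}$ decision procedure for $\equivalence{\nFA}$ keeps the whole test within $\class{PSPACE}$. Hence $\loc{\nFA}$ is in $\class{PSPACE}$, and together with the $\classh{PSPACE}$ lower bound of Theorem~\ref{check-nfa-hard} we obtain $\classc{PSPACE}$.

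There is essentially no hard step here; the only points needing care are bookkeeping. One must verify that the construction of $\wn$ is genuinely polynomial, so that feeding its output into a $\class{PSPACE}$ subroutine stays in $\class{PSPACE}$. If preferred, one may decompose the equivalence test into the two containments $\langOf{\wn} \subseteq \langOf{\tau}$ (soundness) and $\langOf{\tau} \subseteq \langOf{\wn}$ (completeness), each of which is an $\nFA$ containment / nonuniversality check solvable in $\class{PSPACE}$ by an on-the-fly subset-construction search that never explicitly materialises the exponential-size determinisations of $\tau$ or $\wn$. Either way the membership proof is immediate from the $\class{PSPACE}$-membership of $\equivalence{\nFA}$.
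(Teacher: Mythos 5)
Your proposal is correct and follows essentially the same route as the paper's own proof: build the linear-size $\nFA$ $\wn$ and then decide $\wn \equiv \tau$ via the $\class{PSPACE}$ procedure for $\equivalence{\nFA}$. The extra remarks on decomposing the test into two containments are fine but not needed.
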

\begin{proof}
Let $w(\f_n)$ be a kernel string, $\tau$ be an $\nFA$, and $(\tau_n)$ be a typing. Since the new automaton $w(\tau_n)$ has size $\mathcal{O}(\|w\| + |(\tau_n)|)$, we can check in polynomial space if $w(\tau_n) \equiv \tau$.
\end{proof}

The proof that also $\ml{\nFA}$ is in $\class{PSPACE}$ requires more technical insights and it is deferred to Section \ref{ComplexityForTrees}.

Let us turn to the hardness of the $\exists$-versions of the problems.

\begin{theorem}\label{nfa-locHard}
$\eloc{nFA}$, $\eml{nFA}$, and $\eperf{nFA}$ are $\classh{PSPACE}$.
\end{theorem}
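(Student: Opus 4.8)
The plan is to give a single polynomial-time many-one reduction from the universality problem for $\nFAs$ — deciding whether $\langOfAutom{A}=\Sigma^*$ for a given $\nFA$ $\autom{A}$ over $\Sigma$, which is $\classc{PSPACE}$ \citep{StockmeyerMeyer73} — that simultaneously witnesses hardness of all three existence problems. Given $\autom{A}$, I would fix a fresh separator $\#\notin\Sigma$, set $\Gamma=\Sigma\cup\{\#\}$, and output the word design $D=\langle\tau,w\rangle$ with kernel $w=\f_1\,\#\,\f_2$ and target $\tau$ an $\nFA$ (built in polynomial time from $\autom{A}$) defining
\[
   \langOf{\tau} \ = \ R \ := \ \big(\langOfAutom{A}\,\#\,\Sigma^*\big)\ \cup\ \big(\Sigma^*\,\#\,\langOfAutom{A}\big).
\]
Since for a word kernel $\ext_w(\tau_1,\tau_2)=\langOf{\tau_1}\circ\{\#\}\circ\langOf{\tau_2}$, and every $\nFA$-typing is automatically $D$-consistent, a typing $(\tau_1,\tau_2)$ is local exactly when $\langOf{\tau_1}\,\#\,\langOf{\tau_2}=R$. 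The core of the argument is then the claim that $D$ admits a local typing iff $\langOfAutom{A}=\Sigma^*$.

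To prove the claim I would argue both directions explicitly. Writing $L_i=\langOf{\tau_i}$ and recalling that (reduced) types define \emph{nonempty} languages, soundness $L_1\,\#\,L_2\subseteq R\subseteq\Sigma^*\#\Sigma^*$ together with $L_1,L_2\neq\emptyset$ first forces $L_1,L_2\subseteq\Sigma^*$, because any $\#$ occurring inside some $L_i$ would, combined with the explicit separator, produce a product word with two $\#$'s, hence not in $R$. Completeness $R\subseteq L_1\,\#\,L_2$ then drives both components up to the top: fixing any $a_0\in\langOfAutom{A}$ (nonempty whenever a local typing exists, since a product of nonempty languages is nonempty and $R=\emptyset$ iff $\langOfAutom{A}=\emptyset$), the words $a_0\#u$ and $u\#a_0$ lie in $R=L_1\#L_2$ for every $u\in\Sigma^*$, and unique-separator parsing yields $u\in L_2$ and $u\in L_1$, so $L_1=L_2=\Sigma^*$. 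Hence $R=\Sigma^*\#\Sigma^*$, which by the definition of $R$ means precisely $\langOfAutom{A}=\Sigma^*$. The converse is immediate: if $\langOfAutom{A}=\Sigma^*$ then $R=\Sigma^*\#\Sigma^*$ and the typing with $\langOf{\tau_1}=\langOf{\tau_2}=\Sigma^*$ is local.

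To make this one reduction serve $\eml{\nFA}$ and $\eperf{\nFA}$ as well, I would observe that in the ``yes'' case the local typing $(\Sigma^*,\Sigma^*)$ is in fact the \emph{maximum} of all sound typings: the same soundness-plus-nonemptiness argument shows that every sound typing has both components contained in $\Sigma^*$, so $(\Sigma^*,\Sigma^*)$ dominates them all and is therefore simultaneously the unique maximal local typing and a perfect typing. In the ``no'' case no local typing exists, so a fortiori neither a maximal local nor a perfect one does (both refine localness). Consequently each of $\eloc{\nFA}$, $\eml{\nFA}$, $\eperf{\nFA}$ answers ``yes'' on $D$ iff $\langOfAutom{A}=\Sigma^*$, yielding $\classh{PSPACE}$ for all three.

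The step I expect to be most delicate is exactly this domination/perfectness bookkeeping rather than the localness equivalence. It is essential that types range only over nonempty (reduced) languages: otherwise degenerate typings such as $(\Gamma^*,\emptyset)$ would be sound — their extension is $\emptyset\subseteq R$ — yet not dominated by $(\Sigma^*,\Sigma^*)$, which would destroy perfectness and decouple $\eperf{\nFA}$ from the universality condition. I would therefore state the nonemptiness convention explicitly and verify that it is precisely what rules out such pathological sound typings, so that the single construction remains faithful for $\eperf{\nFA}$ and not merely for $\eloc{\nFA}$ and $\eml{\nFA}$.
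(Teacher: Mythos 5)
Your reduction is correct, but it takes a genuinely different route from the paper. The paper reduces from $\equivalence{\nFA}$: given $\autom{A}_1,\autom{A}_2$ it builds the kernel $w=\f_1\,c\,\f_2$ and the target $ac\autom{A}_1+bc\autom{A}_2$, so that any local typing must have the shape $((aX_1+bX_2),Y)$; the fresh separator $c$ forces $X_1,X_2$ to be $\{\varepsilon\}$ and hence $Y\equiv\autom{A}_1\equiv\autom{A}_2$, with $((a+b),\autom{A}_1)$ being perfect in the yes-case. You instead reduce from $\nFA$-universality (the special case of equivalence against $\Sigma^*$, equally $\classc{PSPACE}$) via the symmetric target $\langOfAutom{A}\#\Sigma^*\cup\Sigma^*\#\langOfAutom{A}$, where completeness plus unique-separator parsing pumps both components up to $\Sigma^*$. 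Both constructions produce a single design that simultaneously witnesses hardness of $\eloc{nFA}$, $\eml{nFA}$ and $\eperf{nFA}$, and both hinge on the same two ingredients: a fresh separator that pins down the factorization, and the convention that types denote nonempty languages (without which degenerate sound typings such as one with an empty component would not be dominated and the perfect-typing part of the argument would break for the paper's reduction just as for yours). Your version is slightly more self-contained in its correctness argument and makes the nonemptiness convention explicit, which is a point in its favour; the paper's version is marginally more general in that it reduces from full equivalence rather than its universality special case, but this buys nothing complexity-wise. Either proof is acceptable.
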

\begin{proof}\ONLINE{
We define a logspace transformation $\varphi$, in such a way that the following relations hold:
\begin{enumerate}[$(1)$]
  \item $\equivalence{\nFA} \leq^\mathbf{\mathbf{L}}_m \eloc{\nFA}$;

  \item $\equivalence{\nFA} \leq^\mathbf{\mathbf{L}}_m \eml{\nFA}$;

  \item $\equivalence{\nFA} \leq^\mathbf{\mathbf{L}}_m \eperf{\nFA}$.
\end{enumerate}
Let $\autom{A}_1 = \langle K_1, \Sigma_1, \Delta_1, s_1, F_1\rangle$, $\autom{A}_2 = \langle K_2, \Sigma_2, \Delta_2, s_2, F_2\rangle$ be two $\nFAs$. The application of $\varphi$ to the pair $(\autom{A}_1,\autom{A}_2)$ produces the design $D = \langle \autom{A}, w \rangle$ where
\begin{enumerate}[ \ $\centerdot$]
  \item $w = \f_1 \ c \ \f_2$, with $c$ being a fresh terminal symbol which does not belong to $(\Sigma_1 \cup \Sigma_2)$;

  \item  while automaton $\autom{A} = \langle K, \Sigma, \Delta, s, F\rangle$ is defined as follows: (i) $K =$ $K_1$ $\cup$ $K_2$ $\cup$ $\{s, p_c, q_c\}$; (ii) $\Sigma =$ $\Sigma_1$ $\cup$ $\Sigma_2$ $\cup$ $\{a,b,c\}$; (iii) $\Delta =$ $\Delta_1$ $\cup$ $\Delta_2$ $\cup$ $\{(s,a,p_c),$ $(s,b,p_c),$ $(p_c,c,q_c),$ $(q_c, \varepsilon, s_1),$ $(q_c, \varepsilon, s_2)\}$; (iv) $F = F_1 \cup F_2$.
\end{enumerate}
Intuitively, if we consider $\autom{A}_1$ and $\autom{A}_2$ as $\nREs$, then $\autom{A}$ is $(ac\autom{A}_1 + bc\autom{A}_2)$.

We claim that there is a local typing (similarly, maximal local, or perfect) for $D$ if and only if $\autom{A}_1 \equiv \autom{A}_2$. First of all, we observe that transformation $\varphi$ is extremely simple and it is clearly in logspace. In fact, string $w$ is a constant, while the choice of a terminal symbol which does not appear in $\autom{A}_1$ nor in $\autom{A}_2$ can be done in logspace, and also $\autom{A}$ can be obtained by merging $\autom{A}_1$ and $\autom{A}_2$ with a constant number of transitions. We prove the statement for (1) and we just notice that whenever there is a local typing for $D$, then the typing $((a+b),\autom{A}_1)$ is perfect (thus, also maximal).

\medskip

\noindent $(\Rightarrow$) \emph{If there is a local typing for $D$ then $\autom{A}_1 \equiv \autom{A}_2$}. Since $\autom{A} = (ac\autom{A}_1 + bc\autom{A}_2)$, then \langOf{ac\autom{A}_1} and \langOf{bc\autom{A}_2} form a partition of \langOf{\autom{A}}. In this case, any local typing must have the following form $((aX_1+bX_2), Y)$ where $X_1, X_2, Y$ are $\nFAs$. Clearly, all the strings accepted by $w$ are obtained by $aX_1cY$ and $bX_2cY$. Then $c\autom{A}_1 \equiv X_1cY$ and $c\autom{A}_2 \equiv X_2cY$ must hold. But since any string in $\langOf{\autom{A}_1}$ or $\langOf{\autom{A}_2}$ does not start with $c$, then necessarily $\langOf{X_1} = \langOf{X_2} = \varepsilon$. This way, $\autom{A}_1 \equiv Y$ and $\autom{A}_2 \equiv Y$ and then $\autom{A}_1 \equiv \autom{A}_2$.

\medskip

\noindent ($\Leftarrow$) \emph{If $\autom{A}_1 \equiv \autom{A}_2$, there is a local typing for $D$}. This part of the proof is trivial because $((a+b), \autom{A}_1)$ always represents a local typing for $D$.}
\end{proof}

We now have lower bounds for all these problems and some upper bounds. We will derive missing upper bounds using the construction of automata that we call ``perfect'' for given design problems.

\section{Perfect automaton for words}\label{perfect}
We next present the construction of the {\em perfect automaton} for a design word problem. The perfect automaton has the property that if a perfect typing exists for this problem, it is ``highlighted'' by the automaton.  This will provide a $\class{PSPACE}$ procedure for finding this perfect typing if it exists.

Let $\autom{A} = \langle K, \Sigma, \Delta, s, F\rangle$ be an $\nFA$. We can assume w.l.o.g. that it has no $\varepsilon$-transition. Given two states $q_i, q_f$ in $K$, a string $w$ in $\Sigma^*$ is
said to be \emph{delimited} in $\autom{A}$ by $q_i$ and $q_f$ if
$(q_i,w,q_f) \in \Delta^*$. By exploiting this notion, the sets of
all the states delimiting $w$ in $\autom{A}$ are defined as follows:
\begin{displaymath}
     Ini(\autom{A},w) = \{q_i \in K: \ \exists q_f \in K \ \textrm{s.t.}
\ (q_i,w,q_f) \in
     \Delta^*\}
\end{displaymath}
\begin{displaymath}
     Fin(\autom{A},w) = \{q_f \in K: \ \exists q_i \in K \ \textrm{s.t.}
\ (q_i,w,q_f) \in \Delta^*\}
\end{displaymath}
\noindent In particular, if $w = \varepsilon$, these two sets are
$Ini(\autom{A},\varepsilon)$ $=$ $Fin(\autom{A},\varepsilon)$ $=$
$K$. $Ini(\autom{A},w)$ is called the set of \emph{initial states}
while $Fin(\autom{A},w)$ is the set of \emph{final states} for the
word $w$. Given two states $q_i, q_f$ in $K$, the \emph{local
automaton} $\autom{A}(q_i,q_f)$ $=$ $\langle K' \subseteq K,$
$\Sigma,$ $\Delta',$ $q_i,$ $\{q_f\}\rangle$ induced from
$\autom{A}$ by $q_i, q_f$ is a portion of $\autom{A}$ containing all
those transitions of $\autom{A}$ leading from $q_i$ to $q_f$. More
precisely, for each pair of states $q,q'$ in $K$ and for
each symbol $a$ in $\Sigma$, $(q,a,q') \in \Delta'$ if and only if there are two strings $u,v$ in $\Sigma^*$ such that: $(q_i,u,q) \in
\Delta^*,$ $(q,a,q') \in \Delta,$ and $(q',v,q_f) \in \Delta^*$.
Finally, given two strings $w_1, w_2$ in $\Sigma^+$, then
$\autom{A}(w_1,w_2)$ is the \emph{set of all local automata} induced
by $w_1$ and $w_2$. It is formally defined as $\autom{A}(w_1,w_2) = \{\autom{A}(q_i,q_f) : q_i \in Fin(\autom{A},w_1), \ q_f \in Ini(\autom{A},w_2)\}.$ In particular, if
$w_i = \varepsilon$ for some $i$ in $[1..n]$, the kernel
string contains consecutive functions. In particular for the
previous definitions we have:
\begin{displaymath}
\autom{A}(w_1,\varepsilon) = \{\autom{A}(q_i,q_f) : q_i \in
Fin(\autom{A},w_1) \ \textrm{and} \ q_f \in K\}
\end{displaymath}
\begin{displaymath}
\autom{A}(\varepsilon,w_2) = \{\autom{A}(q_i,q_f) : q_i \in K \
\textrm{and} \ q_f \in Ini(\autom{A},w_2)\}
\end{displaymath}
\begin{displaymath}
\autom{A}(\varepsilon,\varepsilon) = \{\autom{A}(q_i,q_f) : q_i,q_f
\in K\}.
\end{displaymath}

Similarly, given a string $w$ in $\Sigma^*$, $\autom{A}(w)$ is the
\emph{set of all local automata} induced by $w$. It is defined as
$\autom{A}(w) = \{\autom{A}(q_i,q_f) : (q_i , w, q_f) \in \Delta^*
\}$ and in particular $\autom{A}(\varepsilon) = \{\autom{A}(q,q) : q
\in K\}$ is a set of $|K|$ automata, one for each state in $K$.

\begin{figure*}[h!] \centering
\fbox{\includegraphics[scale=0.56]{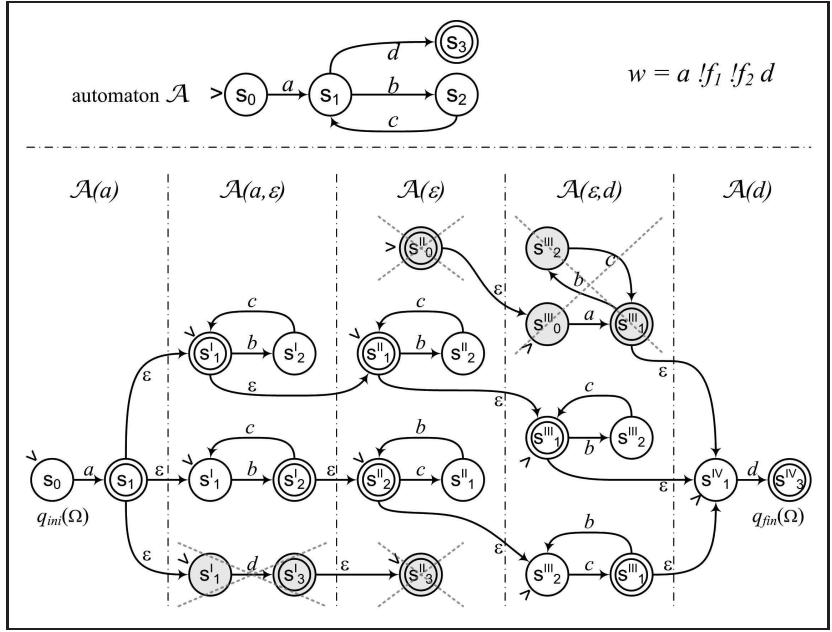}}
\caption{A perfect automaton (construction)}
\label{figPerfectAutomExample}
\end{figure*}

Let $w(\f_n)$ be a kernel string and $\autom{A}$ be an $\nFA$. The
\emph{perfect automaton} w.r.t. $\autom{A}$ and $w$ consists of
several local automata suitably joined together by
$\varepsilon$-transitions. It is denoted by $\Omega(\autom{A},w)$ (or $\Omega$ when it is clear from the context who are $\autom{A}$ and $w$). Algorithm $1$ describes how to build the perfect automaton (assume that any pair of local automata have disjoint sets of states labeled as in $\autom{A}$), while Figure~\ref{figPerfectAutomExample} shows the perfect automaton obtained by a given finite state machine and a kernel string. We say that $\autom{A}$ is \emph{compatible}
with $w$ if the set of all (legal) local automata in $\Omega$ is not
empty after correction steps, or equivalently, if there exists at
least a sound typing. Moreover,
\begin{enumerate}[ \ $\centerdot$]
   \item $Seq(\Omega)$ denotes the set of all the sequences $W_0, X_1, W_1, \ldots, X_n, W_n$ of connected automata in $\Omega$ such that: $W_0$ is an automaton in $\autom{A}(w_0)$, while $W_i$ and $X_i$ are, respectively, in $\autom{A}(w_i)$ and $\autom{A}(w_{i-1},w_i)$ for any $i$ in $[1..n]$;

   \item $Typ(\Omega) = \{(X_i) : W_0,$ $X_1,$ $W_1,$ $\ldots,$ $X_n,$ $W_n \in Seq(\Omega)\}$ is the set containing all different typings $(X_1,$ $\ldots,$ $X_n)$ from any sequence in $Seq(\Omega)$;

   \item $Aut(\Omega_i) = \{X_i : (X_1,\ldots,X_n) \in Typ(\Omega)\}$ is the set of all legal automata in $\autom{A}(w_{i-1}, w_i)$;

   \item $\Omega_i = \cup Aut(\Omega_i)$ is the type obtained by the union of all automata $Aut(\Omega_i)$;

   \item $(\Omega_n)$ is \emph{the typing} for $w$ and $\autom{A}$ obtained from $\Omega$.
\end{enumerate}

Let $(\autom{A}_n)$ be a sequence of automata. We define the \emph{direct extension} of $(\autom{A}_n)$ as the set of string defined as $\langOf{(\autom{A}_n)} = \{ u_1 \ldots u_n \ | ~\mbox{for each $i$}~ \ u_i \in \langOf{\autom{A}_i} \}$.

\begin{figure}[t!]
\vspace{0.5mm} \hrule \vspace{0.3mm} \hrule \vspace{1.75mm}
\textbf{Algorithm 1 } \textsc{PerfectAutomaton}$(w,\mathcal{A})$
\vspace{1.2mm} \hrule \vspace{0.3mm} \hrule \vspace{-4.0mm} \small
\begin{tabbing}
\ \ \= \ \ \ \= \ \ \ \= \ \ \ \= \ \ \ \= \ \ \ \= \ \ \ \= \ \ \ \= \\
\> 1. \> \textbf{Input}: $w(\f_n) = w_0 \f_1 w_1 \ldots \f_n w_n$,
\ $\autom{A}=\langle K,\Sigma, \Delta, s, F\rangle$ \\
\> 2. \> \textbf{Output}: $\Omega(\autom{A},w) \texttt{:=} \emptyset$\\
\> 3. \> \textbf{for each} automaton $W \in \autom{A}(w_{0})$ \textbf{do}\\
\> \> $\triangleright$ add $W$ to $\Omega$\\
\> 4. \> \textbf{for each} $i$ in $[1..n]$ \textbf{do}\\
\> \> $\triangleright$ \> \textbf{for each} automaton $X \in
\autom{A}(w_{i-1}, w_i)$
\textbf{do}\\
\> \> \> a. add $X$ to $\Omega$\\
\> \> \> b. \textbf{for each} automaton $W \in \autom{A}(w_{i-1})$
\textbf{do}\\
\> \> \> \> -- \> \textbf{if} $\textsf{label}(q_{\emph{fin}}(W)) =
\textsf{label}(q_{\emph{ini}}(X))$\\
\> \> \> \> \> $\cdot$ add the transition $(q_{\emph{fin}}(W),
\varepsilon,
q_{\emph{ini}}(X))$ to $\Omega$\\
\> \> \> c. \textbf{for each} automaton $W \in \autom{A}(w_{i})$ \textbf{do}\\
\> \> \> \> -- \> add $W$ to $\Omega$\\
\> \> \> \> -- \> \textbf{if} $\textsf{label}(q_{\emph{fin}}(X)) =
\textsf{label}(q_{\emph{ini}}(W))$\\
\> \> \> \> \> $\cdot$ add the transition $(q_{\emph{fin}}(X),
\varepsilon,
q_{\emph{ini}}(W))$ to $\Omega$\\
\> \texttt{//Correction steps:}\\
\> 5. \> \textbf{for each} automaton $W \in \autom{A}(w_{0})$ \textbf{do}\\
\> \> -- \> \textbf{if} $\textsf{label}(q_{\emph{ini}}(W)) \neq s$ \
\texttt{//if $w_0 = \varepsilon$}\\
\> \> \> $\cdot$ remove $W$ from $\Omega$ \ \texttt{//it is illegal}\\
\> 6.\> merge all automata in $\Omega$ being in $\autom{A}(w_{0})$
according to their\\
\> \> labels and use the (unique) initial state as initial state for $\Omega$\\
\> 7. \> \textbf{for each} automaton $W \in \autom{A}(w_{n})$ \textbf{do}\\
\> \> -- \> \textbf{if} $\textsf{label}(q_{\emph{fin}}(W)) \in F$\\
\> \> \> $\cdot$ $F(\Omega) = F(\Omega) \cup \{q_{\emph{fin}}(W)\}$\\
\> \> \> \textbf{else} \ \texttt{//if $w_n = \varepsilon$}\\
\> \> \> $\cdot$ remove $W$ from $\Omega$ \ \texttt{//it is illegal}\\
\> 8. \> \textbf{for each} automaton $A \in \Omega$ \textbf{do}\\
\> \> -- \> \textbf{if} \> \texttt{(}there is no path from
$q_{\emph{ini}}(\Omega)$ to $A$ \textbf{or}\\
\> \>             \> \> \texttt{\ }there is no path from $A$ to any
final
state of $\Omega$\texttt{)}\\
\> \> \> $\cdot$ remove $A$ from $\Omega$ \ \texttt{//it is illegal}\\
\end{tabbing} \normalsize
\vspace{-5.0mm}
\hrule
\vspace{0.3mm}
\hrule
\vspace{0.5mm}
\end{figure}

\begin{lemma}\label{OmegaLeqA}
For any $\nFA$ $\autom{A}$, then $\Omega \leq \autom{A}$ holds. On the other hand, $\autom{A} \leq \Omega$ does not hold in general.
\end{lemma}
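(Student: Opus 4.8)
The plan is to prove the two halves separately: the inclusion $\Omega \le \autom{A}$ by a state-projection argument that collapses each state of $\Omega$ onto its label in $\autom{A}$, and the failure of the converse by a small counterexample in which a fixed part of the kernel forces a constraint that some word of $\langOf{\autom{A}}$ cannot meet.

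For $\Omega \le \autom{A}$, i.e. $\langOf{\Omega} \subseteq \langOf{\autom{A}}$, I would introduce the map $\rho$ sending each state of $\Omega$ to its label in $\autom{A}$ (every state of every local automaton is a copy of a state of $\autom{A}$, tagged by that original state). The key observation is that $\rho$ turns runs of $\Omega$ into runs of $\autom{A}$ on the same word. Indeed, by the definition of the local automaton $\autom{A}(q_i,q_f)$, every non-$\varepsilon$ transition $(q,a,q')$ occurring in a local automaton of $\Omega$ already satisfies $(q,a,q')\in\Delta$, hence projects to a genuine transition $(\rho(q),a,\rho(q'))$ of $\autom{A}$; and every $\varepsilon$-transition added by Algorithm~1 joins a state $q_{\emph{fin}}(W)$ to a state $q_{\emph{ini}}(X)$ (or symmetrically) only when their labels coincide, so it projects to $\rho(q)=\rho(q')$ and reads nothing. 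Correction step~6 makes the initial state of $\Omega$ a merged state whose label is $s$, and step~7 declares final only those states whose label lies in $F$. Consequently any accepting run of $\Omega$ on a word $z$ projects, via $\rho$, to a run of $\autom{A}$ from $s$ to a state of $F$ reading that same $z$, whence $z\in\langOf{\autom{A}}$.

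For the failure of $\autom{A}\le\Omega$ in general, I would exhibit a counterexample where the kernel forces a structural constraint violated by some word of $\langOf{\autom{A}}$. Take $\autom{A}$ with $\langOf{\autom{A}}=\{ab,ba\}$, realized by $s\to^{b}1\to^{a}f$ and $s\to^{a}2\to^{b}f'$ with $F=\{f,f'\}$, together with the kernel string $w=\f_1 a$, so $w_0=\varepsilon$ and $w_1=a$. Every extension of $w$ must end with the fixed symbol $a$: in $\Omega$ the terminal block is a local automaton in $\autom{A}(a)=\{\autom{A}(1,f),\autom{A}(s,2)\}$, each of which reads only the single letter $a$, and after correction step~7 only $\autom{A}(1,f)$ survives as an accepting block (since $f\in F$ but $2\notin F$). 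Chasing the $\varepsilon$-glue back through $\autom{A}(\varepsilon,a)$ then yields $\langOf{\Omega}=\{ba\}$. Hence $ab\in\langOf{\autom{A}}\setminus\langOf{\Omega}$, so $\autom{A}\le\Omega$ does not hold.

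I expect the only delicate point to be the bookkeeping in the first half: one must verify that $\rho$ remains well defined after the merging of $\autom{A}(w_0)$-automata in step~6 (harmless, since merging identifies copies of the same $\autom{A}$-state, which share the same label), and that the trimming in step~8 cannot manufacture new accepting runs (also harmless, since step~8 only deletes local automata, so it can only shrink $\langOf{\Omega}$). The second half is routine once the counterexample is fixed.
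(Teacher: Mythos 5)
Your proposal is correct and follows essentially the same route as the paper: the inclusion $\Omega\le\autom{A}$ rests on the observation that every non-$\varepsilon$ transition of a local automaton is already a transition of $\autom{A}$ and the $\varepsilon$-glue only links states carrying the same label, which is exactly the content of the paper's argument (phrased there as a decomposition of an accepted string into segments recognized by the successive local automata, rather than as your run-projection). Your counterexample for the failure of $\autom{A}\le\Omega$ differs from the paper's (which takes $w=a\f c$ and $abc+d$, so that $d$ is excluded outright by the fixed kernel symbols), but it is equally valid and serves the same purpose.
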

\begin{proof}\ONLINE{
Given a string $u$ in $\langOf{\Omega}$, then there exists a
sequence $(\tau_{2n+1})$ of automata in $Seq(\Omega)$ accepting $u$
and expressible as $\autom{A}(s,q_0),$ $\autom{A}(q_0,s_1),$
$\autom{A}(s_1,q_1),$  $\ldots,$ $\autom{A}(q_{n-1},s_n),$
$\autom{A}(s_n,q_n)$ for some states $q_0, s_1, q_1 \ldots, s_n,
q_n$. Moreover, by definition of direct extension, for each
string $u_0 \sigma_1 u_1 \dots \sigma_n u_n$ in
$\langOf{(\tau_{2n+1})}$ we have that $u_0 \in
\langOf{\autom{A}(s,q_0)}$, $\sigma_i \in
\langOf{\autom{A}(q_{i-1},s_i)}$ and $u_i \in
\langOf{\autom{A}(s_i,q_i)}$, for each $i$ in $[1..n]$. But, by definition of local automata, the following sequence of transitions (each of which belongs to $\Delta^*$):
$(s,w_0,q_0), \ (q_0,\sigma_1,s_1), \
(s_1,w_1,q_1), \ \ldots,$ \ $(q_{n-1},\sigma_n,s_n), \ (s_n,w_n,q_n)$,
\noindent where $q_n \in F $, is also derivable by $\autom{A}$.

For the second part of the proof consider the string $w = a \f c$ and the $\dRE$ $abc+d$.}
\end{proof}

\begin{lemma}\label{XdiOmegaSound}
Let $w(\f_n)$ be a string compatible with an $\nFA$ $\autom{A}$. Any typing in $Typ(\Omega)$ is sound for $w$ and $\autom{A}$.
\end{lemma}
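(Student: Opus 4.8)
The plan is to unwind the definition of soundness and reduce it to the path-concatenation argument already used (in the forward direction) in the proof of Lemma~\ref{OmegaLeqA}. Recall that soundness of a typing $(X_1,\ldots,X_n)$ for $w(\f_n)=w_0\f_1w_1\ldots\f_nw_n$ and $\autom{A}$ means $\ext_w(X_1,\ldots,X_n)\subseteq\langOf{\autom{A}}$, i.e. every string of the form $w_0u_1w_1\ldots u_nw_n$ with $u_i\in\langOf{X_i}$ must be accepted by $\autom{A}$. So I fix an arbitrary typing $(X_1,\ldots,X_n)\in Typ(\Omega)$ together with a witnessing sequence $W_0,X_1,W_1,\ldots,X_n,W_n\in Seq(\Omega)$ and arbitrary $u_i\in\langOf{X_i}$, and I aim to exhibit an accepting run of $\autom{A}$ on $w_0u_1w_1\ldots u_nw_n$.

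First I would read off, from membership of the $W_i$ and $X_i$ in the appropriate local-automata families and from the connectivity enforced by Algorithm~1, the exact shape of the sequence. Since $W_0\in\autom{A}(w_0)$ and correction step~5 (together with the merge in step~6) keeps only those initial automata whose initial state is labelled $s$, we have $W_0=\autom{A}(s,q_0)$ with $(s,w_0,q_0)\in\Delta^*$. Each $\varepsilon$-transition added in step~4 links the final state of one automaton to the initial state of the next exactly when their labels coincide; consequently, writing $W_i=\autom{A}(s_i,q_i)\in\autom{A}(w_i)$ (so $(s_i,w_i,q_i)\in\Delta^*$) and $X_i=\autom{A}(p_i,r_i)\in\autom{A}(w_{i-1},w_i)$, connectedness forces $p_i=q_{i-1}$ and $r_i=s_i$ as states of $\autom{A}$. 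Thus the sequence has the canonical form $\autom{A}(s,q_0),\autom{A}(q_0,s_1),\autom{A}(s_1,q_1),\ldots,\autom{A}(q_{n-1},s_n),\autom{A}(s_n,q_n)$, and correction step~7 guarantees $q_n\in F$.

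Next I would invoke the identity $\langOf{\autom{A}(q,q')}=\{u:(q,u,q')\in\Delta^*\}$ (the local automaton $\autom{A}(q,q')$ deletes only transitions useless for travelling from $q$ to $q'$, hence preserves the language). Since each $u_i\in\langOf{X_i}=\langOf{\autom{A}(q_{i-1},s_i)}$, we get $(q_{i-1},u_i,s_i)\in\Delta^*$. Concatenating the chain of $\Delta^*$-derivations
\begin{displaymath}
(s,w_0,q_0),\ (q_0,u_1,s_1),\ (s_1,w_1,q_1),\ \ldots,\ (q_{n-1},u_n,s_n),\ (s_n,w_n,q_n)
\end{displaymath}
yields $(s,\,w_0u_1w_1\ldots u_nw_n,\,q_n)\in\Delta^*$ with $q_n\in F$, so $w_0u_1w_1\ldots u_nw_n\in\langOf{\autom{A}}$. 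As the $u_i\in\langOf{X_i}$ were arbitrary, this proves $\ext_w(X_1,\ldots,X_n)\subseteq\langOf{\autom{A}}$, i.e. $(X_1,\ldots,X_n)$ is sound.

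The routine parts are the two closure facts (concatenation of $\Delta^*$-paths, and $\langOf{\autom{A}(q,q')}$ being the full $q$-to-$q'$ language). The main obstacle, and the only place needing care, is the second step: precisely justifying that connectedness in $\Omega$ forces the local automata of a $Seq(\Omega)$-sequence to share boundary states and to begin at $s$ and end in $F$, so that they really splice into a single run of $\autom{A}$. This is a bookkeeping argument over Algorithm~1's $\varepsilon$-transition rules and correction steps~5--7 (and it covers the degenerate case $w_i=\varepsilon$, where $W_i=\autom{A}(s_i,s_i)$ contributes the trivial derivation). It is exactly the structural fact already exploited in the forward half of Lemma~\ref{OmegaLeqA}; the present statement is essentially that argument specialised to a single fixed typing with the $w_i$-blocks pinned to their actual values.
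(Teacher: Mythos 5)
Your proof is correct and takes essentially the same route as the paper's: the paper factors the path-splicing argument through Lemma~\ref{OmegaLeqA} (getting $(\tau_{2n+1})\leq\autom{A}$ for the witnessing sequence) and then observes $w(X_n)\leq(\tau_{2n+1})$ because each $w_i$ is accepted by its own local automaton, whereas you inline the same concatenation of $\Delta^*$-derivations with the $w_i$-blocks pinned to their actual values. No gap.
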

\begin{proof}\ONLINE{
Given any typing $(X_n)$ in $Typ(\Omega)$, by definition, there is a
sequence $(\tau_{2n+1})$ of automata such that $X_i = \tau_{2i}$ for
each $i$ in $[1..n]$. By Lemma \ref{OmegaLeqA}
$(\tau_{2n+1}) \leq \autom{A}$ holds. Moreover as, by definition,
the extension of $w(X_n)$ is $ \langOf{w(X_n)} = \{w_0 \sigma_1 w_1
\ldots \sigma_n w_n : \sigma_i \in \langOf{X_i},  \ 1 \leq i \leq
n\}.$ Then $w(X_n) \leq (\tau_{2n+1})$ as well since all strings
$w_0,\ldots,w_n$ are accepted by $\tau_1,\tau_3\ldots,\tau_{2n+1}$,
respectively, by definition of local automata induced by a single
string. Therefore, $w(X_n) \leq \autom{A}$.}
\end{proof}

\begin{theorem}\label{AcceptedByAAndAi}
Let $w(\f_n)$ be a kernel string compatible with a given $\nFA$ $\autom{A}$, and $(\tau_n)$ be a sound typing for them. Then, both $w(\tau_n) \leq \Omega$ and $(\tau_n) \leq (\Omega_n)$ hold.
\end{theorem}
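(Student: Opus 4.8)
The plan is to derive both inequalities from a single construction that is the natural converse of Lemma~\ref{XdiOmegaSound}: from any accepting run of $\autom{A}$ witnessing the soundness of $(\tau_n)$ I extract a connected sequence of local automata lying in $Seq(\Omega)$, which simultaneously traces the extension (for the first claim) and exhibits the required legal local automata (for the second). Concretely, fix $u\in\langOf{w(\tau_n)}=\extwn$ and write $u=w_0\sigma_1 w_1\ldots\sigma_n w_n$ with $\sigma_i\in\langOf{\tau_i}$. Since $(\tau_n)$ is sound we have $u\in\langOf{\autom{A}}$, so there is an accepting run
\[
(s,w_0,p_0),\ (p_0,\sigma_1,r_1),\ (r_1,w_1,p_1),\ \ldots,\ (p_{n-1},\sigma_n,r_n),\ (r_n,w_n,p_n)\ \in\ \Delta^*,
\]
with $p_n\in F$. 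From it I read off the local automata $W_0=\autom{A}(s,p_0)$ and, for $i\in[1..n]$, $X_i=\autom{A}(p_{i-1},r_i)$ and $W_i=\autom{A}(r_i,p_i)$.

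First I would check the three routine facts about these pieces. Membership: the run gives $p_{i-1}\in Fin(\autom{A},w_{i-1})$ and $r_i\in Ini(\autom{A},w_i)$, so $W_0\in\autom{A}(w_0)$, $X_i\in\autom{A}(w_{i-1},w_i)$ and $W_i\in\autom{A}(w_i)$, where the degenerate $w_j=\varepsilon$ cases are absorbed by the special-case definitions of these sets. Acceptance: each piece accepts its substring, because every transition used by the chosen sub-run on $\sigma_i$ from $p_{i-1}$ to $r_i$ lies on a path from $p_{i-1}$ to $r_i$ and hence survives into the pruned transition relation of $\autom{A}(p_{i-1},r_i)$; thus $\sigma_i\in\langOf{X_i}$, and likewise $w_0\in\langOf{W_0}$ and $w_i\in\langOf{W_i}$. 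Connectivity: $q_{\mathit{fin}}(W_{i-1})$ and $q_{\mathit{ini}}(X_i)$ are both copies of $p_{i-1}$, and $q_{\mathit{fin}}(X_i)$, $q_{\mathit{ini}}(W_i)$ are both copies of $r_i$, so their labels agree and the joining $\varepsilon$-transitions are exactly the ones inserted by steps 4b and 4c of Algorithm~1.

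Next I would verify that the whole sequence $W_0,X_1,W_1,\ldots,X_n,W_n$ survives the correction steps, so that it belongs to $Seq(\Omega)$: $q_{\mathit{ini}}(W_0)=s$ keeps $W_0$ in step 5 and makes it the initial automaton after the merge in step 6; $q_{\mathit{fin}}(W_n)=p_n\in F$ makes $p_n$ final in step 7; and every piece lies on the path from $s$ to $p_n$ just described, so none is pruned in step 8. Reading $u$ along this path then shows $u\in\langOf{\Omega}$; as $u$ was arbitrary, $w(\tau_n)\leq\Omega$, which is the first claim.

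For the second claim I would argue pointwise. Fix $i$ and $\sigma_i\in\langOf{\tau_i}$, pick any $\sigma_j\in\langOf{\tau_j}$ for $j\neq i$ (using non-emptiness of the component languages, the intended non-degenerate case), and form the extension $u$ as above. The construction then produces a sequence in $Seq(\Omega)$ whose $i$-th functional piece is $X_i$ with $\sigma_i\in\langOf{X_i}$; hence $(X_1,\ldots,X_n)\in Typ(\Omega)$, so $X_i\in Aut(\Omega_i)$ and $\sigma_i\in\langOf{X_i}\subseteq\langOf{\Omega_i}$. Since $\sigma_i$ was arbitrary this yields $\tau_i\leq\Omega_i$ for every $i$, that is $(\tau_n)\leq(\Omega_n)$. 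I expect the only real work, as opposed to new ideas, to be the bookkeeping: confirming that a concrete sub-run on $\sigma_i$ stays inside the pruned relation $\Delta'$ of the local automaton, and that the witnessing sequence genuinely survives all four correction steps 5--8 (including the collapsed $\autom{A}(q,q)$ automata arising when some $w_j=\varepsilon$). The only assumption worth flagging is non-emptiness of the types, which the second claim uses to realize a given $\sigma_i$ inside a full extension.
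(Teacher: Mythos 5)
Your proof is correct and follows essentially the same route as the paper's: take an arbitrary extension $u=w_0\sigma_1 w_1\ldots\sigma_n w_n$, use soundness to obtain an accepting run of $\autom{A}$ on $u$, cut the run at the boundaries of the $w_i$'s and $\sigma_i$'s to obtain a sequence of local automata in $Seq(\Omega)$, and read off both $w(\tau_n)\leq\Omega$ and, pointwise in each $\sigma_i$, $\tau_i\leq\Omega_i$. Your version is merely more explicit about the bookkeeping (survival of the correction steps, acceptance by the pruned local automata, and the non-emptiness of the component types needed for the second claim), all of which the paper leaves implicit.
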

\begin{proof}
Since $(\tau_n)$ is sound for $w$ and $\autom{A}$, then $w(\tau_n)
\leq \autom{A}$ holds. In particular, for each string $\chi =
w_0 \sigma_1 w_1 \ldots \sigma_n w_n$ in $\langOf{w(\tau_n)}$,
where each $\sigma_i \in \langOf{\tau_i}$, there is a
sequence of states $q_0, s_1, q_1 \ldots, s_n, q_n$ proving the
membership of $\chi$ in $\langOfAutom{A}$ by the following sequence
of transitions
$(s,w_0,q_0) \in \Delta^*, \ (q_0,\sigma_1,s_1) \in \Delta^*, \
(s_1,w_1,q_1) \in \Delta^*, \
\ldots, \ (q_{n-1},\sigma_n,s_n) \in \Delta^*, \ (s_n,w_n,q_n) \in
\Delta^*$
\noindent where $q_n \in F$ holds as well. But, this means that the
sequence $\autom{A}(s,q_0)$, $\autom{A}(q_0,s_1)$,
$\autom{A}(s_1,q_1)$, \ldots, $\autom{A}(q_{n-1},$ $s_n)$,
$\autom{A}(s_n,$ $q_n)$ of automata belongs to $Seq(\Omega)$, so
$w(\tau_n) \leq \Omega$ holds. Moreover, since each
$\autom{A}(q_{i-1},s_i) \in Aut(\Omega_i)$, it follows that
$\tau_i \leq \Omega_i$ for each $i$, that is $(\tau_n) \leq
(\Omega_n)$.
\end{proof}

\begin{corollary}
Let $w(\f_n)$ be a kernel string compatible with a given $\nFA$ $\autom{A}$, and $(\tau_n)$ be a local typing for them. Then, $w(\tau_n) \equiv \Omega \equiv  \autom{A}$ holds.
\end{corollary}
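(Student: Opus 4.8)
The plan is to set up a short chain of language inclusions and then collapse it. The key observation is that all the substantive work has already been carried out in Lemma~\ref{OmegaLeqA} and Theorem~\ref{AcceptedByAAndAi}; the corollary follows by combining these with the mere definition of locality, so I expect no genuine obstacle beyond careful bookkeeping with the $\leq$ and $\equiv$ relations.

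First I would unwind the hypothesis. By definition a local typing is both sound and complete, which in the word case means $\langOf{w(\tau_n)} = \extwn = \langOf{\autom{A}}$; hence $w(\tau_n) \equiv \autom{A}$, and in particular the completeness half gives $\autom{A} \leq w(\tau_n)$. I would also record explicitly that locality entails soundness, so that the hypotheses of Theorem~\ref{AcceptedByAAndAi} are satisfied (compatibility of $w$ with $\autom{A}$ being already assumed in the statement).

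Next I would assemble the inclusions. Applying Theorem~\ref{AcceptedByAAndAi} to the sound typing $(\tau_n)$ yields $w(\tau_n) \leq \Omega$, while Lemma~\ref{OmegaLeqA} gives unconditionally $\Omega \leq \autom{A}$. Stringing these together with the completeness inclusion from the previous paragraph produces
$$\autom{A} \leq w(\tau_n) \leq \Omega \leq \autom{A}.$$

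Finally I would simply read off the two equivalences from this chain. The sub-chain $\autom{A} \leq \Omega \leq \autom{A}$ forces $\Omega \equiv \autom{A}$, and the sub-chain $w(\tau_n) \leq \Omega \leq \autom{A} \leq w(\tau_n)$ forces $w(\tau_n) \equiv \Omega$; together these give $w(\tau_n) \equiv \Omega \equiv \autom{A}$, as required. The only point demanding any attention is verifying that ``local'' simultaneously supplies the soundness needed to trigger Theorem~\ref{AcceptedByAAndAi} and the completeness inclusion $\autom{A} \leq w(\tau_n)$; once both are in hand, the remainder is a pure transitivity computation with no hidden difficulty.
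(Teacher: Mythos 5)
Your proof is correct and follows exactly the paper's route: the paper's own argument is just the one-line citation of Lemma~\ref{OmegaLeqA} and Theorem~\ref{AcceptedByAAndAi}, and your write-up merely spells out the chain $\autom{A} \leq w(\tau_n) \leq \Omega \leq \autom{A}$ that those two results (together with locality giving $w(\tau_n) \equiv \autom{A}$) immediately yield.
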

\begin{proof}
By Lemma \ref{OmegaLeqA} and Theorem \ref{AcceptedByAAndAi}.
\end{proof}

\begin{theorem}\label{OmegaPerfect}
Let $w(\f_n)$ be a kernel string and $\autom{A}$ be an $\nFA$ compatible with $w$. There is a perfect typing for $w$ and $\autom{A}$ if and only if $w(\Omega_n) \equiv \autom{A}$. If so, the perfect typing is exactly $(\Omega_n)$.
\end{theorem}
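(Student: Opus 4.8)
The plan is to prove the biconditional in both directions, and in the course of doing so to verify that the candidate typing produced by the construction, namely $(\Omega_n)$, is indeed the perfect one. The two main tools available are Theorem \ref{AcceptedByAAndAi} (every sound typing is dominated by $(\Omega_n)$) and Lemma \ref{XdiOmegaSound} (every typing appearing in $Typ(\Omega)$ is sound). Throughout I would use that, by the definition of the direct extension and of $w(\cdot)$, the language $\langOf{w(\Omega_n)}$ is exactly the extension of $w$ under the typing $(\Omega_n)$, so that the hypothesis $w(\Omega_n)\equiv\autom{A}$ is just a restatement of soundness-plus-completeness of $(\Omega_n)$.

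For the direction ($\Leftarrow$) I would assume $w(\Omega_n)\equiv\autom{A}$. The equality $\langOf{w(\Omega_n)}=\langOf{\autom{A}}$ immediately gives both $\langOf{w(\Omega_n)}\subseteq\langOf{\autom{A}}$ and $\langOf{w(\Omega_n)}\supseteq\langOf{\autom{A}}$, so $(\Omega_n)$ is sound and complete, i.e.\ local. To promote locality to perfection I would invoke Theorem \ref{AcceptedByAAndAi}: for every sound typing $(\tau_n')$ one has $(\tau_n')\leq(\Omega_n)$. Since $(\Omega_n)$ is itself sound (being local) and dominates every sound typing, it satisfies the definition of a perfect typing. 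This simultaneously settles the closing claim ``the perfect typing is exactly $(\Omega_n)$'' for this direction. For the direction ($\Rightarrow$) I would assume a perfect typing $(\tau_n)$ exists. Being perfect it is local, hence $w(\tau_n)\equiv\autom{A}$. The plan is then to squeeze $(\Omega_n)$ between $(\tau_n)$ on both sides: one inequality, $(\tau_n)\leq(\Omega_n)$, is immediate from Theorem \ref{AcceptedByAAndAi} applied to the sound typing $(\tau_n)$.

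The remaining and crucial inequality is $(\Omega_n)\leq(\tau_n)$, and this is where I expect the only real subtlety. The point is that $\Omega_i=\cup\,Aut(\Omega_i)$ is a \emph{union} of all legal automata occurring at position $i$, and these automata are in general harvested from many different sequences in $Seq(\Omega)$, so one cannot read off any property of the ``diagonal'' typing $(\Omega_n)$ directly from the individual sequence-typings — indeed $(\Omega_n)$ need not even be sound when no perfect typing exists. The argument would avoid ever using soundness of $(\Omega_n)$ itself: instead, for each legal $X_i\in Aut(\Omega_i)$, I would pick a sequence realising it, obtaining a typing $(X_1,\ldots,X_n)\in Typ(\Omega)$ that is sound by Lemma \ref{XdiOmegaSound}; perfection of $(\tau_n)$ then forces $(X_n)\leq(\tau_n)$, whence $X_i\leq\tau_i$. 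Taking the union over all such $X_i$ yields $\langOf{\Omega_i}=\bigcup_{X\in Aut(\Omega_i)}\langOf{X}\subseteq\langOf{\tau_i}$, that is $\Omega_i\leq\tau_i$ for every $i$, which is precisely $(\Omega_n)\leq(\tau_n)$. Combining the two inequalities gives $(\Omega_n)\equiv(\tau_n)$, and therefore $w(\Omega_n)\equiv w(\tau_n)\equiv\autom{A}$, completing ($\Rightarrow$). Finally, the uniqueness phrase ``the perfect typing is exactly $(\Omega_n)$'' follows by combining the two directions: whenever a perfect typing exists it is equivalent to $(\Omega_n)$ by ($\Rightarrow$), and $(\Omega_n)$ is itself perfect by ($\Leftarrow$).
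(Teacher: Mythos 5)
Your proposal is correct and follows essentially the same route as the paper: both directions rest on Theorem \ref{AcceptedByAAndAi} (every sound typing is dominated by $(\Omega_n)$) and Lemma \ref{XdiOmegaSound} (each member of $Aut(\Omega_i)$ sits in a sound typing of $Typ(\Omega)$, hence is dominated by the perfect $\tau_i$). The only cosmetic difference is that you establish $(\Omega_n)\leq(\tau_n)$ directly by taking the union over all legal local automata, whereas the paper phrases the same step as a proof by contradiction from $(\tau_n)<(\Omega_n)$.
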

\begin{proof}
($\Rightarrow$) \emph{if there is a perfect typing for $w$ and
$\autom{A}$ then $w(\Omega_n) \equiv \autom{A}$}. If $w$ and
$\autom{A}$ admit a perfect typing, say $(\tau_n)$, then (as it is
also sound), by Theorem \ref{AcceptedByAAndAi}, $(\tau_n) \leq
(\Omega_n)$. Suppose that $(\tau_n) < (\Omega_n)$ held. There would
be (at least) an $i$ in $[1..n]$ such that $\tau_i <
\Omega_i$. In other words, there would be an automaton $\tau_i' \in
Aut(\Omega_i)$ accepting some strings rejected by $\tau_i$. Consider
the typing $(\tau_n') \in Typ(\Omega)$ containing $\tau_i'$ in
position $i$. By Lemma \ref{XdiOmegaSound}, $(\tau_n')$ is sound and
then $\tau_i' \leq \tau_i$, by definition. But this is a
contradiction. Therefore $(\tau_n) \equiv (\Omega_n)$ and then
$w(\Omega_n) \equiv \autom{A}$, as $(\tau_n)$ is also local.

\medskip

\noindent ($\Leftarrow$)\emph{if $w(\Omega_n) \equiv \autom{A}$ then there is a perfect typing for $w$ and $\autom{A}$.} This is true since
$(\Omega_n)$ is local and because, by Theorem
\ref{AcceptedByAAndAi}, $(\tau_n) \leq (\Omega_n)$ for any sound
typing $(\tau_n)$.
\end{proof}

The following two examples show that if there exists a local typing
$(\tau_n)$ for $w$ and $\autom{A}$, then $(\tau_n) < (\Omega_n)$
might hold. This can happen even if $(\tau_n)$ is a unique maximal
local.
\begin{example}
Consider the string $w = a \ \f_1 \ c \ \f_2 \ e$, and the regular
expression $\tau = abccde$ compatible with $w$. Clearly, the typing
$(b,cd)$ is local (sound and complete) for $w$ and $\tau$ because
$w(b,cd) \equiv \tau$. Nevertheless, $(\Omega_2)$ = $ (bc?$, $c?d)$
is (strictly) greater then $(b,cd)$ since $\langOf{bc?} = \{b,$
$bc\} \supset \{b\}$ and $\langOf{c?d} =\{d,$ $cd\}\supset \{cd\}$. \qed
\end{example}
\begin{example}
Let $w = a \ \f_1 \ \f_2 \ d$ be a kernel string and $\tau$ be
the regular expression $a(bc)^*d$. Clearly, the typing
$((bc)^*,(bc)^*)$ is \emph{local} (also unique maximal local but not
\emph{perfect}). But, as consequence of the construction of perfect automaton, we have: $Aut(\Omega_1) = \{(bc)^*, (bc)^*b\}$ and
$Aut(\Omega_2) = \{(bc)^*, c(bc)^*\}$. Consequently, $\Omega_1
\equiv ((bc)^*b?)$ and $\Omega_2 \equiv (c?(bc)^*)$ do not
represent a sound (and hence local) typing since they allow strings
such as $abccbcd$ or $abcbbcd$ that are not accepted by $\tau$. \qed
\end{example}

The following example shows that even if there is no local typing
for $w$ and $\tau$, then $\Omega \equiv \tau$ may hold.
\begin{example}
Let a $\tau$ be the regular expression $ab+ba$ and $w = \f_1 \f_2$.
There are two \emph{sound typings}: $(a, b)$ and $(b, a)$, but there
is no local typing. However, $\Omega \equiv \tau$. \qed
\end{example}

We can now use the perfect automata construction to characterize the complexity of $\perf{\nFA}$.  We use the next lemma:

\begin{lemma}\label{omegaPolLemma}
Let $w(\f_n)$ be a kernel string and $\autom{A}$ be a $k$-state $\nFA$. The algorithm for building the perfect automaton $\Omega(\autom{A},w)$ works in polynomial time.
\end{lemma}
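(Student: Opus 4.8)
The plan is to bound separately the cost of the two primitive operations that Algorithm~1 repeatedly invokes — computing the state sets $Ini(\autom{A},w_i)$, $Fin(\autom{A},w_i)$ and computing an individual local automaton $\autom{A}(q_i,q_f)$ — and then to argue that the number of objects manipulated, together with the joining (step~4) and correction (steps~5--8) phases, is polynomially bounded. Throughout I may assume, as stated just before the construction, that $\autom{A}$ is $\varepsilon$-free and has $k=|K|$ states.

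First I would show that each local automaton $\autom{A}(q_i,q_f)$ is computable in polynomial time. Its transition set is defined through the extended relation $\Delta^*$, which a priori quantifies over connecting strings of unbounded length; the key observation is that the conditions $(q_i,u,q)\in\Delta^*$ and $(q',v,q_f)\in\Delta^*$ only require the \emph{existence} of $u,v$, so they collapse to plain reachability in the underlying directed graph of $\autom{A}$ (edges obtained by forgetting the labels of $\Delta$). A single forward search from $q_i$ yields the set $R^{\rightarrow}$ of states reachable from $q_i$, and a single backward search into $q_f$ yields the set $R^{\leftarrow}$ of states that reach $q_f$, each in time $\mathcal{O}(k+|\Delta|)$. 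A transition $(q,a,q')\in\Delta$ then lies in $\autom{A}(q_i,q_f)$ exactly when $q\in R^{\rightarrow}$ and $q'\in R^{\leftarrow}$, so one scan of $\Delta$ produces the entire local automaton.

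Next I would treat the word-indexed sets. For each factor $w_i$ I would compute the relation $R_{w_i}=\{(p,q):(p,w_i,q)\in\Delta^*\}\subseteq K\times K$ by induction on $|w_i|$: since $\autom{A}$ is $\varepsilon$-free, $R_\varepsilon$ is the identity, and $R_{ua}$ is obtained from $R_u$ by one composition with the single-symbol transition relation for $a$, an $\mathcal{O}(k^2|\Delta|)$ step. Hence $R_{w_i}$ costs $\mathcal{O}(|w_i|\,k^2|\Delta|)$, and $Ini(\autom{A},w_i)$, $Fin(\autom{A},w_i)$ are obtained as its two projections (with the stipulated special value $K$ when $w_i=\varepsilon$). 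From these sets the families $\autom{A}(w_0)$, $\autom{A}(w_{i-1},w_i)$ and $\autom{A}(w_i)$ are formed by ranging $q_i,q_f$ over the appropriate state sets; because each family is indexed by pairs of states it contains at most $k^2$ local automata, each of size $\mathcal{O}(k+|\Delta|)$.

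Finally I would aggregate. There are $2n+1$ such families, so the number of local automata placed into $\Omega$ and the total size before the $\varepsilon$-joining is $\mathcal{O}(n\,k^2(k+|\Delta|))$, polynomial in $n$, $k$, $|\Delta|$ and $|w|=\sum_i|w_i|$. The joining steps~4b and 4c inspect pairs of already-built local automata, compare the label of a final state with the label of an initial state, and add at most one $\varepsilon$-transition per matching pair; the number of such pairs is the product of two family sizes, hence $\mathcal{O}(k^4)$ per position and polynomial overall. The correction steps~5--8 perform a constant number of label tests, one merge of the $\autom{A}(w_0)$-automata, and two reachability sweeps (from the initial state of $\Omega$ and into its final states) over a graph of polynomial size, each polynomial. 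Summing the phases gives the claimed bound. I expect the only genuinely delicate point to be the justification that the $\Delta^*$-conditions do not trigger an exponential search: this is precisely the dichotomy that the \emph{unconstrained} connecting strings reduce to graph reachability, while the \emph{fixed} factors $w_i$ are read by a bounded-length subset simulation indexed by pairs of states — both polynomial — and that the families, being indexed by state pairs, never grow beyond $k^2$ members.
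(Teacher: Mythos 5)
Your proof is correct and follows essentially the same route as the paper's: bound each family $\autom{A}(w_i)$, $\autom{A}(w_{i-1},w_i)$ by $k^2$ local automata of size $\mathcal{O}(k)$ each, observe that the $Ini$/$Fin$ sets and the local automata reduce to polynomial-time reachability and membership computations, and conclude that the $\mathcal{O}(n)$ families plus the joining and correction phases stay polynomial. You are somewhat more explicit than the paper (which simply invokes an $\class{NL}$ membership test for $Ini$/$Fin$ where you spell out the relation-composition algorithm), but this is an implementation detail rather than a different argument.
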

\begin{proof}\ONLINE{
Any set $\autom{A}(w_i)$ or $\autom{A}(w_{i-1},w_i)$ contains at
most $k^2$ automata each of which having size $\mathcal{O}(k)$.
Therefore, the number of macro-iterations of the algorithm are
$\mathcal{O}(nk^2)$, while the size of $\Omega$ is
$\mathcal{O}(nk^3)$. For each $w_i$, the sets
$Ini(\autom{A},w_i)$ and $Fin(\autom{A},w_i)$ can be obtained in
nondeterministic logarithmic space (thus in polynomial time) because
for any pair of states $q_1, q_2$ in $\autom{A}$, we check if the
string $w_i$ is in the language $\langOf{\autom{A}(q_1,q_2)}$.
Finally, all the automata in $\autom{A}(w_i)$ and
$\autom{A}(w_{i-1},w_i)$ are nothing else but different copies of
$\autom{A}$ having different initial and finial states.}
\end{proof}

Now, we have:

\begin{theorem}\label{PerfNfaPspaceComplete}
$\perf{\nFA}$ is in $\class{PSPACE}$. So it is also
$\classc{PSPACE}$ by Theorem \ref{check-nfa-hard}.
\end{theorem}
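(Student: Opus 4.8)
The plan is to reduce the perfectness test to a constant-per-component number of $\nFA$-equivalence checks, each of which sits in $\class{PSPACE}$ by the equivalence theorem for $\nFAs$, and to drive everything through the perfect-automaton $\Omega$ already constructed in Algorithm~$1$. The input is a design $\langle \autom{A}, w(\f_n)\rangle$ (with $\tau = \autom{A}$ an $\nFA$) together with a $D$-consistent typing $(\tau_n)$, and the goal is to decide whether $(\tau_n)$ is perfect.

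The key step is to establish the following characterization: $(\tau_n)$ is perfect for $\langle \autom{A}, w\rangle$ if and only if $w(\Omega_n) \equiv \autom{A}$ \emph{and} $(\tau_n) \equiv (\Omega_n)$, where $\Omega = \Omega(\autom{A},w)$ and $(\Omega_n)$ is the typing it induces. For the forward direction, if $(\tau_n)$ is perfect then a perfect typing exists, so Theorem~\ref{OmegaPerfect} gives $w(\Omega_n) \equiv \autom{A}$; this makes $(\Omega_n)$ local, hence sound, so perfectness of $(\tau_n)$ yields $(\Omega_n) \leq (\tau_n)$, while Theorem~\ref{AcceptedByAAndAi} (applied to the sound typing $(\tau_n)$) gives $(\tau_n) \leq (\Omega_n)$, whence $(\tau_n) \equiv (\Omega_n)$. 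For the backward direction, $w(\Omega_n) \equiv \autom{A}$ makes $(\Omega_n)$ perfect by Theorem~\ref{OmegaPerfect}, and since $\tau_i \equiv \Omega_i$ for each $i$ forces $w(\tau_n) \equiv w(\Omega_n)$, the typing $(\tau_n)$ is perfect as well (in particular it is automatically local).

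With the characterization in hand, the procedure is: build $\Omega$, which takes polynomial time by Lemma~\ref{omegaPolLemma}; after the correction steps of Algorithm~$1$, each $Aut(\Omega_i)$ is the set of local automata surviving in slot $i$, so each $\Omega_i = \cup Aut(\Omega_i)$ is an $\nFA$ of size polynomial in $\|w\|$ and $|\autom{A}|$, and the composite $w(\Omega_n)$ is likewise polynomial. I would then test $w(\Omega_n) \equiv \autom{A}$ and $\tau_i \equiv \Omega_i$ for each $i$ in $[1..n]$ in polynomial space, accepting iff all succeed. Since only polynomially many equivalence subroutines on polynomial-size $\nFAs$ are invoked and $\class{PSPACE}$ absorbs them, the whole test is in $\class{PSPACE}$; together with Theorem~\ref{check-nfa-hard} this gives $\classc{PSPACE}$.

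The main obstacle I anticipate is not the equivalence checks but the bookkeeping around $(\Omega_n)$: I must justify that after Algorithm~$1$'s pruning every remaining local automaton in position $i$ is \emph{legal}, i.e. participates in some full sequence of $Seq(\Omega)$, so that reading $\Omega_i$ directly off the surviving states of $\Omega$ is correct and yields a polynomial-size $\nFA$. This rests on the reachability-based removal in step~$8$ (so that $Aut(\Omega_i)$ is exactly what is left) and on Theorem~\ref{AcceptedByAAndAi}, which guarantees that $(\Omega_n)$ dominates every sound typing and hence is the unique candidate for a perfect typing. Once that legality-and-size point is secured, the size bounds of Lemma~\ref{omegaPolLemma} make the remaining steps routine.
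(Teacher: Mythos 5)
Your argument is correct and runs on the same machinery as the paper --- build the perfect automaton $\Omega(\autom{A},w)$ in polynomial time via Lemma~\ref{omegaPolLemma} and discharge everything through $\class{PSPACE}$ equivalence tests on polynomial-size $\nFAs$ --- but your final decision criterion is genuinely different from, and in fact stronger than, the one the paper writes down. The paper's proof checks $w(\Omega_n) \equiv \tau \equiv w(\tau_n)$, i.e., that a perfect typing exists and that $(\tau_n)$ is local; you additionally require the componentwise condition $\tau_i \equiv \Omega_i$ for each $i$. These criteria are not equivalent: for $w = \f_1\,\f_2$ and $\tau = a^*$ the typing $(a?,\,a^*)$ is consistent and local, and the typing induced by $\Omega$ is $(a^*,a^*)$ with $w(\Omega_n) \equiv a^* \equiv w(\tau_n)$, so the paper's literal test accepts; yet $(a?,\,a^*)$ is not perfect, since $(a^*,a^*)$ is sound and $a^* \not\subseteq \langOf{a?}$. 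Your componentwise test correctly rejects it, and your two-directional justification is exactly the right characterization: perfectness forces $(\Omega_n) \leq (\tau_n)$ because $(\Omega_n)$ is sound once $w(\Omega_n)\equiv\autom{A}$, while Theorem~\ref{AcceptedByAAndAi} forces $(\tau_n)\leq(\Omega_n)$; conversely $w(\Omega_n)\equiv\autom{A}$ plus $\tau_i\equiv\Omega_i$ for all $i$ yields perfectness via Theorem~\ref{OmegaPerfect}. The complexity accounting is unchanged --- polynomially many $\nFA$-equivalence queries on polynomial-size automata, hence $\class{PSPACE}$, and $\class{PSPACE}$-completeness by Theorem~\ref{check-nfa-hard} --- so what your version buys is that it decides the right predicate. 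Your remaining worry, that every automaton surviving in slot $i$ participates in some full sequence of $Seq(\Omega)$ so that $\Omega_i=\cup Aut(\Omega_i)$ can be read off $\Omega$ directly, is indeed settled by the reachability pruning in step~8 of Algorithm~1, exactly as you indicate.
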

\begin{proof}
Let $w(\f_n)$ be a kernel string, $\tau$ be an $\nFA$, and $(\tau_n)$ be a typing. Construct the perfect automaton $\Omega(\tau,w)$. By Lemma \ref{omegaPolLemma}, $\Omega$ can be built in polynomial time w.r.t. $|\tau| + \|w\|$. Then, check in polynomial space if $w(\Omega_n) \equiv \tau \equiv w(\tau_n)$.
\end{proof}

And w.r.t. finding a perfect typing (if it exists), we have:

\begin{theorem}\label{nfa-pspace-complete}
$\eperf{\nFA}$ is in $\class{PSPACE}$. So it is also
$\classc{PSPACE}$ by Theorem \ref{nfa-locHard}.
\end{theorem}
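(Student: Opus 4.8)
The plan is to piggyback on the perfect-automaton construction and on the characterization of Theorem~\ref{OmegaPerfect}. Given a word design $\langle \tau, w(\f_n)\rangle$ with $\tau$ an $\nFA$, I would first decide compatibility of $w$ with $\tau$, i.e.\ whether a sound typing exists at all. Since a perfect typing is in particular sound, if $w$ is not compatible with $\tau$ the procedure may immediately answer ``no''. Compatibility can be tested while building $\Omega(\tau,w)$: by Lemma~\ref{omegaPolLemma} the perfect automaton is constructed in polynomial time, and the correction steps (5--8 of Algorithm~1) leave a nonempty set of legal local automata exactly when a sound typing exists.

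Assuming compatibility, the key point is that Theorem~\ref{OmegaPerfect} turns the \emph{existence} question into a single equivalence test: there is a perfect typing for $w$ and $\tau$ if and only if $w(\Omega_n)\equiv\tau$, and in that case $(\Omega_n)$ is the witness. So the second step is to read off the typing $(\Omega_n)$ from $\Omega$. Each $\Omega_i$ is the union $\cup Aut(\Omega_i)$ of the legal local automata sitting in the $i$-th slot; identifying the legal automata is a reachability computation inside $\Omega$ (an automaton is legal iff it lies on some path from the initial state of $\Omega$ to a final state), hence feasible in polynomial time. Because $\Omega$ has polynomial size, each $\Omega_i$, and therefore the composed automaton $w(\Omega_n)$, also have polynomial size.

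The final step is to decide $w(\Omega_n)\equiv\tau$. Both sides are $\nFAs$ of size polynomial in $|\tau|+\|w\|$, so by the Stockmeyer--Meyer result that $\equivalence{\nFA}$ is in $\class{PSPACE}$ this test runs in polynomial space. Composing a polynomial-time construction with a $\class{PSPACE}$ equivalence check stays in $\class{PSPACE}$, which gives membership; combined with the $\classh{PSPACE}$ lower bound of Theorem~\ref{nfa-locHard}, this yields $\classc{PSPACE}$. The mildly delicate point is not any single step but justifying that reading $(\Omega_n)$ off $\Omega$ (pruning the illegal local automata and taking the slot-wise unions) keeps the representation polynomial and semantically matches the $(\Omega_n)$ used in Theorem~\ref{OmegaPerfect}; once that bookkeeping is in place the complexity bound is immediate, essentially mirroring the proof of Theorem~\ref{PerfNfaPspaceComplete} but for the existential version, with the only genuine addition being that we no longer receive a candidate $(\tau_n)$ and must instead synthesize and test $(\Omega_n)$.
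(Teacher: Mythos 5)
Your proposal is correct and takes essentially the same route as the paper: construct $\Omega(\tau,w)$ in polynomial time via Lemma~\ref{omegaPolLemma}, invoke Theorem~\ref{OmegaPerfect} to reduce existence of a perfect typing to the single equivalence test $w(\Omega_n)\equiv\tau$, and decide that in $\class{PSPACE}$. The extra bookkeeping you supply (the compatibility check and the observation that the slot-wise unions $\Omega_i=\cup Aut(\Omega_i)$ stay polynomial-size) is consistent with the paper's construction and only makes explicit what the paper leaves implicit.
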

\begin{proof}
Let $\langle \tau, w(\f_n)\rangle$ be a (string) design. Construct
the perfect automaton $\Omega(\tau,w)$. By Lemma
\ref{omegaPolLemma}, $\Omega$ can be built in polynomial time w.r.t.
$|\tau| + \|w\|$. Then, check if $w(\Omega_n) \equiv \tau$, which is
feasible in polynomial space.
\end{proof}

\subsection{Additional properties}\label{additionalProperties}

We now show how to exploit perfect automaton properties to find (maximal) sound typings when a design does not allow any perfect. Clearly, this technique can be used for seeking (maximal) local typings as well.
Let $w(\f_n)$ be a kernel string and $\autom{A}$ be an $\nFA$-type compatible with $w$. All the automata belonging to $Aut(\Omega_i)$ can be decomposed in at most $2^{|Aut(\Omega_i)|} - 1$ different automata such that there are no two of them accepting the same string. In particular, this new set is denoted by $Dec(\Omega_i)$ and defined as follows:
\[
Dec(\Omega_i) = \{\automInters{\automSet{A}_1} \mymid
\automUnion{\automSet{A}_2}  \ : \ \emptyset \neq \automSet{A}_1
\subseteq Aut(\Omega_i), \ \automSet{A}_2 =
Aut(\Omega_i) \mysetminus \automSet{A}_1 \}
\]
\noindent An example for three automata is given in
Figure~\ref{figSetPartition}. Finally, $Dec(\Omega) = \{(D_1,\ldots,D_n): D_i \in Dec(\Omega_i)\}$ is the set of all different typings from $Dec(\Omega_1)
\times \ldots \times Dec(\Omega_n)$. Given a typing $(\tau_n)$, we
say that $(\tau_n) \in Dec(\Omega)$ if there exists a sequence
$(D_n) \in Dec(\Omega)$ such that $\tau_i \equiv D_i$, for
each $i$.
\begin{figure}[htbp] \centering
\fbox{\includegraphics[scale=1]{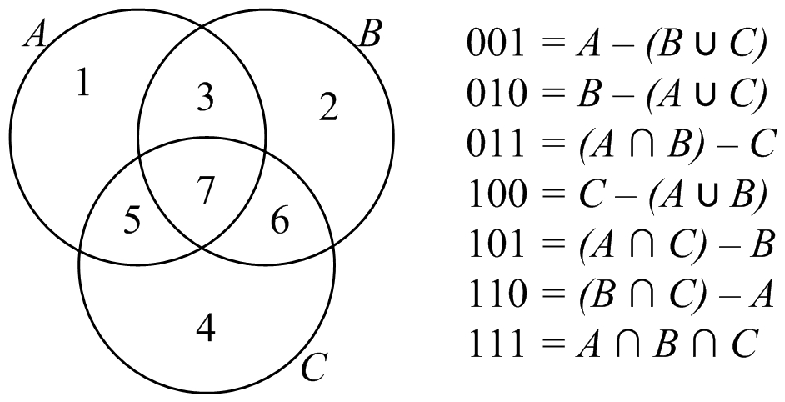}}
\caption{Partitioning of (three) sets and enumeration of the parts}
\label{figSetPartition}
\end{figure}

Given a type $\tau \leq \Omega_i$ for some $i$ in $[1..n]$, $Dec(\tau,i) = \{\tau \cap \tau' : \tau' \in
Dec(\Omega_i)\}$ denotes the partition of $\tau$, namely
$\automUnion{Dec(\tau,i)} \equiv \tau$, obtained by its projection
on $Dec(\Omega_i)$. Let $(\tau_n)$ be any typing for a kernel
string $w(\f_n)$. Given a string $u \in \calL^*$ and an $i$ in $[1..n]$, then $(\tau_n)_{[\tau_i | u]}$ denotes the new
typing obtained from $(\tau_n)$ by replacing $\tau_i$ with the
minimal $\dFA$ accepting only the string $u$. In particular
$\langOf{w(\tau_n)_{[\tau_i | u]}}$ is defined as $\{w_0 \sigma_1
w_1 \ldots \sigma_n w_n \ : \ \sigma_i = u, \sigma_j \in
\langOf{\tau_j} \ \forall j \neq i\}$ and clearly,
\begin{displaymath}
w(\tau_n) \ \ \equiv \ \bigcup_{u \in \langOf{\tau_i}}
w(\tau_n)_{[\tau_i | u]}
\end{displaymath}
We now define an extension of $(\tau_n)$ as the new typing
obtained from $(\tau_n)$ by replacing $\tau_i$ with the new type
$(\tau_i \cup \tau)$, and denoted by $(\tau_n)_{[\tau_i \cup
\tau]}$. In particular,
\begin{displaymath}
w(\tau_n)_{[\tau_i \cup \tau]} \ \ \ \equiv \bigcup_{u \in
\langOf{\tau_i \cup \tau}} w(\tau_n)_{[\tau_i | u]}
\end{displaymath}
\noindent Clearly, if $\tau \leq \tau_i$, then $(\tau_n) \equiv
(\tau_n)_{[\tau_i \cup \tau]}$. Otherwise $(\tau_n) <
(\tau_n)_{[\tau_i \cup \tau]}$.

\begin{definition}\label{DefGradExten}
A type $\tau$ \textbf{extends} another type $\tau'$ if $\langOf{\tau} \mysetminus \langOf{\tau'} \neq \emptyset$ holds. Moreover, the extension is called \textbf{partial} or \textbf{total} depending on whether $\langOf{\tau} \cap \langOf{\tau'} \neq \emptyset$ or not, respectively. \qed
\end{definition}

\begin{lemma}\label{SoundExtension}
Let $D = \langle \autom{A}, w \rangle$ be an $\nFA$-design,
$(\tau_n)$ be a consistent sound typing for $D$, and $\tau \in Dec(\Omega_i)$ be an $\nFA$ belonging to the decomposition of $\Omega_i$, for some $i$ in $[1..n]$. If $\tau$ partially extends $\tau_i$, then the extension $(\tau_n)_{[\tau_i \cup \tau]}$ of $(\tau_n)$ is still sound.
\end{lemma}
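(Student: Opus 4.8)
The plan is to reduce soundness of the extended typing to soundness of $(\tau_n)$ by surgery on an accepting run of $\autom{A}$, exploiting the structure of $Dec(\Omega_i)$ and, crucially, the \emph{partiality} of the extension. Writing $w = w_0 \f_1 w_1 \ldots \f_n w_n$, recall that $\langOf{w(\tau_n)_{[\tau_i \cup \tau]}}$ consists of all strings $w_0 \sigma_1 w_1 \ldots \sigma_n w_n$ with $\sigma_j \in \langOf{\tau_j}$ for $j \neq i$ and $\sigma_i \in \langOf{\tau_i} \cup \langOf{\tau}$. Since $(\tau_n)$ is already sound, the strings with $\sigma_i \in \langOf{\tau_i}$ already lie in $\langOfAutom{A}$; it therefore suffices to take an arbitrary $\chi' = w_0 \sigma_1 w_1 \ldots \sigma_i' \ldots \sigma_n w_n$ with $\sigma_i' \in \langOf{\tau}$ and $\sigma_j \in \langOf{\tau_j}$ for $j \neq i$, and to show $\chi' \in \langOfAutom{A}$.

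First I would use partiality. Because $\tau$ partially extends $\tau_i$, by Definition~\ref{DefGradExten} the intersection $\langOf{\tau} \cap \langOf{\tau_i}$ is nonempty; fix some $v$ in it. Replacing $\sigma_i'$ by $v$ while keeping the same $\sigma_j$ ($j\neq i$) yields a string $\chi = w_0 \sigma_1 \ldots v \ldots \sigma_n w_n$ of $\langOf{w(\tau_n)}$, which lies in $\langOfAutom{A}$ by soundness of $(\tau_n)$. As in the proof of Theorem~\ref{AcceptedByAAndAi}, an accepting run of $\autom{A}$ on $\chi$ gives states $q_0, s_1, q_1, \ldots, s_n, q_n$ with $q_n \in F$ delimiting the blocks, so that in particular $(q_{i-1}, v, s_i) \in \Delta^*$ and the whole sequence $\autom{A}(s,q_0), \autom{A}(q_0,s_1), \ldots, \autom{A}(q_{n-1},s_n), \autom{A}(s_n,q_n)$ lies in $Seq(\Omega)$. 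Consequently $\autom{A}(q_{i-1},s_i)$ is a legal local automaton, i.e. $\autom{A}(q_{i-1},s_i) \in Aut(\Omega_i)$, and it accepts $v$.

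The heart of the argument, and the step I expect to be the main obstacle, is to prove $\langOf{\tau} \subseteq \langOf{\autom{A}(q_{i-1},s_i)}$, since this is what lets $\sigma_i'$ traverse the same block. Here I would unfold the definition of $Dec(\Omega_i)$: write $\tau = \automInters{\automSet{A}_1} \mymid \automUnion{\automSet{A}_2}$ with $\emptyset \neq \automSet{A}_1 \subseteq Aut(\Omega_i)$ and $\automSet{A}_2 = Aut(\Omega_i) \mysetminus \automSet{A}_1$. As $v \in \langOf{\tau}$, the string $v$ belongs to every automaton of $\automSet{A}_1$ and to none of $\automSet{A}_2$; but $v \in \langOf{\autom{A}(q_{i-1},s_i)}$ with $\autom{A}(q_{i-1},s_i) \in Aut(\Omega_i) = \automSet{A}_1 \cup \automSet{A}_2$, which forces $\autom{A}(q_{i-1},s_i) \in \automSet{A}_1$. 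Hence $\langOf{\tau} \subseteq \langOf{\automInters{\automSet{A}_1}} \subseteq \langOf{\autom{A}(q_{i-1},s_i)}$, and in particular $\sigma_i' \in \langOf{\autom{A}(q_{i-1},s_i)}$, i.e. $(q_{i-1}, \sigma_i', s_i) \in \Delta^*$.

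Finally I would splice: keeping all transitions of the run on $\chi$ except those of the $i$-th block and replacing $(q_{i-1},v,s_i)$ by $(q_{i-1},\sigma_i',s_i)$ produces a valid accepting run of $\autom{A}$ on $\chi'$, so $\chi' \in \langOfAutom{A}$. As $\chi'$ was arbitrary, $(\tau_n)_{[\tau_i \cup \tau]}$ is sound. Note that the role of partiality is precisely to supply the common string $v$ linking the abstract decomposition piece $\tau$ to a concrete block-witnessing local automaton; under a merely total extension no such $v$ exists and the conclusion can genuinely fail.
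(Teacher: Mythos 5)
Your proof is correct and follows essentially the same route as the paper's: both arguments hinge on fixing a common string $v=u'\in\langOf{\tau}\cap\langOf{\tau_i}$, observing that because $\tau$ is a cell of $Dec(\Omega_i)$ every string of $\langOf{\tau}$ is accepted by exactly the same subset of $Aut(\Omega_i)$ as $v$, and then substituting the new string into the same block of an accepting run (the paper phrases this as each extended string having a ``twin'' accepted by the same sequences in $Seq(\Omega)$, with soundness following from $\Omega\leq\autom{A}$; your run-splicing formulation is the same argument made explicit). Your closing remark on why partiality is indispensable also matches the paper, which treats total extensions separately in the proof of Theorem~\ref{nfa-ml-pspace}.
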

\begin{proof}\ONLINE{
By definition \ref{DefGradExten}, $\langOf{\tau}$ contains at least a string that does not belong to $\langOf{\tau_i}$ but also a string, say  $u'$, accepted by both $\tau_i$ and $\tau$. In order to prove the statement, we show that $w(\tau_n)_{[\tau_i | u]} \leq \Omega$ holds for each $u \in \langOf{\tau} \mysetminus \langOf{\tau_i}$ (recall that, by Lemma \ref{OmegaLeqA}, $\Omega \leq \autom{A}$).

Since, by Theorem \ref{AcceptedByAAndAi}, $\tau_i \leq \Omega_i$, then there is a nonempty set $\automSet{A} \subseteq Aut(\Omega_i)$ containing all-and-only the automata accepting $u'$. Clearly, since $\tau \in Dec(\Omega_i)$ and $u' \in \langOf{\tau}$, then $\tau$ is also in $Dec(\tau',i)$ for each $\tau' \in \automSet{A}$. This means that each string $u \in \langOf{\tau} \mysetminus \langOf{\tau_i}$ is accepted by all-and-only the automata in $\automSet{A}$ as well. By Theorem \ref{AcceptedByAAndAi}, $w(\tau_n) \leq \Omega$, and in particular $w(\tau_n)_{[\tau_i | u']} \leq \Omega$, as $u' \in \langOf{\tau_i \cap \tau}$. In other words, any string in $\langOf{w(\tau_n)_{[\tau_i | u']}}$ is accepted by (at least) a sequence of automata in $Seq(\Omega)$. Finally, as both $u$ and $u'$ are recognized by all-and-only the automata in $\automSet{A}$, then each string $w_0 \sigma_1 w_1 \ldots \sigma_n w_n$ in $\langOf{w(\tau_n)_{[\tau_i | u]}}$ (with $\sigma_i = u$) has a twin in $\langOf{w(\tau_n)_{[\tau_i | u']}}$ (with $\sigma_i = u'$) and both of them are accepted by exactly the same sequences in $Seq(\Omega)$.}
\end{proof}

\begin{theorem}\label{NEXPGuess}
Let $(\tau_n)$ be a maximal typing for a kernel string $w(\f_n)$ and an $\nFA$ $\autom{A}$ compatible with $w$. Then for each $i$, $Dec(\tau_i,i) \subseteq Dec(\Omega_i)$.
\end{theorem}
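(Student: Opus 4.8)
The plan is to show that maximality forces every atom of the Boolean algebra generated by $Aut(\Omega_i)$ (these atoms being exactly the elements of $Dec(\Omega_i)$) to lie either entirely inside or entirely outside $\langOf{\tau_i}$, so that intersecting $\tau_i$ with an atom can never produce a proper nonempty piece. First I would record that a maximal typing is in particular sound, so by Theorem \ref{AcceptedByAAndAi} we have $(\tau_n) \leq (\Omega_n)$ and hence $\tau_i \leq \Omega_i$ for every $i$; this is precisely the hypothesis $\tau \leq \Omega_i$ under which $Dec(\tau_i,i)$ was defined. I would also stress that the blocks of the \emph{partition} $Dec(\tau_i,i)$ are the \emph{nonempty} languages of the form $\tau_i \cap \tau'$ with $\tau' \in Dec(\Omega_i)$, so it suffices to prove that every nonempty such intersection already equals the whole atom $\tau'$.

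The core of the argument is a single appeal to maximality through Lemma \ref{SoundExtension}. Fix an atom $\tau' \in Dec(\Omega_i)$ and suppose, for contradiction, that $\tau_i \cap \tau'$ is nonempty yet $\langOf{\tau_i \cap \tau'} \subsetneq \langOf{\tau'}$. Then $\langOf{\tau'} \cap \langOf{\tau_i} \neq \emptyset$ and $\langOf{\tau'} \setminus \langOf{\tau_i} \neq \emptyset$, which is exactly the statement that $\tau'$ \emph{partially extends} $\tau_i$ in the sense of Definition \ref{DefGradExten}. Since $(\tau_n)$ is a consistent sound typing (consistency is automatic in the $\nFA$ word case, as $w(\tau_n)$ is always regular) and $\tau' \in Dec(\Omega_i)$, Lemma \ref{SoundExtension} applies and yields that the extended typing $(\tau_n)_{[\tau_i \cup \tau']}$ is again sound.

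Finally I would derive the contradiction. Because $\langOf{\tau'} \not\subseteq \langOf{\tau_i}$, the observation following the definition of the extension $(\tau_n)_{[\tau_i \cup \tau']}$ gives the strict inequality $(\tau_n) < (\tau_n)_{[\tau_i \cup \tau']}$; together with the soundness of the extension this contradicts the maximality of $(\tau_n)$ (Definition \ref{TypingProperties}). Hence for every atom $\tau' \in Dec(\Omega_i)$ the intersection $\tau_i \cap \tau'$ is either empty, in which case it is discarded and contributes no block to the partition, or equal to $\tau'$, in which case it is already a member of $Dec(\Omega_i)$. This establishes $Dec(\tau_i,i) \subseteq Dec(\Omega_i)$ for each $i$. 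The only point that genuinely requires care — and the step I expect to be the main obstacle — is the bookkeeping around the empty language: one must invoke that $Dec(\tau_i,i)$ is a partition (so empty intersections are dropped) in order to avoid spuriously requiring $\emptyset \in Dec(\Omega_i)$; beyond that, the proof is a direct instantiation of Lemma \ref{SoundExtension} against the maximality hypothesis.
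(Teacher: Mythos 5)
Your proof is correct and follows essentially the same route as the paper's: both reduce to the fact that $\tau_i \leq \Omega_i$ (Theorem \ref{AcceptedByAAndAi}) and then use Lemma \ref{SoundExtension} to show that any atom of $Dec(\Omega_i)$ partially extending $\tau_i$ would yield a strictly larger sound typing, contradicting maximality. The only difference is presentational — you argue atom by atom, while the paper argues via the union of all atoms meeting $\tau_i$ — and your handling of the empty-intersection bookkeeping matches the paper's implicit treatment.
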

\begin{proof}
Let $i$ be an index arbitrarily fixed in $[1..n]$. As $(\tau_n)$ is maximal then, by definition, it is sound and, by Theorem \ref{AcceptedByAAndAi}, $\tau_i \leq \Omega_i$. Let $D_i$ be a copy of $Dec(\Omega_i)$. Then $\tau_i \leq \cup D_i$. Remove now, from $D_i$, each automata $\tau_{D_i}$ (if any) such that $\langOf{\tau_{D_i}} \cap \langOf{\tau_i} = \emptyset$. Still, $\tau_i \leq \cup D_i$ holds. Hence, consider the two possible (and alternative) cases: $(1)$ $\tau_i \equiv \cup D_i$, or $(2)$ $\tau_i < \cup D_i$.
\noindent In the first case the theorem is already proved. While, in the latter case, there is (at least) an  automaton $\tau \in D_i$ that partially extends $\tau_i$ entailing relation $(\tau_n) < (\tau_n)_{[\tau_i \cup \tau]}$. But since $(\tau_n)_{[\tau_i \cup \tau]}$ is still sound (see Lemma \ref{SoundExtension}), then there is a contradiction because $(\tau_n)$ is assumed to be maximal.
\end{proof}

We are now ready to prove a main results of the section. In our original paper, we showed a 2-$\class{EXPSPACE}$ upper bound for $\eloc{\nFA}$ and $\eml{\nFA}$. This was improved to $\class{EXPSPACE}$ in \cite{MartensNiewerthSchwentick10}. We present here an alternative proof of that results using the previous decomposition.

\begin{theorem}\label{ThmExpspace}
Problems $\eloc{\nFA}$ and $\eml{\nFA}$ are in $\class{EXPSPACE}$.
\end{theorem}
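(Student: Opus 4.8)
The plan is to reduce both existence questions to a single guess-and-verify procedure built on the decomposition $Dec(\Omega_i)$ and on Theorem~\ref{NEXPGuess}. First I would observe that $\eloc{\nFA}$ and $\eml{\nFA}$ have the same yes/no answer, so it suffices to decide whether a maximal local typing exists. One direction is immediate, since a maximal local typing is in particular local. Conversely, given any local (hence sound) typing $(\tau_n)$, I extend it to a maximal sound typing $(\tau_n') \geq (\tau_n)$ (such an extension exists because, by Theorem~\ref{AcceptedByAAndAi}, all sound typings are bounded above by $(\Omega_n)$, and up to equivalence there are only finitely many candidates drawn from the finite family $Dec(\Omega)$). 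Soundness of $(\tau_n')$ together with $\langOf{\autom{A}} = \extwn \subseteq \ext_w(\tau_n') \subseteq \langOf{\autom{A}}$ forces $\ext_w(\tau_n') = \langOf{\autom{A}}$, so $(\tau_n')$ is local, and therefore maximal local.

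Next I would build the perfect automaton $\Omega(\autom{A},w)$, which by Lemma~\ref{omegaPolLemma} costs only polynomial time, and read off the (at most $k^2$, where $k=|K|$) local automata in each $Aut(\Omega_i)$. By Theorem~\ref{NEXPGuess}, every maximal typing $(\tau_n)$ satisfies $Dec(\tau_i,i) \subseteq Dec(\Omega_i)$, so up to equivalence each component is a union $\tau_i = \bigcup_{D \in S_i} D$ of automata taken from the decomposition, i.e. determined by a subset $S_i \subseteq Dec(\Omega_i)$. Since $|Dec(\Omega_i)| \leq 2^{|Aut(\Omega_i)|}$ and each member is indexed by a subset of $Aut(\Omega_i)$, a choice $S_i$ is describable by exponentially many bits, and the algorithm nondeterministically guesses one $S_i$ for every $i$.

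It then remains to realize the guessed typing and test locality. Each member of $Dec(\Omega_i)$ has the form $\automInters{\automSet{A}_1} \mymid \automUnion{\automSet{A}_2}$ with $\automSet{A}_1,\automSet{A}_2 \subseteq Aut(\Omega_i)$; assembling it uses one complementation (a single-exponential determinization) together with intersections of polynomially many copies of local automata, so each such automaton, the union $\tau_i$, and finally the composed automaton $w(\tau_n)$ all have size bounded by $2^{p(k)}$ for a fixed polynomial $p$, i.e. a single exponential in the input. Locality of $(\tau_n)$ is exactly $w(\tau_n) \equiv \autom{A}$, an equivalence test between an exponential-size $\nFA$ and a polynomial-size one; since $\equivalence{\nFA}$ is in $\class{PSPACE}$, this runs in space polynomial in $2^{p(k)}$, hence in $\class{EXPSPACE}$. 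The guess followed by this check places the problem in $\class{NEXPSPACE}$, and by Savitch's theorem $\class{NEXPSPACE}=\class{EXPSPACE}$.

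The step I expect to be the main obstacle is the size bookkeeping: I must argue carefully that, although a maximal component $\tau_i$ may itself be of exponential size, none of its building blocks stacks into a double exponential — in particular that the complementations hidden in $Dec(\Omega_i)$ contribute only one exponential blow-up — so that $w(\tau_n)$ stays single-exponential and the final equivalence test remains in $\class{EXPSPACE}$. The correctness of restricting the search to typings of the form $\bigcup_{D\in S_i} D$ is exactly what Theorem~\ref{NEXPGuess} supplies, guaranteeing that no maximal local typing escapes this form; and I deliberately let the single test $w(\tau_n) \equiv \autom{A}$ certify soundness and completeness simultaneously, which avoids a separate soundness argument (Lemma~\ref{XdiOmegaSound} only gives soundness for typings coming from $Seq(\Omega)$, not for an arbitrary union $\bigcup_{D\in S_i} D$).
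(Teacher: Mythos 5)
Your proposal is correct and follows essentially the same route as the paper's proof: build the perfect automaton $\Omega$, use Theorem~\ref{NEXPGuess} (with Lemma~\ref{SoundExtension}) to restrict candidate typings to unions of pieces of $Dec(\Omega_i)$, verify that all the resulting automata stay single-exponential in size, and conclude via a nondeterministic guess plus a $\class{PSPACE}$ equivalence test on exponential-size $\nFAs$, collapsed to $\class{EXPSPACE}$ by Savitch's theorem. The only deviation is that you dispose of $\eml{\nFA}$ by invoking the coincidence of the two existence problems for $\nFAs$ (a fact the paper also asserts), whereas the paper instead appends an explicit maximality check — guessing the sets $D_i$ and verifying that no automaton of $Dec(\Omega_i)\mysetminus D_i$ can be soundly added — which is a minor variation within the same argument.
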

\begin{proof}
By Lemma \ref{SoundExtension} and Theorem \ref{NEXPGuess}, if an $\nFA$-design $D = \langle \tau, w \rangle$ admits a (maximal) local typing, say $(\tau_n)$, then for each $\tau_i$ there exists a subset of $Dec(\Omega_i)$, say $D_i$, such that $\cup D_i \equiv \tau_i$.

Let $m$ be the number of states of $\tau$, and $\nu + n$ be the length of $w$ where $n$ is clearly the number of functions and $\nu$ is the length of the non-function symbols in $w$. By definition, for each $i$ in $[1..n]$, each automaton in $Aut(\Omega_i)$ has size at most $m$ and the cardinality of $Aut(\Omega_i)$ is at most $m^2$. Thus, the cardinality of $Dec(\Omega_i)$ is no more than $2^{m^2}$, as well as the cardinality of $D_i$. In the worst case, an automaton in $Dec(\Omega_i)$ is obtained as $\automInters{\automSet{A}_1} \mymid
\automUnion{\automSet{A}_2}$ where both $|\automSet{A}_1| = |\automSet{A}_2| = \mathcal{O}(m^2)$. So, the size of $\automInters{\automSet{A}_1}$ is no more than $(m^2)^{m^2}$ \cite{HolzerKutrib02}, that is clearly lower than $2^{m^3}$. The size of $\automUnion{\automSet{A}_2}$ is at most $m^3$ \cite{HolzerKutrib02}. Now, for computing $\automInters{\automSet{A}_1} \mymid \automUnion{\automSet{A}_2}$ we perform the following intersection $(\automInters{\automSet{A}_1}) \cap
(\overline{\automUnion{\automSet{A}_2}})$. The complement of $\automUnion{\automSet{A}_2}$ may have $2^{m^3}$ states \cite{HolzerKutrib02}. Finally $\automInters{\automSet{A}_1} \mymid \automUnion{\automSet{A}_2}$ require no more than $2^{2m^3}$ states, and the size of $\cup D_i \equiv \tau_i$ is at most $2^{2m^3} * 2^{m^2}$ being clearly $2^{\mathcal{O}(m^3)}$.

Now, we are ready for computing the size of the $\nFA$ $w(\tau_n)$. It is exactly $\nu + n*2^{\mathcal{O}(m^3)}$. So, for deciding whether $w(\tau_n) \equiv \tau$ we need no more than exponential space w.r.t. the input size $\nu + n + m$. The only problem we still have is that we do not know a priori how to choose $D_i$. There are $2^{2^{m^2}}$ possible subsets. But as $\class{NEXPSPACE} = \class{EXPSPACE}$ (by Savitch's theorem), then we can simply guess each $D_i$.

\medskip

About $\eml{\nFA}$, we must find a maximal $D_i$. But in $\class{EXPSPACE}$ we can still guess the sequence $D_1,\ldots,D_n$ and prove (by Theorem \ref{NEXPGuess}), for each $D_i$, that none of the automata in $Dec(\Omega_i) \mysetminus D_i$ can be added to $D_i$ because the resulting typing would loose its soundness. After the guess, the number of checks (each of which may require exponential space) is at most $n*2^{m^2}$.
\end{proof}

\section{Complexity for trees}\label{ComplexityForTrees}

Based on Theorem \ref{ThmExpspace}, we now obtain complexity bounds for the tree problems. This completes results obtained in \cite{AbiteboulGottlobManna09,MartensNiewerthSchwentick10} on this topic. The next result first appeared in \cite{AbiteboulGottlobManna09}. However, the sketch of proof given there was not correct.  A proof was then presented in \cite{MartensNiewerthSchwentick10}. We next present a new proof based on perfect automata.

\begin{theorem}\label{nfa-ml-pspace}
$\ml{\nFA}$ is in $\class{PSPACE}$ (so the problem is $\classc{PSPACE}$).
\end{theorem}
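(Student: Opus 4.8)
The plan is to decide the two constituent conditions of ``maximal local'' separately, reusing the perfect-automaton machinery of Section \ref{perfect}. Given an $\nFA$-design $\langle \tau, w(\f_n)\rangle$ and a $D$-consistent typing $(\tau_n)$, I would first test \emph{locality}, i.e.\ whether $w(\tau_n) \equiv \tau$; by Theorem \ref{LocNfaPspaceComplete} this is in $\class{PSPACE}$, and if it fails we reject. (Because the typing is then local, the corollary to Theorem \ref{AcceptedByAAndAi} also gives $\Omega \equiv \tau$, although the maximality test below does not actually rely on this.) The remaining and harder task is to test \emph{maximality}: whether there is no sound typing $(\tau_n') > (\tau_n)$.

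For maximality I would build the perfect automaton $\Omega = \Omega(\tau,w)$, which is polynomial by Lemma \ref{omegaPolLemma}, and reason through the atom decomposition $Dec(\Omega_i)$ of each gap. The characterization I would establish is: a sound $(\tau_n)$ is maximal iff for every $i$ and every atom $\tau' \in Dec(\Omega_i)$, the atom is either already contained in $\tau_i$ or cannot be added while preserving soundness. One direction is immediate: if some atom $\tau'$ \emph{partially} extends $\tau_i$ then Lemma \ref{SoundExtension} yields the strictly larger sound typing $(\tau_n)_{[\tau_i \cup \tau']}$, so $(\tau_n)$ is not maximal; and by Theorem \ref{NEXPGuess} a maximal typing never admits such a partially overlapping atom, so each $\tau_i$ is in fact a union of atoms. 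The converse relies on the observation that the strings of a single atom are accepted by exactly the same local automata of $\Omega$, hence are interchangeable inside extensions (the twin argument already exploited inside Lemma \ref{SoundExtension}): consequently, if any sound $(\tau_n') > (\tau_n)$ existed, a witnessing string $u \in \langOf{\tau_i'}\setminus\langOf{\tau_i}$ would lie in an atom \emph{disjoint} from $\tau_i$ whose whole addition $(\tau_n)_{[\tau_i \cup \tau']}$ is then sound, contradicting the test.

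The resulting procedure, for each $i$, enumerates the atoms of $Dec(\Omega_i)$, each encoded by a subset $\automSet{A}_1 \subseteq Aut(\Omega_i)$ (there are at most $2^{|Aut(\Omega_i)|}$ of them, with $|Aut(\Omega_i)|$ polynomial), and for each nonempty atom $\tau' = \automInters{\automSet{A}_1} \mymid \automUnion{(Aut(\Omega_i)\setminus\automSet{A}_1)}$ checks in turn: whether $\tau' \leq \tau_i$ (skip); whether $\langOf{\tau'}\cap\langOf{\tau_i}\neq\emptyset$ while $\tau'\not\leq\tau_i$, rejecting by the partial-extension case; and, when $\tau'$ is disjoint from $\tau_i$, whether $w((\tau_n)_{[\tau_i\cup\tau']}) \leq \tau$, rejecting if this containment holds. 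If no atom triggers a rejection, the typing is maximal. Each emptiness, containment and equivalence test here is an $\nFA$ problem solvable in $\class{PSPACE}$ (the atoms are only implicitly exponential and are handled on the fly, using $\class{NPSPACE}=\class{PSPACE}$), while the atom index is stored in polynomially many bits, so the whole computation runs in $\class{PSPACE}$.

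The step I expect to be the main obstacle is the \emph{total-extension} (disjoint-atom) case, precisely the part not covered by Lemma \ref{SoundExtension}: I must justify both that checking whole-atom addition suffices (via atom-uniformity, so that a single soundly addable string forces the whole atom to be addable) and that performing these containment tests one atom at a time keeps the algorithm in $\class{PSPACE}$ rather than $\class{EXPSPACE}$. This is exactly where the bound for $\ml{\nFA}$ is cheaper than the one for $\eml{\nFA}$ in Theorem \ref{ThmExpspace}: since the candidate $(\tau_n)$ is given as input of polynomial size, we never materialise the full union-of-atoms typing $\bigcup_i Dec(\Omega_i)$, so the single-exponential blow-up of the decomposition is confined to individual on-the-fly checks. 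Combined with the $\classh{PSPACE}$ lower bound of Theorem \ref{check-nfa-hard}, this yields $\classc{PSPACE}$.
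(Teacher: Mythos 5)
Your proposal is correct and follows essentially the same route as the paper's proof: a $\class{PSPACE}$ locality pre-check, then a non-maximality test that enumerates (or guesses) the atoms $\automInters{\automSet{A}_1} \mymid \automUnion{\automSet{A}_2}$ of $Dec(\Omega_i)$, rejects on partial extensions via Lemma \ref{SoundExtension}, tests soundness of whole-atom addition in the disjoint case, and handles the implicitly exponential atoms on the fly so that only $\class{NL}$/$\class{PSPACE}$ emptiness and containment tests on non-materialised automata are needed. Your case analysis is in fact stated more cleanly than the paper's (whose total/partial branching contains a formatting slip), and your explicit appeal to atom-uniformity to justify completeness of the test is the same ``twin'' argument the paper delegates to Lemma \ref{SoundExtension} and Theorem \ref{NEXPGuess}.
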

\begin{proof}
Let $D = \langle \tau, w \rangle$ be an $\nFA$-design, and
$(\tau_n)$ be a $D$-consistent typing. First of all, we check if $(\tau_n)$ is local (and we have already proved that $\loc{\nFA}$ is doable in $\class{PSPACE}$). If so, then $\bar{\tau} \cap w(\tau_n) \equiv \emptyset$ (where $\bar{\tau}$ is the $\nFA$ of possibly exponential size accepting the complement of language $\langOf{\tau}$). Subsequently, we check if $(\tau_n)$ is \emph{not} maximal. In particular, by Lemma \ref{SoundExtension} and Theorem \ref{NEXPGuess}, $(\tau_n)$ is not maximal if there is an $\nFA$ $\autom{A} \in Dec(\Omega_i)$ for some $i$ in $[1..n]$ such that at least one of the following is true:
\begin{enumerate}[ \ $\centerdot$]
  \item $\autom{A}$ totally extends $\tau_i$ and $w(\tau_n)_{[\tau_i \cup \autom{A}]}$ is still sound, namely $\langOfAutom{A} \cap  \langOf{\tau_i} = \emptyset$ and $\bar{\tau} \cap w(\tau_n)_{[\tau_i \cup \autom{A}]} \equiv \emptyset$.

  \item $\autom{A}$ partially extends $\tau_i$, namely $\langOf{\autom{A}} \mysetminus \langOf{\tau_i} \neq \emptyset$ and $\langOf{\autom{A}} \cap \langOf{\tau_i} \neq \emptyset$;
\end{enumerate}
So we proceed as follows:
\begin{enumerate}
  \item Guess an index $i$, and a nonempty set of automata $\automSet{A}_1 \subseteq Aut(\Omega_i)$

  \item Compute $\automSet{A}_2 = Aut(\Omega_i) \mysetminus \automSet{A}_1$

  \item Let $\autom{A}$ denote the automaton $\automInters{\automSet{A}_1} \mymid \automUnion{\automSet{A}_2}$ (we do not really build it);

  \item If $\langOfAutom{A} \cap  \langOf{\tau_i} = \emptyset$ then,
    \begin{enumerate}[$\centerdot$]
      \item if $\bar{\tau} \cap w(\tau_n)_{[\tau_i \cup \autom{A}]} \equiv \emptyset$, then $(\tau_n)$ is not maximal
      \item else if $\langOf{\autom{A}} \mysetminus \langOf{\tau_i} =  \langOf{\autom{A}} \cap \langOf{\overline{\tau_i}}  \neq \emptyset$, then $(\tau_n)$ is not maximal
    \end{enumerate}
\end{enumerate}
Observe that even if $\autom{A}$, $\bar{\tau}$, or $\bar{\tau_i}$ may be exponential in size, we only use them for
\emph{intersection nonemptiness} or \emph{intersection emptiness} problems that are both $\classc{NL}$ problems \cite{Jones75}. Intuitively, we could avoid the materialization of such automata with ``on-the-fly'' constructions. Hence, an $\class{NL}$ algorithm on a non-materialized (single) exponential automaton leads to $\class{PSPACE}$. More formally, we consider alternating finite state machines $\aFAs$ (for more details see \cite{RozenbergSalomaaYu97,FellahJurgensenYu90}). We do not completely define them but we just recall what we need:
\begin{enumerate}[ \ $\centerdot$]
  \item given an $\aFA$ $\autom{A}$, deciding whether $\langOfAutom{A} = \emptyset$ is $\classc{PSPACE}$;

  \item Any $\nFA$ is trivially a special kind of $\aFA$;

  \item Given two $\aFAs$ $\autom{A}$ and $\autom{A}'$, a new $\aFA$ for $\overline{\autom{A}}$, $\autom{A} \cup \autom{A}'$, and $\autom{A} \cap \autom{A}'$, can be constructed in polynomial time and its size is linear.
\end{enumerate}
Finally, we observe that all the above emptiness decisions deal with $\nFAs$ of polynomial size and can be checked in $\class{PSPACE}$ as well as the nonemptiness decisions as $\class{PSPACE}$ is closed under complement.
\end{proof}

Now, we show how to reduce locality problems on boxes to locality problems on strings.

\begin{definition}
Let $D = \langle \tau, B \rangle$ be an $\mathcal{R}$-design where $B = B_0 \f_1 B_1 \ldots \f_n B_n$ is a kernel box. Consider the $k^{th}$ sequence of strings $w_0, \ldots, w_n$ (with $1 \leq k \leq |B_0| * \ldots * |B_n|$) built from $B_0, \ldots, B_n$ by varying $w_i$ among the strings in $\langOf{B_i}$ (in some fixed order) for each $i \in [0..n]$. We denote by $D^k = \langle \tau, w^k \rangle$ the $k^{th}$ $\mathcal{R}$-design built from $D$ where $w^k(\f_n)$ is the kernel string $w_0 \f_1 w_1 \ldots \f_n w_n$. \qed
\end{definition}

\begin{lemma}
Let $D = \langle \tau, B \rangle$ be an $\mathcal{R}$-design and $(\tau_n)$ be a D-consistent typing. We have that:
\begin{enumerate}[$(1)$]
  \item If $(\tau_n)$ is local for $D$, then it is sound for each $D^k$;

  \item If $(\tau_n)$ is sound for each $D^k$, then it is sound for $D$ as well.
\end{enumerate}
\end{lemma}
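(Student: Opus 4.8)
The plan is to reduce both claims to a single set-theoretic identity relating the box extension to the string extensions of its instantiations. First I would unfold the definitions: by construction $\ext_B(\tau_n)$ is the set of all strings $u_0 \sigma_1 u_1 \ldots \sigma_n u_n$ with $u_i \in \langOf{B_i}$ for $i \in [0..n]$ and $\sigma_i \in \langOf{\tau_i}$ for $i \in [1..n]$, whereas $\ext_{w^k}(\tau_n)$ fixes the choice $u_i = w_i$ for a particular tuple and ranges only over the $\sigma_i$. Since each $B_i$ is a box, hence a finite language, the tuples $(w_0,\ldots,w_n) \in \langOf{B_0} \times \cdots \times \langOf{B_n}$ are finite in number and are exactly those enumerated by the index $k$ in the definition of $D^k$. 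This yields the decomposition
\[
\ext_B(\tau_n) \ = \ \bigcup_k \ext_{w^k}(\tau_n),
\]
which is the technical core of the lemma.

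Given this identity, part $(1)$ follows quickly. If $(\tau_n)$ is local for $D$ then, in particular, it is sound, i.e. $\ext_B(\tau_n) \subseteq \langOf{\tau}$ (locality gives equality, but only the inclusion is needed here). Since each $\ext_{w^k}(\tau_n)$ is one of the members of the union above, it is contained in $\ext_B(\tau_n)$ and therefore in $\langOf{\tau}$; hence $(\tau_n)$ is sound for every $D^k$.

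Part $(2)$ is the converse reading of the same identity. Assuming $(\tau_n)$ is sound for each $D^k$ means $\ext_{w^k}(\tau_n) \subseteq \langOf{\tau}$ for every $k$. Taking the union over $k$ and applying the decomposition gives $\ext_B(\tau_n) = \bigcup_k \ext_{w^k}(\tau_n) \subseteq \langOf{\tau}$, i.e. $(\tau_n)$ is sound for $D$.

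The only delicate point—and the step I would write out most carefully—is the decomposition itself: one must verify that the product structure of the boxes lets the concrete strings $w_i$ vary over $\langOf{B_i}$ independently of one another, so that the family of $D^k$ reproduces exactly the box extension, and that the finiteness of boxes makes the union finite so the enumeration by $k$ is well defined and complete. Everything else is a direct unfolding of the definitions of soundness and local typing, so I do not anticipate any further obstacle.
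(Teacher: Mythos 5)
Your proposal is correct and follows essentially the same route as the paper: the paper's proof likewise observes that locality (hence soundness) of $(\tau_n)$ for $D$ gives $w_0 \tau_1 w_1 \ldots \tau_n w_n \leq \tau$ for every choice of $w_i \in \langOf{B_i}$, and conversely that soundness for every $D^k$ yields $B_0 \tau_1 B_1 \ldots \tau_n B_n \leq \tau$, which is exactly your union decomposition stated implicitly. Making the identity $\ext_B(\tau_n) = \bigcup_k \ext_{w^k}(\tau_n)$ explicit is a fine way to organize the argument, but it is not a different proof.
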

\begin{proof}
$(1)$: If $(\tau_n)$ is local for $D$, then $B_0 \tau_1 B_1 \ldots \tau_n B_n \equiv \tau$. This means that $w_0 \tau_1 w_1 \ldots \tau_n w_n \leq \tau$ for each $w_i \in \langOf{B_i}$. Thus, $(\tau_n)$ is sound for each $D^k$.

\medskip

\noindent $(2)$: If for each design $D^k$ we have that $w_0 \tau_1 w_1 \ldots \tau_n w_n \leq \tau$ holds, then $B_0 \tau_1 B_1 \ldots \tau_n B_n \leq \tau$ as well.
\end{proof}

A direct consequence of the above theorem is that if a typing is not sound for some $D^k$, then it can not be local for $D$. So a local typing candidate for $D$ is a typing being sound for each $D^k$. Now suppose that $(\tau_n)$ is a maximal sound typing for $D^{k_1}$ but it is not sound for $D^{k_2}$. This means that at least one $\langOf{\tau_i}$ contains some extra string such that $\langOf{w^{k_2}(\tau_n)}$ is not fully contained in $\langOf{\tau}$. So we could remove such strings to obtain a typing sound for both $D^{k_1}$ and $D^{k_2}$ but not maximal for $D^{k_1}$ any more. So we can guess a maximal sound typing for each $D^{k}$ and then, remove the exceeding strings. This is equivalent to keeping the componentwise intersection of these maximal typings. Let $\beta = |B_0| * \ldots * |B_n|$, we should build $\beta$ (it is an exponential number) perfect automata. For each $i$ in $[1..n]$ we should consider the sets of automata $Aut(\Omega_i^1), \ldots, Aut(\Omega_i^\beta)$ and from these the respective decompositions $Dec(\Omega_i^1), \ldots, Dec(\Omega_i^\beta)$. Now we can guess $\beta$ subsets $D_i^1, \ldots, D_i^\beta$ and finally compute $\tau_i$ as $(\cup D_i^1) \cap  \ldots \cap (\cup D_i^\beta)$. But this is equivalent to consider directly $Aut(\Omega_i) = Aut(\Omega_i^1) \cup \ldots \cup Aut(\Omega_i^\beta)$, compute the decomposition $Dec(\Omega_i)$ and guess a subset $D_i$ from $Dec(\Omega_i)$. This is much more convenient because $Aut(\Omega_i)$
contains at most a quadratic number of automata w.r.t. the states of $\tau$. Now, we show how to extend the construction of $\Omega$ to a box-design for obtaining this new $Aut(\Omega_i)$. Let $\autom{A}$ be an $\nFA$ and $B(\f_n)$ a kernel box, we have that:
\begin{enumerate}[ \ $\centerdot$]
  \item $Ini(\autom{A},B_i) = \{q_i \in K: \ \exists q_f \in K \ \textrm{s.t.} \ (q_i,w,q_f) \in \Delta^*, w \in \langOf{B_i} \}$
  \item $Fin(\autom{A},B_i) = \{q_f \in K: \ \exists q_i \in K \ \textrm{s.t.} \ (q_i,w,q_f) \in \Delta^*, w \in \langOf{B_i} \}$
  \item $\autom{A}(B_{i-1},B_{i}) = \{\autom{A}(q_i,q_f) : q_i \in Fin(\autom{A},B_{i-1}), \ q_f \in Ini(\autom{A},B_{i})\}$
\end{enumerate}
$Aut(\Omega_i)$ is the set of all legal automata in $\autom{A}(B_{i-1},B_{i})$ as for string. Note that, due to the structure of each $B_i$, it is very easy to build $Ini(\autom{A},B_i)$ and $Fin(\autom{A},B_i)$ without enumerating all the strings in $\langOf{B_i}$.

\begin{theorem}\label{Thme-loc-ml-B-Expspace}
Problems $\elocB{\nFA}$ and $\emlB{\nFA}$ are in $\class{EXPSPACE}$.
\end{theorem}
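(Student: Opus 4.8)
The plan is to mimic the proof of Theorem~\ref{ThmExpspace}, replacing the string perfect automaton by its box counterpart developed just above. First I would observe that, by the preceding discussion, a $D$-consistent typing $(\tau_n)$ is a local-typing candidate for the box-design $D = \langle \tau, B \rangle$ exactly when it is sound for every induced string design $D^k$, and that the \emph{componentwise} intersection of maximal sound typings over all $D^k$ is captured directly by the single set $Aut(\Omega_i) = Aut(\Omega_i^1) \cup \ldots \cup Aut(\Omega_i^\beta)$. The crucial point, already noted, is that although $\beta = |B_0| * \ldots * |B_n|$ is exponential, every member of $Aut(\Omega_i)$ is merely a local automaton $\autom{A}(q_i,q_f)$ of $\autom{A}$, so $Aut(\Omega_i)$ still contains at most $m^2$ automata (with $m$ the number of states of $\tau$), and it can be built from the box versions of $Ini(\autom{A},B_i)$, $Fin(\autom{A},B_i)$ and $\autom{A}(B_{i-1},B_i)$ without enumerating the strings of any $\langOf{B_i}$.

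Next I would transfer Lemma~\ref{SoundExtension} and Theorem~\ref{NEXPGuess} to the box setting: the same decomposition arguments show that if $D$ admits a (maximal) local typing $(\tau_n)$, then for each $i$ there is a subset $D_i \subseteq Dec(\Omega_i)$ with $\cup D_i \equiv \tau_i$, where $Dec(\Omega_i)$ is the decomposition of $Aut(\Omega_i)$ defined in Section~\ref{additionalProperties}. This yields the algorithm for $\elocB{\nFA}$: guess a subset $D_i \subseteq Dec(\Omega_i)$ for each $i$, set $\tau_i = \cup D_i$, and check whether $B(\tau_n) \equiv \tau$, namely whether the guessed typing is local.

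For the space analysis I would reuse the bounds of Theorem~\ref{ThmExpspace} verbatim: $|Dec(\Omega_i)| \leq 2^{m^2}$, each automaton in $Dec(\Omega_i)$ has at most $2^{2m^3}$ states, so each $\tau_i$ has size $2^{\mathcal{O}(m^3)}$ and $B(\tau_n)$ stays single-exponential; deciding $B(\tau_n) \equiv \tau$ is then feasible in exponential space, and since $\class{NEXPSPACE} = \class{EXPSPACE}$ by Savitch's theorem the guessing of the $D_i$ is free. For $\emlB{\nFA}$ I would additionally certify maximality after the guess by verifying, for each $i$, that no automaton in $Dec(\Omega_i) \setminus D_i$ can be added to $D_i$ without destroying soundness (equivalently, without making $B(\tau_n)$ leave $\langOf{\tau}$); the number of such soundness checks is at most $n * 2^{m^2}$, each again in exponential space.

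The main obstacle I expect is the correctness of the compression $Aut(\Omega_i) = \bigcup_k Aut(\Omega_i^k)$: one must argue that pooling the legal local automata over all exponentially many string designs $D^k$, and then decomposing, really does yield precisely the componentwise intersection of the per-$D^k$ maximal sound typings, and hence every maximal (in particular every perfect) sound typing of the box-design. Establishing this equivalence --- essentially that soundness for the box coincides with simultaneous soundness for all $D^k$, and that the box local automata $\autom{A}(B_{i-1},B_i)$ faithfully record exactly the pairs $(q_i,q_f)$ arising for \emph{some} choice of representative strings in $\langOf{B_{i-1}}$ and $\langOf{B_i}$ --- is where the argument needs care; the remaining size and complexity bookkeeping then follows the string case mechanically.
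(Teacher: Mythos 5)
Your proposal is correct and follows essentially the same route as the paper: the paper's own proof is precisely to build the pooled set $Aut(\Omega_i)$ from the box versions of $Ini$, $Fin$ and $\autom{A}(B_{i-1},B_i)$, guess a subset of its decomposition $Dec(\Omega_i)$ for each $i$, and then run the locality/maximality check of Theorem~\ref{ThmExpspace} in exponential space. The "compression" issue you flag is exactly what the paper addresses in the discussion preceding the theorem (soundness for $B$ coinciding with simultaneous soundness for all $D^k$, and the componentwise-intersection argument), so your account matches the intended argument in both structure and the point requiring care.
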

\begin{proof}
Let $D = \langle \tau, B \rangle$ be an $\nFA$-design where $B$ is a kernel box. We guess, for each $i$, a subset of automata in $Dec(\Omega_i)$, the decomposition of the new set $Aut(\Omega_i)$ built as above. Thus, we check if it is a (maximal) local typing for $D$ as made in the proof of Theorem \ref{ThmExpspace}.
\end{proof}

\begin{corollary}\label{Thme-loc-ml-2Expspace}
$\eloc{\xEDTD{\nFA}}$ and $\eml{\xEDTD{\nFA}}$ are in 2-$\class{EXPSPACE}$.
\end{corollary}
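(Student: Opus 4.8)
The plan is to obtain the bound by composing three ingredients already at hand: the normalization of $\rEDTDs$ described in Section \ref{TopDownDesign}, the reduction of locality problems to box problems of Corollary \ref{edtd2strReduction}, and the $\class{EXPSPACE}$ bound for box problems of Theorem \ref{Thme-loc-ml-B-Expspace}. The only genuinely new content is the bookkeeping of the size blow-up incurred by normalization, together with the verification that it lifts an $\class{EXPSPACE}$ procedure by exactly one exponential level, to $2$-$\class{EXPSPACE}$.

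First I would reduce the given $\xEDTD{\nFA}$-design $D = \langle \tau, T(\f_n) \rangle$ to an equivalent one whose type is \textbf{normalized}. Following the construction given in Section \ref{TopDownDesign}, I build from $\tau$ an equivalent $\nUTA$ of polynomial size, determinize it into a $\dUTA$ $\autom{A}^d$, and convert $\autom{A}^d$ back into a normalized $\xEDTD{\nFA}$ $\tau^d$; since here $\mathcal{R} = \nFA$, this last step always succeeds. Determinization may incur a single-exponential blow-up, so $\|\tau^d\| = 2^{\mathcal{O}(\|\tau\|)}$ while $\langOf{\tau^d} = \langOf{\tau}$. Hence $\langle \tau^d, T \rangle$ is equivalent to $D$ for all locality questions, and it suffices to solve the problems on the normalized design.

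Next I would apply Corollary \ref{edtd2strReduction} to $\langle \tau^d, T \rangle$: problem $\eloc{\xEDTD{\nFA}}$ (and, since $\mathcal{R} = \nFA \neq \dRE$, also $\eml{\xEDTD{\nFA}}$) is solved by an $\class{NP}^{\mathcal{C}}$ oracle machine where $\mathcal{C}$ is the complexity of $\elocB{\nFA}$ (resp.\ $\emlB{\nFA}$). By Theorem \ref{Thme-loc-ml-B-Expspace} we may take $\mathcal{C} = \class{EXPSPACE}$. The guessed function $\kappa$ and each queried box-design $D^x_{\kappa}$ have size polynomial in $\|\tau^d\| + \|T\|$, so every oracle call runs in $\class{EXPSPACE}$ measured in $\|\tau^d\|$. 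Using $\class{NP}^{\class{EXPSPACE}} \subseteq \class{NEXPSPACE} = \class{EXPSPACE}$ (Savitch's theorem), the whole guess-and-check procedure runs in space exponential in $\|\tau^d\|$.

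The main obstacle—indeed the only point requiring care—is the composition of the two size parameters. Everything the oracle machine manipulates is polynomial in $\|\tau^d\|$, but $\|\tau^d\|$ is itself single-exponential in the size $\|D\|$ of the original input. Thus ``space exponential in $\|\tau^d\|$'' is space $2^{2^{\mathcal{O}(\|D\|)}}$, i.e.\ $2$-$\class{EXPSPACE}$ in the original input, exactly the claimed bound. I would close the argument by noting that the normalization preprocessing (an $\nUTA$-to-$\dUTA$ determinization plus a linear conversion) is itself feasible within this budget, so the overall procedure stays in $2$-$\class{EXPSPACE}$ for both $\eloc{\xEDTD{\nFA}}$ and $\eml{\xEDTD{\nFA}}$.
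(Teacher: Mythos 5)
Your proposal is correct and follows essentially the same route as the paper: normalize $\tau$ into $\tau^d$ at the cost of a single-exponential blow-up, run the guess-$\kappa$ oracle machine of Corollary \ref{edtd2strReduction} on the normalized design, and discharge each box-oracle call via the $\class{EXPSPACE}$ bound of Theorem \ref{Thme-loc-ml-B-Expspace}, so that the nondeterministic guess plus exponential-space oracle work, measured against the already-exponential $\|\tau^d\|$, lands in $2$-$\class{EXPSPACE}$ by Savitch. Your accounting of the size-parameter composition is in fact slightly more explicit than the paper's.
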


\begin{proof}
Let $D = \langle \tau, T \rangle$ be an $\xEDTD{\nFA}$-design. We build from $\tau$ its equivalent normalized version $\tau^d$ that, after all, is a $\xEDTD{\dFA}$ of exponential size. So the oracle machine discussed in Corollary \ref{edtd2strReduction} actually works in $\class{NEXPTIME}^\mathcal{C}$ where $\mathcal{C}$ is the complexity class of solving $\elocB{\dFA}$ (or $\emlB{\dFA}$). By Theorem \ref{Thme-loc-ml-B-Expspace}, both of these problems are in $\class{EXPSPACE}$. Thus, the whole algorithm works in 2-$\class{EXPSPACE}$. (Note that, $\class{EXPSPACE}$ is the best known upper bound even for $\eloc{\dFA}$ \cite{MartensNiewerthSchwentick10}.)
\end{proof}

The following analysis makes use of a technique introduced in \cite{MartensNiewerthSchwentick10} for building the perfect automaton for $\dFA$-designs.

\begin{definition}
Let $D = \langle \tau, B \rangle$ be an $\mathcal{R}$-design where $B = B_0 \f_1 B_1 \ldots \f_n B_n$ is a kernel box. Together with $D^k$ we consider the string design $\hat{D}^k$ defined as follows. Let $\hat{\Sigma} = \Sigma \uplus \{\sigma_0,\ldots,\sigma_n\}$ be an extension of $\Sigma$ and $\sigma(\f_n) = \sigma_0 \f_1 \sigma_1 \ldots \f_n \sigma_n$ be the kernel string built by combining the new symbols with the functions of $B$. We denote by $\hat{D}^k = \langle \hat{\Omega}^k, \sigma \rangle$ the $k^{th}$ $\dFA$-design built from $D$ where $\hat{\Omega}^k = \hat{\Omega}^k(\tau,w^k)$ is the perfect automaton built as described in \cite{MartensNiewerthSchwentick10}. \qed
\end{definition}

The following lemma is a direct consequence of the definition of $\hat{\Omega}$ in \cite{MartensNiewerthSchwentick10}.

\begin{lemma}\label{lemPerfDfa}
A typing $(\tau_n)$ is sound for $D^k$ iff it is sound for $\hat{D}^k$.
\end{lemma}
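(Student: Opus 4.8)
The plan is to reduce both soundness conditions to one common string-level acceptance criterion and then invoke the defining property of the perfect automaton $\hat{\Omega}^k$ so that the two criteria agree string by string. First I would unwind the definitions. Writing $w^k = w_0\,\f_1\,w_1\,\ldots\,\f_n\,w_n$ for the $k^{th}$ kernel string of $D$, soundness of $(\tau_n)$ for $D^k = \langle \tau, w^k\rangle$ means $w^k(\tau_n) \leq \tau$, i.e. every materialized string $w_0 u_1 w_1 \cdots u_n w_n$ with $u_i \in \langOf{\tau_i}$ belongs to $\langOf{\tau}$. Likewise, soundness of $(\tau_n)$ for $\hat{D}^k = \langle \hat{\Omega}^k, \sigma\rangle$ means $\sigma(\tau_n) \leq \hat{\Omega}^k$, i.e. every string $\sigma_0 u_1 \sigma_1 \cdots u_n \sigma_n$ with $u_i \in \langOf{\tau_i}$ belongs to $\langOf{\hat{\Omega}^k}$. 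Thus both conditions are universally quantified inclusions over the same family of ``plugged-in'' words $(u_1,\ldots,u_n)$, differing only in whether the fixed factors are the actual words $w_i$ and the target is $\tau$, or the fresh markers $\sigma_i$ and the target is $\hat{\Omega}^k$.

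The crux is the single-string equivalence: for \emph{arbitrary} words $u_1,\ldots,u_n \in \calL^*$,
\[
\sigma_0 u_1 \sigma_1 \cdots u_n \sigma_n \in \langOf{\hat{\Omega}^k} \iff w_0 u_1 w_1 \cdots u_n w_n \in \langOf{\tau}.
\]
This is exactly the property for which $\hat{\Omega}^k$ is built in \cite{MartensNiewerthSchwentick10}: the fresh symbol $\sigma_i$ labels precisely the transitions that simulate, inside $\tau$, the reading of the fixed factor $w_i$ (for the deterministic $\tau$, reading $\sigma_i$ from a state $q$ leads to the unique state reached by consuming $w_i$ in $\tau$), while between consecutive markers the function content $u_i$ is consumed by the local automata, whose transitions are those of $\tau$ restricted to the reachable states. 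I would verify the equivalence by an exact, reversible correspondence of runs: an accepting run of $\hat{\Omega}^k$ on $\sigma_0 u_1 \sigma_1 \cdots u_n \sigma_n$ decomposes into a start, then for each $i$ a $\sigma_{i-1}$-step followed by a local segment reading $u_i$, and finally a $\sigma_n$-step into a final state; replacing each $\sigma_j$-step by the corresponding sequence of $\tau$-transitions reading $w_j$ and leaving the local segments untouched yields a run of $\tau$ on $w_0 u_1 w_1 \cdots u_n w_n$ from the initial state to a final state, and conversely. The legality/correction steps in the construction of $\hat{\Omega}^k$ guarantee that only such runs survive, so no spurious or missing runs arise.

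Granting the string-level equivalence, the lemma follows immediately by quantifying over all $u_i \in \langOf{\tau_i}$: the family $\{\sigma_0 u_1 \sigma_1 \cdots u_n \sigma_n\}$ is contained in $\langOf{\hat{\Omega}^k}$ if and only if the family $\{w_0 u_1 w_1 \cdots u_n w_n\}$ is contained in $\langOf{\tau}$, which is precisely the equivalence of soundness for $\hat{D}^k$ and for $D^k$. I expect the main obstacle to be matching the string-level claim to the \emph{specific} construction of $\hat{\Omega}^k$ in \cite{MartensNiewerthSchwentick10} rather than to the $\Omega$ of the present paper: in particular, checking the degenerate $\varepsilon$-cases (some $w_i = \varepsilon$, i.e. consecutive functions or an empty prefix/suffix) and confirming that the marker transitions and the final-state bookkeeping agree with the correction steps. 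Once the defining property of $\hat{\Omega}^k$ is made precise, the remainder is a routine run-by-run translation, which is why the statement is indeed a direct consequence of that definition.
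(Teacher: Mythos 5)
Your proposal is correct and follows exactly the route the paper intends: the paper gives no proof at all, stating only that the lemma ``is a direct consequence of the definition of $\hat{\Omega}$ in the cited work,'' and your argument simply makes that explicit by reducing both soundness conditions to the string-level equivalence $\sigma_0 u_1 \sigma_1 \cdots u_n \sigma_n \in \langOf{\hat{\Omega}^k} \iff w_0 u_1 w_1 \cdots u_n w_n \in \langOf{\tau}$ and quantifying over $u_i \in \langOf{\tau_i}$. The only caveat — which you already flag — is that the run-by-run correspondence must be checked against the specific construction in the cited reference, since the present paper never spells that construction out.
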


\begin{theorem}
Let $D = \langle \tau, B \rangle$ be a $\dFA$-design and $(\tau_n)$ be a D-consistent typing. The following are equivalent:
\begin{enumerate}[$(1)$]
 \item $(\tau_n)$ is perfect for $D$;

 \item $(\tau_n)$ is both local for $D$ and perfect for each $\hat{D}^k$.
\end{enumerate}
\end{theorem}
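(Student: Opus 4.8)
The plan is to prove the two implications separately, exploiting the fact that soundness for the box design $D$ transfers exactly to soundness for each induced string design. Since $w^k$ is a single choice of words from the box $B$, we have $\langOf{w^k(\tau_n)} \subseteq \langOf{B(\tau_n)}$, so a typing sound for $D$ is sound for every $D^k$; the converse is the box-to-string soundness lemma stated just before Theorem~\ref{Thme-loc-ml-B-Expspace}. Combining this with Lemma~\ref{lemPerfDfa}, a typing is sound for $D$ if and only if it is sound for every $\hat{D}^k$. This equivalence is the backbone of both directions.

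For $(2) \Rightarrow (1)$, assume $(\tau_n)$ is local for $D$ and perfect for each $\hat{D}^k$. Locality is already half of what perfection for $D$ requires by Definition~\ref{TypingProperties}, so it remains to show that $(\tau_n)$ dominates every sound typing of $D$. Let $(\tau_n')$ be sound for $D$; by the equivalence above it is sound for each $\hat{D}^k$, and since $(\tau_n)$ is perfect (hence dominant) for $\hat{D}^k$ we get $(\tau_n') \leq (\tau_n)$. Thus $(\tau_n)$ is local and dominates all sound typings, i.e.\ it is perfect for $D$. This direction is routine.

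For $(1) \Rightarrow (2)$, the plan is first to pin down $(\tau_n)$ explicitly. Because soundness is downward closed and, by the $\hat{\Omega}$ construction of \cite{MartensNiewerthSchwentick10}, each $\hat{D}^k$ admits a perfect typing (the typing $(\hat{\Omega}^k_n)$ induced by the perfect automaton), the typings sound for $\hat{D}^k$ form exactly the principal ideal $\{(\rho_n) : (\rho_n) \leq (\hat{\Omega}^k_n)\}$ (Theorem~\ref{AcceptedByAAndAi} gives $\leq$, downward closure gives the rest). Intersecting over $k$, the sound typings of $D$ form the principal ideal generated by the componentwise intersection $M$, where $\langOf{M_i} = \bigcap_k \langOf{(\hat{\Omega}^k_n)_i}$; hence $M$ is the unique maximal sound typing of $D$. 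Since a perfect $(\tau_n)$ is in particular the maximal sound typing, we must have $(\tau_n) \equiv M$ and $M \leq (\hat{\Omega}^k_n)$ for every $k$. To finish, I must upgrade these inequalities to equalities $M \equiv (\hat{\Omega}^k_n)$: then $(\tau_n) \equiv M$ coincides with the perfect typing of each $\hat{D}^k$ and so is perfect for each $\hat{D}^k$, while locality for $D$ holds by hypothesis.

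The hard part is exactly this last upgrade: showing that completeness of $M$ for the box design $D$ forces $M \equiv (\hat{\Omega}^k_n)$ for every $k$. I would argue by contraposition: if $M < (\hat{\Omega}^k_n)$ on some component $i$, pick $u \in \langOf{(\hat{\Omega}^k_n)_i} \setminus \langOf{M_i}$ and, filling the remaining functions with middles drawn from $M$, obtain a string $s \in \langOf{w^k(\hat{\Omega}^k_n)} \subseteq \langOf{\tau}$. The claim is that $s \notin \textsf{ext}_B(M)$, contradicting completeness of $M$. This is where determinism is essential, and why the statement is restricted to $\dFA$-designs: I expect to use the rigidity of the perfect-automaton construction $\hat{\Omega}^k$ of \cite{MartensNiewerthSchwentick10} together with the fact that the boxes $B_l$ have fixed width, so that the $\dFA$ state trajectory along $s$ pins down the function boundaries; any box decomposition of $s$ must then place $u$ in position $i$ and hence would require $u \in \langOf{M_i}$, the desired contradiction. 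Making this re-decomposition argument precise, rather than the surrounding lattice bookkeeping, is the real obstacle.
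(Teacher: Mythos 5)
Your direction $(2)\Rightarrow(1)$ is correct and is essentially the paper's argument: a typing sound for $D$ is sound for every $D^k$, hence for every $\hat{D}^k$ by Lemma~\ref{lemPerfDfa}, and perfection of $(\tau_n)$ for $\hat{D}^k$ then gives domination. The gap is in $(1)\Rightarrow(2)$. Your route rests on the claim that ``each $\hat{D}^k$ admits a perfect typing, namely the typing induced by the perfect automaton,'' from which you derive that the sound typings of $\hat{D}^k$ form a principal ideal. That claim is false in general: Theorem~\ref{AcceptedByAAndAi} only says that every sound typing is \emph{dominated by} the perfect-automaton typing; that typing need not itself be sound, so the set of sound typings need not have a maximum element. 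The paper's own example $\tau = ab+ba$, $w=\f_1\f_2$ at the end of Section~\ref{perfect} is a counterexample: the sound typings are exactly $(a,b)$ and $(b,a)$, which are incomparable, and the corresponding separator design has no perfect typing. Consequently your $M$ is not known to be sound for $D$, ``$M$ is the unique maximal sound typing of $D$'' does not follow, and the identification $(\tau_n)\equiv M$ is unjustified --- Theorem~\ref{AcceptedByAAndAi} only yields $(\tau_n)\leq M$, and closing that gap would require exactly the soundness of $M$ that you have not established.

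Moreover, the step you yourself call ``the real obstacle'' --- upgrading $M \leq (\hat{\Omega}^k_n)$ to an equivalence via a re-decomposition argument exploiting determinism and the fixed box widths --- is left as a sketch, and it is precisely where the substance of this direction lies; without it nothing is proved. The paper dispenses with the lattice machinery entirely: it argues by contradiction that $(\tau_n)$ is local and dominant for each $\hat{D}^k$, taking any witness of failure (a string of $\langOf{\hat{\Omega}^k}$ missed by $\sigma(\tau_n)$, or a sound typing of $\hat{D}^k$ not below $(\tau_n)$), transferring it back to $D^k$ via Lemma~\ref{lemPerfDfa}, and contradicting perfection of $(\tau_n)$ for $D$. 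If you want to keep your structure, you must actually prove the decomposition-pinning claim (this is where the separators $\sigma_i$ and the determinism of $\tau$ do the work); as written, the proposal does not establish $(1)\Rightarrow(2)$.
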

\begin{proof}\ONLINE{
$(1) \Rightarrow (2)$: If $(\tau_n)$ is perfect for $D$, then it is sound for each $D^k$, and by Lemma \ref{lemPerfDfa}, sound for $\hat{D}^k$ as well. Suppose that $(\tau_n)$ is not local for some $\hat{D}^k$, there is a string $\sigma_0 u_1 \sigma_1 \ldots u_n \sigma_n \in \langOf{\hat{\Omega}^k}$ (all the stings have this form by definition) not captured by $\sigma(\tau_n)$. By Lemma \ref{lemPerfDfa}, the string $w_0 u_1 w_1 \ldots u_n w_n$ belongs to $\langOf{\tau}$ and as $(\tau_n)$ is perfect, then each $u_i \in \langOf{\tau_i}$: contradiction. Suppose that $(\tau_n)$ is not perfect for some $\hat{D}^k$. There is a sound typing $(\tau_n')$ for $\hat{D}^k$ not contained in $(\tau_n)$, but by Lemma \ref{lemPerfDfa}, $(\tau_n')$ is also sound for $D^k$, so $w(\tau_n') \leq \tau$: again a contradiction because $(\tau_n)$ is perfect.

\medskip

\noindent $(2) \Rightarrow (1)$: If $(\tau_n)$ is perfect for each $\hat{D}^k$ then, by Lemma \ref{lemPerfDfa}, it is sound for each $D^k$. Suppose that it is not perfect for $D$, then there is a sound typing $(\tau_n')$ not contained in $(\tau_n)$ such that, for some $k$, $w(\tau_n') \leq \langOf{\tau}$ for the $k^{th}$ string $w_0,\ldots,w_n$. So $(\tau_n')$ is sound for $D^k$ and also for $\hat{D}^k$. Contradiction.}
\end{proof}

\begin{lemma}\label{CorEperfBdFAco-NP}
$\eperfB{\dFA}$ is in $\class{coNP}$.
\end{lemma}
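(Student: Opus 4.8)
The plan is to show that the \emph{complement} of $\eperfB{\dFA}$ — the box-design admits \emph{no} perfect typing — lies in $\class{NP}$, which yields $\eperfB{\dFA}\in\class{coNP}$.

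First I would derive a global characterization from the preceding theorem, which reduces perfectness for a box-design $D=\langle\tau,B\rangle$ to perfectness for the string-designs $\hat D^k$ together with locality for $D$. Recall that a perfect typing, when it exists, is unique (it is unique maximal local), and that by Theorem~\ref{OmegaPerfect} the perfect typing of each $\hat D^k$ is exactly $(\hat\Omega^k_n)$ and exists iff $\sigma(\hat\Omega^k_n)\equiv\hat\Omega^k$. Combining these, I would prove: $D$ admits a perfect typing iff (i) every $\hat D^k$ admits a perfect typing $(\hat\Omega^k_n)$, and (ii) all of these coincide componentwise. The nontrivial point is to discharge the leftover ``local for $D$'' requirement automatically: if all $\hat D^k$ are perfect with a common typing $(\tau_n)$, then $(\tau_n)$ is sound for each $D^k$ by Lemma~\ref{lemPerfDfa}, hence sound for $D$ by the box-soundness lemma; and since $(\tau_n)$ is local for each $\hat D^k$ one gets $w^k(\tau_n)\equiv\tau$ for every $k$, so that $\ext_B(\tau_n)=\bigcup_k\langOf{w^k(\tau_n)}\equiv\langOf\tau$, i.e. $(\tau_n)$ is local for $D$. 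By the preceding theorem it is then perfect for $D$. I would write this implication out carefully, because it is exactly what eliminates a residual universal quantifier from the characterization.

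Negating the characterization gives: $D$ has no perfect typing iff either (A) some $\hat D^k$ has no perfect typing, i.e. $\sigma(\hat\Omega^k_n)\not\equiv\hat\Omega^k$, or (B) there exist indices $k,k'$ and a position $i$ with $\hat\Omega^k_i\not\equiv\hat\Omega^{k'}_i$. Both alternatives are purely \emph{existential}, which is the shape needed for $\class{NP}$. The nondeterministic machine guesses the offending $k$ — equivalently a sequence $w_0,\dots,w_n$ with $w_i\in\langOf{B_i}$, which has polynomial length since each $B_i$ has fixed width — and, for (B), a second index $k'$ and a position $i$. It then builds the perfect automata $\hat\Omega^k$ (and $\hat\Omega^{k'}$) in polynomial time via the $\dFA$-construction of \cite{MartensNiewerthSchwentick10}, extracts the components $(\hat\Omega^k_n)$, and verifies the claimed inequivalence by additionally guessing a polynomial-length distinguishing word.

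The crux, and the step I expect to be the main obstacle, is keeping the verification polynomial after the guess. A naive equivalence test of the plugged automaton $\sigma(\hat\Omega^k_n)$ against $\hat\Omega^k$ would be an $\nFA$-equivalence check and thus $\class{PSPACE}$-hard. This is avoided by exploiting that, for a $\dFA$-design, the perfect automaton $\hat\Omega^k$ and the extracted components $\hat\Omega^k_i$ are deterministic and of polynomial size, and the fresh separators $\sigma_0,\dots,\sigma_n$ make the slots of $\sigma(\hat\Omega^k_n)$ unambiguous; hence both (A) and (B) reduce to equivalence among polynomial-size deterministic automata, whose inequivalence always admits a polynomial-length witness. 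I would therefore need to establish this determinism and size bound explicitly, after which the entire complement procedure is a single $\class{NP}$ guess-and-check, proving $\eperfB{\dFA}\in\class{coNP}$.
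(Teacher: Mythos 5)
There is a genuine gap, and it sits exactly at the step you yourself flag as ``the nontrivial point'': your claim that the residual requirement ``local for $D$'' is discharged automatically is false, and the argument you give for it breaks down. You write that locality of the common typing $(\tau_n)$ for each $\hat{D}^k$ yields $w^k(\tau_n)\equiv\tau$. It does not: locality for $\hat{D}^k$ means $\sigma(\tau_n)\equiv\hat{\Omega}^k$, and the perfect automaton only satisfies $\hat{\Omega}^k\leq\tau$ (restricted to the relevant slots), never the converse in general --- this is precisely Lemma \ref{OmegaLeqA}. So all you can extract is soundness, $w^k(\tau_n)\leq\tau$, and your chain $\ext_B(\tau_n)=\bigcup_k\langOf{w^k(\tau_n)}\equiv\langOf{\tau}$ collapses to an inclusion. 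The paper's own counterexample to $\autom{A}\leq\Omega$ already defeats your characterization: take $\tau=abc+d$ and the one-slot kernel box $B=\{a\}\,\f_1\,\{c\}$. There is a single $\hat{D}^1$; it admits the perfect typing $(b)$; trivially all the $\hat{D}^k$ agree, so your conditions (i) and (ii) hold and your $\class{NP}$ machine finds no witness of non-existence --- yet $a\cdot b\cdot c=abc\subsetneq\langOf{\tau}$, so $(b)$ is not local for $D$ and $D$ admits no perfect typing. Your procedure answers incorrectly on this instance.

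The paper's proof keeps the locality test as a third, irreducible existential disjunct: it guesses (up to) four string designs and accepts (``no perfect typing'') if some $D^{k_1}$ has no perfect typing, or $D^{k_2},D^{k_3}$ have different perfect typings, or $D^{k_4}$ has a perfect typing $(\tau_n)$ that fails to be local for $D$. The last check is what you are missing; it stays polynomial because the typing is already known to be sound for every $D^k$ (hence $B(\tau_n)\leq\tau$), so non-locality reduces to strictness of that inclusion, and since $\tau$ is a $\dFA$ its complement has the same size and the test is an intersection-nonemptiness on polynomial-size automata. The rest of your machinery --- guessing the indices, the polynomial size and determinism of the $\hat{\Omega}^k$ making the componentwise comparisons cheap --- is sound and matches the paper's appeal to \cite{MartensNiewerthSchwentick10} for the other checks; but without disjunct (C) the proof does not go through.
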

\begin{proof}\ONLINE{
Let $D = \langle \tau, B \rangle$ be a $\dFA$-design where $B = B_0 \f_1 B_1 \ldots \f_n B_n$ is a kernel box. We can decide in $\class{NP}$ whether $D$ does not admit any perfect typing by preforming the following steps:
\begin{enumerate}
  \item \textbf{Guess}: four string-designs $D^{k_1}$, $D^{k_2}$, $D^{k_3}$, and $D^{k_4}$.

  \item \textbf{Check}: answer \textbf{``yes''} ($D$ does not admit any perfect typing) if at least one of the following holds
  \begin{enumerate}
    \item $D^{k_1}$ does not admit any perfect typing;

    \item $D^{k_2},D^{k_3}$ have different perfect typings;

    \item $D^{k_4}$ admits a perfect typing, say $(\tau_n)$, but it is not local for $D$.
  \end{enumerate}
\end{enumerate}
Each of check $(a)$, $(b)$, and the first part of $(c)$ require polynomial time \cite{MartensNiewerthSchwentick10}. For the second part of check $(c)$ we build $B(\tau_n)$ and prove that $B(\tau_n) < \tau$, namely $B(\tau_n) \cap \overline{\tau} = \emptyset$. Notice that if the yes answer only depends on step $(c)$ this means that $(\tau_n)$ is sound for each $D^{k}$, and so it is not possible that $B(\tau_n) > \tau$. Thus, as $\tau$ is a $\dFA$, its complement has the same size and the intersection emptiness can be done in polynomial time as well.}
\end{proof}

\begin{corollary}\label{EPERFedtdNFAcoNEXP}
$\eperf{\xEDTD{\nFA}}$ is in $\class{coNEXPTIME}$.
\end{corollary}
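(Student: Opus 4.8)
The plan is to mirror the strategy behind Corollary~\ref{Thme-loc-ml-2Expspace}, combining the structural reduction of Corollary~\ref{edtd2strPerfReduction} with the $\class{coNP}$ bound of Lemma~\ref{CorEperfBdFAco-NP}, at the cost of a single normalization step. Starting from an $\xEDTD{\nFA}$-design $D = \langle \tau, T \rangle$, the first move is to replace $\tau$ by an equivalent \textbf{normalized} $\rEDTD$ $\tau^d$, exactly as done just before Lemma~\ref{edtdImpLemma}: build an equivalent $\nUTA$, determinize it into a $\dUTA$, and read off a normalized $\rEDTD$ from that $\dUTA$. Because determinization of the tree automaton costs a single-exponential blow-up, $\tau^d$ is a $\xEDTD{\dFA}$ whose size is exponential in $\|\tau\|$, and it is normalized, so that distinct specialized element names for the same label denote disjoint tree languages; its language equals $\langOf{\tau}$, and it admits a perfect typing for $T$ iff $D$ does.

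Next I would apply Corollary~\ref{edtd2strPerfReduction} to the normalized design $\langle \tau^d, T \rangle$. That corollary builds the mapping $\kappa$ in a top-down fashion and reduces $\eperf{\rEDTD}$ to a family of box problems $\eperfB{\dFA}$, one design $D^x_\kappa$ per non-function node $x$ of $T$. Since its $\kappa$-construction runs in time polynomial in the size of the (now exponential) normalized input, the entire preprocessing, including all box-designs $D^x_\kappa$, takes exponential time and produces box-designs of exponential size, while the number of such designs stays linear in $\|T\|$.

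The final step invokes Lemma~\ref{CorEperfBdFAco-NP}: each box query $\eperfB{\dFA}$ is decidable in $\class{coNP}$. As $D$ admits a perfect typing iff every box-design $D^x_\kappa$ does, the global answer is a conjunction of polynomially many $\class{coNP}$ predicates, hence itself $\class{coNP}$ in the size of the normalized instance (the negation "some $D^x_\kappa$ fails" being in $\class{NP}$). Running this procedure on an input of exponential size places the whole problem in $\class{coNEXPTIME}$: the complementary machine deterministically normalizes $\tau$ and builds $\kappa$ in exponential time, then guesses a failing node together with the $\class{NP}$-certificate witnessing its failure, and verifies in time polynomial in the exponential instance, giving a $\class{NEXPTIME}$ bound for the complement.

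I expect the main obstacle to be the resource accounting that justifies the composition cleanly, rather than any new combinatorics: one must confirm that the $\kappa$ produced by Corollary~\ref{edtd2strPerfReduction} over $\tau^d$ is still computable within the exponential budget (its top-down construction only intersects $\mathcal{R}$-types of size polynomial in $\|\tau^d\|$), and that the per-node $\eperfB{\dFA}$ checks collapse into a single guess-and-verify for the complement. Correctness of the normalization --- that $\langOf{\tau^d} = \langOf{\tau}$ and that perfect typings are preserved --- follows from the $\nUTA$/$\dUTA$ equivalences already cited, so the only genuinely new content is this bookkeeping, which becomes routine once the exponential size of $\tau^d$ is fixed.
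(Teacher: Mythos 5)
Your proposal is correct and follows essentially the same route as the paper's own proof: normalize $\tau$ into an exponential-size $\xEDTD{\dFA}$ $\tau^d$, apply the polynomial-time reduction of Corollary~\ref{edtd2strPerfReduction} to obtain the box-designs $D^x_\kappa$, and then run the $\class{coNP}$ procedure of Lemma~\ref{CorEperfBdFAco-NP} on the exponentially larger instance to land in $\class{coNEXPTIME}$. Your write-up merely makes explicit the resource accounting that the paper compresses into ``by adapting the upper bound of Lemma~\ref{CorEperfBdFAco-NP}.''
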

\begin{proof}\ONLINE{
Let $D = \langle \tau, T \rangle$ be an $\xEDTD{\nFA}$-design. We build from $\tau$ its equivalent normalized version $\tau^d$ that, after all, is a $\xEDTD{\dFA}$ of exponential size. By Corollary \ref{edtd2strPerfReduction} we polynomial-time reduce $\eperf{\rEDTD}$ (for \textbf{normalized} $\rEDTDs$) to $\eperfRB$. So in our case we call $\eperfB{\dFA}$. But, as  $\tau^d$ may be exponentially larger, then, by adapting the upper bound of Lemma \ref{CorEperfBdFAco-NP}, the whole algorithm works in $\class{coNEXPTIME}$.}
\end{proof}

\begin{theorem}\label{PERFedtdNFAcoNEXP}
$\perf{\xEDTD{\nFA}}$ is in $\class{coNEXPTIME}$.
\end{theorem}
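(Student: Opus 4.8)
The plan is to mirror the proof of Corollary~\ref{EPERFedtdNFAcoNEXP} for the existence problem $\eperf{\xEDTD{\nFA}}$, strengthening it so as to \emph{verify} that the given typing is the (unique) perfect one rather than merely to detect that some perfect typing exists. The key preliminary observation is that perfection is a purely semantic property of $\langOf{\tau}$ and of the extension language, and that, by Theorem~\ref{OmegaPerfect} together with the fact that every perfect typing is unique maximal local, whenever a perfect typing exists it is unique up to equivalence. Hence $(\tau_n)$ is perfect for $D=\langle\tau,T\rangle$ if and only if (i) $D$ admits a perfect typing and (ii) $(\tau_n)$ coincides componentwise with that canonical perfect typing; equivalently, $(\tau_n)$ is local and dominates every sound typing.

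First I would replace $\tau$ by an equivalent normalized $\xEDTD{\dFA}$ $\tau^d$ of exponential size, exactly as in the proof of Corollary~\ref{EPERFedtdNFAcoNEXP}; since perfection is semantic and the function symbols are left untouched, $(\tau_n)$ is perfect for $\langle\tau,T\rangle$ iff it is perfect for $\langle\tau^d,T\rangle$. Next I would run the top-down construction of Corollary~\ref{edtd2strPerfReduction} to compute the maximal witness function $\kappa$ and the induced box-designs $D^x_\kappa$; because $\tau^d$ is exponential, this costs exponential time. By the decomposition theorem for perfect typings of normalized $\rEDTDs$, $(\tau_n)$ is perfect for $\langle\tau^d,T\rangle$ iff the function $\kappa$ is exactly the one induced by $(\tau_n)$ and, for every node $x$ with $\lab{x}\in\calL$, the slice of $(\tau_n)$ carried by the functions of $D^x_\kappa$ is a perfect typing for the box-design $D^x_\kappa$.

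The technical core is therefore a box-verification lemma: deciding, given a $\dFA$ box-design and a typing of its functions, whether that typing is perfect, is in $\class{coNP}$. This would be proved by adapting the $\class{coNP}$ procedure of Lemma~\ref{CorEperfBdFAco-NP}: to refute perfection one guesses a few string-designs $D^k$ and checks that either the typing is not sound for some $D^k$ (so it is not even local), or no perfect typing exists (as in Lemma~\ref{CorEperfBdFAco-NP}), or a perfect typing exists but the given slice lies strictly below it, the last case being detected componentwise using that the perfect string-typings for $\dFA$-designs are polynomial-time computable and that the complement of a $\dFA$ has the same size, so that the relevant (non)containments reduce to polynomial-time emptiness tests.

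Finally I would assemble the pieces: each box-design $D^x_\kappa$ has size exponential in the input because it lives over $\tau^d$, so running the $\class{coNP}$ box-verification on it costs $\class{coNEXPTIME}$; since there are only polynomially many nodes $x$ and the overall answer is the conjunction of these checks (plus the polynomially many comparisons confirming that $(\tau_n)$ indeed induces the maximal $\kappa$), the whole procedure stays in $\class{coNEXPTIME}$. The step I expect to be the real obstacle is the box-verification lemma: one must certify \emph{simultaneously} that a perfect box-typing exists and that the given slice equals it, and argue that this decomposition over the $\kappa$-induced box-designs faithfully captures global equality with the unique perfect typing, all while keeping the check inside $\class{coNP}$ on the exponential normalized design.
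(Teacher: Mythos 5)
Your top-level strategy is the same as the paper's: exploit the fact that a perfect typing, when it exists, is unique up to equivalence, so that verification reduces to (i) running the existence machinery of Corollary~\ref{EPERFedtdNFAcoNEXP} to produce the canonical perfect typing $(\tau_n')$ and (ii) checking that the given $(\tau_n)$ is equivalent to it. Where you diverge is in step (ii). The paper does the comparison globally and bluntly: it determinizes the \emph{given} typing $(\tau_n)$ into a $\xEDTD{\dFA}$-typing of exponential size and then invokes the fact that $\equivalence{\dUTAs}$ is in $\class{PTIME}$, so the whole comparison costs deterministic $\class{EXPTIME}$ and sits comfortably inside $\class{coNEXPTIME}$. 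You instead propose a componentwise comparison routed through the $\kappa$-induced box-designs and a new ``box-verification lemma'' claimed to be in $\class{coNP}$.

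That lemma is where you have a genuine gap, and you correctly sense it. To certify that the given slice equals the perfect box-typing you must check both $\tau_i \leq \Omega_i$ and $\Omega_i \leq \tau_i$. The second containment amounts to $\langOf{\Omega_i} \cap \overline{\langOf{\tau_i}} = \emptyset$, and $\tau_i$ is a component of the \emph{input} typing, hence a nondeterministic object; complementing it is not polynomial, so the ``relevant (non)containments reduce to polynomial-time emptiness tests'' claim fails and the lemma is not in $\class{coNP}$ as stated (the trick of Lemma~\ref{CorEperfBdFAco-NP}, complementing the $\dFA$ of the design, does not apply here because the object to be complemented is the typing, not the design). The bound can still be rescued precisely because $\tau_i$ lives in the original input: determinizing it costs only a single exponential in the original instance, after which the containment tests become deterministic exponential time overall --- but that rescue is exactly the paper's determinization step, so you should make it explicit rather than hiding it inside a $\class{coNP}$ claim. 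A secondary, smaller issue is that your componentwise route needs a verification-flavoured version of the decomposition theorem for perfect typings of normalized $\rEDTDs$ (that $(\tau_n)$ is perfect iff it induces the maximal $\kappa$ and each slice is perfect for $D^x_\kappa$), which the paper never proves and which the global equivalence check renders unnecessary.
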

\begin{proof}
Let $D = \langle \tau, T \rangle$ be an $\xEDTD{\nFA}$-design, and $(\tau_n)$ be a $D$-consistent typing. Compute in $\class{coNEXPTIME}$ a perfect typing $(\tau_n')$ if there is one. Transform $(\tau_n)$ into a $\xEDTD{\dFA}$-typing of exponential size. As $\equivalence{\dUTAs}$ is in $\class{PTIME}$ then we can decide in $\class{EXPTIME}$ whether $(\tau_n)$ and $(\tau_n')$ are equivalent.
\end{proof}

Unfortunately, for $\ml{\xEDTD{\nFA}}$ we do not have any good algorithm. Let $D = \langle \tau, T \rangle$ be an $\xEDTD{\nFA}$-design, $(\tau_n)$ be a maximal local typing for $D$, and $\kappa$ be de function induced by $(\tau_n)$ and $T$. At the moment, we do not even know whether there could be a (non-maximal) local typing $(\tau_n') < (\tau_n)$ such that $\kappa'  < \kappa$. If there is none, given a local typing $(\tau_n)$ and its induced function $\kappa$, then each maximal local typing that extends $(\tau_n)$ has to induce the same $\kappa$ as well. So we could compare the various $D^x_{\kappa}$ with $(\tau_n)$. But, the only know upper bound is given by the following theorem.

\begin{theorem}\label{MLedtdNFA2EXPSPACE}
$\ml{\xEDTD{\nFA}}$ is in 2-$\class{EXPSPACE}$.
\end{theorem}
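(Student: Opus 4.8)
The plan is to combine the normalization of $\tau$ with the box-level maximality machinery, following the pattern already used for $\eloc{\xEDTD{\nFA}}$ in Corollary \ref{Thme-loc-ml-2Expspace}. First I would build from $\tau$ its equivalent \textbf{normalized} version $\tau^d$ (the $\xEDTD{\dFA}$ obtained by constructing an equivalent $\nUTA$, determinizing it into a $\dUTA$, and reading off a normalized $\rEDTD$, exactly as described before Lemma \ref{edtdImpLemma}); by that construction $\tau^d$ has at most exponential size and enjoys the crucial property $\langOf{\tau^d(\tilde{a})} \cap \langOf{\tau^d(\tilde{a}')} = \emptyset$ for distinct witnesses of the same element name. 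Because $(\tau_n)$ is given and $D$-consistent, I would first verify locality by building $T(\tau_n)$ in polynomial time (Proposition \ref{TnPolyTimeESize}) and testing $T(\tau_n) \equiv \tau$ with the $\equivalence{\xEDTD{\nFA}}$ procedure of Theorem \ref{EquivEDTDs}; this is the $\loc{\xEDTD{\nFA}}$ test and costs only $\class{EXPTIME}$ (Theorem \ref{LocEdtdNfaExpCompl}), well inside the target bound. If $(\tau_n)$ is not local it cannot be maximal local and we reject.

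For maximality I would decide the complementary property, namely the existence of a sound typing $(\tau_n')$ with $(\tau_n) < (\tau_n')$, and accept exactly when no such witness exists. Using normalization, every sound typing induces a function $\kappa'$ assigning to each node $x$ of $T$ a subset $\kappa'(x) \subseteq \tilde{\Sigma}^d(\lab{x})$, and by monotonicity of the induced set any extension $(\tau_n') \geq (\tau_n)$ satisfies $\kappa' \geq \kappa$, where $\kappa$ is the function induced by $(\tau_n)$ (Lemma \ref{edtdImpLemma}). Since $\tau^d$ is exponential, such a $\kappa'$ is representable in exponential space and can be guessed; for the guessed $\kappa'$ I would form, for each non-function node $x$, the box-design $D^x_{\kappa'} = \langle \pi(\kappa'(x)), B^x\rangle$ and build over $\tau^d$ the box version of the perfect automaton together with its decomposition $Dec(\Omega_i^x)$, exactly as in Section \ref{ComplexityForTrees} just before Theorem \ref{Thme-loc-ml-B-Expspace}. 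Because each $\f_i$ lies in a single box, guessing a subset $D_i \subseteq Dec(\Omega_i^x)$ for the parent node $x$ of $\f_i$ fixes a candidate $\tau_i' = \automUnion{D_i}$; I would then check, using the on-the-fly alternating-automata emptiness and nonemptiness tests of Theorem \ref{nfa-ml-pspace} and Theorem \ref{Thme-loc-ml-B-Expspace}, that the resulting global typing is sound ($T(\tau_n') \cap \overline{\tau} \equiv \emptyset$), that it induces precisely $\kappa'$ (the vertical-consistency condition tying $\kappa'(y)$ to the symbol it contributes to its parent box), and that it strictly contains $(\tau_n)$. By Lemma \ref{SoundExtension} and Theorem \ref{NEXPGuess}, every sound extension of $(\tau_n)$ is captured by such a decomposition, so this search is complete.

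The complexity bookkeeping is the routine part: the states of $\tau^d$ are exponential in $\|\tau\|$, so each $Aut(\Omega_i^x)$ is of exponential size and each member of $Dec(\Omega_i^x)$ is a Boolean combination of exponentially many exponential-size automata, hence of doubly-exponential size; as in the proof of Theorem \ref{ThmExpspace}, these are never materialized but handled by $\aFAs$ whose emptiness is decided in space polynomial in their (doubly-exponential) descriptions, yielding 2-$\class{EXPSPACE}$. The guesses of $\kappa'$ and of the $D_i$ are absorbed because $\class{NEXPSPACE} = \class{EXPSPACE}$ by Savitch's theorem, applied at the doubly-exponential level, so the nondeterministic search costs nothing beyond its working space. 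I expect the genuinely delicate step to be the one flagged in the paragraph preceding the theorem: justifying that ranging over all $\kappa' \geq \kappa$ and over all decompositions truly exhausts the sound extensions of $(\tau_n)$, i.e. that no strictly larger sound typing is missed when its induced $\kappa'$ strictly dominates $\kappa$. This is precisely why the bound is only 2-$\class{EXPSPACE}$ rather than something sharper: we pay for guessing the witness function $\kappa'$ explicitly instead of deriving it, and proving completeness of that guess against Theorem \ref{NEXPGuess} is the crux of the argument.
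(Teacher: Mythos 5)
Your proposal follows essentially the same route as the paper's proof: verify locality in $\class{EXPTIME}$, normalize $\tau$ into the exponential-size $\tau^d$, guess a witness function $\kappa$ and reduce the non-maximality test to the box-design perfect-automaton decomposition at doubly-exponential scale, and conclude by closure of 2-$\class{EXPSPACE}$ under complement. The only (minor) deviation is that you test soundness of the guessed extension directly against the definition of maximality, whereas the paper checks that each $D^x_{\kappa}$ admits a local typing and then compares the induced typing $(\tau_n')$ with $(\tau_n)$; both variants rest on the same completeness argument via Lemma \ref{SoundExtension} and Theorem \ref{NEXPGuess} that you rightly flag as the delicate point.
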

\begin{proof}
Let $D = \langle \tau, T \rangle$ be an $\xEDTD{\nFA}$-design and $(\tau_n)$ be a $D$-consistent typing. We can check whether it is not maximal. Check in $\class{EXPTIME}$ whether it is local or not. So, build the normalized type $\tau^d$ from $\tau$. Guess a function $\kappa$ and check whether each $D^x_{\kappa}$ admits a local typing. This is in 2-$\class{EXPSPACE}$ by Corollary \ref{Thme-loc-ml-2Expspace}. So, build the typing $(\tau_n')$ induced by the box-designs. It may be an $\xEDTD{\nFA}$ typing exponentially larger. Check whether $(\tau_n) < (\tau_n')$. This can be done in 2-$\class{EXPTIME}$. So the algorithm works in 2-$\class{EXPSPACE}$ and as this class is closed under complementation we also can decide $\ml{\xEDTD{\nFA}}$ in it.
\end{proof}

Finally, we consider the reduction from trees to boxes for $\eml{\xEDTD{\dRE}}$. The difficulties affecting $\ml{\xEDTD{\nFA}}$ (as we do not know whether there could be a local typing $(\tau_n') < (\tau_n)$ such that $\kappa'  < \kappa$) concern also the existential problem in case of $\dREs$.

\begin{theorem}\label{eMLedtdNFA2EXPSPACE}
$\eml{\xEDTD{\dRE}}$ for \textbf{normalized} $\rEDTDs$ is decidable by an oracle machine in $\class{PSPACE}^\mathcal{C}$ where $\mathcal{C}$ is the complexity class of solving the most difficult problem between $\emlB{\dRE}$ $\elocB{\dFA}$.
\end{theorem}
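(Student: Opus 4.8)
The plan is to adapt the box-reduction machinery developed for $\eloc{\rEDTD}$ and $\eml{\rEDTD}$ (Corollary~\ref{edtd2strReduction} and Theorem~\ref{reductionThmEDTD}) to the single case those results explicitly excluded, namely $\mathcal{R} = \dRE$ for the maximal-local existence problem. As in Corollary~\ref{edtd2strReduction}, the skeleton of the algorithm is to choose a labelling function $\kappa$ assigning to each non-function node $x$ a set of specialized element names, and then to reduce the global question to box-level questions about the induced box-designs $D^x_\kappa$. The essential difference is that, because $\dREs$ are not closed under union or concatenation (Proposition~\ref{oneUnambPropos}), a maximal local $\dRE$-typing is no longer guaranteed to exist whenever a local one does, and — as stressed in the remark preceding the theorem — a strictly smaller local typing may induce a strictly smaller $\kappa$. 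Hence I cannot simply guess one $\kappa$ and certify box-level locality; I must both certify box-level maximality in the $\dRE$ sense and rule out any competing sound typing sitting strictly above the candidate.

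Accordingly, I would split the work between the two oracles. First, I would use $\elocB{\dFA}$ to reason about \emph{which} labellings $\kappa$ are admissible: replacing each $\dRE$ content model by an equivalent $\dFA$ removes the closure pathologies, so $\elocB{\dFA}$ robustly decides, for a guessed $\kappa$, whether every box-design $D^x_\kappa$ admits a local typing over regular content models. This pins down the admissible $\kappa$'s and, in the top-down style of Corollary~\ref{edtd2strPerfReduction}, the largest labelling compatible with a local typing. Second, for a fixed admissible $\kappa$, I would call $\emlB{\dRE}$ on each box-design $D^x_\kappa$ to certify that it admits a maximal local $\dRE$-typing; combining these box-level maximal local typings without renaming specialized names yields, exactly as in the proof of Theorem~\ref{reductionThmEDTD}, a global typing $(\tau_n)$ with $T(\tau_n) \equiv \tau$.

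To obtain the $\class{PSPACE}^\mathcal{C}$ bound I would organise these steps as a single polynomial-space computation. The candidate $\kappa$ is polynomial in the input and is produced and checked within polynomial space; each box-design and each oracle query is of polynomial size and may be emitted on the fly; and the global maximality certificate — that no local typing with a different or larger labelling extends the candidate to a strictly larger sound typing — is verified by systematically ranging, within polynomial space, over the admissible labellings and invoking $\elocB{\dFA}$ and $\emlB{\dRE}$ as subroutines. Since polynomial space suffices to drive the existential choice of $\kappa$ together with the universal-style cross-checks of maximality (an $\exists\kappa\,\forall\kappa'$ pattern that would only give $\class{NP}^\mathcal{C}$ in the $\mathcal{R}\neq\dRE$ case), the whole procedure is a $\class{PSPACE}$ machine with oracle access to the harder of $\emlB{\dRE}$ and $\elocB{\dFA}$.

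The main obstacle I expect is establishing the correct analogue of Theorem~\ref{reductionThmEDTD} for \emph{maximal} locality under $\dRE$: one must prove that box-level $\dRE$-maximality together with the $\dFA$-determined admissibility of $\kappa$ is both necessary and sufficient for global $\dRE$-maximal locality. The delicate point — precisely the phenomenon flagged in the preceding remark — is to show that the pathological situation, in which a strictly smaller local typing with a strictly smaller $\kappa$ could be extended to a sound typing larger than the box-wise-maximal candidate, either cannot arise or is detected by the $\elocB{\dFA}$ cross-checks. Settling this soundness-and-completeness argument, and confirming that the two oracles jointly capture every way a $\dRE$-typing can fail to be maximal, is where the real content of the proof lies.
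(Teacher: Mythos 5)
Your proposal matches the paper's proof in all essentials: guess the labelling function $\kappa$, certify box-level maximality by calling $\emlB{\dRE}$ on each induced box-design $D^x_{\kappa}$, and rule out competing labellings by ranging in polynomial space over all $\kappa' > \kappa$ and checking via $\elocB{\dFA}$ that some induced box-design fails to admit a local typing, the whole thing wrapped in a $\class{PSPACE}$ oracle machine. The paper's own argument is equally a sketch at the point you flag as the main obstacle (the correctness of combining box-level $\dRE$-maximality with the $\kappa'$-elimination step), so no gap relative to the paper.
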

\begin{proof}
In this case we have to check two sources of maximality depending on the choice of $\kappa$ and on the related box-designs. To do that, we guess a function $\kappa$ (the candidate for a maximal local typing) and we check whether each induced box-design (i) admits a local typing, (ii) is maximal and (iii) is $\dRE$-definable. So, we have to prove that each $\kappa' > \kappa$ does not lead to any local typing. In particular:
\begin{enumerate}
  \item Guess a functions $\kappa$;

  \item Prove that, for each node $x$ of $T$ with $\lab{x} \in \calL$, the answer of $\emlB{\dRE}$ over $D^x_{\kappa}$ is \textbf{``yes''};

  \item Prove that, for each $\kappa' > \kappa$, there is at least a node $x$ of $T$ with $\lab{x} \in \calL$ such that the answer of $\elocB{\dFA}$ over $D^x_{\kappa'}$ is \textbf{``no''}.
\end{enumerate}
We just notice that there could be an exponential number of $\kappa'$ to be enumerated and checked, as well as the number of calls to $\elocB{\dFA}$.
\end{proof}

\section{Conclusion}\label{conclusion}

As explained in the introduction, this work can serve as a basis for designing the distribution of a document. It would be interesting to extend our definitions and methods to richer types of web data. First, this would involve graph data and not just tree data. Then one should consider unordered collections and functional dependencies as in the relational model \cite{VincentLiu03,ArenasLibkin04}. Other dependencies and in particular inclusion dependencies would also clearly make sense in this setting \cite{VincentSchreflLiu04}. More specific design methodology
would also extend the techniques presented in this paper by considering concrete network configurations; this is left for future research.

Database design has a long history, see most database text book. Distributed database design has also been studied since the early days of databases, but much less, because distributed data management was limited by the difficulty to deploy distributed databases. The techniques that were developed, e.g., vertical and horizontal partitioning, are very different from the ones presented here because we focus on ordered trees and collections are not ordered in relational databases. We believe that traditional database studies even on mainly theoretical topics such as normal forms are also relevant in a Web setting. An interesting direction of research is to introduce some of these techniques in our setting.

In the paper, the focus was on local typing that forces verification to be purely local. More generally, it would be interesting to consider typings of the resources that would minimize the communications needed for type checking (and not completely avoid them).
Moreover, it would be interesting to analyze cases where a kernel document may change from time to time by adhering to some global type which uses function symbols in the specification itself. We are investigating this direction. Let us give a short example exhibiting some of the new difficulties that would arise in case kernel document changes were taken into account. Consider the kernel string $w = a\f$ and the type $\tau = a\f?ba^+$. By directly applying the techniques proposed in this paper, it seems clear that $\f?ba^+$ would be the perfect typing for this design. So, one extension of $w$ may be $a\f ba$ (by attaching the tree $\f ba$ complying with the perfect typing) which, in turn, represents a new kernel. But, this extension might still be extended, by attaching again tree $\f ba$, to form $a\f baba$, since the first extension still contains a function call and the perfect typing defined for the remote resource should not vary. This last step could be performed several time. The language obtained by all possible extensions is defined by the type $a\f?(ba^+)^+$, being clearly different from $\tau$. The problem here is that $\tau$ does not express directly a set of trees without taking into account a specific typing. New interesting questions might be: \emph{How to look for typings that are, in a sense, fixpoints w.r.t. the original type with functions?} or \emph{How to avoid irregularities?} or even \emph{Is the perfect typing still unique?} Finally, interesting issues may also come from studying the impact of distributed typing (as studied here) on query optimization.

\section{Acknowledgments}\label{ackn}

This work is a co-operation in the context of the ICT-2007.8.0 FET Open project Nr. 233599:  Foundations of XML - safe processing of dynamic data over the Internet (FOX). Serge Abiteboul's work was supported by the European Research Council Advanced Grant Webdam and by the French ANR Grant Docflow. Georg Gottlob's work was supported by EPSRC grant EP/E010865/1 ``Schema Mappings and Automated Services for Data Integration and Exchange''. Gottlob also gratefully acknowledges a Royal Society Wolfson Research Merit Award. Marco Manna acknowledges the support and hospitality of the Oxford-Man Institute of Quantitative Finance, where he worked as an Academic Visitor. Finally, the authors also want to thank Thomas Schwentick, Wim Martens, and Matthias Niewerth for useful discussions on the problem.

\section*{References}
\bibliographystyle{elsarticle-harv}
\bibliography{biblio}

\end{document}